\documentclass[11pt]{article}
\usepackage{fullpage}

\usepackage{thmtools} 
\usepackage{thm-restate}

\usepackage{dsfont}
\usepackage{amsmath}
\usepackage{amsthm}
\usepackage{amssymb}
\usepackage{graphicx}
\usepackage{url}
\usepackage{nicefrac}
\usepackage{enumitem}
\usepackage{authblk}
\usepackage{tikz} 
\usepackage{subcaption}
\usepackage{xspace}
\usepackage[compact]{titlesec}
\usetikzlibrary{arrows,decorations.markings}
\usepackage[normalem]{ulem}

\usepackage{xcolor}
\usepackage[linesnumbered,ruled,vlined]{algorithm2e}
%\usepackage{algorithm2e}
%%% Coloring the comment as blue

\SetCommentSty{mycommfont}

\SetKwInput{KwInput}{Input}                % Set the Input
\SetKwInput{KwOutput}{Output}              % set the Output

\usepackage{typearea}

\usepackage{pgf}
\usetikzlibrary{snakes}
\usetikzlibrary{patterns}
\usetikzlibrary{backgrounds}
\usetikzlibrary{scopes}
\usetikzlibrary{arrows, automata, positioning}
\usetikzlibrary{decorations.markings}
\usepackage[T1]{fontenc}
\usepackage{calc}
\usepackage{color}
\usepackage{lipsum}

%\paperwidth 8.5in \paperheight 11in
%%\typearea{15}
%\usepackage[bottom=1in,top=1in,left=1in,right=1in]{geometry}
%\makeatletter
% \setlength{\parindent}{0pt}
 \addtolength{\partopsep}{-2mm}
 %\setlength{\parskip}{5pt plus 1pt}
 % \addtolength{\theorempreskipamount}{-1mm}
 % \addtolength{\theorempostskipamount}{-1mm}
 \addtolength{\abovedisplayskip}{-3mm}
% \addtolength{\textheight}{35pt}
% \addtolength{\footskip}{-20pt}
\makeatother
%
%\usetikzlibrary{shapes.misc}
%\usetikzlibrary{calc}
%
%\tikzset{cross/.style={cross out, draw=2-edge-cover, minimum size=2*(#1-\pgflinewidth), inner sep=0pt, outer sep=0pt},
%%default radius will be 1pt. 
%cross/.default={1pt}}
%
%\newcommand{\MyFrame}[1]{\begin{center}\noindent \framebox[\textwidth]{ \begin{minipage}{0.97\textwidth} #1 \end{minipage}}\end{center}}%
%
%\usepackage{xspace}
%\usepackage[colorinlistoftodos,textsize=tiny,textwidth=2cm,color=red!25!white,obeyFinal]{todonotes}
%\newcommand{\agnote}[1]{\todo[color=blue!25!white]{AG: #1}\xspace}
%\newcommand{\csnote}[1]{\todo[color=red!25!white]{CS: #1}\xspace}
%\newcommand{\alert}[1]{{\color{red} #1}
%}

\usepackage{color}

\definecolor{blueblack}{rgb}{0,0,.7}

\newcounter{sideremark}

\definecolor{Darkblue}{rgb}{0,0,0.4}
\definecolor{Brown}{cmyk}{0,0.61,1.,0.60}
\definecolor{Purple}{cmyk}{0.45,0.86,0,0}
\definecolor{brickred}{rgb}{0.8, 0.25, 0.33}
\usepackage[colorlinks,citecolor=blue,linkcolor=red,bookmarks=true]{hyperref}

\theoremstyle{plain}% default
\newtheorem{theorem}{Theorem}
\newtheorem{invariant}{Invariant}
\newtheorem{lemma}{Lemma}
\newtheorem{claim}{Claim}
\newtheorem{definition}{Definition}

\newtheorem{fact}{Fact}
\newtheorem{remark}{Remark}
\newtheorem{corollary}{Corollary}
%\newtheorem{cor}[theorem]{Corollary}
%\newtheorem{fact}{Fact}
%\newtheorem{claim}{Claim}
%\newtheorem{condition}[theorem]{Condition}
%\newtheorem{invariant}[theorem]{Invariant}
%\newtheorem{conj}{Conjecture}[theorem]
%\newtheorem{conjecture}{Conjecture}[theorem]
%\newtheorem{exmp}{Example}[theorem]
%\newtheorem{property}[theorem]{Property}
%\newtheorem{properties}[theorem]{Properties}

%\theoremstyle{remark}
%\newtheorem*{rem}{Remark}
%\newtheorem*{note}{Note}
%\newtheorem{case}{Case}

%\def\eg{{\it e.g.,}~}
%\def\ie{{\it i.e.,}~}

%\newcommand{\val}{\mathsf{val}}
%\newcommand{\out}{\mathsf{out}}
%\newcommand{\enter}{\mathsf{in}}
%\newcommand{\merge}{\mathsf{Merge}}

%\newcommand{\Pa}{P_{aug}}
%\newcommand{\Ve}{V_{exp}}
%\newcommand{\NM}{N_M}

%\makeatletter
%\newtheorem*{rep@theorem}{\rep@title}
%\newcommand{\newreptheorem}[2]{%
%\newenvironment{rep#1}[1]{%
% \def\rep@title{\emph{\textbf{#2} \ref{##1}}}%
% \begin{rep@theorem}}%
% {\end{rep@theorem}}}
%\makeatother
%\newreptheorem{theorem}{Theorem}
%\newreptheorem{lemma}{Lemma}
%\newreptheorem{proposition}{Proposition}
%% \newtheorem*{lemappx}{Lemma}
%% \newtheorem*{thmappx}{Theorem}
%\usepackage{thmtools,thm-restate}
%
%\newcounter{myLISTctr}
%\newcommand{\initOneLiners}{%
%    \setlength{\itemsep}{0pt}
%    \setlength{\parsep }{0pt}
%    \setlength{\topsep }{0pt}
%%      \usecounter{myLISTctr}
%}
%\newenvironment{OneLiners}[1][\ensuremath{\bullet}]
%    {\begin{list}
%        {#1}
%        {\initOneLiners}}
%    {\end{list}}

\newcommand{\eps}{\varepsilon}

\newcommand{\calC}{\mathcal{C}}

\newcommand{\ALG}{\mathsf{ALG}}

\newcommand{\RED}{\mathsf{RED}}
\newcommand{\OPT}{\mathsf{OPT}}
\newcommand{\opt}{\mathsf{opt}}
\newcommand{\Red}{red}
\newcommand{\APX}{\mathsf{APX}}
\newcommand{\apx}{\mathsf{apx}}

\newcommand{\credit}{\mathsf{cr}}

\usepackage[colorinlistoftodos,textsize=tiny,textwidth=2cm,color=red!25!white,obeyFinal]{todonotes}

\ifdefined\DEBUG
 \newcommand{\fab}[1]{\textcolor{black}{#1}}

\newcommand{\amEdit}[1]{\textcolor{violet}{#1}}

  \def\rem#1{{\marginpar{\raggedright\scriptsize #1}}}
  \newcommand{\fabr}[1]{\rem{\textcolor{red}{$\bullet$ #1}}}

\else
  \newcommand{\fab}[1]{#1}
  \newcommand{\fabr}[1]{}
  
  \newcommand{\amEdit}[1]{#1}

\fi

%\newif\ifshort
%%\shorttrue % short version
%\shortfalse % long version

%%%%%%%%%%%%%%%%%%%%%%%%%%%%%%%%%%%%%%%%%%%%%%%%%%%%%%%%%%%%%%%%%%%%%%%%%%%%%%%%%%%%%%%%%%%%%%%%%%%%%%%%%%%%%%%%%%%%%%%%%%%%%%%%%%%%%%
\title{Improved Approximation for Two-Edge-Connectivity\thanks{Partially supported by the SNSF Excellence Grant 200020B\_182865 / 1 and the SNSF Grant 200021\_200731 / 1.}}

% comment{Earlier you had: An Improved Approximation for $2$-Edge-Connected Spanning Subgraph}

%\title{Approximating Connectivity Augmentation better than 2:\\a Reduction to Steiner Tree\thanks{Partially supported by....}}
\author[1]{Mohit Garg}
\author[2]{Fabrizio Grandoni}
\author[3]{Afrouz Jabal Ameli}
\affil[1]{University of Bremen, University of Hamburg, Germany, garg@uni-bremen.de}
\affil[2]{IDSIA, USI-SUPSI, Switzerland, fabrizio@idsia.ch}
\affil[3] {TU Eindhoven, Netherlands, a.jabal.ameli@tue.nl}
%\author[]{}
\date{}
%\pagenumbering{gobble}
\begin{document}
\maketitle
\thispagestyle{empty}

%%%%%%%%%%%%%%%%%%%%%%%%%%%%%%%%%%%%%%%%%%%%%%%%%%%%%%%%%%%%%%%%%%%%%%%%%%%%%%%%%%%%%%%%%%%%%%%%%%%%%%%%%%%%%%%%%%%%%%%%%%%%%%%%%%%%%%%%

%%%%%%%%%%%%%%%%%%%%%%%%%%%%%%%%%%%%%%%%%%%%%%%%%%%%%%%%%%%%%%%%%%%%%%%%%%%%%%%%%%%%%%%%%%%%%%%%%%%%%%%%%%%%%%%%%%%%%%%%%%%%%%%%%%%%%%%%%%%%%%%%%%%%%%%%%%%
%\input{intro}

%\section{Introduction}
%\subsection{Our Results and Techniques}

%\subsection{Related Work}

\begin{abstract}
\noindent The basic goal of survivable network design is to construct low-cost networks which preserve a sufficient level of connectivity despite the failure or removal of a few nodes or edges. One of the most basic problems in this area is the $2$-Edge-Connected Spanning Subgraph problem (2-ECSS): given an undirected graph $G$, find a $2$-edge-connected spanning subgraph $H$ of $G$ with the minimum number of edges (in particular, $H$ remains connected after the removal of one arbitrary edge).

2-ECSS is NP-hard and the best-known (polynomial-time) approximation factor for this problem is $4/3$. Interestingly, this factor was achieved with drastically different techniques by [Hunkenschr{\"o}der, Vempala and Vetta '00,'19] and [Seb{\"o} and Vygen, '14]. In this paper we present an improved $\frac{118}{89}+\eps<1.326$ approximation for 2-ECSS. 

The key ingredient in our approach (which might also be helpful in future work) is a reduction to a special type of structured graphs: our reduction preserves approximation factors up to $6/5$. While reducing to 2-vertex-connected graphs is trivial (and heavily used in prior work), our structured graphs are ``almost'' 3-vertex-connected: more precisely, given any 2-vertex-cut $\{u,v\}$ of a structured graph $G=(V,E)$, $G[V\setminus \{u,v\}]$ has exactly 2 connected components, one of which contains exactly one node of degree $2$ in $G$.
%they become 3 vertex connected after contracting induced paths of length 2 (whose edges belong to every feasible solution) into one edge. 
\end{abstract}
%%%%%%%%%%%%%%%%%%%%%%%%%%%%%%%%%%%%%%%%%%%%%%%%%%%%%%%%%%%%%%%%%%%%%%%%%%%%%%%%%%%%%%%%%%%%%%%%%%%%%%%%%%%%%%%%%%%%%%%%%%%%%%%%%%%%%%%%%%%%%%%%%%%%%%%%%%%
\newpage
%\pagenumbering{arabic}

\clearpage
\pagenumbering{arabic}
%\input{intro}

%\section{Introduction}
%\subsection{Our Results and Techniques}

%\subsection{Related Work}

\section{Introduction}

\amEdit{Real-world} networks are prone to failures. For this reason it is important to design them so that they are still able to support a given traffic despite a few (typically temporary) failures of nodes or edges. The basic goal of survivable network design is to construct cheap networks which are resilient to such failures. Most problems in this family are NP-hard, therefore it makes sense to develop approximation algorithms to address them. In this paper we study one of the most fundamental survivable network design problem\amEdit{s}, the 2-edge-connected\footnote{A graph is $k$-edge-connected ($k$EC) if it is connected after the removal of up to $k-1$ edges. A $k$-vertex-connected (kVC) graph is defined similarly w.r.t. the removal of $k-1$ nodes.} (2EC) spanning su\amEdit{b}graph problem (2-ECSS): given an $n$-node undirected 2EC graph $G=(V,E)$, find a minimum-cardinality subset of edges $S\subseteq E$ such that $G'=(V,S)$ is 2EC\footnote{$G'$ is a spanning subgraph, meaning that it connects all the nodes.}.
%2EC spanning subgraph $H$ of $G$ (i.e. $V(H)=V(G)$) with the minimum \emph{size} (i.e., number of edges).

 2-ECSS is known to be \amEdit{MAX-}SNP-hard \cite{CL99,F98}, in particular it does not admit a PTAS assuming $P\neq NP$. It is easy to compute a $2$ approximation for this problem. For example it is sufficient to compute a DFS tree and augment it greedily. Khuller and Vishkin \cite{KV92} found the first non-trivial $3/2$-approximation algorithm. Cheriyan, Seb{\"{o}} and Szigeti \cite{CSS01} improved the approximation factor to $17/12$. The current best approximation factor for 2-ECSS is $4/3$, and this is achieved with two drastically different techniques by Hunkenschr{\"o}der, Vempala and Vetta \cite{HVV19,VV00}\footnote{\cite{VV00} contains some bugs which were later fixed in \cite{HVV19}.} and Seb{\"o} and Vygen \cite{SV14} (we will tell more about these approaches later on).

%Since the results in \cite{HVV19,VV00,SV14}, 
Several failed or incomplete attempts were made to improve the $4/3$-approximation factor for 2-ECSS (see Section \ref{sec:failedAttempts} for more details). This makes $4/3$ a natural approximation barrier to breach. In this paper we achieve this result (though with a small improvement).   
\begin{theorem}\label{thr:refinedApx}
For any constant $\eps>0$, there is a deterministic polynomial-time $(\frac{118}{89}+\eps)$-approximation algorithm for 2-ECSS.
\end{theorem}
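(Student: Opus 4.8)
The plan is to split the argument into a structural reduction followed by a purely combinatorial algorithm on the resulting ``nice'' instances. First I would invoke the reduction announced in the abstract: a polynomial-time procedure that, given an arbitrary 2EC graph, produces structured instances such that any $\alpha$-approximation for 2-ECSS on structured graphs, with $\alpha \ge 6/5$, lifts to an $\alpha$-approximation on the original graph. Since $\frac{118}{89} > \frac{6}{5}$ this lifting is lossless, so it suffices to give a polynomial-time $\frac{118}{89}$-approximation for 2-ECSS on structured graphs; the $\eps$ in the statement arises from the standard device of solving constant-size ($O_\eps(1)$) substructures optimally by brute force.

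For structured graphs I would work within the 2-edge-cover/credit paradigm. Compute a minimum-cardinality 2-edge-cover $D$ of $G$ (polynomial time via a reduction to matching); since every 2EC spanning subgraph is a 2-edge-cover, $|D| \le \OPT$, and also $|D| \ge |V|$ because every vertex needs degree at least two. Next, normalize $D$: re-route bridges and absorb tiny components (single edges, triangles, other $O(1)$-size blocks) so that the cover decomposes into 2-edge-connected blocks that are ``as large as possible''. Here the structured hypothesis does the real work. In a general 2VC graph a 2-vertex-cut can split off many large independent pieces, and it is precisely a proliferation of small independent gadgets that forces the $4/3$ barrier; in a structured graph every 2-vertex-cut $\{u,v\}$ leaves only two pieces, one of which has a single degree-2 vertex, so small pieces are forced to have rich attachments to the rest of the graph and cannot be ``cheap'' in the adversarial way.

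The core of the analysis is the merging phase. Put $\frac{29}{89} = \frac{118}{89}-1$ units of credit on each edge of the normalized cover (total credit at most $\frac{29}{89}\OPT$); then process the 2EC blocks one group at a time, each time adding a few edges of $G$ to stitch the group into the growing 2EC spanning subgraph and charging those edges to the credit released by the newly absorbed block edges. The point is that glue-cost is small exactly when a block is small, and the structured property guarantees that a small block always admits enough incident linking edges to be merged in pairs or triples at an amortized ratio below $4/3$. This boils down to a finite case analysis over local configurations --- block sizes, available linking edges, the shape of nearby 2-vertex-cuts --- and the worst configuration pins the ratio at $\frac{118}{89}$ rather than $\frac{4}{3}$.

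The main obstacle, as in all earlier attempts on this problem, is taming the small components of the 2-edge-cover --- triangles and other $O(1)$-size 2EC blocks --- since these are where the $4/3$-tight examples live; beating $4/3$ requires showing that in a structured graph such blocks can always be amortized either against one another or against their (necessarily non-trivial) connections to the rest of $G$. A secondary difficulty is carrying out the normalization step so that it is simultaneously size-efficient and structure-preserving: the contractions and bridge re-routings must not destroy the ``almost 3-vertex-connected'' property that the merging phase depends on, which typically forces the normalization and the case analysis to be designed together rather than in sequence.
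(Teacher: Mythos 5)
Your outline reproduces the generic framework (reduce to structured graphs, take a minimum 2-edge-cover as the lower bound, then glue components under a credit scheme), but it is missing the ideas that actually produce the constant $\frac{118}{89}$, so the decisive step is unsupported. The paper does \emph{not} run a single gluing phase with a uniform credit of $\frac{29}{89}$ per cover edge; in fact the authors report that exactly this kind of scheme (uniform $\delta<\frac13$ credits per edge of $H$) led to an unmanageable case analysis and was abandoned. Instead, the approximation factor comes from running \emph{two different algorithms with two different credit schemes}, parameterized by the fraction $t$ of cover edges in triangle 2EC components and the fraction $b$ of bridges: one algorithm gives a $\frac{14}{9}-\frac{8}{27}t+\frac49 b$ guarantee (many-triangles regime) and the other a $\frac{13}{10}+\frac{1}{30}t-\frac{1}{20}b$ guarantee (few-triangles regime), and $\frac{118}{89}$ is precisely the value of $\min\{\cdot,\cdot\}$ at the balancing point $t=\frac{69}{89}+\frac32 b$. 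Your proposal gives no mechanism by which a single finite case analysis would terminate at this particular ratio; asserting that ``the worst configuration pins the ratio at $\frac{118}{89}$'' is the whole difficulty, not a consequence of the structured-graph property alone.

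Concretely, the ingredient you omit is the Seb\H{o}--Vygen-style second solution and the strengthened lower bound that it requires. In the many-triangles regime the paper reduces to a core-triangle 2-edge-cover (a 2EC core $C$ plus $k$ pairwise non-adjacent triangles), then (i) chooses the connecting edge sets $Q_i$ via matroid intersection so as to minimize the number $\alpha(S)$ of components of $\bigcup_i Q_i$, (ii) builds an alternative solution $\APX'$ by adding $\alpha(S)-1$ connecting edges and a $T$-join of size at most $\frac12|E(C)|$ (Frank's theorem on ear decompositions), and (iii) proves the refined lower bound $\opt\geq 4k+\alpha(S)-1$; taking the better of the two solutions is what improves the many-triangles bound from $\frac{13}{8}-\frac13 t+\frac12 b$ to $\frac{14}{9}-\frac{8}{27}t+\frac49 b$, and without it the balancing only yields $\frac{117}{88}+\eps$, not $\frac{118}{89}+\eps$. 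A smaller inaccuracy: the reduction to structured graphs is not lossless; it converts an $\alpha$-approximation on structured instances into an $(\alpha+2\eps)$-approximation overall (this is where the additive $\eps$ in the theorem comes from, together with the exclusion of small $\alpha$-contractible subgraphs and non-isolating 2-vertex-cuts, not merely brute force on constant-size pieces). Finally, note that the credits in the paper are attached to components, blocks and bridges with carefully chosen unequal values (e.g.\ triangle components receive strictly less than large components), which is essential to make the gluing steps cost-neutral; a flat per-edge credit does not support the paper's case analysis.
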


An overview of our approach will be discussed in Section \ref{sec:overview}.

\subsection{Related Work}

The k-edge connected spanning subgraph problem (k-ECSS) is the natural generalization of 2-ECSS to any connectivity $k\geq 2$ (see, e.g., \cite{CT00,GG12}).

A problem related to k-ECSS is the $k$-Connectivity Augmentation problem ($k$-CAP): given a $k$-edge-connected undirected graph $G$ and a collection of extra edges $L$ (\emph{links}), find a minimum cardinality subset of links $L'$ whose addition to $G$ makes it $(k+1)$-edge-connected. It is known~\cite{DKL76} that $k$-CAP can be reduced to the case $k=1$, a.k.a. the Tree Augmentation problem (TAP), for odd $k$ and to the case $k=2$, a.k.a. the Cactus Augmentation problem (CacAP), for even $k$. Several approximation algorithms better than $2$ are known for TAP \cite{A17,CG18,CG18a,EFKN09,FGKS18,GKZ18,KN16,KN16b,N03}, culminating with the current best $1.393$ approximation \cite{CTZ21}. Till recently no better than $2$ approximation was known for CacAP (excluding the special case where the cactus is a single cycle \cite{GGJS19}): the first such algorithm was described in \cite{BGJ20}, and later improved to $1.393$ by \cite{CTZ21}. 
%A better approximation is known when the input cactus is a cycle \cite{GGJS19}. 
A related connectivity augmentation problem is the Matching Augmentation problem (MAP). Some of the techniques used in this paper (in particular the credit scheme in the analysis) are closely related to recent work on MAP \cite{BDS22,CCDZ20,CDGKN20}.

For all the mentioned problems one can define a natural weighted version. Here a general result by Jain \cite{J01} gives a $2$ approximation, and in most cases this is the best known. The most remarkable exception is a very recent breakthrough $1.694$ approximation for the Weighted version of TAP in \cite{TZ21} (later improved to $1.5+\eps$ in \cite{TZ22}). Partial results in this direction where achieved earlier in \cite{A17,CN13,FGKS18,GKZ18,N17}. Achieving a better than $2$ approximation for the weighted version of 2-ECSS is a major open problem closely related to this paper. The $2$-approximation barrier was very recently breached \cite{GJT22} for the special case of weighted 2-ECSS where the \amEdit{edge} weights are $0$ or $1$, also known as the Forest Augmentation Problem (FAP).

\section{Overview of our Approach}
\label{sec:overview}

Our basic approach is similar in spirit to the one of \cite{HVV19}. The high-level idea in their approach is to perform some approximation preserving reductions to obtain an input graph $G$ which is conveniently structured. Then the authors compute an optimal (i.e. minimum-size) 2-edge-cover\footnote{A 2-edge-cover $F$ of $G$ is a subset of edges such that each node has at least $2$ edges of $F$ incident to it. A minimum-size 2-edge-cover can be computed in polynomial time (see e.g., \cite{S03}).}. $H$ of $G$, and then gradually transform $H$ into a feasible solution $S$ by adding and removing edges. This is done in such a way that the total number of edges of the initial $H$ grows at most by a factor \amEdit{of }$4/3$. The claimed approximation ratio then derives from the fact that $|H|$ is a lower bound on the optimal solution size $\opt$ of a 2EC spanning subgraph (since any such solution must be a 2-edge-cover). Their analysis might also be rephrased in terms of a credit-based charging scheme analogously to \cite{CCDZ20,CDGKN20}: one can assign $\alpha=\frac{1}{3}$ credits to each edge of the initial $H$, and then use these credits to pay for the increase of the size of $H$.

%and assign $\alpha=\frac{1}{3}$ credits to each edge of $H$. These credits are used to add more edges to $H$ (sometimes edges are also removed) step by step until a feasible 2-ECSS solution is achieved. The claimed approximation ratio then derives from the fact that the size of the initial 2-edge-cover is a lower bound on the optimal solution size $\opt$ of a 2-ECSS solution (since any such solution needs also to be a 2-edge-cover).

We improve on the above approach mostly by exploiting more powerful reduction steps.
%, which might be helpful also in future work. 
This will allow us to use a smaller (average) value of $\alpha$ in the later stages, hence leading to a better approximation factor. Our construction can be also combined with the ear-decomposition based construction in \cite{SV14}, leading to a better approximation factor. We remark that we do not see a way to directly improve on $4/3$ by combining the ideas in \cite{HVV19} and \cite{SV14} (namely, without passing through our structural results). We next provide some more details. 
%A more detailed overview of our approach is given in section \ref{sec:overview}.
%
%
%
%In this section we provide a more detailed overview of our approach. 

\subsection{A Reduction to Structured Graphs}

Let $\eps>0$ be some constant which will appear in the approximation factor. W.l.o.g. we will always assume that the number of nodes of the input graph $G$ is $\Omega(1/\eps)$, otherwise we can solve 2-ECSS optimally by brute force in constant time. Analogously to \cite{HVV19} (and some prior work), we can reduce to the case that the input graph $G$ is simple\footnote{Namely, there are no loops nor parallel edges} and 2VC\footnote{A $k$-vertex cut is a subset of $k$ nodes whose removal from $G$ creates $2$ or more connected components. A graph is $k$-vertex-connected (kVC) is there is no vertex cut of size at most $k-1$.}.

The authors of \cite{HVV19} define the notion of contractible cycles. We conveniently generalize this notion as follows (see Figure \ref{fig:ForbiddenStructures}).  
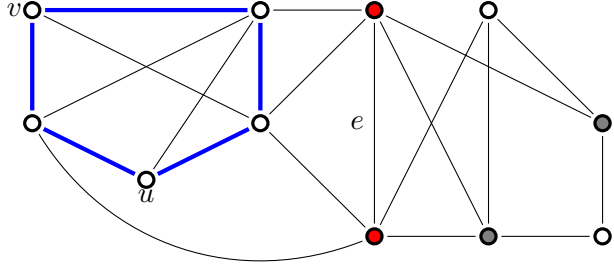
\begin{figure}
\begin{center}
\begin{tikzpicture}[scale=1.5]

\tikzset{vertex/.style={draw=black, very thick, circle,minimum size=0pt, inner sep=2pt, outer sep=2pt}
}

\begin{scope}[every node/.style={vertex}]

\node[fill=red] (1) at (0,0){};
\node[fill=red] (2) at (0,2){};

\node (3) at (-2,0.5){};
\node (4) at (-3,1){};
\node (5) at (-1,1){};
\node (6) at (-3,2){};
\node (7) at (-1,2){};

\node[fill=gray] (8) at (1,0){};
\node (9) at (2,0) {};
\node[fill=gray] (10) at (2,1){};
\node (11) at (1,2){};

%%%%%%Left
\draw[ultra thick, blue] (3) to (4) to (6) to (7) to (5) to (3);
\draw (3) to (7) to (4);
\draw (5) to (2);
\draw (5) to (1);
\draw (7) to (2);
\draw[bend right=40] (4) to (1);
\draw (6) to (5);
\draw (2) to (1);

%%%%%%%right
\draw (10) to (9) to (8) to (1) to (11) to (10) to (2) to (8) to (11);
\end{scope}

\node[below] () at (3){$u$};
\node[left] () at (6){$v$};
\node[left] () at (0,1){$e$};
%%%%%%%%%b%%%%%%%%%%%%%%%%%%%
\begin{scope}[very thick]
\end{scope}

\begin{scope}[very thick, densely dashed]
%\draw (1) -- (a);
\end{scope}

\end{tikzpicture}
\end{center}
\caption{The $5$-cycle $C$ induced by blue edges is a $\frac{5}{4}$-contractible subgraph: in particular, every 2EC spanning subgraph must contain at least 4 distinct edges with endpoints in $V(C)$. The red (resp., gray) nodes form a non-isolating (resp. isolating) 2-vertex-cut. The edge $e$ is irrelevant.
%The picture shows a graph $G$. Here the blue edges correspond to a $5$-cycle $C$. Note that all the neighbours of $u$ and $v$ are in $V(C)$ and since $u$ and $v$ are not adjacent then any 2EC subgraph of $G$ has at least $4$ edges with both endpoints in $V(C)$. Therefore $B$ is $5/4$-contractible. The red nodes form a $2$-vertex cut (which is non-isolating). Hence $e$ is an irrelevant edge. The gray nodes form an isolating $2$-vertex cut. Observe that by removing the gray nodes, the graph has precisely $2$ connected components and one of these two connected components is a singleton.
\label{fig:ForbiddenStructures}
}
\end{figure}

\begin{definition}[$\alpha$-contractible subgraph]\label{structuredGraphDef:contractible}
Let $\alpha \geq 1$ be a fixed constant. A 2EC subgraph $C$ of a 2EC graph $G$ is \emph{$\alpha$-contractible} if every 2EC spanning subgraph of $G$ contains at least $\frac 1 \alpha |E(C)|$ edges with \fab{both} endpoints in $V(C)$.\end{definition}
Intuitively, an $\alpha$-contractible subgraph $C$ is a 2EC subgraph that we can safely contract with the goal of computing an $\alpha$ approximation for 2-ECSS. The first extra property that we enforce is that $G$ does not contain $\alpha$-contractible subgraphs of size at most $1/\eps$. 

The authors of \cite{HVV19} also show that it is possible to delete any edge $uv$ incident to the two neighbours of a node $w$ of degree two (\emph{beta node}) provided that $G[V\setminus \{u,v\}]$ has at least $3$ connected components (including $w$ itself). We generalize their approach by introducing the notion of irrelevant edges and non-isolating $2$-vertex-cuts (see Figure \ref{fig:ForbiddenStructures}). 
\begin{definition}[irrelevant edge]
Given a graph $G$ and $e=uv\in E(G)$, we say that $e$ is \emph{irrelevant} if $\{u,v\}$ is a $2$-vertex-cut of $G$.
\end{definition}
We show that irrelevant edges can be deleted w.l.o.g. Hence in a structured instance we can assume that there are no such edges. 
%The following lemma shows that irrelevant edges can be deleted w.l.o.g. 
\begin{lemma}\label{lem:irrelevant} 
Let $e=uv$ be an irrelevant edge of a 2VC simple graph $G=(V,E)$. Then there exists a minimum-size 2EC spanning subgraph of $G$ not containing $e$.
\end{lemma}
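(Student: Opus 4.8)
The plan is to start from an arbitrary minimum-size 2EC spanning subgraph $S$ of $G$ and, assuming $e=uv\in S$ (otherwise there is nothing to prove), to convert it into another minimum-size 2EC spanning subgraph avoiding $e$. The handle is the 2-vertex-cut $\{u,v\}$: let $D_1,\dots,D_k$ (with $k\ge 2$) be the connected components of $G[V\setminus\{u,v\}]$, and set $S_i:=S[D_i\cup\{u,v\}]\setminus\{e\}$ for each $i$. Since $G$ is simple, $e$ is the only edge with both endpoints in $\{u,v\}$, so $S=\{e\}\cup\bigcup_i S_i$ and distinct $S_i$'s share only the two vertices $u$ and $v$ and no edge.

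The first step is to pin down the possible shapes of the $S_i$. Because $S$ is connected, no connected component of $S_i$ can avoid both $u$ and $v$ (such a component would be cut off from the rest of $S$), so each $S_i$ is either connected, or consists of exactly two components, one containing $u$ and one containing $v$. In the latter case each of the two components is attached to the rest of $S$ through a single vertex only ($u$, respectively $v$); since $S$ is 2EC every one of its edges lies on a cycle of $S$, and such a cycle cannot leave the component (a cycle visits that single attachment vertex at most once), so each of the two components is itself 2EC. Next I split into two cases. If $S\setminus\{e\}$ is 2EC, then it is a 2EC spanning subgraph with one fewer edge than $S$, contradicting minimality; so from now on $S\setminus\{e\}$ is connected but not 2EC. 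A cycle-surgery argument excludes the possibility that all $S_i$ are connected: if $S_i$ and $S_j$ are both connected, then every edge of $S\setminus\{e\}$ can be placed on a cycle by routing through a $u$–$v$ path of $S_i$ and a $u$–$v$ path of $S_j$, so $S\setminus\{e\}$ would be 2EC. Hence some $S_i$, say $S_2$, has the two-component shape; and since $S\setminus\{e\}$ is connected, some $S_i$, say $S_1$, is connected, and it must be the only connected one. Consequently all bridges of $S\setminus\{e\}$ lie inside $S_1$ (an edge inside a 2EC component of a two-component $S_j$, $j\neq1$, lies on a cycle, hence is not a bridge of $S\setminus\{e\}$). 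Moreover, as $S=(S\setminus\{e\})\cup\{e\}$ is 2EC, adding $uv$ must destroy all bridges of $S\setminus\{e\}$; this forces the bridge-block tree (the tree of 2-edge-connected components) of $S\setminus\{e\}$ to be a path whose two ends are the blocks of $u$ and of $v$, so $u$ and $v$ lie on opposite sides of every bridge of $S\setminus\{e\}$.

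The core of the proof is a cheap reroute through $D_2$. Using that $G[D_2]$ is connected and that $u$ and $v$ each have a $G$-neighbour in $D_2$ (otherwise $v$, respectively $u$, would be a cut vertex of the 2VC graph $G$), I construct a $u$–$v$ path $P_2$ inside $G[D_2\cup\{u,v\}]$ that avoids $e$ and uses at most one edge outside $S$: pick a $G$-edge joining the vertex set of the $u$-component of $S_2$ to that of its $v$-component (or, in the degenerate case where the $u$-component is just $\{u\}$, a $G$-edge from $u$ into $D_2$, which then sits in the $v$-component), and complete it to a $u$–$v$ path using paths inside the two 2EC components of $S_2$ — which are subgraphs of $S$. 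Now set $S':=(S\setminus\{e\})\cup E(P_2)$. Then $e\notin S'$, $S'$ is spanning, and $|S'|\le|S\setminus\{e\}|+1=|S|$. It remains to check that $S'$ is 2EC: it is connected; an edge of $S\setminus\{e\}$ that was not a bridge there keeps its cycle in $S'$; each bridge $g$ of $S\setminus\{e\}$ lies in $S_1$ and hence not in $P_2$, so $P_2$ is an intact $u$–$v$ path of $S'\setminus\{g\}$ that reconnects the two sides of $g$; and the at most one new edge of $P_2$ lies on a cycle obtained by following $P_2$ out to $u$ and to $v$ and then a $u$–$v$ path through the connected graph $S\setminus\{e\}$. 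Thus $S'$ is a 2EC spanning subgraph avoiding $e$ with $|S'|\le|S|$, and since $|S|$ is minimum, so is $|S'|$; this proves the lemma.

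I expect the most delicate step to be the last verification — that the reroute $P_2$ covers every bridge of $S\setminus\{e\}$. This rests on the two structural facts isolated above: that all such bridges live inside a single connected piece $S_1$ (so $P_2$, which lives in $D_2\cup\{u,v\}$, is disjoint from them and survives their removal) and that adding $e$ back restores 2-edge-connectivity (so the bridge-block tree is a $u$-to-$v$ path and $u,v$ straddle every bridge). A secondary subtlety is the edge accounting: the reroute must cost at most one edge outside $S$, which is precisely why $P_2$ is threaded through the 2EC components of $S_2$ rather than through an arbitrary path of $G[D_2]$.
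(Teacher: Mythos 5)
Your proof is correct and takes essentially the same approach as the paper: after deleting $e$, the restriction of the solution to one side of the cut $\{u,v\}$ splits into a $u$-piece and a $v$-piece, and a single reconnecting edge $f\neq e$ (whose existence follows from 2-vertex-connectivity) gives a solution of the same size avoiding $e$ --- indeed your $S'=(S\setminus\{e\})\cup E(P_2)$ is exactly the paper's $\OPT'\cup\{f\}$, just presented directly rather than by contradiction. The only substantive difference is that you spell out the verification that the modified graph is 2EC (all bridges of $S\setminus\{e\}$ separate $u$ from $v$ and are covered by the rerouting path), a step the paper asserts without detail.
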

\begin{proof}
Assume by contradiction that all optimal 2EC spanning subgraphs of $G$ contain $e$, and let $\OPT$ be one such solution. Define $\OPT'=\OPT\setminus \{e\}$. Clearly $\OPT'$ cannot be 2EC since this would contradict the optimality of $\OPT$. Let $(V_1,V_2)$ be any partition of $V\setminus \{u,v\}$, $V_1\neq \emptyset\neq V_2$, such that there are no edges between $V_1$ and $V_2$. Notice that at least one between $\OPT'_1:=\OPT'[V_1\cup \{u,v\}]$ and $\OPT'_2:=\OPT'[V_2\cup \{u,v\}]$, say $\OPT'_2$, needs to be connected otherwise $\OPT$ would not be 2EC. We also have that $\OPT'_1$ is disconnected otherwise $\OPT'$ would be 2EC. More precisely, $\OPT'_1$ consists of precisely two connected components $\OPT'_1(u)$ and $\OPT'_1(v)$ containing $u$ and $v$, resp.
Assume w.l.o.g. that $|V(\OPT'_1(u))|\geq 2$. Observe that there must exist an edge $f$ between $V(\OPT'_1(u))$ and $V(\OPT'_1(v))$: indeed otherwise $u$ would be a 1-vertex-cut separating $V(\OPT'_1(u))\setminus \{u\}$ from $V\setminus (V(\OPT'_1(u))\cup \{u\})$. Thus $\OPT'':=\OPT'\cup \{f\}$ is an optimal 2EC spanning subgraph of $G$ not containing $e$, a contradiction.  
\end{proof}

\begin{definition}[isolating $2$-vertex-cut]
Given a graph $G$, and a $2$-vertex-cut $\{u,v\} \subseteq V(G)$ of $G$, we say that $\{u,v\}$ is \emph{isolating} if $G\setminus \{u,v\}$ has exactly two connected components, one of which is of size $1$. Otherwise it is \emph{non-isolating}.
%We say a $2$-vertex-cut is non-trivial if it is not trivial.
\end{definition}
\begin{figure}
\begin{center}
\begin{tikzpicture}[scale=1.5]

\tikzset{vertex/.style={draw=black, very thick, circle,minimum size=0pt, inner sep=2pt, outer sep=2pt}
}

%\node[below=1pt] () at (6,0) {$u$};
%\node[above=1pt] () at (6,1.5) {$\{v,x,y,z\}$};
%\node[right=1pt] () at (9,0.75) {$\{x,y,z\}$};
%\node[above=1pt] () at (8,1.55) {$v$};
%\node[below=1pt] () at (8,0) {$u$};

\node[right] () at (4.5,1) {$x$};
\node[above] () at (4,1.5) {$y$};
\node[above] () at (3,1.5) {$v$};
\node[below] () at (4,0.5) {$z$};
\node[below] () at (3,0) {$u$};
\begin{scope}[every node/.style={vertex}]

\node (1) at (1,0) {};
\node (2) at (2,0.75) {};
\node (3) at (2,0) {};

\node (6) at (4.5,1) {};
\node (7) at (4,1.5) {};
\node (u) at (3,0) {};
\node (v) at (3,1.5) {};
\node (5) at (1,1.5) {};
\node (8) at (4,0.5) {};

%\node (cu) at (6,0) {$$};
%\node (cv) at (6,1.5) {$$};

%\node (bu) at (8,0) {};
%\node (bv) at (8,1.5) {};
%\node (b) at (9,0.75){};

\draw (u) to (2);
\draw (v) to (5);
\draw (v) to (3);
\draw (u) to (3);
\draw (5) to (1);
\draw (1) to (2);
\draw (1) to (3);
\draw (2) to (3);
\draw[blue] (u) to (8);
\draw[red] (6) to (7);
\draw[red] (8) to (6);
\draw[red] (7) to (v);
\draw[blue] (8) to (7);
\draw[green] (8) to (v);
%\draw (b) to (bu);
%\draw (b) to (bv);

\end{scope}

\end{tikzpicture}
\end{center}
\caption{
The black edges, red and blue \amEdit{edges}, and red and green edges induce a subgraph of type A, B\amEdit{,} and C, resp., w.r.t. the 2-vertex-cut $\{u,v\}$.}
%Picture (a) shows the graph $G$ with a non-trivial $2$ vertex cut $\{u,v\}$. The black edges form a 2EC subgraph hence on the left hand side of the cut and hence is a subgraph of type $A$ with respect to $\{u,v\}$. On the right hand side of the cut, the red and blue edges form a subgraph $F_1$ of type B and the red and green edges form a subgraph $F_2$ of type C. Picture (b) and (c) show the graph obtained from $F_1$ and $F_2$ after removing the 2EC blocks. As explained the resulting graph in (b) is a $\Bar{K_2}$ such that one of the nodes contain $u$ and the other one contains $v$. The graph in (c) is a path.
\label{fig:TypeABC}
\end{figure}
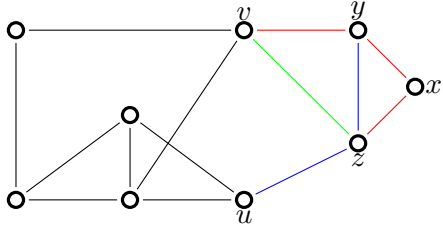
We show how to get rid of non-isolating $2$-vertex-cuts and hence can assume that a structured instance does not contain such cuts. The construction here is slightly complex (details in Section \ref{sec:preprocessing}). The intuition is as follows: consider a non-isolating 2-vertex-cut $\{u,v\}$. W.l.o.g. we will assume $uv\notin E(G)$ since we can remove irrelevant edges. Let $(V_1,V_2)$ be a partition of $V\setminus \{u,v\}$ with $|V_2|\geq |V_1|\geq 2$ and such that there are no edges between $V_1$ and $V_2$. If $|V_1|\geq \Omega(1/\eps)$, we contract $\{u,v\}$ into a single node: this leads to two independent problems that we can solve recursively. Then by adding up to $2$ edges one can obtain a solution for the input problem. The cost of the extra two edges can be conveniently amortized over the cost of the optimal solution restricted to $V_1\cup \{u,v\}$. Otherwise, consider the restriction $\OPT_i$ of the optimal solution $\OPT$ w.r.t. $G[V_i\cup \{u,v\}]$. Let $C(u)$ and $C(v)$ be the 2EC components of $\OPT_i$ containing $u$ and $v$, resp. We observe that $\OPT_i$ must be of one of the following three types (see Figure \ref{fig:TypeABC}) after collapsing each 2EC component $C$ into a single super-node $C$ (see also Definition \ref{def:ABC} and Lemma \ref{lem:ABC}): (A) one super-node $C(u)=C(v)$ (i.e. $\OPT_i$ is 2EC); (B) a path of length at least 1 between $C(u)$ and $C(v)$; (C) two isolated super-nodes $C(u)$ and $C(v)$. Observe also that if $\OPT_1$ is of type C then $\OPT_2$ must be of type A and vice versa. Since $V_1$ has constant size, we can compute the cheapest subgraphs of type A, B, and C in $G[V_1\cup \{u,v\}]$. Depending on the relative sizes of these 3 subgraphs and based on a careful case analysis (critically exploiting the fact that there are no $\alpha$-contractible subgraphs of small size), we recurse on different subproblems over the nodes $V_2\cup \{u,v\}$.

Altogether, we obtain the following definition and lemma (proof in Section \ref{sec:preprocessing}). 
\begin{restatable}[$(\alpha,\eps)$-structured graph]{definition}{defstructuredGraph}\label{structuredGraphDef:structured}
Given $\alpha \geq 1$ and $\epsilon > 0$, an undirected graph $G$ is $(\alpha,\epsilon)$-structured if it is simple and 2VC, it contains at least $2/\eps$ nodes and it does not contain:
\begin{enumerate}\itemsep0pt
\item $\alpha$-contractible subgraphs of size at most $1/\eps$;
\item irrelevant edges;
\item non-isolating $2$-vertex-cuts.
\end{enumerate}
\end{restatable}

\begin{restatable}{lemma}{lempreprocessing}\label{preprocessingLemma}
For all $\alpha \geq \frac{6}{5}$ and $\eps \in ( 0, \frac 1 {24}]$,
if there exists a deterministic polynomial-time $\alpha$-approximation algorithm for 2-ECSS on $(\alpha,\epsilon)$-structured graphs, then there exists a deterministic polynomial-time $(\alpha+2\epsilon)$-approximation algorithm for 2-ECSS.
\end{restatable}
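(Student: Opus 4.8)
The plan is to prove the lemma via one recursive polynomial-time procedure $\mathcal{A}$ that, on input a $2$EC graph $G$, either resolves a base case directly or exhibits a forbidden structure, applies a local reduction, recurses on one or two strictly smaller $2$EC graphs, and reassembles the pieces. The invariant, proved by induction on the measure $\mu(G):=|V(G)|+|E(G)|$, is that $\mathcal{A}(G)$ returns a $2$EC spanning subgraph of $G$ with at most $(\alpha+2\eps)\opt(G)$ edges, in polynomial time. There are two base cases: if $|V(G)|<2/\eps$ then $G$ has $O(1)$ vertices and we solve $2$-ECSS optimally by brute force; if $G$ is already $(\alpha,\eps)$-structured we invoke the assumed $\alpha$-approximation, whose output has at most $\alpha\opt(G)\le(\alpha+2\eps)\opt(G)$ edges. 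Otherwise $G$ fails to be simple, fails to be $2$VC, or violates one of conditions 1--3 of Definition~\ref{structuredGraphDef:structured}, and we apply the corresponding reduction below; the reductions are applied in a fixed priority order so that the precondition of each holds when it is used (e.g.\ $G$ is $2$VC when Lemma~\ref{lem:irrelevant} is invoked, and $uv\notin E(G)$ when a cut $\{u,v\}$ is processed).

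\emph{Routine reductions and contractible subgraphs.} If $G$ is not simple, delete loops and reduce to a simple graph by the standard argument of~\cite{HVV19}; this affects neither the feasible set nor $\opt$. If $G$ has a cut vertex $t$, then (being $2$EC) $G$ is the edge-disjoint union $G_1\cup G_2$ of two $2$EC graphs sharing only $t$; the restriction of any optimum of $G$ to each $G_i$ is a $2$EC spanning subgraph of $G_i$, and conversely a union of such solutions is $2$EC on $G$, so $\opt(G)=\opt(G_1)+\opt(G_2)$ and recursing and taking the union preserves the ratio. If $e$ is irrelevant, Lemma~\ref{lem:irrelevant} gives $\opt(G\setminus e)=\opt(G)$, $G\setminus e$ is still $2$EC, and any feasible solution of $G\setminus e$ is feasible for $G$, so we recurse on $G\setminus e$. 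If $C$ is a nontrivial $\alpha$-contractible subgraph with $|E(C)|\le 1/\eps$, let $G'$ be $G$ with $V(C)$ contracted to a single vertex (dropping resulting loops); since $G$ is $2$EC so is $G'$, and deleting from an optimum of $G$ the $\ge\frac1\alpha|E(C)|$ edges inside $V(C)$ guaranteed by $\alpha$-contractibility and then contracting yields a $2$EC spanning subgraph of $G'$, so $\opt(G')\le\opt(G)-\frac1\alpha|E(C)|$. As $G'$ is strictly smaller in $\mu$, recursion gives $S'$ with $|S'|\le(\alpha+2\eps)\opt(G')$; lifting, $S:=S'\cup E(C)$ is a $2$EC spanning subgraph of $G$ (the edges $E(C)$ $2$-edge-connect $V(C)$ and $S'$ supplies two edge-disjoint routes from every outside vertex to the contracted blob, as is easily verified via the no-bridge criterion), and
\[
|S|\ \le\ (\alpha+2\eps)\bigl(\opt(G)-\tfrac1\alpha|E(C)|\bigr)+|E(C)|\ =\ (\alpha+2\eps)\opt(G)-\tfrac{2\eps}{\alpha}|E(C)|\ \le\ (\alpha+2\eps)\opt(G).
\]
(Whether a subgraph on at most $1/\eps=O(1)$ vertices is $\alpha$-contractible can be decided in polynomial time; see Section~\ref{sec:preprocessing}.)

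\emph{Non-isolating $2$-vertex-cuts --- the main obstacle.} Let $\{u,v\}$ be a non-isolating $2$-vertex-cut; since irrelevant edges have been removed we may assume $uv\notin E(G)$. Because the cut is non-isolating, $V\setminus\{u,v\}$ splits as $V_1\sqcup V_2$ with $|V_2|\ge|V_1|\ge 2$ and no edges across. If $|V_1|\ge 1/\eps$, contract $\{u,v\}$ into a single vertex $w$; as $\{u,v\}$ was a cut, $w$ is a cut vertex, so the resulting $2$EC graph splits (as above) into two strictly smaller $2$EC graphs on $V_1\cup\{w\}$ and $V_2\cup\{w\}$, which we solve recursively and then reassemble into a solution of $G$ by adding a constant number of edges; the cost of these edges is amortized against the restriction of an optimum of $G$ to $G[V_1\cup\{u,v\}]$, which spans $|V_1|+2$ vertices and hence has at least $|V_1|\ge 1/\eps$ edges. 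If $|V_1|<1/\eps$, then $|V_1|=O(1)$ and we compute by brute force the cheapest subgraphs of $G[V_1\cup\{u,v\}]$ of types A, B and C (Definition~\ref{def:ABC}); by Lemma~\ref{lem:ABC} the restriction of any optimum of $G$ to either side is, after collapsing $2$EC components, of one of these three types, with type C on one side forcing type A on the other. A case analysis on the relative sizes of these three subgraphs --- critically using that $G$ has no $\alpha$-contractible subgraph of size at most $1/\eps$ --- determines which modification of $G[V_2\cup\{u,v\}]$ to recurse on (e.g.\ with $uv$ added, or with a small gadget between $u$ and $v$) and which of the three $V_1$-side subgraphs to attach, and one verifies that the worst admissible combination still uses at most $(\alpha+2\eps)\opt(G)$ edges precisely when $\alpha\ge\tfrac65$. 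The hypothesis $\eps\le\tfrac1{24}$ enters only to ensure that $1/\eps$ exceeds the various constant thresholds used here (and that ``$|V_1|\ge 1/\eps$'' indeed makes $\opt(G[V_1\cup\{u,v\}])$ large enough to absorb the extra edges).

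Every non-base reduction strictly decreases $\mu$, and in the two branching cases the children are each strictly smaller than the parent while their total size does not exceed $\mu(G)$, so the recursion tree has polynomially many nodes of total size $\mathrm{poly}(|V(G)|,|E(G)|)$; together with the constant-time brute forces this gives a polynomial running time, and the induction gives the claimed $(\alpha+2\eps)$-approximation. The hard part is entirely the non-isolating-$2$-cut reduction: making precise the constant-edge repair and its amortization in the large-$|V_1|$ case, and carrying out the type-based case analysis in the small-$|V_1|$ case, which is exactly where the bound $\alpha\ge\tfrac65$ (hence the $6/5$ in the abstract) is forced; all of this is deferred to Section~\ref{sec:preprocessing}.
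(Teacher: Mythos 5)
Your outline follows the same route as the paper (a recursive reduction handling, in priority order, small graphs, cut vertices, multi-edges, small $\alpha$-contractible subgraphs, irrelevant edges, and non-isolating $2$-vertex-cuts, with the structured case handed to $\ALG$), but as written it has a genuine gap precisely at the step you identify as the hard one, and your stated induction invariant is too weak to close even the part you do argue. Concretely: in the large-$|V_1|$ branch you contract $\{u,v\}$, recurse on the two sides, and add up to $2$ reconnection edges. With your invariant $|\mathcal{A}(G)|\le(\alpha+2\eps)\opt(G)$ the recursion yields at most $(\alpha+2\eps)\bigl(\opt(G'_1)+\opt(G'_2)\bigr)+2$, and since the restriction of an optimum of $G$ to each side becomes feasible after contraction one can have $\opt(G'_1)+\opt(G'_2)=\opt(G)$ with no slack at all; the ``amortization against $\opt$ restricted to $G[V_1\cup\{u,v\}]$'' you invoke has nothing to charge against under a purely multiplicative hypothesis. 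This is exactly why the paper proves the stronger statement $\Red(G)\le\alpha\,\opt(G)+2\eps|V(G)|-2$ for $|V(G)|>2/\eps$ (Lemma~\ref{lem:approximationFactor}): the additive $-2$ carried by \emph{each} recursive call pays for the $\le 2$ edges of $F'$ (two children contribute $-4$, the repair costs $+2$, leaving $-2$ for the parent), and only at the very end is $|V(G)|\le\opt(G)$ used to convert the additive term into the $+2\eps$ in the ratio. Without strengthening your invariant in this way (or an equivalent potential argument), the induction does not go through.

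The second, larger issue is that the small-$|V_1|$ branch --- the only place where $\alpha\ge 6/5$ is actually needed --- is not proved but deferred: you state that ``a case analysis \ldots determines which modification of $G[V_2\cup\{u,v\}]$ to recurse on'' and that it works ``precisely when $\alpha\ge\frac65$,'' and then point back to Section~\ref{sec:preprocessing}. That case analysis is the substance of the lemma: one must (i) show that some optimum restricts to a type-B or type-C subgraph on the $V_1$-side (the paper's Claim~\ref{claim:OPT1typeBC}, which rules out the type-A case by exhibiting a forbidden $\alpha$-contractible $4$- or $5$-cycle and is where $\alpha\ge 6/5$ enters), (ii) specify the recursion on the $V_2$-side via the dummy-edge graph $G''_2$ or the dummy-node graph $G'''_2$ depending on whether $|\OPT_{1C}|\le|\OPT_{1B}|-1$, and (iii) verify feasibility of the reassembly and the cost inequalities, again using the additive $-2$ slack. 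None of this is reconstructed in your proposal, so what you have is a correct high-level plan matching the paper's overview, not a proof of the lemma.
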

When $\alpha$ and $\eps$ are clear from the context, we will simply talk about contractible subgraphs and structured graphs. Notice that we will present an algorithm with \amEdit{an} approximation factor worse than $6/5$ (though better than $4/3$), hence by Lemma \ref{preprocessingLemma} it is sufficient to consider $(\alpha,\eps)$-structured graphs for some $\alpha<4/3$. In particular, we will consider $(5/4,\eps)$-structured graphs for a sufficiently small constant $\eps>0$\footnote{The reader may wonder why we set $\alpha=5/4$ instead of $\alpha=6/5$. The reason is that a larger value of $\alpha$ excludes more contractible subgraphs, hence making our analysis simpler. Since our approximation factor for $(5/4,\eps)$-structured instances is larger than $5/4$, this has no impact on the final approximation factor.}.
%Our improved approximation factor is far from $6/5$, hence
%We also remark that, to the best of our knowledge, even the failed attempts to improve on $4/3$ were aiming at a $5/4$ or larger approximation factor. Hence 
Lemma \ref{preprocessingLemma} could be a valid starting point for future research on 2-ECSS approximation, and therefore might be of independent interest.

%The lack of irrelevant edges and isolating 2-vertex cuts makes a structured graph ``almost'' 3VC. More precisely, given a structured graph $G$, we obtain a 3VC graph by contracting induced paths of length $2$ into a single edge\footnote{The contracted edges must belong to every feasible solution, however contractions change the nature of the problem unfortunately.}. 
One of the key properties of a structured graph (which we will exploit multiple times) is the existence of the following 3-matchings:
\begin{restatable}[3-Matching Lemma]{lemma}{3MatchingLemma}\label{lem:matchingOfSize3}
Let $G=(V,E)$ be a 2VC simple graph without irrelevant edges and without non-isolating 2-vertex-cuts. Consider any partition $(V_1,V_2)$ of $V$ such that for each $i\in\{1,2\}$, $|V_i|\geq 3$ and if $|V_i|=3$, then $G[V_i]$ is a triangle. Then, there exists a matching of size $3$ between $V_1$ and $V_2$.
\end{restatable}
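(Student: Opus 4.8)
My plan is to invoke a classical max-flow/min-cut style argument: by a defect version of Hall's theorem (or König's theorem), if there is no matching of size $3$ between $V_1$ and $V_2$, then there is a vertex cover of size at most $2$ of the bipartite ``interface'' graph whose edges are exactly the $E$-edges with one endpoint in each $V_i$. So there exist two vertices (possibly fewer, or possibly one in each side) covering all edges crossing the partition. I would then do a short case analysis on where these at most $2$ cover-vertices lie.

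First I would set up notation: let $B$ be the bipartite graph on $V_1\cup V_2$ with the crossing edges of $G$, and suppose for contradiction $\nu(B)\le 2$. Then there is $S\subseteq V$, $|S|\le 2$, meeting every crossing edge. If $|S|\le 1$, then $G\setminus S$ disconnects $V_1$ from $V_2$ (both nonempty), contradicting 2VC. So $|S|=2$, say $S=\{u,v\}$; then $G\setminus\{u,v\}$ has no edge between $V_1\setminus\{u,v\}$ and $V_2\setminus\{u,v\}$. Now I would split on whether $u,v$ lie on the same side. If both are in, say, $V_1$, then since $|V_2|\ge 2$, the set $V_2$ (which has no neighbours outside $S$) forms one side of a $2$-vertex-cut $\{u,v\}$ whose other component(s) sit inside $V_1\setminus\{u,v\}$; since this cut is non-isolating unless $|V_1\setminus\{u,v\}|=1$, and if $|V_1\setminus\{u,v\}|=1$ then $|V_1|=3$ and $G[V_1]$ is a triangle, so $uv\in E(G)$ and $\{u,v\}$ would be an irrelevant edge — either way we contradict the hypotheses. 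The remaining case is $u\in V_1$, $v\in V_2$: then there is no crossing edge at all in $G\setminus\{u,v\}$, so $\{u,v\}$ is a $2$-vertex-cut; I'd argue it must be non-isolating (the two sides $V_1\setminus\{u,v\}$ and $V_2\setminus\{u,v\}$ each have size $\ge 2$ by the $|V_i|\ge 3$ assumption, unless again some $|V_i|=3$ with $G[V_i]$ a triangle, in which case the size-$1$ leftover part is joined to $u,v$ by a triangle edge and we again find an irrelevant edge), once more a contradiction. Hence $\nu(B)\ge 3$.

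The main obstacle I anticipate is the careful bookkeeping in the boundary subcases where some $|V_i|=3$: there the ``removed'' part can genuinely be a single vertex, so the cut looks isolating, and one has to use the triangle structure to extract an irrelevant edge instead. I'd want to be precise that when $|V_i|=3$ and $G[V_i]$ is a triangle, any two of its vertices are adjacent, so whichever pair plays the role of $\{u,v\}$ yields $uv\in E(G)$ with $\{u,v\}$ a $2$-vertex-cut, i.e. an irrelevant edge. Also I should double-check the degenerate possibility that one of the two cover-vertices actually lies outside both... no: cover vertices of $B$ are endpoints of crossing edges, hence lie in $V_1\cup V_2=V$, so that is not an issue. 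Modulo these small cases, the proof is a direct application of König's theorem plus the definitions of irrelevant edge and non-isolating $2$-vertex-cut.
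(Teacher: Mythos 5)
Your proposal is correct and follows essentially the same route as the paper: König's theorem applied to the bipartite graph of crossing edges yields a vertex cover of size at most $2$, and the case analysis on where the cover vertices lie produces a $1$-vertex-cut, a non-isolating $2$-vertex-cut, or (via the triangle hypothesis when $|V_i|=3$) an irrelevant edge, contradicting the assumptions. The only blemish is the superfluous hedge in the mixed case $u\in V_1$, $v\in V_2$: there each side loses only one vertex, so $|V_i\setminus\{u,v\}|\geq 2$ follows directly from $|V_i|\geq 3$ and the ``unless $|V_i|=3$'' branch never occurs.
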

\begin{proof}
Consider the bipartite graph $F$ induced by the edges with exactly one endpoint in $V_1$. Let $M$ be a maximum (cardinality) matching of $F$. Assume by contradiction that $|M|\leq 2$. By K\"{o}nig-Egev\'{a}ry theorem\footnote{\fab{This theorem states that, in a bipartite graph, the cardinality of a maximum matching equals the cardinality of a minimum vertex cover, see e.g. \cite{S03}.}} there exists a vertex cover $C$ of $F$ of size $|M|$. We distinguish $3$ subcases:
%\begin{itemize}\itemsep0pt

\smallskip\noindent{\bf (1)} $C=\{u\}$. Assume w.l.o.g. $u\in V_1$. Since $C$ is a vertex cover of $F$, there are no edges in $F$ (hence in $G$) between the non-empty sets $V_1\setminus \{u\}$ and $V_2$. Hence $C$ is a $1$-vertex-cut, a contradiction.

\smallskip\noindent{\bf (2)} $C=\{u,v\}$ where $C$ is contained in one side of $F$. Assume w.l.o.g. $C\subseteq V_1$. Since $|V_1|\geq 3$ and $C$ is a vertex cover of $F$, $C$ is a 2-vertex-cut separating $V_1\setminus C$ from $V_2$. This implies that $uv\notin E(G)$ (otherwise $uv$ would be an irrelevant edge). Since $G[V_1]$ is not a triangle, $|V_1\setminus C|\geq 2$. This implies that $C$ is a non-isolating 2-vertex-cut of $G$, a contradiction.

\smallskip\noindent{\bf (3)} $C=\{u,v\}$ where $u$ and $v$ belong to different sides of $F$. Assume w.l.o.g. $u\in V_1$ and $v\in V_2$. Consider the sets $V'_1:=V_1\setminus \{u\}$ and $V'_2:=V_2\setminus \{v\}$, both of size at least $2$. Notice that there are no edges in $F$ (hence in $G$) between $V'_1$ and $V'_2$ (otherwise $C$ would not be a vertex cover of $F$). This implies that $C$ is a non-isolating 2-vertex-cut of $G$, a contradiction.\qedhere
%\end{itemize}
\end{proof}

\subsection{A Canonical 2-Edge-Cover}

Let $G=(V,E)$ be an input instance of 2-ECSS which is $(\frac{5}{4},\eps)$-structured for a sufficiently  small constant $\eps>0$ to be fixed later (so that $|V(G)|\geq \frac{2}{\eps}$ is large enough). In the following we will simply say structured instead of $(\frac{5}{4},\eps)$-structured for shortness.
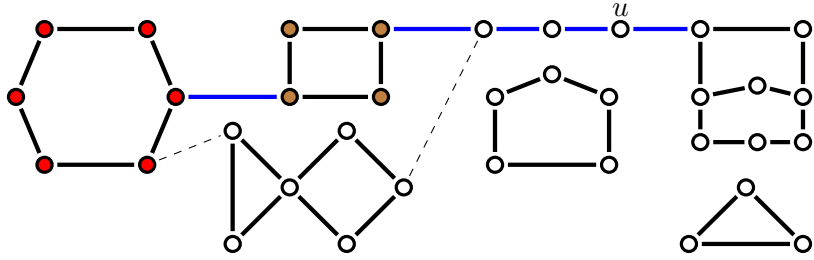
\begin{figure}
\begin{center}
\begin{tikzpicture}[scale=1.5]

\tikzset{vertex/.style={draw=black, very thick, circle,minimum size=0pt, inner sep=2pt, outer sep=2pt}
}

%\node[right] () at (-4.55,2.5) {(a)};
%\node[right] () at (-0.6,2.5) {(b)};
%\node[right] () at (-3.45,0.5){$T$};

\begin{scope}[every node/.style={vertex}]

\node (1) at (-4.5,0.5){};
\node (2) at (-4.5,1.5){};
\node (3) at (-4,1){};
\node (a5) at (-3,1){};
\node (a4) at (-3.5,1.5){};
\node (a6) at (-3.5,0.5){};

\node[fill=brown] (4) at (-4,1.8){};
\node[fill=brown] (5) at (-4,2.4){};
\node[fill=brown] (7) at (-3.2,1.8){};
\node[fill=brown] (8) at (-3.2,2.4){};

\node (9) at (-0.4,2.4){};
\node (10) at (0.5,2.4){};
\node (11) at (-0.4,1.8){};

\node (16) at (0.1,1.9){};
\node (12) at (0.5,1.8){};
\node (13) at (-0.4,1.4){};
\node (14) at (0.1,1.4) {};
\node (15) at (0.5,1.4){};

\node (lonely1) at (-1.1,2.4){};
\node (lonely2) at (-1.7,2.4){};
\node (lonely3) at (-2.3,2.4){};

\node (18) at (0,1){};
\node (19) at (-0.5,0.5){};
\node (20) at (0.5,0.5){};
%\node (21) at (-1,1){};

\node[fill=red] (lb1) at (-5,1.8){};
\node[fill=red] (lb3) at (-6.15,2.4){};
\node[fill=red] (lb5) at (-6.15,1.2){};
\node[fill=red] (lb2) at (-5.25,2.4){};
\node[fill=red] (lb6) at (-5.25,1.2){};
\node[fill=red] (lb4) at (-6.4,1.8){};

\node (c1) at (-1.7,2){};
\node (c2) at (-2.2,1.8){};
\node (c5) at (-1.2,1.8){};
\node (c4) at (-1.2,1.2){};
\node (c3) at (-2.2,1.2){};

%%%%%c4toc3
\draw[ultra thick] (3) to (a4) to (a5) to (a6) to (3); 
%%%%%%

%%%%%c5
\draw[ultra thick] (c1) to (c2) to (c3) to (c4) to (c5) to (c1);
%%%%%c5

%%%%c6
\draw[ultra thick] (lb1) to (lb2) to (lb3) to (lb4) to (lb5) to (lb6) to (lb1);
%%%%%c6

%%%%%%bridges
\draw[ultra thick, blue] (lb1) to (4);
\draw[ultra thick, blue] (8) to (lonely3) to (lonely2) to (lonely1) to (9); 
%%%%%%bridge

\draw[ultra thick] (9) to (10) to (12) to (16) to (11) to (9);
\draw[ultra thick] (11) to (13) to (14) to (15) to (12);
\draw[ultra thick] (2) to (1) to (3) to (2);
\draw[ultra thick] (5) to (4) to (7) to (8) to (5);
\draw[ultra thick] (18) to (19) to (20) to (18);

\draw[dashed] (a5) to (lonely3);
\draw[dashed] (lb6) to (2);
%\draw[dashed] (2) to (lonely3);
%\draw[dashed] (7) to (lonely3);

%\draw[dashed] (17) to (lonely3);
%\draw[dashed] (18) to (lonely3);
%\draw[dashed] (19) to (lonely3);
%\draw[dashed] (20) to (lonely3);
%\draw[dashed] (21) to (lonely3);
%\draw[dashed] (22) to (lonely3);

%\draw[dashed,blue] (3) to (17);
%\draw[dashed,blue] (17) to (14); 
%\draw[dashed,blue] (11) to (8);
%\draw[dashed,blue] (4) to (2);

\end{scope}

\node[above=1pt] () at (lonely1) {$u$};
%\node[left=1pt] () at (2) {$v$};

\begin{scope}[very thick]

\end{scope}

\begin{scope}[very thick, densely dashed]
%\draw (1) -- (a);

\end{scope}

\end{tikzpicture}
\end{center}
\caption{
A canonical 2-edge-cover $H$. The blue edges are bridges. Node $u$ is lonely. The red and brown nodes induce a leaf and inner block, resp., of the unique connected component $C$ of $S$ which is not 2EC. The dashed edges induce a (cheap) bridge-covering path $P_C$ for $C$ involving $bl=2$ blocks of $C$, $br=2$ bridges of $C$, and one connected component other than $C$.
%Edit{(a): Depiction of a merging cycle. The dashed blue edges are the edges of the merging cycle $M$ and the black edges are the edges of the current solution $S$. Observe that all the connected components of $S$ are 2EC. Using $M$ we can obtain $S'$ from $S$ by simply adding the edges of $M$ to $S$ and the edge $uv$ from the light triangle $T$.\\(b): A core-triangle configuration $S$ for an input graph $G$, with core $C$ (induced by the thick black edges) and triangles $T_1,T_2,T_3$ (induced by the gray edges). The remaining edges of $G$ are dashed . Observe that $C$ is 2EC and there are no edges in $G$ between distinct triangles. A feasible solution can be obtained by adding the edges $\{ux,vy\}$ and removing the edge $uv$ w.r.t. $T_1$, and performing a similar update for the remaining 2 triangles.}
\label{fig:canonical}
}
\end{figure}

Analogously to \cite{HVV19}, an initial step in our construction is to compute a minimum-size 2-edge-cover $H\subseteq E$. We call each (connected) component of $H$\footnote{Here and in similar cases in the rest of the paper, we mean the components of the corresponding subgraph $(V,H)$.} which is 2EC, a \emph{2EC component} of $H$. Given a component $C$ of $H$ which is \emph{not} 2EC, we call any maximal 2EC subgraph of $C$ containing at least $2$ (hence $3$) nodes a \emph{block}, any node of $C$ not contained in a block \emph{lonely}, and any edge of $C$ whose removal splits $C$ into two components a \emph{bridge} (see Figure \ref{fig:canonical}). Notice that in our notation blocks and 2EC components of $H$ are distinct. In particular the 2EC components, blocks\amEdit{,} and bridges of $H$ induce a partition of $H$. 
%This distinction is important in the following. 
A block $B$ of $C$ is a \emph{leaf block} if there exists precisely one bridge $b$ of $C$ incident to $B$, and an \emph{inner block} otherwise.

Thanks to the fact that $G$ is structured (and in particular exploiting the 3-matching Lemma \ref{lem:matchingOfSize3}), we can enforce that $H$ is of the following canonical form (see Figure \ref{fig:canonical}).

\begin{restatable}{definition}{defCanonical}[canonical 2-edge-cover]\label{def:canonical}
A 2-edge-cover $H$ of $G$ is \emph{canonical} if the following properties hold:
\item[(i)] The 2EC components of $H$ are $i$-cycles for $3\leq i\leq 6$ or contain at least $7$ edges.
\item[(ii)] The leaf blocks contain at least $6$ edges and the inner ones at least $4$ edges. 

\item[(iii)] ($3$-optimality) It is not possible to obtain from $H$ a 2-edge-cover of the same size with fewer \fab{connected} components by adding up to $3$ edges and removing the same number of edges.
\end{restatable}

\begin{restatable}{lemma}{lemCanonical}\label{lem:canonical}
Given a 2-edge-cover $H$ of a $(5/4,\eps)$-structured graph $G$, $\eps\leq 1/5$, in polynomial time one can compute a 2-edge-cover $H'$ of no larger size which is canonical.  
\end{restatable}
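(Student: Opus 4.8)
\textbf{Proof plan for Lemma~\ref{lem:canonical}.}
The plan is to start from an arbitrary minimum-size 2-edge-cover $H$ and repeatedly apply a fixed set of local modifications, each of which either decreases the number of connected components of $(V,H)$ or keeps it the same while strictly decreasing some secondary potential (for instance the number of bridges, or a lexicographic combination of "number of components", "number of bridges", "number of lonely nodes"); since all of these quantities are bounded by $|V|$, the process terminates in polynomially many steps, and the output is of the same size as $H$ (so still a minimum-size 2-edge-cover) and satisfies (i)--(iii). The work is to exhibit, for every violated property, a modification that makes progress.

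First I would handle property (iii) directly: if some sequence of $\le 3$ edge additions and the same number of removals yields a 2-edge-cover of the same size with fewer components, apply it; this strictly decreases the component count, so it can happen only $O(|V|)$ times, and afterwards (iii) holds. I would interleave the other repairs with re-checking (iii). Next, property (i): a 2EC component $C$ that is neither an $i$-cycle with $3\le i\le 6$ nor has $\ge 7$ edges must be a short 2EC graph on few nodes that is not a cycle, i.e. it has more edges than its vertex count while that count is small; here I would argue that $C$ itself is an $\alpha$-contractible subgraph of size at most $1/\eps$ for $\alpha=5/4$ (a 2EC graph on $k$ vertices forces any 2EC spanning subgraph to use $\ge k$ edges among those vertices, and when $|E(C)| > \tfrac54 k$ and $|E(C)|\le 1/\eps$ this contradicts $G$ being structured), which rules such components out entirely; the remaining borderline cases (a 2EC component on exactly $5$ or $6$ vertices that carries one chord) are removed either because the chord is an irrelevant edge or by a small local exchange reducing $|E(C)|$ to the cycle, which keeps the size fixed and does not increase the number of components.

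Then I would deal with property (ii), the hardest part. Consider a component $C$ of $H$ that is not 2EC, so it decomposes into blocks, lonely nodes, and bridges. A leaf block $B$ has a unique incident bridge $b=xy$ with $x\in B$; if $|E(B)|\le 5$ then, since $B$ is 2EC on at most $5$ vertices, $B$ is a short cycle and I would look at the edge $b$ together with $B$: using that $G$ has no irrelevant edges and no non-isolating $2$-vertex-cuts, and invoking the $3$-Matching Lemma~\ref{lem:matchingOfSize3} on the partition $(V(B)\cup\{\text{stub}\},\,\text{rest})$, I obtain enough edges leaving $B$ (or the attachment structure forces $\{x,\text{other endpoint}\}$ to be a cut of a forbidden type) to either merge $B$ into a neighbouring component/block via a $\le 3$-swap (progress by (iii)) or to conclude $B$ does not arise; the threshold $6$ is exactly what makes the $3$-optimality swap and the 3-matching argument balance. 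For an inner block with $\le 3$ edges: a block has $\ge 3$ edges by definition (it is 2EC with $\ge 3$ nodes), so the only case is a triangle that is an inner block, incident to $\ge 2$ bridges; again I would use the $3$-matching / no-small-contractible structure to show the two bridge endpoints plus the third vertex admit a replacement edge letting us contract the triangle into an adjacent block with a size-preserving $\le 3$-swap, reducing the bridge count. Finally I would re-run the (iii)-check; when no rule applies, $H$ is canonical, and polynomiality follows from the monotone potential. The main obstacle is the $B$-with-$5$-edges leaf-block case: one must carefully enumerate how $B$ attaches to the rest of the graph (single bridge plus possibly lonely nodes hanging off it) and verify in each subcase that structuredness supplies the edge needed for a legal swap — this is where the $\eps\le 1/5$ hypothesis and the 3-Matching Lemma are really used.
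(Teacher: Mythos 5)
Your overall scheme (iterate local repairs with a monotone potential, use $3$-optimality swaps for (iii), the $3$-Matching Lemma plus structuredness for small blocks) is in the right spirit and close to the paper's, but your treatment of property (i) contains a genuine error. You claim that ``a 2EC graph on $k$ vertices forces any 2EC spanning subgraph to use $\ge k$ edges among those vertices,'' and from this that a small component with $|E(C)|>\tfrac54|V(C)|$ is $\tfrac54$-contractible. This is false: the definition of $\alpha$-contractibility quantifies over 2EC spanning subgraphs of the \emph{whole} graph $G$, and such a subgraph may satisfy the degree requirements of the vertices of $V(C)$ entirely through edges leaving $V(C)$, so nothing forces it to contain $|V(C)|$ (or indeed any) edges inside $V(C)$. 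The paper invokes contractibility only when the attachment structure forces edges to stay inside (e.g.\ nodes of degree exactly $2$ in $G$ in the $K_{2,3}$ case, or cycle nodes adjacent only within the block in the $5$-cycle leaf-block case). Moreover, even setting that aside, your two mechanisms (density-based contractibility, or deleting a chord) miss precisely the critical configurations: after removing removable edges, the only non-cycle 2EC components with fewer than $7$ edges are the bowtie and $K_{2,3}$ on $5$ nodes, and for both your density condition fails ($6\le\tfrac54\cdot 5$) and no single edge is removable (every edge deletion creates a degree-$1$ vertex). The paper disposes of these by explicit rerouting: either an edge of $G$ between component nodes lets one replace the component by a $5$-cycle (strictly fewer edges), or, since the shared/center nodes are not cut vertices of $G$, an edge leaves the component and a size-preserving swap merges components, which is excluded by (iii); only in the forced-attachment subcase of $K_{2,3}$ does contractibility enter.

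Two smaller points. You never state the plain edge-deletion step (``if $H'\setminus\{e\}$ is still a 2-edge-cover, delete $e$''), yet you rely on its consequences, e.g.\ when asserting that a block with at most $5$ edges is a cycle (a $4$-cycle plus a chord is a 2EC block with $5$ edges); the paper makes this step explicit and uses it again to kill inner triangle blocks whose two bridges attach at distinct nodes. Relatedly, the conclusion is only that $H'$ has \emph{no larger} size (these deletions and the bowtie/$K_{2,3}$ reroutings strictly shrink $H'$), so your remark that the output has the same size as $H$ should be dropped. For property (ii) your sketch of the leaf-block case (3-matching into $V(B)$, case analysis, contractibility when $v_2v_5\notin E(G)$) matches the paper's step for blocks all of whose outgoing edges attach at a single node, but it would need to be carried out for that slightly broader class (not only leaf blocks) to also cover inner triangles whose bridges all attach at one vertex.
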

The proof of Lemma \ref{lem:canonical} is given in Section \ref{sec:canonical}. Enforcing property (iii) is trivial. Property (i) does not hold only if a 2EC component $C$ of $H$ is a bowtie (i.e., two triangles that share a common node) or a $K_{2,3}$ (i.e., a complete bipartite graph with one side of size $2$ and the other of size $3$). In both cases it is possible to reduce the number of edges or the number of connected components (using the fact that $G$ is structured). In order to enforce (ii) we consider any block $B$ with at most $5$ edges and such that all the bridges incident to $B$ are incident to a single node (this includes leaf blocks as a special case). The 3-matching Lemma \ref{lem:matchingOfSize3} allows us to reduce the number of connected components or the number of bridges while not increasing the number of edges.

\subsection{A Credit Assignment Scheme}

We will next assume that we are given a minimum-size 2-edge-cover $H$ which is also canonical.
The high-level structure of our algorithm is as follows. Starting from $S=H$, we gradually transform $S$ by adding and removing edges, until $S$ becomes a feasible solution. In order to keep the size of $S$ under control, we define a credit assignment scheme that assigns credits $cr(S)\geq 0$ to $S$. Let us define the cost of $S$ as $cost(S)=|S|+cr(S)$. Initially $cost(S)=cost(H)\leq \alpha |H|$ for some $\alpha<\frac{4}{3}$. During the process we iteratively replace $S$ with a different solution $S'$ satisfying $cost(S')\leq cost(S)$. Typically we make progress towards a feasible solution by guaranteeing that $S'$ has fewer bridges and/or fewer connected components than $S$. This way at the end of the algorithm one obtains a feasible solution $S$ of size $|S|\leq cost(S)\leq cost(H)\leq \alpha |H|$. This provides the desired $\alpha<\frac{4}{3}$ approximation since, as already observed, $|H|\leq \opt$.    

In a preliminary draft of this paper we initially assigned $\delta<\frac{1}{3}$ credits to each edge of $H$, so that $cost(H)<\frac{4}{3}\opt$. The notation and case analysis however were extremely complex. So we decided to present a simpler (still unfortunately not very simple) approach which leads to a slightly worse approximation factor. This simplified approach uses two different algorithms and credit assignment schemes depending on the number of 2EC components of $H$ which are triangles. In more detail, suppose that $t|H|$ edges of $H$ belong to triangle 2EC components, and $b|H|$ edges of $H$ are bridges. Obviously $0\leq t+b\leq 1$. Then the following Lemma (proof in Section \ref{sec:manyTriangles}) will give a better than $4/3$ approximation when $t$ is sufficiently close to $1$ \fab{(namely, for any $t>19/20$)}. 

\begin{restatable}{lemma}{lemmaManyTriangles}\label{lem:manyTriangles:main}
Given a canonical minimum-size 2-edge-cover $H$ of a structured graph $G$ with $b|H|$ bridges and $t|H|$ edges belonging to triangle 2EC components. In polynomial time one can compute a $\frac{13}{8}-\frac{1}{3}t+\frac{1}{2}b$ approximate solution for 2-ECSS on $G$.
\end{restatable}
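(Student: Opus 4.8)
\textbf{Proof proposal for Lemma \ref{lem:manyTriangles:main}.}

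The plan is to exploit the hypothesis that a $t$-fraction of the edges of $H$ lie in triangle 2EC components. The key observation is that a triangle 2EC component is already 2-edge-connected, so the ``only'' obstacles to feasibility are (a) the non-triangle components, and (b) the fact that $H$ is disconnected. Since $|H| \le \opt$, I would first split the budget: the triangle edges cost $t|H| \le t\cdot\opt$ edges, and because a triangle on $3$ nodes is the cheapest possible 2EC subgraph on its vertex set, this part is essentially ``for free'' modulo the cost of merging components. Concretely, I would set up a credit scheme assigning a large credit (close to $\frac 1 2$) to each edge of a triangle 2EC component and a smaller credit (around $\frac 1 3$) to the remaining edges, so that the total initial cost is at most $(1 + \frac 1 2 t + \frac 1 3 (1-t))|H| \le (\frac{13}{8} - \frac 1 3 t + \ldots)\opt$ — the exact constants being chosen to match the claimed bound $\frac{13}{8} - \frac 1 3 t + \frac 1 2 b$, where the $+\frac 1 2 b$ term accounts for the worst case in which bridges are expensive to deal with.

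The main work is then to transform $H$ into a feasible 2EC spanning subgraph $S$ while paying only with the assigned credits. I would proceed in two phases, analogous to the general scheme sketched in Section \ref{sec:overview}. \emph{Phase 1 (bridge covering):} for each connected component $C$ of $H$ that is not 2EC, use a cheap bridge-covering path $P_C$ (as in Figure \ref{fig:canonical}) to eliminate the bridges of $C$; here the $b|H|$ bridges are what drive the $+\frac 1 2 b$ loss, since in the worst case each bridge forces us to spend roughly half an edge of extra budget. The canonical form (Lemma \ref{lem:canonical}) guarantees leaf blocks have $\ge 6$ edges and inner blocks $\ge 4$ edges, which provides enough credit locally to fund these paths. \emph{Phase 2 (merging components):} once every component is 2EC, repeatedly merge components — in particular gluing triangle components to the rest — by adding two edges and removing one (or using the 3-Matching Lemma \ref{lem:matchingOfSize3} and the absence of $\frac 5 4$-contractible subgraphs and non-isolating 2-cuts in the structured graph $G$ to find a more efficient local exchange). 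Each triangle, carrying $\approx \frac 3 2$ credits, has enough to pay for its own absorption into a growing 2EC component, which is exactly why a large $t$ improves the ratio.

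The hard part will be Phase 2 and the precise bookkeeping: one must show that the $\frac 1 2$-per-edge credit on a triangle genuinely suffices to merge it, even when many triangles must be chained together, without ever letting $\cost(S)$ increase. This requires a careful case analysis on how a triangle component connects to $G$ — using 2-vertex-connectedness of $G$ to guarantee at least two edges leaving the triangle, and the structured-graph properties to rule out degenerate configurations (e.g. a triangle whose only outgoing edges hit a single vertex, which would be an isolating 2-cut or force a contractible subgraph). I expect the calculation that the combined loss over all phases is at most $\frac{13}{8} - \frac 1 3 t + \frac 1 2 b$ per unit of $|H|$ to be delicate but routine once the exchange moves are pinned down; the genuinely non-obvious step is designing the exchange moves so that each one is locally cost-nonincreasing against the two-tier credit scheme. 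I would handle the remaining (non-triangle, non-bridge) edges with the standard $\frac 1 3$-credit argument of \cite{HVV19}, which already yields $\frac 4 3$ on that part, so that the overall bound interpolates correctly as $t \to 1$ and $b \to 0$.
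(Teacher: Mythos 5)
Your proposal has the credit scheme inverted, and this is fatal to the claimed dependence on $t$. You assign a \emph{large} credit ($\approx\frac12$ per edge, i.e.\ $\approx\frac32$ per triangle) to triangle components and $\approx\frac13$ to the rest; then the initial budget is roughly $\bigl(\frac43+\frac16 t+\cdots\bigr)|H|$, which \emph{increases} with $t$, so no subsequent bookkeeping can produce a guarantee of the form $\frac{13}{8}-\frac13 t+\frac12 b$ that improves as $t\to 1$. The paper does the opposite: each triangle 2EC component receives only $\frac12$ credit in total ($\frac16$ per edge), heavy components receive $2$, and the bridges are not covered but simply deleted (each bridge contributes $2$ credits, handed to the lonely nodes it leaves behind), giving $cost(H\setminus H_{brg})\le\bigl(\frac32-\frac13 t+\frac12 b\bigr)|H|$. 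Your Phase~1 (bridge-covering paths funded by block credits, losing $\approx\frac12$ per bridge) is the mechanism of the few-triangles Lemma \ref{lem:fewTriangles:main}, not of this lemma; here the $+\frac12 b$ term already falls out of the initial credit computation after bridge deletion.

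The second, and decisive, missing idea is the endgame that actually produces the constant $\frac{13}{8}$. Precisely because triangles are starved of credit, the gluing phase can stall: a merging cycle incident to exactly one light (triangle) and one heavy component cannot be paid for. The paper shows that the process only gets stuck at a \emph{core-triangle} 2-edge-cover (one 2EC core plus $k$ triangles, pairwise non-adjacent in $G$), finishes by spending one net extra edge per triangle ($|\APX|=|S|+k$), and then charges this $+k$ against a \emph{second} lower bound $\opt\ge 4k$ (any feasible solution uses at least $4$ edges per triangle, and distinct triangles are non-adjacent), so the surcharge is at most $\frac18\opt$ and $\frac{13}{8}=\frac32+\frac18$. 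Your proposal never identifies this stuck configuration, never proves non-adjacency of the residual triangles, and uses only $\opt\ge|H|$; with a single lower bound there is no route to $\frac{13}{8}$, and with your credit values the claim that "each triangle has enough to pay for its own absorption" would instead push the initial cost above what the target ratio allows. So the gap is not delicate bookkeeping: the approach as described cannot yield the stated bound without reversing the credit asymmetry and adding the core-triangle analysis together with the $\opt\ge 4k$ bound.
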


In the complementary case (i.e. for $t$ sufficiently close to $0$, \fab{namely} for any $t<1$), we exploit the following Lemma (proof in Section \ref{sec:fewTriangles}).

 \begin{restatable}{lemma}{lemmaFewTriangles}\label{lem:fewTriangles:main}
Given a canonical minimum-size 2-edge-cover $H$ of a structured graph $G$ with $b|H|$ bridges and $t|H|$ edges belonging to triangle EC components. In polynomial time one can compute a $\frac{13}{10}+\frac{1}{30}t-\frac{1}{20}b$
approximate solution for 2-ECSS on $G$.
\end{restatable}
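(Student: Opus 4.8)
The plan is to follow the same skeleton as the ``many triangles'' case (Lemma~\ref{lem:manyTriangles:main}) but with a credit scheme tuned for the regime where triangle 2EC components are scarce. Starting from a canonical minimum-size 2-edge-cover $H$ of the structured graph $G$, I would assign each edge of $H$ a base credit, and then distribute additional credits to the various pieces of $H$ (2EC components, blocks, bridges, lonely nodes) according to their type, so that $\cost(H)=|H|+\credit(H)\le \left(\frac{13}{10}+\frac{1}{30}t-\frac{1}{20}b\right)|H|$. The negative coefficient on $b$ suggests that bridges should be assigned \emph{more} than the average credit: each of the $b|H|$ bridges carries a surplus credit (on the order of $\frac{1}{2}$ extra, mirroring the role of bridges in the complementary lemma), which is exactly what one needs since bridges are the obstructions we must destroy, and destroying one bridge should release enough credit to pay for the structural operation. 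The small positive coefficient on $t$ reflects that triangle 2EC components are the ``cheapest'' structures to merge into the rest of the solution, so they can afford slightly less credit than a generic component. The first step is therefore to write down the precise per-structure credit values and verify the arithmetic identity above using canonicity: properties (i) and (ii) of Definition~\ref{def:canonical} lower-bound the sizes of 2EC components and blocks, which is what makes the averaging work out, and the decomposition of $H$ into 2EC components, blocks, and bridges (noted right after the definition of canonical) lets one sum the contributions cleanly.

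The second step is the iterative transformation. Starting from $S=H$, I would repeatedly apply one of a small menu of local moves, each of which swaps in some edges of $G\setminus S$ and swaps out some edges of $S$, never increasing $|S|$, and strictly decreasing either the number of bridges of $S$ or the number of connected components of $S$, while maintaining $\cost(S')\le\cost(S)$. The two phases are (a) \emph{bridge-covering}: as long as some connected component $C$ of $S$ is not 2EC, find a bridge-covering path $P_C$ as in Figure~\ref{fig:canonical} — a path in $G$ through some blocks and bridges of $C$ (and possibly other components) — add it and remove the now-redundant bridges; the credit surplus stored on the covered bridges, together with the credit freed from any absorbed components, pays for the new edges. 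The 3-matching Lemma~\ref{lem:matchingOfSize3} and the absence of non-isolating 2-vertex-cuts and irrelevant edges guarantee that such a path exists and that it is ``cheap'' in the relevant sense; the fact that $G$ has no small $\frac{5}{4}$-contractible subgraph keeps pathological small configurations from blocking the move. (b) \emph{merging components}: once every component of $S$ is 2EC, repeatedly merge two 2EC components into one by adding two edges of $G$ between them (which exist, again by 2-vertex-connectivity / the 3-matching lemma) and, where possible, deleting an edge, charging the net increase to the credit released by having one fewer component. Triangle 2EC components are handled as the easy subcase here, which is where the $+\frac{1}{30}t$ slack gets used.

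The third step is to confirm that the process terminates in polynomially many rounds (each round strictly decreases the integer quantity ``number of bridges plus number of components''), that each round is polynomial-time (local search over constantly many edges around the relevant cut, plus maximum-matching computations for the 3-matching lemma), and that the final $S$ is a feasible 2EC spanning subgraph with $|S|\le\cost(S)\le\cost(H)\le\left(\frac{13}{10}+\frac{1}{30}t-\frac{1}{20}b\right)|H|\le\left(\frac{13}{10}+\frac{1}{30}t-\frac{1}{20}b\right)\opt$, using $|H|\le\opt$ since every 2EC spanning subgraph is a 2-edge-cover. I would also need the boundary bookkeeping: when the last bridge or the last excess component is eliminated, there may be leftover credit, which is fine (credits are nonnegative and only ever spent), but I must make sure no move is ever ``stuck'' with insufficient credit.

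The main obstacle I expect is the bridge-covering phase: one has to show that for \emph{every} non-2EC component $C$ there is a bridge-covering path whose cost (number of new edges minus credit recovered from covered bridges and blocks and absorbed components) is nonpositive, and this is exactly where delicate case analysis on the structure of $C$ — how many leaf blocks, how the bridges are arranged, whether the path can be routed within $C$ or must reach another component — becomes unavoidable. In particular, leaf blocks are the expensive objects (they need $\ge 6$ edges by canonicity precisely so they carry enough credit), and a component that is a long path of blocks joined by bridges is the worst case; I would handle it by covering several bridges at once with a single long path, so that the accumulated bridge-surplus credit $\left(\sim\frac12\right)$ per bridge outpaces the $\le 1$ per new path edge. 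Getting the constants to close here, jointly with the averaging identity in step one, is the crux, and it is what forces the specific fractions $\frac{13}{10},\frac{1}{30},\frac{1}{20}$ rather than rounder numbers.
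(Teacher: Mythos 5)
There is a genuine gap, and it is in the foundation of your plan: the credit assignment. You reason that the $-\frac{1}{20}b$ term means bridges should carry \emph{more} than the average credit (a surplus of roughly $\frac12$ each), and you then plan to pay for bridge-covering out of that surplus. The sign works the other way. Writing $\cost(H)=|H|+cr(H)$ with per-edge credits $c_t$ on triangle-component edges, $c_b$ on bridges and $c_g$ on the rest, one gets $cr(H)=(c_g+(c_t-c_g)t+(c_b-c_g)b)|H|$; to match $\bigl(\frac{13}{10}+\frac{1}{30}t-\frac{1}{20}b\bigr)|H|$ you are forced to take $c_g=\frac{3}{10}$, $c_t=\frac13$ and $c_b=\frac14$, i.e.\ bridges get \emph{less} credit than a generic edge (this is exactly Lemma~\ref{lem:caseA:initialCost} in the paper). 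With your surplus-on-bridges scheme the coefficient of $b$ in the initial cost would be about $+\frac12$ (that is the \emph{many}-triangles scheme, where bridges receive $2$ credits and the factor is $\frac{13}{8}-\frac13 t+\frac12 b$), so the very first inequality $\cost(H)\le(\frac{13}{10}+\frac{1}{30}t-\frac{1}{20}b)|H|$ you propose to ``verify'' is false. Conversely, once the credits are fixed at the forced values, your payment plan for bridge-covering collapses: a covered bridge releases only $\frac14$, so ``accumulated bridge-surplus credit $\sim\frac12$ per bridge outpaces the $\le 1$ per new path edge'' is not available. In the paper the new path edges are paid essentially by the $\ge 1$ credit of each absorbed connected component and by the $1$ credit per block (made possible by canonicity: leaf blocks have $\ge 6$ edges, inner blocks $\ge 4$, and each non-2EC component and each block is endowed with one credit); a bridge-covering path is cheap when $\frac14 br+bl\ge 2$, and when no cheap path exists the paper needs the machinery of Lemmas~\ref{lem:reachable} and \ref{lem:merge2paths} (combining two paths and deleting an edge of $T_C$), not bridge credits.

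Two further points. First, your invariant ``never increasing $|S|$'' cannot be right: bridge-covering adds a path without removing edges, and if $|S|$ never grew you would end with $|S|\le|H|\le\opt$, i.e.\ a $1$-approximation; the correct monotone quantity is $\cost(S)=|S|+cr(S)$, as you in fact use in your final chain of inequalities. Second, the gluing phase is not the easy step you describe: merging two components by adding two edges costs $2$ while the released credit $cr(C_1)+cr(C_2)-cr(C')$ can be as small as $0$ (e.g.\ two triangles, or small cycles with credit $\frac{3}{10}k$), which is why the paper needs the $3$-matching Lemma~\ref{lem:matchingOfSize3}, the local-merge lemmas, and the tree/non-tree analysis of the component graph with gluing paths. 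These omissions are of the ``deferred case analysis'' kind, but the credit-scheme inconsistency above is a concrete error that would make the write-up fail at step one.
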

Combining Lemmas \ref{preprocessingLemma}, \ref{lem:canonical}, \ref{lem:manyTriangles:main}, and \ref{lem:fewTriangles:main}, one obtains a $\min\{\frac{13}{8}-\frac{1}{3}t+\frac{1}{2}b,\frac{13}{10}+\frac{1}{30}t-\frac{1}{20}b\}+\eps$ approximation for 2-ECSS. The reader might easily check that this is $\frac{117}{88}+\eps$ in the worst case, hence already improving on $\frac{4}{3}$ for a small enough constant $\eps>0$.

The refined approximation factor claimed in Theorem \ref{thr:refinedApx} is based on a refinement of Lemma \ref{lem:manyTriangles:main} to a $\frac{14}{9}-\frac{8}{27}t+\frac{4}{9}b$ approximation (Lemma \ref{lem:manyTriangles:refined} in Section \ref{sec:refinedApproximation}) that exploits some ideas in \cite{SV14}. The rest of this section is devoted to a high-level description of the proofs of Lemmas \ref{lem:manyTriangles:main} and \ref{lem:fewTriangles:main}

\subsection{Case of Many Triangles: Overview}
\label{sec:manyTrianglesOverview}

We sketch the proof of Lemma \ref{lem:manyTriangles:main} (for the details see Section \ref{sec:manyTriangles}). The first step of the construction is to remove all the bridges $H_{brg}$ of $H$, hence obtaining a subgraph $S=H\setminus H_{brg}$ whose connected components are 2EC. We call the singleton nodes \emph{lonely}, and we consider them as (degenerate) 2EC components.
 
Then there is a \emph{gluing stage} where we gradually merge the 2EC components of $S$ by adding and removing edges (keeping the invariant that each connected component of $S$ is 2EC). In order to keep the size of $S$ under control, we assign certain credits to the 2EC components of $S$ as follows. We say that a 2EC component of $S$ is \emph{light} if it is a triangle connected component of $H$ (hence of the initial $S$) and \emph{heavy} otherwise. We remark that the gluing phase might create new triangle connected components, which are therefore heavy. Let $\calC(S)$ be the 2EC components of $S$. We define a credit assignment scheme which assigns $cr(C)$ credits to each $C\in \calC(S)$ according to the following simple rule:
\begin{itemize}\itemsep0pt
\item each light 2EC component $C$ of $S$ receives $cr(C)=\frac{1}{2}$ credits;
\item each heavy 2EC component $C$ of $S$ receives $cr(C)=2$ credits.
\end{itemize}
Let $cr(S):=\sum_{C\in \calC(S)}cr(C)$ be the total amount of credits assigned to $S$, and define the cost of $S$ as $cost(S)=|S|+cr(S)$. %We start by observing that $cost(S)$ is initially not too large. 
Recall that $H$ contains $b|H|$ bridges and $t|H|$ edges in its triangle 2EC components. Let us upper bound the cost of the initial $S$. \fab{Recall that $H$ is a canonical 2-edge-cover.}
%The next lemma upper bounds the size of the initial solution $S$. Its simple proof is based on assigning $\frac{1}{6}$ credits to each edge of a triangle 2EC component of $H$, $2$ credits to each bridge of $H$, and $\frac{1}{2}$ credits to the remaining edges in $H$. These credits are then distributed in a natural way, exploiting the fact that $H$ is canonical. 

\begin{restatable}{lemma}{lemmaCaseBinitialCost}\label{lem:caseB:initialCost}
$cost(H\setminus H_{brg})\leq (\frac{3}{2}-\frac{1}{3}t+\frac{1}{2}b)|H|$.
\end{restatable}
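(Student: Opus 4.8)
\textit{Proof plan.} This is a counting argument driven by the canonical structure of $H$.

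First I would fix notation for the pieces of $H$. Let $n_\triangle$ be the number of triangle 2EC components of $H$, so they account for $3n_\triangle = t|H|$ edges. Let $p$ be the number of non-triangle 2EC components of $H$ and $e_1$ the total number of edges in them. Let $q$ be the number of components of $H$ that are not 2EC; inside these let $B$ be the total number of blocks, $e_2$ the total number of edges lying in blocks, and $L$ the total number of lonely nodes. Recall finally that $H$ has exactly $b|H|$ bridges, all of which lie inside the non-2EC components. Since the 2EC components, the blocks and the bridges partition the edges of $H$,
\[ |H| = t|H| + e_1 + e_2 + b|H|, \qquad\text{hence}\qquad e_1 + e_2 = (1-t-b)|H|. \]

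Second, I would describe $\calC(S)$ for $S = H\setminus H_{brg}$. Deleting the bridges leaves every 2EC component of $H$ intact and breaks every non-2EC component of $H$ into exactly its blocks and its lonely nodes. Hence $\calC(S)$ consists of the $n_\triangle$ triangle 2EC components of $H$, which are light and carry $\tfrac12$ credit each, together with the $p$ non-triangle 2EC components of $H$, the $B$ blocks, and the $L$ lonely nodes, all of which are heavy (a block is never a connected component of $H$, so even a triangular block is heavy) and carry $2$ credits each. Using $n_\triangle = t|H|/3$ and $|S| = (1-b)|H|$,
\[ cost(S) = |S| + cr(S) = (1-b)|H| + \tfrac{t}{6}|H| + 2\,(p + B + L). \]

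Third, and this is the one step needing a structural idea, I would control $B+L$. Contracting each block of a non-2EC component $C$ of $H$ to a single vertex produces a \emph{tree} $T_C$: every cycle of $C$ lies inside a single block (a cycle is a 2EC subgraph on at least three vertices, and blocks are the maximal such subgraphs), so after contraction only bridges remain and they cannot close a cycle. Therefore $C$ has $(\#\text{blocks of }C)+(\#\text{lonely nodes of }C)-1$ bridges, and summing over the $q$ non-2EC components gives $B+L = b|H| + q$. Moreover every lonely node has all of its (at least two) incident edges being bridges, so it is an internal vertex of $T_C$; hence every leaf of $T_C$ is a leaf block, and since $T_C$ has at least two vertices (otherwise $C$ would be 2EC) it has at least two leaf blocks. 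By canonicity (property (ii)) each leaf block has at least $6$ edges, so $e_2 \ge 12q$; by canonicity (property (i)) each non-triangle 2EC component of $H$ has at least $4$ edges, so $e_1 \ge 4p$.

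Finally I would combine everything. Substituting $B+L = b|H|+q$ into the cost expression gives $cost(S) = (1+b+\tfrac{t}{6})|H| + 2(p+q)$, and
\[ p+q \;\le\; \tfrac{e_1}{4} + \tfrac{e_2}{12} \;\le\; \tfrac{e_1+e_2}{4} \;=\; \tfrac14(1-t-b)|H|, \]
so $cost(S) \le (1+b+\tfrac{t}{6})|H| + \tfrac12(1-t-b)|H| = (\tfrac32 - \tfrac{t}{3} + \tfrac{b}{2})|H|$, as claimed. There is no genuine obstacle; the only delicate point is the bookkeeping — making sure that all blocks (including triangular ones) and all lonely nodes count as heavy, that all bridges live inside non-2EC components, and that the number of heavy pieces created by deleting the bridges is governed by the tree $T_C$ (equivalently, by the number $q$ of non-2EC components) and not by the potentially much larger edge count.
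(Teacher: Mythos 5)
Your proof is correct and gives exactly the claimed bound, but its bookkeeping takes a somewhat different route than the paper's. The paper argues by a local redistribution of per-edge credits: it places $\frac16$ on each edge of a triangle 2EC component, $2$ on each bridge, and $\frac12$ on every other edge (total $(\frac12-\frac13 t+\frac32 b)|H|$); each block and each non-triangle 2EC component then pays its own $2$ credits out of its at least $4$ edges (canonicity gives $\geq 4$ edges here, with no need for the leaf-block bound of $6$), and each lonely node is paid by the $2$ credits of a bridge, using only the inequality $(\text{number of lonely nodes})\leq |H_{brg}|$. You instead count the heavy components $p+B+L$ globally and control $B+L$ via the bridge-tree identity $B+L=|H_{brg}|+q$ (contracting the blocks of each non-2EC component yields a tree whose edges are its bridges, with lonely nodes internal), then absorb the residual $2(p+q)$ through $e_1\geq 4p$ and $e_2\geq 12q$ coming from the two leaf blocks of each non-2EC component. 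Both arguments rest on the same canonical properties and the same slack; note that your scheme would already close with $e_2\geq 8q$ (leaf blocks of $\geq 4$ edges suffice, since $p+q\leq e_1/4+e_2/8\leq (e_1+e_2)/4$), so neither proof truly needs the full strength of property (ii) for this lemma. Your parenthetical about triangular blocks being heavy is correct but moot, as canonicity already excludes blocks with fewer than $4$ edges.
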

%We start by providing the simple proof of Lemma \ref{lem:caseB:initialCost} about the cost of the initial solution $S=H\setminus H_{brg}$.
%\lemmaCaseBinitialCost*
\begin{proof}
We initially assign $\frac{1}{6}$ credits to each edge of a triangle 2EC component of $H$, $2$ credits to each bridge of $H$, and $\frac{1}{2}$ credits to the remaining edges in $H$. Hence the total number of credits is $(\frac{1}{6}t+2b+\frac{1}{2}(1-t-b))|H|=(\frac{1}{2}-\frac{1}{3}t+\frac{3}{2}b)|H|$. We next redistribute these credits to the 2EC components of $H\setminus H_{brg}$ as follows. Each light (triangle) component $C$ of $H\setminus H_{brg}$ keeps the $\frac{1}{6}\cdot 3$ credits of its edges. Similarly each 2EC component $C$ of $H\setminus H_{brg}$ deriving from a block of $H$ or from a 2EC component of $H$ which is not a triangle, keeps the credits of its edges (at least $\frac{1}{2}\cdot 4$ credits since $|E(C)|\geq 4$ being $H$ canonical). Each bridge assigns its 2 credits to some lonely node: this is doable since the number of lonely nodes is upper bounded by $|H_{brg}|$. Altogether
$
cost(S)=|H|-|H_{brg}|+cr(S)\leq |H|-b|H|+(\frac{1}{2}-\frac{1}{3}t+\frac{3}{2}b)|H|.%\qedhere
%\leq |H|+(\frac{1}{2}-\frac{1}{3}t+\frac{1}{2}b)|H|.
$
\end{proof}
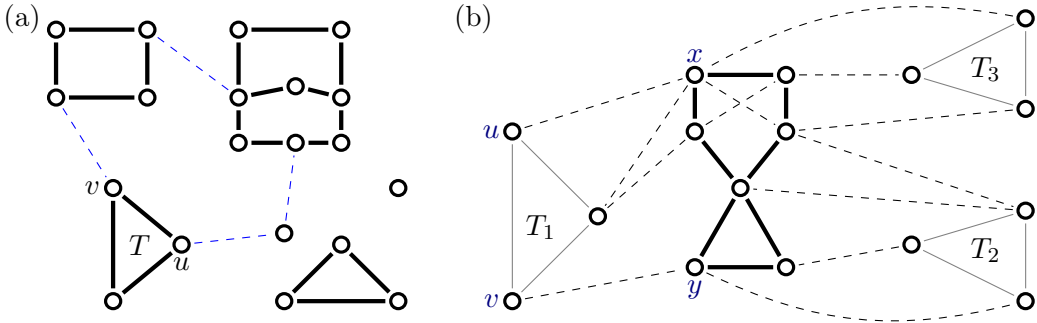
\begin{figure}
\begin{center}
\begin{tikzpicture}[scale=1.5]

\tikzset{vertex/.style={draw=black, very thick, circle,minimum size=0pt, inner sep=2pt, outer sep=2pt}
}

\node[right] () at (-4.55,2.5) {(a)};
\node[right] () at (-0.6,2.5) {(b)};
\node[right] () at (-3.45,0.5){$T$};

\begin{scope}[every node/.style={vertex}]

\node (1) at (-3.5,0){};
\node (2) at (-3.5,1){};
\node (3) at (-2.9,0.5){};

\node (4) at (-4,1.8){};
\node (5) at (-4,2.4){};
%\node (6) at (0.6,1){6};
\node (7) at (-3.2,1.8){};
\node (8) at (-3.2,2.4){};

\node (9) at (-2.4,2.4){};
\node (10) at (-1.5,2.4){};
\node (11) at (-2.4,1.8){};

\node (16) at (-1.9,1.9){};
\node (12) at (-1.5,1.8){};
\node (13) at (-2.4,1.4){};
\node (14) at (-1.9,1.4) {};
\node (15) at (-1.5,1.4){};
\node (17) at (-2,0.6){};
\node (18) at (-1.5,0.5){};
\node (19) at (-2,0){};
\node (20) at (-1,0){};
\node (21) at (-1,1){};

\draw[ultra thick] (9) to (10) to (12) to (16) to (11) to (9);
\draw[ultra thick] (11) to (13) to (14) to (15) to (12);
\draw[ultra thick] (2) to (1) to (3) to (2);
\draw[ultra thick] (5) to (4) to (7) to (8) to (5);
\draw[ultra thick] (18) to (19) to (20) to (18);

\draw[dashed,blue] (3) to (17);
\draw[dashed,blue] (17) to (14); 
\draw[dashed,blue] (11) to (8);
\draw[dashed,blue] (4) to (2);

\end{scope}

\node[below=1pt] () at (3) {$u$};
\node[left=1pt] () at (2) {$v$};

%%%%%%%%%b%%%%%%%%%%%%%%%%%%%

\node[below] () at (0.25,0.85) {$T_1$};
\node[below] () at (4.15,0.65) {$T_2$};
\node[below] () at (4.15,2.25) {$T_3$};

\begin{scope}[every node/.style={vertex}]

\node (a15) at (0,0) {};
\node (a16) at (0,1.5) {};
\node (a14) at (0.75,0.75) {};
\node (a5) at (1.6,2) {};
\node (a4) at (1.6,1.5) {};
\node (a0) at (2,1) {};
\node (a7) at (2.4,1.5) {};
\node (a6) at (2.4,2) {};
\node (a2) at (2.4,0.3) {};
\node (a3) at (1.6,0.3) {};
\node (a12) at (4.5,0) {};
\node (a8) at (4.5,0.8) {};
\node (a11) at (3.5,0.5){};

\node (a10) at (4.5,1.7){};
\node (a13) at (4.5,2.5){};
\node (a9) at (3.5,2){};

\draw[gray] (a15) to (a14);
\draw[gray] (a16) to (a15);
\draw[gray] (a16) to (a14);

\draw[gray] (a11) to (a8) to (a12);
\draw[gray] (a11) to (a12);

\draw[gray] (a9) to (a10) to (a13);
\draw[gray] (a9) to (a13);

\draw[dashed] (a16) to (a5) to (a14) to (a4) to (a6) to (a9);
\draw[dashed,bend right=20] (a13) to (a5);
\draw[dashed] (a5) to (a7) to (a8) to (a0);
\draw[dashed] (a15) to (a3);
\draw[dashed] (a11) to (a2);
\draw[dashed] (a10) to (a7);
\draw[dashed,bend right=20] (a3) to (a12);

\draw[ultra thick] (a0) to (a4) to (a5) to (a6) to (a7) to (a0) to (a3) to (a2) to (a0);

\end{scope}

\node[left=1pt, blue!50!black] (u) at (a16) {$u$};
\node[left=1pt, blue!50!black] (v) at (a15) {$v$};
%\node[above=1pt, blue!50!black] (w) at (a14) {$w$};
\node[above=1pt, blue!50!black] (x) at (a5) {$x$};
\node[below=1pt, blue!50!black] (y) at (a3) {$y$};

\begin{scope}[red!80!black, very thick]
%\draw (2) to (1);
%\draw (3) -- (4);
%\draw (6) to (5);
%\draw (7) to (6);
\end{scope}

\begin{scope}[very thick]

\end{scope}

\begin{scope}[very thick, densely dashed]
%\draw (1) -- (a);

\end{scope}

\end{tikzpicture}
\end{center}
\caption{
(a) The dashed blue edges induce a merging cycle $M$ w.r.t. the current solution $S$ given by black edges. Observe that all the connected components of $S$ are 2EC. A new 2-edge-cover $S'$ is obtained from $S$ by adding $M$ are removing $uv$. (b) The solid edges define a core-triangle 2-edge-cover $S$, where the core $C$ is induced by the thick edges and the triangles $T_1,T_2,T_3$ by the gray edges. The remaining edges of the input graph $G$ are dashed. Observe that $C$ is 2EC and there are no edges in $G$ between distinct triangles. A feasible solution can be obtained by adding the edges $\{ux,vy\}$ and removing the edge $uv$ w.r.t. $T_1$, and performing a similar update for the remaining 2 triangles.
\label{fig:MergingCycleAndCore-triangle}
}
\end{figure}

The gluing phase halts when we obtain an $S$ which has the following special form (see Figure \ref{fig:MergingCycleAndCore-triangle}\textcolor{red}{.b}):
\begin{definition}[core-triangle 2-edge-cover]\label{specialConfigurationDef}
A 2-edge-cover of a graph $G$ is \emph{core-triangle} if it consists of $k+1$ 2EC components, namely one \emph{core} component $C$ plus $k$ triangles $T_1,\ldots,T_k$ for $k\geq 0$. Furthermore, nodes of distinct triangles are not adjacent in $G$. 
\end{definition}
We make progress towards a 2-edge-cover of the latter form by iteratively applying the following lemma. 
\begin{restatable}{lemma}{lemmaCaseBgluingStep}\label{lem:caseB:gluingStep} 
Suppose that $S$ is not core-triangle. Then in polynomial time we can compute $S'\subseteq E(G)$ such that: (1) the connected components of $S'$ are 2EC, (2) $S'$ has fewer connected components than $S$, and (3) $cost(S')\leq cost(S)$.
\end{restatable}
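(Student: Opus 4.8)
\textbf{Proof plan for Lemma~\ref{lem:caseB:gluingStep}.}
The plan is to show that whenever $S$ fails to be core-triangle, there must be some ``local'' witness of this failure — either two distinct triangle components that are adjacent in $G$, or a light (triangle) component together with a heavy component that are adjacent, or a heavy component that is adjacent to some other component — and in each case we can merge the two adjacent components by a swap that does not increase $\cost$. The guiding accounting principle: merging two components decreases the component count by $1$, so we ``gain'' roughly the credit surplus; we must exhibit an update whose increase in $|S|$ is paid for by the drop in total credit $cr(S)$, using that a light component carries $\tfrac12$ credit, a heavy one carries $2$, and a \emph{new} triangle created by a merge is heavy (hence carries $2$, not $\tfrac12$).

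First I would enumerate the cases by which pair of adjacent components we merge. \textbf{(i) Heavy $C$ adjacent to heavy $C'$.} Since $G$ is 2VC, using the 3-Matching Lemma~\ref{lem:matchingOfSize3} (applicable because each heavy component has $\ge 4$ nodes, or is at least a triangle — one must check the hypothesis $|V_i|\ge 3$ with triangles being actual triangles, which holds since the only triangle components here are $H$-triangles that are light, not heavy, but a heavy component could still be small; this needs care) there is a matching of size (at least) $2$ between $V(C)$ and $V(C')$, say edges $e_1=ab$, $e_2=cd$ with $a,c\in V(C)$, $b,d\in V(C')$. Pick a path $P$ in $C$ from $a$ to $c$ and a path $Q$ in $C'$ from $b$ to $d$; then $P\cup Q\cup\{e_1,e_2\}$ is a cycle, and replacing the parts of $C,C'$ outside $P,Q$ is the standard ``merging cycle'' move of Figure~\ref{fig:MergingCycleAndCore-triangle}(a): one adds $e_1,e_2$ and the path edges and removes a matching number of old edges so that $|S'|\le|S|$ while the two components become one $2$EC component. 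Credit drops by at least $2+2-2=2\ge 0$ even if the merged component is a triangle (heavy). \textbf{(ii) Light $C$ adjacent to some component $C'$.} Here $C$ is a triangle $xyz$ with, say, an edge $xw\in E(G)$, $w\in V(C')$. Since $G$ is 2VC there is a second edge from $\{x,y,z\}$ leaving $C$; by a short case analysis on where it lands (into $C'$ or into a third component) we form a merging cycle through $C$. If it merges $C$ with a single $C'$: the new component has $\ge 3+3=6$ edges when $C'$ was light (so it is heavy, credit goes $\tfrac12+\tfrac12\to 2$, a gain of $-1$ — \emph{this is the only dangerous sign}) or $C'$ heavy (credit $\tfrac12+2\to 2$, gain $\tfrac12\ge 0$). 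The case of two light triangles merging, losing $1$ credit, must be absorbed by the fact that the merge frees at least one unit of slack elsewhere: concretely the merging cycle on two triangles uses $2$ new edges and removes $2$ edges, and the resulting component has $4$ edges, so $|S'|=|S|$, credit change $\tfrac12+\tfrac12-2=-1$, which means $\cost$ would \emph{increase}. To avoid this I would instead, when two light triangles are adjacent, merge them into a $4$-cycle only if a third component is also reachable, or else show the pair can be absorbed into a larger gluing; alternatively the correct move here is the one in Figure~\ref{fig:MergingCycleAndCore-triangle}(b)'s spirit — but in fact the resolution is that two adjacent triangles being present contradicts core-triangleness only via being ``distinct triangles adjacent in $G$'', and the lemma's hypothesis lets us pick \emph{which} non-core-triangle violation to exploit, so I would first handle all violations \emph{not} of the two-adjacent-light-triangles type, and only in the last case invoke $2$VC to route a merging cycle through two triangles \emph{and} a third vertex of a heavy component or a lonely node, giving a larger cycle that nets $\ge 0$.

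More carefully, the structural dichotomy I would set up is: $S$ is core-triangle iff (a) at most one component is non-triangle and (b) no two triangle components are $G$-adjacent. So if $S$ is not core-triangle, either some triangle $T$ is $G$-adjacent to another triangle $T'$ — handle by building a merging cycle (Figure~\ref{fig:MergingCycleAndCore-triangle}(a)) that, using $2$VC of $G$, necessarily also touches a third component or a lonely node (since a mere $T\cup T'$ on $6$ vertices cannot be all of a $2$VC graph with $\ge 2/\eps$ nodes), so the cycle spans $\ge 3$ old components and the arithmetic gives $\cost(S')\le\cost(S)$ — or there are two non-triangle (heavy) components, or one heavy and one triangle component that are $G$-adjacent (which must happen if there are $\ge 2$ heavy components or $\ge 1$ heavy plus a $G$-edge to a triangle, again by $2$VC connectivity of $G$). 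I expect the main obstacle to be precisely this bookkeeping around small merged components: ensuring that whenever a swap would create a triangle (credit $2$) out of pieces carrying total credit less than $2$, the merging cycle was forced by $2$VC to absorb enough extra components (each light one freeing $\tfrac12$, each heavy one freeing $2$, each lonely-node/bridge-credit freeing its stored credit) to keep the net change nonnegative, \emph{and} that the 3-Matching Lemma's triangle-hypothesis is met for the components we apply it to. The polynomial-time and feasibility ((1): components of $S'$ remain $2$EC; (2): one fewer component) claims are then routine consequences of the merging-cycle construction.
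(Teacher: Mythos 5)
The central gap is an accounting error that hides exactly the case the paper's proof is built around. In your case (ii) you claim that merging a light triangle $C$ with an adjacent heavy component $C'$ yields a ``gain of $\frac12\ge 0$'', but you only count credits: the move adds two edges and removes one triangle edge, so $|S'|=|S|+1$, while the credit released is $\frac12+2-2=\frac12$; hence $cost(S')-cost(S)=+\frac12>0$. This is precisely the ``expensive'' merging cycle ($n_H=n_L=1$), and it cannot be executed locally at nonnegative cost. Since a light triangle may well have all of its outgoing $G$-edges into a single heavy component, your proposal has no legal move in such configurations, even though $S$ can fail to be core-triangle for reasons elsewhere. The paper circumvents this with machinery your plan lacks: expensive merging cycles are not executed but recorded as edges of an auxiliary forest (found via Lemma \ref{lem:caseB:mergingCycle}); a path of recorded expensive cycles plus one extra cross edge is converted into a long, cheap merging cycle (Lemma \ref{lem:caseB:nonTreeEdge}); and when neither applies, non-core-triangleness forces a light component with two heavy neighbours in the forest, which is merged in a single three-component step whose feasibility uses structuredness of $G$ — a neighbour adjacent to only two nodes of the triangle would yield a non-isolating $2$-vertex-cut and an irrelevant edge (Lemma \ref{lem:caseB:CoreTriangle}).

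Two further problems. First, your treatment of two adjacent light triangles misses the actual resolution: a merging cycle with $n_H=0$ and $n_L\le 3$ cannot exist at all, because executing it would produce a $2$-edge-cover of the same size as the canonical $H$ with fewer components, contradicting $3$-optimality (Definition \ref{def:canonical}(iii)); your substitute claim that $2$VC of $G$ forces such a cycle to pass through a third component is false — the presence of other vertices in $G$ does not prevent a $2$-matching between the two triangles from closing a short cycle. Second, you repeatedly invoke the $3$-Matching Lemma \ref{lem:matchingOfSize3} for a pair of specific components, but that lemma applies to a partition $(V_1,V_2)$ of the whole vertex set: it gives three edges from $V(C)$ into the rest of the graph, not into a chosen neighbouring component, so the matchings your local merges require are not guaranteed. (For the heavy--heavy merge this is harmless — any two distinct edges between the components, or more generally any cheap merging cycle, cost $+2$ edges against $2+2-2=2$ released credits — but your description of removing ``a matching number of old edges'' from heavy components is neither a legal move nor needed.)
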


In order to prove the above lemma, we introduce the notion of merging cycle\footnote{Here and in other parts of the paper we interpret a path or cycle as a subgraph, a subset of nodes or edges, or a sequence of nodes or edges, depending on which notation is more convenient.} (see Figure \ref{fig:MergingCycleAndCore-triangle}\textcolor{red}{.a}). 
\begin{restatable}{definition}{definitionMergingCycle}[merging cycle]\label{def:mergingCycle} 
A merging cycle $M$ for $S$ is a collection of edges $M\subseteq E(G)\setminus S$ such that: (1) $M$ induces one cycle (excluding loops) in the graph obtained by collapsing the connected components $\calC(S)$ of $S$; (2) if two edges of $M$ are incident to a light component $C\in \calC(S)$, they are incident to two distinct nodes of $C$.
\end{restatable}

Given a merging cycle $M$ of $S$, one naturally obtains a new $S'$ as follows: we add the edges of $M$ to $S$ and, for each light (triangle) component $C$ of $S$ such that $M$ is incident to the nodes $u$ and $v$ of $C$, we remove the edge $uv$. Notice that $S'$ satisfies conditions (1) and (2) of Lemma \ref{lem:caseB:gluingStep}.
%the connected components of $S'$ are 2EC, and $S'$ has fewer connected components than $S$. 
More specifically, all the 2EC components of $S$ incident to $M$ become a unique 2EC component $C'$ in $S'$, and the remaining 2EC of $S$ remain the same in $S'$. Let us analyze $cost(S')$. Suppose that $M$ is incident to $n_L$ light components of $S$ and $n_H$ heavy components of $S$. One has that $|S'|-|S|=n_H+n_L-n_L=n_H$. Furthermore $cr(S')-cr(S)\leq 2-\frac{1}{2}n_L-2n_H$, where the term $2$ is due to $cr(C')$.  Thus
$
cost(S)-cost(S')\geq n_H+\frac{1}{2}n_L-2.
$
Notice that $n_H=0$ and $n_L\leq 3$ is not possible since this would imply that the initial 2-edge-cover $H$ is not canonical (in particular the 3-optimality property would not hold). A simple case analysis shows that $cost(S)-cost(S')\geq 0$ (hence (3) holds) unless $n_H=1=n_L$. In other words, the merging cycle $M$ is incident to precisely one light and one heavy component: we call a merging cycle of the latter type \emph{expensive}, and any other merging cycle \emph{cheap}. 

%We need the above quantity to be non-negative. This is trivially guaranteed if $n_H\geq 2$ or $n_L\geq 4$.  Hence we can next assume $n_H=1$. For $n_L\geq 2$ the condition $cost(S)-cost(S')\geq 0$ is again satisfied. The only possibility when the mentioned condition is not satisfied is $n_H=1=n_L$. In other words, the merging cycle $M$ is incident to precisely one light and one heavy component: we call a merging cycle of the latter type \emph{expensive}, and any other merging cycle \emph{cheap}. 

The proof of Lemma \ref{lem:caseB:gluingStep} proceeds constructively as follows. Given the current $S$ (which is not core-triangle), we maintain an auxiliary \emph{forest} $G'=(\calC(S),F)$ whose nodes are the 2EC components of $S$ and whose edges correspond to expensive merging cycles that we discovered along the way. Initially $F=\emptyset$. While $G'$ consists of at least $2$ trees we are able to compute efficiently a merging cycle $M$, exploiting the fact that $G$ is structured (see Lemma \ref{lem:caseB:mergingCycle} in Section \ref{sec:manyTriangles}). If the merging cycle $M$ obtained this way is cheap, we \fab{use it to obtain an $S'$ with the desired properties as described before}. Otherwise, we add to $F$ the edge between the two 2EC components of $S$ (one heavy and one light) incident to $M$.
Clearly the process ends in a polynomial number of steps by returning the desired cheap merging cycle or by obtaining a single tree $G'$. 

In the latter case, we check whether there exists an edge $e$ in $G$ between two connected components $C_1$ and $C_2$ of $S$ which are \emph{not} adjacent in $G'$. If such an $e$ exists, it is not hard to derive a cheap merging cycle involving all the (at least $3$) components along the $C_1$-$C_2$ path in $F$ (see Lemma \ref{lem:caseB:nonTreeEdge} in Section \ref{sec:manyTriangles}).

If there is no edge $e$ of the above type, since by assumption $S$ is not core-triangle, we can infer that there exists a light component $C$ adjacent to two heavy components $C_1$ and $C_2$ in $F$. A simple case analysis shows that we can merge these $3$ components together, hence obtaining the desired $S'$ (see Lemma \ref{lem:caseB:CoreTriangle} in Section \ref{sec:manyTriangles}).

%We also introduce the following notion of nice cycle w.r.t. to a partition of the nodes sets

%The following lemma guarantees that we can always find a cheap merging cycle.
%\begin{lemma}\label{lem:caseB:cheapMergingCycle}
%If $S$ is not core-triangle, in polynomial time one can compute a cheap merging cycle of $S$. 
%\end{lemma}
%It is then sufficient to iteratively apply Lemma \ref{lem:caseB:cheapMergingCycle} until we obtain the desired core-triangle $S$ of cost $cost(S)\leq (\frac{3}{2}-\frac{1}{3}t+\frac{1}{2}b)|H|$. 

Once $S$ is core-triangle, with core $C$ and triangles $T_1,\ldots,T_k$, we can transform it into a 2EC spanning subgraph  $\APX$ via the following simple construction (see Figure \ref{fig:MergingCycleAndCore-triangle}\textcolor{red}{.b}). For each $T_i$, let $e_i=u_iu'_i$ and $f_i=v_iv'_i$ be two edges with $u_i,v_i\in V(T_i)$, $u_i\neq v_i$, and $u'_i,v'_i\in V(C)$. Notice that the pair $e_i,f_i$ must exist otherwise some node in $V(T_i)$ would be a 1-vertex-cut. Set $Q_i=E(T_i)\setminus \{u_iv_i\}\cup \{e_i,f_i\}$ and $\APX:=E(C)\cup \bigcup_{i=1}^{k} Q_i$. 
%Clearly the construction of $\APX$ takes polynomial time. 
The proof of Lemma \ref{lem:manyTriangles:main} follows easily.

\begin{proof}[Proof of Lemma \ref{lem:manyTriangles:main}]
Consider the feasible solution $\APX$ obtained with the above construction (which takes polynomial time). Observe that $|APX|=|S|+k$. By construction and due to Lemma \ref{lem:caseB:gluingStep} one has $cost(S)\leq cost(H\setminus H_{brg})$. Thus
$
cost(S)=|S|+cr(S)\leq cost(H\setminus H_{brg})\leq (\frac{3}{2}-\frac{1}{3}t+\frac{1}{2}b)|H|
$, where the last inequality comes from Lemma \ref{lem:caseB:initialCost}.

We also observe that $cr(S)\geq \frac{k}{2}$ where in the worst case each triangle $T_i$ is light\footnote{Indeed the construction can be slightly modified so that each heavy $T_i$ is merged with the core $C$ by exploiting the credits of $T_i$ to pay for the extra needed edge.}. Then
$
|\APX|=|S|+k\leq (\frac{3}{2}-\frac{1}{3}t+\frac{1}{2}b)|H|+\frac{k}{2}.
$
Observe that any feasible solution must contain at least $4$ distinct edges per triangle $T_i$ \fab{and nodes in distinct triangles are not adjacent by Definition \ref{specialConfigurationDef}}, hence $\opt\geq 4k$. We already observed that $\opt\geq |H|$. Altogether
$$
\hspace{2cm}|\APX|\leq (\frac{3}{2}-\frac{1}{3}t+\frac{1}{2}b)\opt+\frac{1}{8}\opt =(\frac{13}{8}-\frac{1}{3}t+\frac{1}{2}b)\opt.\hspace{2cm}\qedhere
$$
\end{proof}
%The size of $\APX$ is upper bounded in the next lemma.
%\begin{lemma}\label{lem:caseB:coreTriangle}
%The feasible solution $\APX$ obtained from $S$ with the above construction is $\frac{13}{8}-\frac{1}{3}t+\frac{1}{2}b$ approximate.
%\end{lemma}
%\begin{proof}
%Recall that by construction $cost(S)=|S|+cr(S)\leq (\frac{3}{2}-\frac{1}{3}t+\frac{1}{2}b)|H|$. We also observe that $cr(S)\geq \frac{k}{2}$ where in the worst case each triangle $T_i$ is light. Then
%$$
%|\APX|=|S|+k\leq (\frac{3}{2}-\frac{1}{3}t+\frac{1}{2}b)|H|+\frac{k}{2}.
%$$
%Observe that any feasible solution must contain at least $4$ distinct edges per triangle $T_i$, hence $\opt\geq 4k$. We already observed that $\opt\geq |H|$. Altogether
%$$
%|\APX|\leq (\frac{3}{2}-\frac{1}{3}t+\frac{1}{2}b)\opt+\frac{1}{8}\opt =(\frac{13}{8}-\frac{1}{3}t+\frac{1}{2}b)\opt.
%$$
%\end{proof}
In Section \ref{sec:refinedApproximation} we will present a more refined construction inspired by \cite{SV14}, and a corresponding refinement of Lemma \ref{lem:manyTriangles:main} (namely, Lemma \ref{lem:manyTriangles:refined}). \amEdit{at a high leve}l, instead of choosing the $Q_i$'s arbitrarily, we do that so as to minimize the number $\alpha(S)$ of connected components induced by $Q:=\cup_i Q_i$ (this can be done in polynomial time analogously to \cite{SV14} via a reduction to matroid intersection). Then, in a way similar to \cite{SV14}, we add a minimal set of edges $F$ to $Q$ to make $Q\cup F$ connected, and then add a minimum-size $T$-join $J$ over the odd-degree nodes of $Q\cup F$ to obtain an alternative feasible solution $\APX':=Q\cup F\cup J$. We show that $\APX'$ has size at most $4k+\alpha(S)+\frac{1}{2}|E(C)|-1$, and that $\opt\geq 4k+\alpha(S)-1$. Taking the best solution among $\APX$ and $\APX'$ gives the refined bound.

%Combining this with the properties of $\APX$, one obtains the desired improvement.

\subsection{Case of Few Triangles: Overview}
\label{sec:fewTrianglesOverview}

We next sketch the proof of Lemma \ref{lem:fewTriangles:main}. Let us initially set $S=H$. 
We gradually transform $S$ by adding and deleting edges, until $S$ becomes a 2EC spanning subgraph. During the process we will maintain the following invariant:
\begin{invariant}\label{inv:2ECcomponent}
The 2EC components of $S$ are an $i$-cycle for $3\leq i\leq 6$ belonging to $H$ or contain at least $7$ edges (we call \emph{large} the latter components).
\end{invariant}
Clearly $S$ initially satisfies this property since $H$ is canonical. During our construction we always merge together different connected components of $S$ so that the resulting connected component $C$ contains at least $7$ nodes (hence $7$ edges if $C$ is 2EC).

In order to keep the size of $S$ under control, we use a credit assignment scheme analogously to Section \ref{sec:manyTrianglesOverview}, however more complex. In this scheme we assign certain credits to the connected components, bridges, and blocks of $S$. In more detail, we assign the following credits:
\begin{enumerate}\itemsep0pt
\item Each 2EC component $C$ of $S$ receives $cr(C)=2$ credits if $C$ is large (i.e., contains at least $7$ edges), $cr(C)=1$ credits if $C$ is a triangle, and $cr(C)=\frac{3}{10}k$ credits if $C$ is a $k$-cycle, $4\leq k\leq 6$.
\item Each connected component $C$ of $S$ which is not 2EC receives $cr(C)=1$ credits.
\item Each bridge $e$ of $S$ receives $cr(e)=\frac{1}{4}$ credits.
\item Each block $B$ of $S$ receives $cr(B)=1$ credits. 
\end{enumerate} 
Let $cr(S)$ be the total number of credits assigned to $S$ according to the above rule, and define $cost(S)=|S|+cr(S)$ analogously to Section \ref{sec:manyTrianglesOverview}. The next lemma shows that $cost(S)$ is initially not too large. Its simple proof is based on initially assigning $\frac{1}{3}$, $\frac{1}{4}$ and $\frac{3}{10}$ credits to each edge in $H$ which belongs to a triangle 2EC component, which is a bridge, and to the remaining edges, resp. We then redistribute these credits in a natural way, exploiting the fact that $H$ is canonical.

\begin{restatable}{lemma}{lemmaCaseAinitialCost}\label{lem:caseA:initialCost}
For $S=H$, $cost(S)\leq (\frac{13}{10}+\frac{1}{30}t-\frac{1}{20}b)|H|$.
\end{restatable}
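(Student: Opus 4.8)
The plan is to mimic the proof of Lemma \ref{lem:caseB:initialCost}: first place an auxiliary "per-edge" credit on every edge of $H$ whose total is easy to compute as a function of $t$ and $b$, then redistribute these edge-credits onto the 2EC components, non-2EC components, bridges, and blocks of $S=H$ so as to dominate the required credit assignment (1)--(4). As suggested in the text, I would assign $\tfrac13$ credits to each edge lying in a triangle 2EC component of $H$, $\tfrac14$ credits to each bridge of $H$, and $\tfrac{3}{10}$ credits to every remaining edge. Since $t|H|$ edges lie in triangle 2EC components and $b|H|$ edges are bridges, the total is
\[
\Bigl(\tfrac13 t+\tfrac14 b+\tfrac{3}{10}(1-t-b)\Bigr)|H|=\Bigl(\tfrac{3}{10}+\tfrac{1}{30}t-\tfrac{1}{20}b\Bigr)|H|,
\]
so that $cost(S)=|H|+cr(S)$ will match the claimed bound $(\tfrac{13}{10}+\tfrac1{30}t-\tfrac1{20}b)|H|$ provided the redistribution is feasible, i.e. provided the edge-credits suffice to cover all of (1)--(4).

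Next I would verify feasibility structure-by-structure, using that $H$ is canonical (Definition \ref{def:canonical}). For a triangle 2EC component of $H$, its $3$ edges carry $3\cdot\tfrac13=1$ credit, exactly matching $cr(C)=1$. For a 2EC component that is a $k$-cycle with $4\le k\le 6$ and is not a triangle, its $k$ edges carry $\tfrac{3}{10}k$ credits, exactly matching $cr(C)=\tfrac{3}{10}k$. For a large 2EC component (at least $7$ edges), its edges carry at least $7\cdot\tfrac{3}{10}=\tfrac{21}{10}\ge 2$ credits, covering $cr(C)=2$. For a connected component $C$ of $S=H$ that is not 2EC: it decomposes into blocks, lonely nodes, and bridges; each bridge carries $\tfrac14$ credit, and I need to account for $cr(C)=1$ for the component plus $cr(B)=1$ for each of its blocks. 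Here I would argue that each block $B$ (having, by canonicity, at least $6$ edges if a leaf block and at least $4$ edges if an inner block, all non-bridge non-triangle-2EC edges, hence each worth $\tfrac{3}{10}$) carries at least $\tfrac{4\cdot 3}{10}=\tfrac{12}{10}>1$ credits from its own edges, covering $cr(B)=1$ with a surplus of at least $\tfrac{2}{10}$ per block; and the component-credit $cr(C)=1$ is covered by the bridge credits together with the block surpluses — this is the delicate bookkeeping step.

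The main obstacle is exactly this last case: I must show that in a non-2EC component the $\tfrac14$-credits on bridges, plus the leftover edge-credits on blocks (and on lonely nodes, which carry none), add up to at least $1$ for the component plus $1$ per block. A non-2EC component with $\ell$ leaf blocks, $m$ inner blocks, and some lonely nodes has a bridge-tree structure in which the number of bridges is tied to $\ell$, $m$, and the number of lonely nodes; since each leaf block has $\ge 6$ edges it actually carries $\ge\tfrac{18}{10}$ credits (surplus $\ge\tfrac{8}{10}$ over $cr(B)=1$), each inner block carries $\ge\tfrac{12}{10}$ (surplus $\ge\tfrac{2}{10}$), and there are at least two leaf blocks (or the component is a path of blocks/lonely nodes with $\ge2$ bridges per leaf end), so the surpluses together with bridge credits comfortably reach the required $1$. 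I would handle the degenerate subcases (component consisting of a single block plus pendant lonely nodes, component that is a path, etc.) separately, checking each against the tight cases, and conclude that the redistribution is always feasible, which yields the stated bound.
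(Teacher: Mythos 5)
Your proposal is correct and takes essentially the same route as the paper: the same per-edge credits ($\tfrac{1}{3}$, $\tfrac{1}{4}$, $\tfrac{3}{10}$) with total $(\tfrac{3}{10}+\tfrac{1}{30}t-\tfrac{1}{20}b)|H|$, the same redistribution to 2EC components, bridges and blocks, and the component credit of each non-2EC component covered by its at least two leaf blocks (each with $\geq 6$ edges, hence surplus $6\cdot\tfrac{3}{10}-1=\tfrac{4}{5}$, so together $\tfrac{8}{5}>1$). One small caution: the bridge credits are already reserved for item (3) of the scheme and should not also be invoked for the component credit, but this is harmless since the two leaf-block surpluses alone suffice, exactly as in the paper's proof.
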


The \amEdit{transformation} of $S$ into a feasible solution consists of two main phases. In an initial \emph{bridge-covering} phase, we gradually reduce the number of bridges of $S$ until all its connected components are 2EC. This is done by iteratively applying the following lemma.
\begin{restatable}{lemma}{lemmaCaseAbridgeCovering}[Bridge-Covering]\label{lem:caseA:bridgeCovering}
Assume that $S$ contains at least one bridge. Then in polynomial time one can compute a 2-edge-cover $S'\subseteq E(G)$ such that: (1) $S'$ has fewer bridges than $S$; (2) $cost(S')\leq cost(S)$.
\end{restatable}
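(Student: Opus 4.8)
The plan is to take the current 2-edge-cover $S$ (satisfying Invariant~\ref{inv:2ECcomponent}) which contains at least one bridge, pick a suitable bridge, and grow a short "bridge-covering path" $P$ in $G$ that glues together several blocks/bridges/components of $S$ into a single 2EC component $C'$; then argue that the credit budget released by the merged objects pays for the edges of $P$ that we add, while the invariant is preserved by choosing $P$ so that $C'$ has at least $7$ nodes. More precisely, I would first focus on a connected component $C$ of $S$ which is not 2EC and contains a bridge $b=xy$ that is incident to a \emph{leaf} block or to a lonely node on (say) the $y$-side; call that pendant object $L$ (a leaf block with $\ge 6$ edges by canonicity, or a lonely node). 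Since $G$ is structured (2VC, no irrelevant edges, no non-isolating 2-vertex-cuts), $L$ cannot be separated from the rest by the single vertex $x$, so there is an edge $e$ of $G$ leaving $V(L)$ and not using $x$; follow $e$ and continue, staying inside $G$ and avoiding re-entering already-covered blocks, until the path closes up a cycle that covers the bridge $b$. This is the path $P_C$ depicted in Figure~\ref{fig:canonical}: it traverses some number $bl$ of blocks of $C$, some number $br\ge 1$ of bridges of $C$ (including $b$), and possibly some other connected components of $S$.

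The execution would proceed as follows. First, formalize the existence of the covering path using the 3-Matching Lemma~\ref{lem:matchingOfSize3} (or just 2VC-ness plus absence of non-isolating cuts) to guarantee that whenever we reach a block or component of size large enough, we can leave it by a fresh edge avoiding the one "attachment" vertex; when we reach a \emph{small} 2EC component (a $k$-cycle, $k\le 6$) we may need two edges incident to two distinct nodes of it, and here again structuredness plus the $3$-optimality of $H$ supply what we need. Second, once $P$ is found, form $S':=(S\setminus(\text{discarded edges}))\cup P$, where for each small cycle component entered by two edges of $P$ we may drop one cycle edge, exactly as in the core-triangle construction. Third, do the credit accounting: the new merged 2EC component $C'$ needs only $2$ credits (it is large, having $\ge 7$ edges, since it absorbs at least one block of $\ge 4$ edges or enough cycles/lonely nodes); against this we free up $1$ credit for $C$ itself, $1$ credit for each of the $bl$ blocks absorbed, $\frac14$ credit for each of the $br$ bridges absorbed, $1$ credit per other non-2EC component merged in, and $cr$-values of any small 2EC components merged in — while $|S'|-|S|$ is at most the number of added edges minus the number of dropped cycle edges. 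A short case analysis on $(bl,br,\#\text{other components})$, using that leaf blocks carry $\ge 6$ edges and inner blocks $\ge 4$, shows $cost(S')\le cost(S)$, and by construction $S'$ has strictly fewer bridges (at least $b$ is gone), giving (1).

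The main obstacle I expect is \textbf{guaranteeing the covering path can always be closed while keeping the credit inequality tight}, i.e. controlling simultaneously (a) how many new edges $P$ contributes and (b) that the resulting component is large enough to be charged only $2$ credits rather than needing more. The delicate sub-case is when the path is very short — essentially a single new edge creating a cycle through one bridge and two neighboring blocks — because then $|S'|-|S|$ could be as large as $1$ with little credit freed; this is where the precise canonical bounds (leaf blocks $\ge 6$ edges, inner blocks $\ge 4$ edges, bridges worth $\frac14$, non-2EC components worth $1$) must be invoked, and where the absence of non-isolating $2$-vertex-cuts is what rules out the truly bad configurations. A secondary technical point is handling the boundary between "this component becomes large" and "we merged in a triangle that should stay light"; I would resolve it exactly as in the many-triangles case, by noting a merged component containing a former block already has $\ge 7$ edges and so is unambiguously large. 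Finally, polynomial running time is immediate since each step either finds a short augmenting path by a BFS/matching computation or makes measurable progress (one fewer bridge).
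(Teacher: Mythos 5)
There is a genuine gap: your proposal essentially only handles the case in which a \emph{cheap} bridge-covering path exists, and it both misidentifies and hand-waves the hard case. First, your accounting of the delicate configuration is off: with the paper's credit scheme the surplus of augmenting along a path is $\frac14 br+bl-2$, so a path through one bridge and two neighboring blocks ($bl=2$, $br=1$) is already cheap; the truly problematic paths are those covering at most one block and at most three bridges (e.g.\ a leaf block together with a few lonely nodes). Second, and more importantly, the absence of non-isolating $2$-vertex-cuts does \emph{not} ``rule out the truly bad configurations'': it is perfectly possible that every single bridge-covering path is expensive, and the heart of the paper's proof of Lemma \ref{lem:caseA:bridgeCovering} is precisely what to do then. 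The paper needs (i) a reachability lemma (Lemma \ref{lem:reachable}), derived from the 3-Matching Lemma \ref{lem:matchingOfSize3}, guaranteeing that across any bridge of $T_C$ one can reach a block node or two distinct lonely nodes; (ii) a mechanism for combining \emph{two} internally disjoint expensive bridge-covering paths anchored at two leaf blocks whose covered subpaths of $T_C$ intersect and jointly span at least $4$ bridges, so that the two freed blocks and four freed bridge credits pay for the two extra edges (Lemma \ref{lem:merge2paths}); and (iii) a case analysis along a longest path of $T_C$, one of whose final cases additionally deletes the bridge $u_1u_2$ from $S$ to make the cost inequality close. None of these ingredients appears in your plan, and a single closed path as you describe cannot in general be made cheap.

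Two smaller inaccuracies: in the bridge-covering phase the paper never drops cycle edges of small 2EC components traversed by the path (that trick belongs to the gluing/merging-cycle constructions, and using it here would also require the two incident edges to hit adjacent nodes of the cycle, which you do not ensure); the intermediate components simply contribute their $\geq 1$ credit each. Also, the charge for the merged object is $+1$ (a new block, or the upgrade of $C$ from non-2EC to 2EC), not $+2$ for a fresh large component; your version of the accounting happens to be more pessimistic, but it signals that the bookkeeping in your case analysis has not actually been carried out.
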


The main building block in our construction is a \emph{bridge-covering path} (see Figure \ref{fig:canonical}). The idea is to consider any connected component $C$ of $S$ which is not 2EC, and collapse into a single node each block of $C$ and any other connected component $C'\neq C$. Observe that $C$ becomes a tree $T_C$. In the obtained graph $G_C$ we consider any path $P_C$ with edges not in $E(T_C)$, two distinct endpoints in $V(T_C)$, and all the internal nodes not in $V(T_C)$ ($P_C$ might consist of a single edge). Consider $S'=S\cup E(P_C)$, which obviously has fewer bridges than $S$. W.r.t. $S$, we increase the number of edges by $|E(P_C)|$ and we create a new block $B'$ in $C$ (which requires $1$ credit) or we transform $C$ into a 2EC component $C'$ (which requires $1$ additional credit w.r.t. the credit owned by $C$). However we also decrease the credits by at least $|E(P_C)|-1$ due to the connected components which are merged with $C$. Hence it is sufficient to collect a total of $2$ credits among the $bl$ blocks and $br$ bridges of $C$ that disappear in $S'$. In particular, to guarantee $cost(S)\geq cost(S')$, it is sufficient that $\frac{1}{4}br+bl- 2\geq 0$. We say that $P_C$ is cheap if the latter quantity is non-negative, and expensive otherwise. In particular, any $P_C$ involving at least $2$ block nodes or $1$ block node and at least $4$ bridges satisfies this condition. If there exists a cheap bridge-covering path $P_C$, we simply set $S'=S\cup E(P_C)$. Notice that this condition can be easily checked in polynomial time. If no cheap bridge-covering path exists, by exploiting the 3-matching Lemma \ref{lem:matchingOfSize3} and a relatively simple case analysis, we show how to obtain the desired $S'$: to achieve that we will combine together up to two bridge-covering paths and/or we will remove some edges from $S$. 

%\medskip\noindent{\bf (3.a)} $V_1\cup V_2\neq \emptyset$. Take any leaf block node $b'\in V_1\cup V_2$, say $b'\in V_i$. By the usual argument $R(b')$ has cardinality at least $3$ and contains only lonely nodes. Choose $u'\in R(b')$ such that $u'$ is not adjacent to $b'$ in $T_C$. The tuple $(b,b',u_3,u')$ satisfies the conditions of Lemma \ref{lem:merge2paths} (in particular with $w=u_2$), hence we can compute the desired $S'$.

%\medskip\noindent{\bf (3.b)} $V_1\cup V_2= \emptyset$. By Lemma \ref{lem:reachable} the set $R(\{b,u_1\})$ contains a block node or has cardinality at least $3$. Suppose first that $R(\{b,u_1\})$ contains a block node $b'$. Notice that $b'\notin R(b)$ from the previous discussion, hence $u_1$ is reachable from $b'$. Notice also that $b'\notin \{u_2,u_3\}$ since those are lonely nodes. Thus the tuple $(b,b',u_2,u_1)$ satisfies the conditions of Lemma \ref{lem:merge2paths} (in particular with $w=u_2$) and we obtain the desired $S'$. 

After the bridge-covering phase there is a \emph{gluing} phase where we gradually merge together the 2EC components of $S$ until we achieve a feasible solution. This is similar in spirit to the gluing phase of Section \ref{sec:manyTrianglesOverview}, with the difference that here we can directly reach a feasible solution without the need for a separate procedure to handle a core-triangle 2-edge-cover as done in Lemma \ref{lem:manyTriangles:main}. Intuitively this is due to the fact that each triangle $T_i$ in a core-triangle 2-edge-cover now has $1$ credit which is sufficient to glue it with the core $C$. In the gluing phase we iteratively apply the following lemma. 
\begin{restatable}{lemma}{lemmaCaseAgluing}[Gluing]\label{lem:caseA:gluing}
Assume that the connected components of $S$ are 2EC, and there are at least 2 such components. Then in polynomial time one can compute a 2-edge-cover $S'\subseteq E(G)$ such that: (1) the connected components of $S'$ are 2EC; (2) $S'$ has fewer connected components than $S$; (3) $cost(S')\leq cost(S)$.
\end{restatable}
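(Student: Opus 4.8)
The plan is to prove Lemma~\ref{lem:caseA:gluing} (the Gluing Lemma) following the same template as the gluing step of Section~\ref{sec:manyTrianglesOverview}, but exploiting the fact that every connected component of $S$ now carries a useful amount of credit: a large 2EC component has $2$ credits, a triangle has $1$, and a $k$-cycle ($4\le k\le 6$) has $\frac{3}{10}k\ge\frac{6}{5}$. First I would introduce the analogue of a \emph{merging cycle}: a set $M\subseteq E(G)\setminus S$ that induces a single cycle (ignoring loops) in the graph obtained by collapsing $\calC(S)$, with the extra proviso that if two edges of $M$ hit the same short-cycle component $C$ they hit distinct nodes of $C$, so that after adding $M$ and deleting one edge from each such traversed short cycle the result $S'$ still has all components 2EC and satisfies Invariant~\ref{inv:2ECcomponent} (the merged component has at least $7$ nodes because $M$ joins at least two components, each of size $\ge 3$, through distinct nodes). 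Given such an $M$ incident to components $C_1,\dots,C_r$ ($r\ge 2$), one computes $|S'|-|S|$: we add $|M|=r$ edges (if $M$ is a cycle through $r$ collapsed nodes) and delete one edge from each traversed short cycle, and $cr(S')-cr(S)\le 2-\sum_j cr(C_j)$, the $2$ being the credit of the new large merged component. The key numeric point, to be checked by a short case analysis, is that $\sum_j cr(C_j)$ minus the number of deleted edges is always at least $r$ whenever $r\ge 2$ except possibly for one bad configuration, so that $cost(S)-cost(S')\ge 0$; in particular two components suffice because the worst case is two $4$-cycles, giving $2\cdot\frac{6}{5}-2=\frac{2}{5}>0$ after subtracting the two deleted edges, while a triangle contributes $1$ and no deletion, and a large component $2$ and no deletion. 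I expect every merging cycle here to be ``cheap'', unlike Section~\ref{sec:manyTrianglesOverview}, precisely because triangles now have $1$ credit instead of $\frac12$.

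Next I would handle \emph{existence} of a merging cycle, which is where the structure of $G$ is used. Since $S$ has at least $2$ 2EC components and $G$ is 2VC (indeed structured), there is an edge of $G$ leaving each component; more strongly, for any component $C$, the cut $(V(C), V\setminus V(C))$ cannot be a $1$-cut and, because $G$ has no irrelevant edges and no non-isolating $2$-vertex-cuts, I can invoke the 3-Matching Lemma~\ref{lem:matchingOfSize3} (with $V_1=V(C)$, $V_2=V\setminus V(C)$, noting $|V(C)|\ge 3$ and if $|V(C)|=3$ then $C$ is a triangle, hence $G[V(C)]$ is a triangle) to get $3$ independent edges from $C$ to the rest of the graph. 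Using these, I would argue exactly as in the many-triangles case: maintain an auxiliary forest $G'=(\calC(S),F)$; as long as $G'$ has $\ge 2$ trees, the $3$-matching lets us build a merging cycle (two of the three edges out of some component, routed through the collapsed graph, close up a cycle — this is the analogue of Lemma~\ref{lem:caseB:mergingCycle}); since now all merging cycles are cheap, the first one found already yields the desired $S'$. The only subtlety is ensuring the cycle we close traverses each short-cycle component at distinct nodes; this is guaranteed because the $3$-matching edges land on distinct nodes and we only ever use two edges per component in forming the cycle.

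The step I expect to be the main obstacle is the terminal case, where the auxiliary construction would otherwise get stuck — i.e., when no single edge of $G$ connects two collapsed components that are far apart in the structure we have built, and we must nevertheless merge \emph{some} pair (or triple) of components without violating $cost(S')\le cost(S)$ or Invariant~\ref{inv:2ECcomponent}. In the many-triangles proof this is exactly where the core-triangle configuration is allowed to survive; here we are \emph{not} allowed to stop until $S$ is connected, so I need to rule out an analogous dead end. The resolution should be: if $G'$ is a single tree and every edge of $G$ between components respects the tree (joins tree-adjacent components), then pick any leaf component $C$ of $G'$ and the unique component $C'$ adjacent to it; the $3$-matching from $C$ forces at least two edges from $C$ to $C'$ landing on distinct nodes of $C'$, and these two edges plus a path inside $C'$ give a merging cycle on just $\{C,C'\}$ — which is cheap — unless $C'$ is a short cycle receiving both edges at the \emph{same} node, a case excluded because we chose the $3$-matching edges to be independent. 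If $|V(C)|+|V(C')|\ge 7$ the invariant holds; if both are triangles it also holds ($6<7$ would be a problem, so here I'd instead merge $C$ with two neighbours or note that with two triangles adjacent we can splice three components, getting $\ge 9$ nodes, $3$ credits, paying for $\le 3$ new edges minus $1$ deletion — again feasible). Pinning down this last case analysis cleanly, so that the merged component always has $\ge 7$ nodes and $cost$ does not increase, is the technical heart of the proof; the rest is a routine adaptation of Section~\ref{sec:manyTrianglesOverview}.
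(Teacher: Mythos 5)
Your plan hinges on the claim that with the new credits every merging cycle is cheap, and in particular that any two adjacent components of $S$ can always be merged at non-increasing cost (your ``terminal case'' even reduces to merging a leaf component $C$ with its unique neighbour $C'$). This is the genuine gap: the arithmetic only works if you can delete one edge from \emph{each} short-cycle component traversed, which requires the two incident edges of the merging cycle to land on \emph{adjacent} nodes of that component, and a $3$-matching does not guarantee this simultaneously on both sides. Concretely, take $C$ a $4$-cycle $u_1u_2u_3u_4$ and $C'$ a $5$-cycle $v_1\ldots v_5$ with connecting edges only $u_1v_1,u_2v_3,u_3v_5$: no pair of these edges is adjacent on both sides, so at best you add $2$ edges and delete $1$, while the credits give $\frac{3}{10}\cdot4+\frac{3}{10}\cdot5-2=0.7<1$, i.e.\ $cost$ strictly increases. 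Similarly a triangle attached to a $6$-cycle only at alternating nodes yields $1+\frac{3}{10}\cdot 6-2=0.8<1$. These are exactly the ``bad'' pairs the paper must work around: it shows via the $3$-optimality of the canonical $H$ that certain pairs (triangle--triangle, triangle--$5$-cycle, $4$-cycle--$4$-cycle, \dots) cannot occur at all, it merges \emph{triples} in the $4$-cycle/$5$-cycle situations (Lemmas \ref{lem:3componentsC1_4cycle}, \ref{lem:c5c4to9}, \ref{lem:3componentsC1_5cycle}), and in the $6$-cycle case it either merges the $6$-cycle with one neighbour from each of the two attachment classes or invokes $5/4$-contractibility of the $6$-cycle (Lemma \ref{lem:caseA:treeCase}, case (c)). Your worst-case computation (``two $4$-cycles'') silently assumes both deletions are available, which is precisely what fails.

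A secondary issue: even when the arithmetic happens to balance, merging two triangles into a $6$-cycle, or a triangle with a $4$-cycle into a $7$-cycle, would create a new short cycle not belonging to $H$, violating Invariant \ref{inv:2ECcomponent} on which Lemma \ref{lem:caseA:initialCost}'s bookkeeping rests; the paper avoids this by showing (again via $3$-optimality) that such adjacent pairs with the required matchings simply do not exist, rather than by merging them. Because these exceptional pairs exist, the case where the component graph $\hat G_S$ is not a tree cannot be dispatched by ``the first merging cycle found is cheap''; the paper needs the $2$-core $S^*$, gluing paths with the adjacency condition on traversed triangles and $4$-cycles, and the per-component contributions $+1,+\frac{12}{10},+\frac{5}{10},+\frac{8}{10},+1$ (note a mid-path $5$-cycle contributes only $\frac12$, not the $\ge 1$ your sketch assumes), followed by the case analysis of Lemmas \ref{lem:expand5cycle_root}--\ref{lem:expand3cycle}. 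So the proposal, as it stands, does not establish the lemma; the missing content is essentially all of Sections \ref{sec:localConfigurations}--\ref{sec:caseA:nonTree}.
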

In order to prove the above lemma we introduce an auxiliary graph $\hat{G}_S$ (\emph{component graph}) whose nodes are the 2EC components of $S$ and where there is an edge $C_1C_2$ iff $G$ contains at least one edge between the components $C_1$ and $C_2$ of $S$. A relatively simple case analysis allows us to rule out certain local configurations (or find the desired $S'$ in polynomial time), see Lemmas \ref{lem:4cycle}-\ref{lem:3componentsC1_5cycle}. Given that, it is relatively easy to find the desired $S'$ in the case that  $\hat{G}_S$  is a tree (see Lemma \ref{lem:caseA:treeCase}): here we exploit the 3-matching Lemma.
\begin{figure}
\begin{center}
\begin{tikzpicture}[scale=1.5]

%%%%%%%%%%%%%%Gluing path

\tikzset{vertex/.style={draw=black, very thick, circle,minimum size=0pt, inner sep=2pt, outer sep=2pt}
}

\node () at (0.2,0.3){$C_0$};
\node () at (2,-0.5){$C_1$};
\node () at (3.8,-0.8){$C_2$};
\node () at (5.3,-0.6){$C_3$};
\node () at (7,-0.6){$C_4$};

\node() at (1.05,-0.25){$e_1$};
\node () at (2.85,-0.9){$e_2$};
\node () at (4.33,-0.2){$e_3$};
\node () at (6.15,-0.3){$e_4$};

\node () at (0.4,-0.37){$out_0$};
\node () at (1.4,-0.65){$in_1$};
\node () at (2.4,-1.15){$out_1$};
\node () at (3.3,-1.15){$in_2$};
\node () at (3.75,-0.22){$out_2$};
\node () at (4.85,0.02){$in_3$};
\node () at (5.55,0.02){$out_3$};
\node () at (6.6,0.02){$in_4$};

\begin{scope}[every node/.style={vertex}]

\node (a1) at (-0.2,0.3) {};
\node (a2) at (0.4,0.8) {};
\node (a3) at (0.4,-0.2) {};
\draw[ultra thick] (a1) to (a2) to (a3) to (a1);

\node(b1) at (1.5,-0.5){};
\node (b2) at (1.8,-1){};
\node (b3) at (2.4,-1){};
\node (b4) at (2.4,0){};
\node (b5) at (1.8,0){};
\draw[ultra thick] (b1) to (b2) to (b3) to (b4) to (b5) to (b1);
\draw[dashed] (a3) to (b1);
\node (c1) at (3.3,-1) {};
\node (c2) at (4.3,-1) {};
\node (c3) at (3.8,-0.4) {};
\draw[ultra thick] (c1) to (c2) to (c3) to (c1);
\draw[dashed] (b3) to (c1);
\node (d1) at (4.9,-0.2) {};
\node (d2) at (4.9,-1){};
\node (d3) at (5.7,-1){};
\node (d4) at (5.7,-0.2){};
\draw[ultra thick] (d1) to (d2) to (d3) to (d4) to (d1);
\draw[dashed] (c3) to (d1);
\node (e1) at (6.6,-0.2){};
\node (e2) at (6.6,-1){};
\node (e3) at (7.4,-1){};
\node (e4) at (7.4,-0.2){};
\draw[ultra thick] (e1) to (e2) to (e3) to (e4) to (e1);
\draw[dashed] (d4) to (e1);
\draw[red,bend right=30] (e4) to (a2);
\draw[green,bend right=25] (e1) to (a3);
\draw[blue,bend right=35] (e4) to (a3);
\end{scope}
\begin{scope}[very thick]
\end{scope}
\begin{scope}[very thick, densely dashed]
%\draw (1) -- (a);
\end{scope}
\end{tikzpicture}
\end{center}
\caption{The dashed edges $\{e_1,\dots,e_4\}$ form a gluing path involving the component $C_0,\ldots,C_4$. Let $e'$ be any edge among the green, blue, and red \amEdit{edges}. By adding $\{e_1,e_2,e_3,e_4,e'\}$ to the components $C_0,\ldots,C_4$ and removing $in_2out_2$ and $in_3out_3$, one obtains a 2EC component $C'$ spanning $V(C_0)\cup \ldots \cup V(C_4)$. If $e'$ is the blue (resp., red) edge, then we can remove from $C'$ one extra edge (resp., two extra edges) while keeping it 2EC.}
\label{fig:GluingPath}
\end{figure}
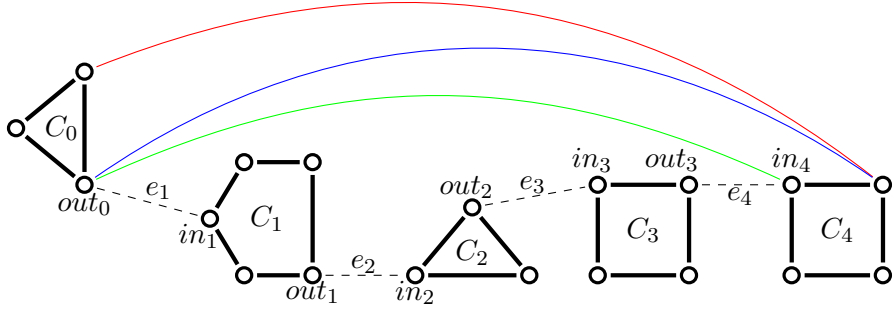

The most complicated part of our analysis is when $\hat{G}_S$ is \emph{not} a tree (see Lemma \ref{lem:caseA:nonTreeCase}). \amEdit{at a high leve}l, the idea (see Figure \ref{fig:GluingPath}) is to build a variant of the merging cycle $M$ from Section \ref{sec:manyTrianglesOverview} touching the components $C_0,\ldots,C_{|M|}$ of $S$, where we guarantee that the two edges of $M$ incident to $C_i$, $0<i<|M|$, are incident to adjacent nodes $in_i$ and $out_i$ of $C_i$ if $C_i$ is a triangle or a 4-cycle. We obtain $S'$ by adding $M$ to $S$ and removing the edges of type $in_iout_i$ mentioned before. Notice that $S'$ contains a 2EC component $C'$ spanning the nodes of the components $C_i$. W.r.t. $cost(S)$, $cost(S')$ grows by $|M|+2$ due to the edges of $M$ and the $2$ credits of $C'$. However, considering the removed credits and edges corresponding to each $C_i$, we obtain a decrease of $cost(S)$ by strictly more than $1$ from each $C_i$, $0<i<|M|$ (the smallest decrease is $1.5$ when $C_i$ is a 5-cycle). This allows one to easily rule out large values of $|M|$: for the remaining cases a more careful choice of $M$ and a refined case analysis provides the desired $S'$.

The desired approximation factor follows easily from Lemmas \ref{lem:caseA:initialCost}, \ref{lem:caseA:bridgeCovering} and \ref{lem:caseA:gluing}. 
\begin{proof}[Proof of Lemma \ref{lem:fewTriangles:main}]
Consider the above process. Clearly it ends within a polynomial number of transformation steps of $S$. Let $S'$ denote the final value of $S$. Observe that by construction 
$
|S'|\leq |S'|+cr(S')=cost(S')\leq cost(H)\leq (\frac{13}{10}+\frac{1}{30}t-\frac{1}{20}b)|H|,
$ 
where the first inequality comes from the fact that credits are non-negative and the last one from Lemma \ref{lem:caseA:initialCost}. The claim follows since $|H|\leq \opt$.
%$$
%|S'|\leq |S'|+cr(S')=cost(S')\leq cost(H)\overset{\text{Lem. \ref{lem:caseA:initialCost}}}{\leq} (\frac{13}{10}+\frac{1}{30}t-\frac{1}{20}b)|H|,
%$$ 
%where the first inequality comes from the fact that credits are non-negative. The claim follows since $|H|\leq \opt$.
\end{proof}

\bibliographystyle{abbrv}% the recommended bibstyle
\bibliography{references}

\appendix

\section{Reduction to Structured Graphs}
\label{sec:preprocessing}

In this section we show how to reduce the 2-ECSS problem to $(\alpha,\eps)$-structured graphs. 
An optimal solution to a 2-ECSS instance $G$ will be denoted by $\OPT(G)$ or simply by $\OPT$ when $G$ is clear from the context. Likewise we will use $\opt(G)=|\OPT(G)|$ and $\opt=|\OPT|$.

Given $W\subseteq V$ we let $G\vert W$ denote the multi-graph obtained by contracting $W$ into one node $w$. For a subgraph $H$ of $G$, we use $G\vert H$ as a shortcut for $G\vert V(H)$. We will use the following facts and lemmas.
\begin{fact}\label{fact:contraction}
Let $G$ be a 2EC graph and $W\subseteq V(G)$. Then $G \vert W$ is 2EC. 
\end{fact}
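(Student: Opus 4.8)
\textbf{Plan for the proof of Fact~\ref{fact:contraction}.}
The statement to prove is: if $G$ is 2EC and $W\subseteq V(G)$, then $G\vert W$ (the multigraph obtained by contracting $W$ to a single node $w$) is 2EC. The plan is to verify the defining property of 2-edge-connectivity directly: a (multi)graph with at least two vertices is 2EC if and only if it is connected and has no bridge; equivalently, for every edge $f$, the graph remains connected after deleting $f$. I would work with the characterization via edge cuts: $G\vert W$ is 2EC iff it is connected and every edge cut has size at least $2$.

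First I would observe that contraction never destroys connectivity: any walk in $G$ projects to a walk in $G\vert W$ (edges inside $W$ become loops, which we can drop), so $G\vert W$ is connected since $G$ is. Next, the key step is to analyze edge cuts. Edge cuts of $G\vert W$ correspond precisely to edge cuts of $G$ of the form $\delta_G(A)$ where $A\subseteq V(G)$ satisfies $W\subseteq A$ or $W\cap A=\emptyset$ — that is, bipartitions of $V(G\vert W)$ pull back to bipartitions of $V(G)$ that do not split $W$, and the crossing edges are the same set in both graphs (loops at $w$ contribute nothing). Since $G$ is 2EC, every such cut $\delta_G(A)$ with $A\neq\emptyset$, $A\neq V(G)$ has $|\delta_G(A)|\geq 2$; hence every edge cut of $G\vert W$ has size at least $2$, and therefore $G\vert W$ is 2EC. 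One should also note the degenerate case $W=V(G)$, where $G\vert W$ is a single vertex (with loops), which is vacuously 2EC, and the case $|W|\le 1$, where nothing changes.

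I do not expect any serious obstacle here — this is a routine structural fact. The only point requiring a little care is the bookkeeping about multigraphs: contraction may create parallel edges and loops, so one must be explicit that loops are irrelevant to edge connectivity and that parallel edges are counted with multiplicity in cut sizes (which is exactly what makes the correspondence between cuts of $G$ and cuts of $G\vert W$ work cleanly). An alternative, equally short route avoiding cuts altogether: take any edge $f$ of $G\vert W$; it lifts to an edge $f'$ of $G$ with at least one endpoint outside $W$; since $G-f'$ is connected, its image $(G-f')\vert W = (G\vert W)-f$ is connected, so $G\vert W$ has no bridge and, being connected, is 2EC. I would present whichever of these two phrasings is shorter in context, likely the second.
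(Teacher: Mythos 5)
Your proof is correct. The paper states Fact~\ref{fact:contraction} without proof (it is treated as a standard observation), so there is no argument to compare against; either of your two routes is a valid justification. The cut-based phrasing (cuts of $G\vert W$ are exactly the cuts $\delta_G(A)$ of $G$ with $A$ not splitting $W$, hence of size at least $2$) and the bridge-lifting phrasing (any non-loop edge $f$ of $G\vert W$ comes from an edge $f'$ of $G$, and connectivity of $G-f'$ projects to connectivity of $(G\vert W)-f$) both work, and your bookkeeping about loops, parallel edges, and the degenerate cases $|W|\le 1$ and $W=V(G)$ is exactly the care needed in the multigraph setting.
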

\begin{fact}\label{fact:decontraction}
Let $H$ be a 2EC subgraph of a 2EC graph $G$, and $S$ and $S'$ be 2EC spanning subgraph of $H$ and $G\vert H$, resp. Then $S\cup S'$ is a 2EC spanning subgraph of $G$, where we interpret $S'$ as edges of $G$ after decontracting $V(H)$.
\end{fact}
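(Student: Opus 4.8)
The statement to prove is Fact~\ref{fact:decontraction}: if $H$ is a 2EC subgraph of a 2EC graph $G$, and $S$ is a 2EC spanning subgraph of $H$ while $S'$ is a 2EC spanning subgraph of $G\vert H$, then $S\cup S'$ (with $S'$ decontracted) is a 2EC spanning subgraph of $G$.

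\medskip

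The plan is to verify directly that $S\cup S'$ is spanning and 2-edge-connected. First I would check \emph{spanning}: every node of $G$ is either a node of $H$ or a node of $G\vert H$ other than the contracted node $w$; in the first case it is covered by $S$ (which spans $V(H)$), and in the second case it is covered by $S'$ (which spans $V(G\vert H)$), so $S\cup S'$ touches all of $V(G)$. Next, for 2-edge-connectivity I would use the standard cut characterization: a spanning subgraph is 2EC iff for every nonempty proper vertex subset $U\subsetneq V(G)$, there are at least two edges of the subgraph crossing the cut $(U, V(G)\setminus U)$. So fix such a $U$ and argue there are $\ge 2$ edges of $S\cup S'$ crossing it.

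The main case split is according to how $U$ meets $V(H)$. If $U\supseteq V(H)$ or $U\cap V(H)=\emptyset$, then the cut $(U,V(G)\setminus U)$ descends to a cut in $G\vert H$ (namely, contract $V(H)$; the vertex $w$ lies entirely on one side), and this contracted cut is proper and nonempty in $G\vert H$; since $S'$ is 2EC it has $\ge 2$ edges crossing the contracted cut, and these lift back to $\ge 2$ edges of $S'\subseteq S\cup S'$ crossing $(U,V(G)\setminus U)$ in $G$ — here I would note that edges of $G\vert H$ correspond bijectively to edges of $G$ with at least one endpoint outside $V(H)$, and a crossing edge of the contracted cut has both endpoints outside $w$ or exactly one endpoint at $w$, in either case decontracting gives an edge of $G$ genuinely crossing $(U,V(G)\setminus U)$. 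Otherwise $U$ properly splits $V(H)$, i.e. $\emptyset \ne U\cap V(H)\subsetneq V(H)$; then $(U\cap V(H),\, V(H)\setminus U)$ is a proper nonempty cut of $H$, and since $S$ is 2EC on $H$ there are $\ge 2$ edges of $S$ crossing it; each such edge has both endpoints in $V(H)$, one in $U$ and one outside, so it crosses $(U,V(G)\setminus U)$ in $G$ as well, giving the required two edges. (It is harmless that the two cases of the split could in principle both be triggered; we only need one of them, and the three possibilities — $U\supseteq V(H)$, $U\cap V(H)=\emptyset$, $U$ splits $V(H)$ — are exhaustive.)

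I expect the only mildly delicate point to be the bookkeeping in the contraction/decontraction correspondence: making sure that an edge counted as ``crossing'' in $G\vert H$ really decontracts to a distinct edge of $G$ crossing $(U, V(G)\setminus U)$, and in particular that we do not accidentally identify two crossing edges of $G\vert H$ with the same edge of $G$ (the correspondence is a bijection on the relevant edge set, so this cannot happen). Everything else is a routine application of the two-edge-cut criterion, so the write-up will be short.
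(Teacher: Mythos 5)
Your argument is correct. The paper states Fact~\ref{fact:decontraction} without proof, treating it as standard, so there is no official argument to compare against; your cut-counting verification is exactly the natural way to justify it: the three cases ($U\supseteq V(H)$, $U\cap V(H)=\emptyset$, $U$ properly splitting $V(H)$) are exhaustive, the contracted cut in $G\vert H$ is nonempty and proper in the first two cases, and the edge correspondence under contraction is injective on edges with an endpoint outside $V(H)$, so the two crossing edges of $S'$ (which cannot be loops at the contracted node) lift to two distinct crossing edges of $G$. The write-up is sound as proposed.
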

We will also use the following standard results (see, e.g., \cite{HVV19}) which will allow us to focus on 2VC simple graphs\footnote{Part of our reductions temporarily create multi-graphs.}. 
\begin{lemma}\label{lem:cutNode} 
Let $G$ be a 2EC graph, $v$ be a 1-vertex-cut of $G$, and $(V_1,V_2)$, $V_1\neq \emptyset\neq V_2$, be a partition of $V\setminus \{v\}$ so that there are no edges between $V_1$ and $V_2$.
Then  $\opt(G)=\opt(G_1)+\opt(G_2)$, where $G_i:=G[V_i\cup \{v\}]$.
\end{lemma}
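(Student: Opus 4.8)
\textbf{Proof plan for Lemma \ref{lem:cutNode}.}

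The plan is to prove the two inequalities $\opt(G) \le \opt(G_1) + \opt(G_2)$ and $\opt(G) \ge \opt(G_1) + \opt(G_2)$ separately, using the fact that $v$ is a 1-vertex-cut, so every path from $V_1$ to $V_2$ in $G$ (and in any spanning subgraph) must pass through $v$.

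For the upper bound, I would take optimal solutions $S_1 \subseteq E(G_1)$ and $S_2 \subseteq E(G_2)$ and argue that $S := S_1 \cup S_2$ is a 2EC spanning subgraph of $G$. It clearly spans $V$. To check 2-edge-connectivity, I would remove an arbitrary edge $e$ from $S$; w.l.o.g. $e \in S_1$. Since $S_1$ is 2EC, $S_1 \setminus \{e\}$ is connected and spans $V_1 \cup \{v\}$; since $S_2$ is connected and spans $V_2 \cup \{v\}$, and the two subgraphs share the vertex $v$, the union $(S_1 \setminus \{e\}) \cup S_2$ is connected and spans $V$. Hence $S \setminus \{e\}$ is connected for every $e$, so $S$ is 2EC. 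This gives $\opt(G) \le |S_1| + |S_2| = \opt(G_1) + \opt(G_2)$ (the union is disjoint as $E(G_1)$ and $E(G_2)$ share no edges).

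For the lower bound, I would start from an optimal 2EC spanning subgraph $S$ of $G$ and set $S_i := S[V_i \cup \{v\}] = S \cap E(G_i)$. Since there are no edges of $G$ between $V_1$ and $V_2$, every edge of $S$ lies in exactly one of $E(G_1)$, $E(G_2)$, so $|S| = |S_1| + |S_2|$. It remains to show each $S_i$ is a 2EC spanning subgraph of $G_i$. Spanning is clear. For 2-edge-connectivity of, say, $S_1$: removing any edge $e$ from $S_1$ leaves $S \setminus \{e\}$ connected (as $S$ is 2EC), and I would then observe that restricting a connected graph that spans $V$ to the part $V_1 \cup \{v\}$ stays connected because all connections between $V_1 \cup \{v\}$ and the rest go through $v$ — formally, any path in $S \setminus \{e\}$ between two nodes of $V_1 \cup \{v\}$ can be rerouted to stay within $V_1 \cup \{v\}$ by cutting it at its last visit to $v$, since the portion of the path outside $V_1$ can only re-enter $V_1$ through $v$. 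Hence $S_1 \setminus \{e\}$ is connected. This yields $\opt(G_1) + \opt(G_2) \le |S_1| + |S_2| = \opt(G)$.

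The main obstacle — though it is minor — is the careful justification that restricting a 2-edge-connected (or just connected) spanning subgraph of $G$ to one side of a 1-vertex-cut preserves connectivity; this is exactly where the no-edges-between-$V_1$-and-$V_2$ hypothesis and the role of $v$ as the unique connection point are used, and it has to be stated cleanly via the path-rerouting argument rather than waved at. Everything else is bookkeeping about edge-disjointness of $E(G_1)$ and $E(G_2)$.
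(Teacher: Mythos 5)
Your proof is correct: the two-inequality argument (gluing optimal solutions of $G_1,G_2$ at $v$ for the upper bound, and restricting an optimal solution of $G$ to each side for the lower bound, using that a simple path between two nodes of $V_i\cup\{v\}$ cannot leave that side without visiting $v$ twice) is exactly the standard proof. The paper itself states this lemma without proof, citing it as a standard result from prior work, so there is nothing to compare beyond noting that your argument is the expected one and is complete.
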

%\begin{proof}
%Trivially since each feasible 2-ECSS solution for $G$ decomposes into feasible 2-ECSS solutions for $G_1$ and $G_2$, and vice versa the union of any feasible solutions for $G_1$ and $G_2$ induces a feasible solution for $G$.
%\end{proof}
%Part of our reductions temporarily create multi-graphs. We can immediately reduce to simple graphs in view of the following lemma.
\begin{lemma}\label{lem:simple}
Let $G$ be a 2VC multigraph with $|V(G)| \geq 3$. Then, there exists a minimum size 2EC spanning subgraph of $G$ that is simple (i.e. contains no self-loops nor parallel edges).
\end{lemma}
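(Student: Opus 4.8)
The statement to prove is Lemma~\ref{lem:simple}: a 2VC multigraph $G$ with $|V(G)|\geq 3$ has a minimum-size 2EC spanning subgraph that is simple. My plan is a simple exchange argument. Let $S$ be a minimum-size 2EC spanning subgraph of $G$, chosen among all such subgraphs to have the fewest ``bad'' edges, where a bad edge is a self-loop or one of a pair of parallel edges. If $S$ has no bad edge, we are done, so suppose for contradiction it does.

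\textbf{Self-loops.} A self-loop contributes nothing to edge-connectivity: if $e$ is a self-loop in $S$, then $S\setminus\{e\}$ is still a 2EC spanning subgraph and is smaller, contradicting minimality of $|S|$. So $S$ contains no self-loops (this part does not even need the ``fewest bad edges'' choice).

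\textbf{Parallel edges.} Suppose $S$ contains two parallel edges $e_1=uv$ and $e_2=uv$. Since $|V(G)|\geq 3$ and $G$ is 2VC, $G\setminus\{u,v\}$ is nonempty and, more to the point, $G$ has no 1-vertex-cut; I want to replace $e_2$ by some edge $f$ of $G$ not currently in $S$ (or already in $S$ but reducing the multiplicity issue) so that the result is still 2EC, still spanning, and has one fewer bad edge, contradicting the choice of $S$. The key observation is that $S\setminus\{e_2\}$ is connected (because $S$ is 2EC, so removing one edge keeps it connected) and spanning, and it fails to be 2EC only possibly because $e_1$ becomes a bridge. If $e_1$ is not a bridge of $S\setminus\{e_2\}$ then $S\setminus\{e_2\}$ is already 2EC and smaller — contradiction. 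So assume $e_1$ is a bridge of $S\setminus\{e_2\}$; removing it splits $S\setminus\{e_2\}$ into two components $V_u\ni u$ and $V_v\ni v$. Because $G$ is 2VC (in particular 2EC), the cut $(V_u,V_v)$ is crossed by at least two edges of $G$; one of them is, say, $e_1$ (or the pair $e_1,e_2$), but since $G$ is 2-\emph{vertex}-connected and $|V(G)|\ge 3$, there must be an edge $f$ of $G$ crossing this cut with at least one endpoint different from $\{u,v\}$ — otherwise every $u$--$v$-avoiding path would be impossible and $\{u,v\}$ would behave like a cut in a degenerate way; more carefully, if every edge across $(V_u,V_v)$ joined $u$ to $v$, then removing $u$ (if $|V_v|\ge 2$, i.e.\ $V_v\neq\{v\}$) would disconnect $V_v\setminus\{v\}$ from $V_u\setminus\{u\}$, contradicting 2-vertex-connectivity; the case $V_v=\{v\}$ (and symmetrically $V_u=\{u\}$) is handled since $|V(G)|\ge 3$ means not both are singletons, and a singleton side of degree forced through $u,v$ again gives a 1-vertex-cut. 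So such an $f$ exists, $f\notin\{e_1,e_2\}$, and $(S\setminus\{e_2\})\cup\{f\}$ is 2EC (the cut is now covered by $e_1$ and $f$, two edge-disjoint ways), spanning, of the same size $|S|$, but with strictly fewer bad edges — contradiction.

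\textbf{Main obstacle.} The only delicate point is the case analysis when one side of the bridge-cut is a singleton; I expect that is where the hypotheses $|V(G)|\geq 3$ and 2-\emph{vertex}-connectivity (as opposed to merely 2-edge-connectivity) are genuinely used, and it needs to be stated cleanly. Everything else is routine. Also I should be slightly careful that $f$ might coincide with an edge already in $S$, creating a new parallel pair elsewhere; but that is fine since we only count the total number of bad edges and the exchange strictly decreases it (we destroy the $e_1,e_2$ pair and $f$'s partner, if any, was already counted). I would organize the write-up as: (1) kill self-loops by minimality; (2) assume a parallel pair, reduce to the bridge case; (3) produce the replacement edge $f$ using 2VC; (4) conclude by the extremal choice.
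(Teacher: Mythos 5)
The paper never proves Lemma~\ref{lem:simple}: it is quoted as a standard fact with a pointer to prior work, so there is no in-paper argument to compare yours against, and a self-contained exchange proof like yours is a perfectly reasonable way to fill it in. Your skeleton is sound: self-loops disappear by minimality; for a parallel pair $e_1,e_2=uv$ in a minimum-size solution $S$ chosen with fewest bad edges, either $S\setminus\{e_2\}$ is already 2EC (contradicting minimality) or $e_1$ is a bridge of $S\setminus\{e_2\}$, and then 2-vertex-connectivity together with $|V(G)|\geq 3$ produces an edge $f$ of $G$ across the cut $(V_u,V_v)$ that is not another $uv$-copy, after which $(S\setminus\{e_2\})\cup\{f\}$ is 2EC of the same size. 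The hypotheses 2VC and $|V(G)|\geq 3$ are used exactly where you say they are (the singleton-side case), and your case analysis there, though written somewhat tangledly, does cover all cases.

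Two justifications should be tightened. First, the assertion that $e_1$ is the only possible bridge of $S\setminus\{e_2\}$ deserves its one-line proof: a bridge $g$ of $S\setminus\{e_2\}$ gives a cut of $S$ whose only crossing edges are $e_2$ and $g$; since $e_1$ crosses every cut crossed by $e_2$, necessarily $g=e_1$. Second, and more importantly, your closing parenthetical about the bad-edge count is wrong as stated: if $f$ were parallel to some $h\in S$, then $h$ need not have been bad before (its only potential partner, $f$, lay outside $S$), so destroying $\{e_1,e_2\}$ while creating $\{f,h\}$ would give no strict decrease. The correct repair is already implicit in your construction: the only edges of $S$ crossing $(V_u,V_v)$ are $e_1$ and $e_2$, both copies of $uv$, whereas $f$ crosses the cut and is not a $uv$-copy; hence $f$ has no parallel partner in $S\setminus\{e_2\}$ and, in particular, $f\notin S$ (which is also what guarantees $|S'|=|S|$). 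Consequently the bad-edge set of $S'$ is contained in that of $S$ with $e_2$ removed: $f$ acquires no partner, and deleting $e_2$ cannot create badness elsewhere. This yields the strict decrease and the extremal contradiction, so with these two local fixes your proof is complete.
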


We already showed how to handle irrelevant edges (Lemma \ref{lem:irrelevant}). In order to handle the case of a non-isolating 
$2$-vertex-cut $\{u,v\}$ we need to introduce some notation and technical lemmas. 
%A partition $(V_1,V_2)$ of $V\setminus \{u,v\}$ is \emph{nice} if $2\leq |V_1|\leq |V_2|$.
\begin{lemma}\label{lem:nicePartition}
Given a non-isolating $2$ vertex-cut $\{u,v\}$ of $G$ and assuming $|V(G)|\geq 6$, in polynomial time one can find a partition $(V_1,V_2)$ of $V\setminus \{u,v\}$ with $2\leq |V_1|\leq |V_2|$ such that there are no edges between $V_1$ and $V_2$.
\end{lemma}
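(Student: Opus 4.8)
The statement to prove is Lemma~\ref{lem:nicePartition}: given a non-isolating $2$-vertex-cut $\{u,v\}$ of $G$ with $|V(G)|\ge 6$, find in polynomial time a partition $(V_1,V_2)$ of $V\setminus\{u,v\}$ with $2\le |V_1|\le |V_2|$ and no edges between $V_1$ and $V_2$.

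\medskip

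\textbf{Plan.} The plan is to start from the connected components of $G[V\setminus\{u,v\}]$ and regroup them into two parts. By definition of a $2$-vertex-cut, $G\setminus\{u,v\}$ has at least two connected components; write them as $D_1,\dots,D_r$ with $r\ge 2$. First I would handle the easy case $r\ge 3$: then by definition of \emph{non-isolating}, it is not the case that $r=2$ with one component of size $1$; but with $r\ge 3$ we just need to split the multiset of component sizes into two nonempty groups each of total size $\ge 2$. If at least two of the $D_i$ have size $\ge 2$, put one such in $V_1$ and everything else in $V_2$. If at most one $D_i$ has size $\ge 2$, then since $r\ge 3$ there are at least two singleton components; if $r\ge 3$ and all are singletons then $|V\setminus\{u,v\}|=r\ge 3$ and we can put two singletons in $V_1$; if exactly one $D_i$ has size $\ge 2$ and the rest ($\ge 2$ of them) are singletons, put two singletons in $V_1$ (size $2$) and the rest, including the big one, in $V_2$ (size $\ge 2$ since it contains a component of size $\ge 2$). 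In all sub-cases of $r\ge 3$ we get $|V_1|\ge 2$ and $|V_2|\ge 2$; note $|V\setminus\{u,v\}|\ge r\ge 3$ guarantees $V_2\neq\emptyset$, and since $|V(G)|\ge 6$ we have $|V\setminus\{u,v\}|\ge 4$ so both parts can reach size $2$ — actually I should double-check the worst case $r=3$ all singletons gives $|V_1|=2,|V_2|=1$, which violates $|V_2|\ge 2$; so here I must instead use $|V(G)|\ge 6 \Rightarrow |V\setminus\{u,v\}|\ge 4$, hence $r=3$ forces some component of size $\ge 2$, reducing to the earlier sub-case. I would streamline this into: among $D_1,\dots,D_r$, if there exist two of size $\ge 2$, done; otherwise at most one has size $\ge 2$, so at least $|V\setminus\{u,v\}| - 1 \ge 3$ singletons exist, and I group two singletons as $V_1$ and the remainder as $V_2$ (which then has size $\ge 2$).

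\medskip

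The remaining case is $r=2$, say $D_1,D_2$ with $|D_1|\le |D_2|$. Since $\{u,v\}$ is non-isolating, it is not true that one of $D_1,D_2$ has size $1$; hence $|D_1|\ge 2$, and since $|V(G)|\ge 6$, $|D_2|\ge 2$ as well. Then simply set $V_1=D_1$, $V_2=D_2$: no edges cross between $D_1$ and $D_2$ by definition of connected components of $G\setminus\{u,v\}$, and $2\le |V_1|\le |V_2|$. So actually the whole proof reduces to: compute the components of $G\setminus\{u,v\}$ (polynomial time), then (a) if every component has size $1$ or there are $\ge 2$ components and $\le 1$ of size $\ge 2$, pull out singletons; (b) if two components have size $\ge 2$, pull one out; (c) if exactly two components and both have size $\ge 2$, take them directly. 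I can unify (b) and (c). The polynomial running time is immediate since everything is component computation plus a constant-depth case split.

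\medskip

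\textbf{Main obstacle.} There is no real mathematical obstacle; the only thing requiring care is the bookkeeping to ensure \emph{both} $|V_1|\ge 2$ and $|V_2|\ge 2$ simultaneously in every degenerate configuration — in particular the configuration where $G\setminus\{u,v\}$ splits into many tiny components. This is exactly where the hypotheses "$\{u,v\}$ non-isolating" (to forbid $r=2$ with a size-$1$ side) and "$|V(G)|\ge 6$" (equivalently $|V\setminus\{u,v\}|\ge 4$, so the smaller side can always be padded to size $2$ while leaving $\ge 2$ on the other side) are both used; I would make sure the write-up invokes each hypothesis explicitly at the point it is needed.
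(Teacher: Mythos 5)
Your proposal is correct and takes essentially the same route as the paper's own proof: compute the connected components of $G\setminus\{u,v\}$ and regroup them by a case analysis on their number and sizes, using non-isolation exactly in the two-component case and $|V(G)|\ge 6$ (i.e. $|V\setminus\{u,v\}|\ge 4$) to pad the smaller side otherwise. One cosmetic remark: the streamlined claim that there are at least $|V\setminus\{u,v\}|-1\ge 3$ singletons is not right in general (with one component of size $\ge 2$ there may be only two singletons), but only two singletons are needed and the remainder has size $|V\setminus\{u,v\}|-2\ge 2$, which your detailed case split already covers; also just relabel so that the smaller part is $V_1$.
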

\begin{proof}%[Proof of Lemma \ref{lem:nicePartition}]
Let $C_1,\ldots,C_k$ be the connected components of $G':=G\setminus \{u,v\}$ in non-decreasing order of size. If $k=2$ simply set $V_i=V(C_i)$. Observe that $|V_1|\geq 2$ as required since otherwise $\{u,v\}$ would be isolating. If $k\geq 4$, set $V_1=V(C_1)\cup V(C_2)$ and $V_2=V(C_2)\cup \ldots \cup V(C_k)$. Otherwise ($k=3$) notice that $|V(C_3)|\geq 2$ since otherwise $|V(G)|= 5$. In this case set $V_1=V(C_1)\cup V(C_2)$ and $V_2=V(C_3)$. 
\end{proof}

For the following definition and lemma, see Figure \ref{fig:TypeABC}.
\begin{definition}\label{def:ABC}
Consider any graph $H$ and two nodes $u,v\in V(C)$. Let $H'$ be obtained by contracting each 2EC component $C$ of $H$ into a single super-node $C$, and let $C(u)$ and $C(v)$ be the super-nodes corresponding to $u$ and $v$, resp. Then $H$ is: 
\begin{enumerate}\itemsep0pt
\item of type A w.r.t. $\{u,v\}$ if $H'$ consists of a single super-node $C(u)=C(v)$;
\item of type B w.r.t. $\{u,v\}$ if $H'$ is a $C(u)$-$C(v)$ path (of length at least one);
\item of type C w.r.t. $\{u,v\}$ if $H'$ consists of two isolated super-nodes $C(u)$ and $C(v)$. 
\end{enumerate}
%of type A, B or C w.r.t. $\{u,v\}$ if $H'$ consists of a single super-node $C(u)=C(v)$, a path of length at least one between $C(u)$ and $C(v)$, or two isolated super-nodes $C(u)$ and $C(v)$, resp. 
\end{definition}

\begin{lemma}\label{lem:ABC}
Let $G$ be a simple 2VC graph without irrelevant edges, $\{u,v\}$ be a $2$-vertex-cut of $G$,  $(V_1,V_2)$, $V_1\neq \emptyset\neq V_2$, be a partition of $V\setminus \{u,v\}$ such that there are no edges between $V_1$ and $V_2$, and $H$ be a 2EC spanning subgraph of $G$. Define $G_i=G[V_i\cup \{u,v\}]$ and $H_i=E(G_i)\cap H$ for $i\in \{1,2\}$. The following claims hold:
\begin{enumerate}\itemsep0pt
\item\label{lem:ABC:feasible} Both $H_1$ and $H_2$ are of type A, B or C (w.r.t. $\{u,v\}$). Furthermore, if one of the two is of type C, then the other must be of type A.
\item\label{lem:ABC:ABexists} If $H_i$ is of type $C$, then there exists one edge $f\in E(G_i)$ such that $H'_i:=H_i\cup \{f\}$ is of type $B$. As a consequence, there exists a 2EC spanning subgraph $H'$ where $H'_i:=H'\cap E(G_i)$ is of type A or B.  
%\item\label{lem:ABC:twoC} Let $H'$ such that $H'_i$
\end{enumerate} 
\end{lemma}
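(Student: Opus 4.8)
The plan is to read off the types of $H_1$ and $H_2$ from the bridge-block structure of each, using heavily that $H$ is 2EC and that in $G$ no edge crosses between $V_1$ and $V_2$. Since $\{u,v\}$ is a 2-vertex-cut of a graph with no irrelevant edges, $uv\notin E(G)$, so $H$ decomposes as $H=H_1\sqcup H_2$ (disjoint as edge sets), the two subgraphs sharing only the vertices $u$ and $v$; I will always view $H_i$ as a spanning subgraph of $G_i$, so $u,v\in V(H_i)$ (possibly isolated). Also, every $w\in V_1$ has all its $H$-edges inside $H_1$, hence $\deg_{H_1}(w)=\deg_H(w)\ge 2$, and symmetrically for $V_2$.

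For Claim \ref{lem:ABC:feasible}, I would first contract the maximal 2EC subgraphs of $H_1$ to obtain the bridge-block forest $H_1'$ (a \emph{simple} forest: two bridges between the same pair of super-nodes, or a loop, would create a cycle inside a super-node). The key step is that every leaf super-node of $H_1'$ contains $u$ or $v$. Indeed, if $C$ is a leaf with $V(C)\cap\{u,v\}=\emptyset$ then $V(C)\subseteq V_1$; all $H$-edges incident to $V(C)$ are in $H_1$, and in $H_1$ at most one edge leaves $V(C)$ (either $C$ is a whole component of $H_1$, or $C$ has a unique incident bridge $e$), so either $V(C)$ is already a separate component of $H$, or $e$ is a bridge of $H$ --- both impossible since $H$ is 2EC and $V(C)\subsetneq V$. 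Given that all leaves of $H_1'$ lie in $\{C(u),C(v)\}$, a short combinatorial argument on simple forests pins down the shape: if $C(u)=C(v)$, the forest has a single admissible leaf and must be a single super-node (type A); if $C(u)\ne C(v)$, there are at most two leaves and they are distinct, which forces $H_1'$ to be either the path $C(u)-\cdots-C(v)$ (type B) or exactly the two isolated super-nodes $C(u),C(v)$ (type C). The same reasoning applies to $H_2$. For the ``furthermore'' part, suppose $H_1$ is type C, so $H_1=D_u\sqcup D_v$ with $D_u,D_v$ maximal 2EC, $u\in V(D_u)$, $v\in V(D_v)$. Since $H$ is connected while $H_1$ separates $u$ from $v$ (and the only vertices shared by $H_1,H_2$ are $u,v$), $H_2$ must connect $u$ to $v$; moreover if a bridge $e$ of $H_2$ lay on the $u$-$v$ path, then $H\setminus e$ would be disconnected (the component of $u$ being $V(D_u)\cup A$, where $A$ is the side of $e$ containing $u$, which omits $v$), contradicting 2EC of $H$. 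Hence $u,v$ lie in one maximal 2EC subgraph of $H_2$, so $H_2$ is type A.

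For Claim \ref{lem:ABC:ABexists}, with $H_i=D_u\sqcup D_v$ of type C, it suffices to exhibit a $G$-edge $f$ with one endpoint in $V(D_u)$ and one in $V(D_v)$: such an $f$ lies in $E(G_i)$ (both endpoints are in $V(G_i)$), and in $H_i\cup\{f\}$ the maximal 2EC subgraphs are still $D_u,D_v$ with $f$ a bridge joining them, so $H_i\cup\{f\}$ is type B. To find $f$, I would argue by contradiction: if no such edge exists, and say $|V(D_u)|\ge 2$, then $V(D_u)\setminus\{u\}$ is a nonempty subset of $V_1$ whose $G$-neighborhood is contained in $V(D_u)$ (no edges to $V_2$, none to $V(D_v)$, and $uv\notin E$), so $u$ is a 1-vertex-cut, contradicting 2VC; the case $|V(D_v)|\ge 2$ is symmetric via $v$, and since $V_1\ne\emptyset$ one of the two holds. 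Finally, for the stated consequence: if $H_1$ (resp.\ $H_2$) is of type C, put $H'=H\cup\{f\}$ with $f$ as above; $H'$ is still a 2EC spanning subgraph, $H'\cap E(G_1)=H_1\cup\{f\}$ is type B, and $H'\cap E(G_2)=H_2$ (as $f\notin E(G_2)$, having an endpoint in $V_1$) is type A by Claim \ref{lem:ABC:feasible}; if neither $H_i$ is type C, take $H'=H$.

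The main obstacle I anticipate is bookkeeping rather than a single hard idea: one must consistently treat $H_i$ as spanning $V(G_i)$ (so an isolated $u$ or $v$ counts as a trivial 2EC component), must correctly classify the restricted bridge-block forest from the leaf condition, and must handle the degenerate cases $D_u=\{u\}$ or $D_v=\{v\}$ in the 2VC argument for the existence of $f$.
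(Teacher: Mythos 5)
Your proof is correct and follows essentially the same route as the paper's: you classify $H_i$ via the contraction of its maximal 2EC subgraphs, using the 2-edge-connectivity of $H$ to forbid any branch or component avoiding $\{u,v\}$, and you obtain the edge $f$ by observing that otherwise $u$ or $v$ would be a $1$-vertex-cut. Your treatment is somewhat more explicit about degenerate cases (isolated $u$ or $v$, the forest/leaf bookkeeping) than the paper's, but the underlying argument is the same.
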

\begin{proof}%[Proof of Lemma \ref{lem:ABC}]
Notice that $uv\notin E(G)$ (hence $uv\notin E(H)$) since there are no irrelevant edges.
\eqref{lem:ABC:feasible} We prove the claim for $H_1$, the other case being symmetric. First notice that if $H_1$ contains a connected component not containing $u$ nor $v$, $H$ would be disconnected. Hence $H_1$ consists of one connected component or two connected components $H_1(u)$ and $H_1(v)$ containing $u$ and $v$, resp. 

Suppose first that $H_1$ \amEdit{consists} of one connected component. Let us contract the 2EC components of $H_1$, hence obtaining a tree $T$. If $T$ consists of a single node, $H_1$ is of type $A$. Otherwise, consider the path $P$ (possibly of length $0$) between the super-nodes resulting from the contraction of the 2EC components $C(u)$ and $C(v)$ containing $u$ and $v$, resp.. Assume by contradiction that $T$ contains an edge $e$ not in $P$. Then $e$ does not belong to any cycle of $H$, contradicting the fact that $H$ is 2EC. Thus $H_1$ is of type $B$ (in particular $P$ has length at least $1$ since $H_1$ is not 2EC). 

Assume now that $H_1$ consists of $2$ connected components $H_1(u)$ and $H_1(v)$. Let $T_u$ and $T_v$ be the two trees obtained by contracting the 2EC components of $H_1(u)$ and $H_2(v)$, resp. By the same argument as before, the 2-edge-connectivity of $H$ implies that these two trees contain no edge. Hence $H_1$ is of type $C$. 

The second part of the claim follows easily since if $H_1$ is of type $C$ and $H_2$ is not of type $A$, then $H$ would either be disconnected or it would contain at least one bridge edge (here we exploit the fact that $uv\notin E(H)$).

\eqref{lem:ABC:ABexists} There must exist an edge $f$ in $G_i$ between the two connected components of $H_i$ since otherwise at least one of $u$ or $v$ would be a cut node. Clearly $H'_i:=H_i\cup \{f\}$ satisfies the claim. \fab{For the second claim, consider any 2EC spanning subgraph $H'$ and let $H'_i:=H'\cap E(G_i)$. The claim holds if there exists one such $H'$ where $H'_i$ is of type A or B, hence assume this is not the case. Hence $H'_i$ is of type C by property \eqref{lem:ABC:feasible}. The same argument as above implies the existence of an edge $f$ in $G_i$ such that $H''_i:=H'_i\cup \{f\}$ is of type $B$, implying the claim for $H'':=H'\cup \{f\}$.}
%Since by property \eqref{lem:ABC:feasible}, any $H'_i:=H'\cap E(G_i)$\fabr{The conclusion was wrong since $H'$ is not defined} is of type A, B or C, the non-existence of one such subgraph of type A, implies the existence of one such subgraph of type B or C, hence of type B by the previous argument. 
\end{proof}

We are now ready to describe the reduction behind Lemma \ref{preprocessingLemma}. We assume that we are given  a polynomial time algorithm $\mathsf{ALG}$ that, given a $(\alpha,\eps)$-structured graph $G=(V,E)$, returns an $\alpha$-approximate solution $\mathsf{ALG}(G)\subseteq E$ to 2-ECSS on $G$. We consider the recursive procedure $\mathsf{RED}$ described in Algorithm \ref{fig:RED}: given a 2EC multi-graph $G=(V,E)$, $\mathsf{RED}$ returns a solution $\mathsf{RED}(G)\subseteq E$ to 2-ECSS on $G$. $\mathsf{RED}$ uses $\mathsf{ALG}$ as a black-box subroutine. In Step \ref{step:contraction} we interpret the edges of $\mathsf{RED}(G\vert H)$ as the corresponding edges after decontracting $H$. We interpret similarly the edge sets $H'_i$ from Step \ref{step:definitionH'}. We will show that the graphs $G'_1$ and $G'_2$ from Line \ref{step:contraction_uv}, the graph $G''_2$ from Line \ref{step:dummyedge}, and the graph $G'''_2$ from Line \ref{step:dummynode} are all 2EC: this justifies the recursive use of $\mathsf{RED}$ in the following lines. We will also show that a valid $F$ as in Steps \ref{step:definitionF'} and \ref{step:definitionF''} exists. \fab{The computation from Line \ref{step:ABCcomputation} is performed as follows. First notice that $G_1$ contains a constant number of nodes, hence by brute force we can compute minimum-size subgraphs $\OPT_{1A}$, $\OPT'_{1B}$ and $\OPT'_{1C}$ of the desired types, when any such subgraph exists. Obviously $\OPT_{1A}$, if defined, forms a 2EC spanning subgraph with $G_2$. If $\OPT'_{1C}$ is defined, it belongs to a 2EC spanning subgraph iff $G_2$ is 2EC (i.e., $G_2$ is a type A subgraph) by Lemma \ref{lem:ABC}. If the latter condition holds, we set $\OPT_{1C}:=\OPT'_{1C}$. Otherwise, no type C subgraph of $G_1$ belongs to a feasible solution and we consider $\OPT_{1C}$ as undefined. 
Similarly, if $\OPT'_{1B}$ is defined, it belongs to a feasible solution iff $G_2$ is a type A or B subgraph. We set $\OPT_{1B}$ accordingly.} Finally notice that if Step \ref{step:ALG} is reached, then the graph is $(\alpha,\eps)$-structured.

%\footnote{Technically when we combine a set of edges $E''$ of $G$ with some edges $E'$  of a contraction $G'$ of $G$, by $E'$ we rather mean the de-contracted version of those edges.}.}

\begin{algorithm}
\DontPrintSemicolon
\caption{Reduction from non-structured to structured instances of 2-ECSS. Here ALG is an algorithm for structured instances. We assume w.l.o.g. that $\eps>0$ is a small enough constant, so that $2/\eps$ is lower bounded by a sufficiently large constant. $G$ is assumed to be 2EC.}
\label{fig:RED}

\small
\If{$|V(G)|\leq \frac{2}{\epsilon}$} {
compute $\OPT(G)$ by brute force (in constant time) and \Return $\OPT(G)$\label{step:bruteForce}
}
\If{$G$ has a 1-vertex-cut $v$} {\label{step:checkCutNode}
let $(V_1,V_2)$, $V_1\neq \emptyset\neq V_2$, be a partition of $V\setminus \{v\}$ such that there are no edges between $V_1$ and $V_2$. \Return $\RED(G[V_1\cup \{v\}])\cup \RED(G[V_2\cup \{v\}])$\label{step:cutNode}
}

\If{$G$ contains a self loop or a parallel edge $e$} {
\Return $\mathsf{RED}(G\setminus \{e\})$\label{step:simple}
}

\If{$G$ contains an $\alpha$-contractible subgraph $H$ with $|V(H)|\leq 1/\eps$} {\label{step:checkContractible}
\Return $H\cup \RED(G \vert H)$.\label{step:contraction}\\
}

\If{$G$ contains an irrelevant edge $e$}{
\Return $\RED(G\setminus \{e\})$\label{step:irrelevant}
}

\If{$G$ contains a non-isolating 2-vertex-cut $\{u,v\}$} {\label{step:nonIsolating}

let $(V_1,V_2)$, $2\leq |V_1|\leq |V_2|$ be a partition of $V\setminus \{u,v\}$ as in Lemma \ref{lem:nicePartition}\\ 
let $G_i:=G[V_i\cup \{u,v\}]$ for $i\in \{1,2\}$

\If{$|V_1|>\frac{1}{\eps}-2$} {\label{step:V1large}
let $G'_i$ for $i\in \{1,2\}$ be obtained from $G_i$ by contracting $\{u,v\}$ into one node \label{step:contraction_uv}\\ 
let $H'_i=\RED(G'_i)$ for $i\in \{1,2\}$\label{step:definitionH'}\\% and $H':=H'_1\cup H'_2$\\
let $F'\subseteq E$, $|F'|\leq 2$, such that $H':=H'_1\cup H'_2\cup F'$ is 2EC. \Return $H'$ \label{step:definitionF'} 
}
\Else{
let $\OPT_{1A}$, $\OPT_{1B}$, and $\OPT_{1C}$ be minimum-size subgraphs of $G_1$ of type $A$, $B$ and $C$ \fab{belonging to some 2EC spanning subgraph}, resp., \fab{if any such subgraph exists}. Let $\OPT_{1min}$ be the subgraph of type $\OPT_{1X}$ \fab{which is defined} and has minimum-size, where ties are broken by preferring $\OPT_{1A}$, $\OPT_{1B}$ and $\OPT_{1C}$ in this order\label{step:ABCcomputation}\\
\If{\fab{$\OPT_{1C}$ is defined and} $|\OPT_{1C}|\leq |\OPT_{1B}|-1$}{\label{step:OPT1Csmall}
let $G''_2:=(V(G_2),E(G_2)\cup \{uv\})$ where $uv$ is a dummy edge\label{step:dummyedge}\\  
let $H''_2:=\mathsf{RED}(G''_2)\setminus \{uv\}$\label{step:definitionH''}\\
let $F''\subseteq E$, $|F''|\leq 1$, such that $H'':=\OPT_{1min}\cup H''_2\cup F''$ is 2EC. \Return $H''$\label{step:definitionF''}
}
\Else{\label{step:OPT1Clarge}
let $G'''_2:= (V(G_2)\cup \{w\},E(G_2)\cup \{wu,wv\})$ where $w$ is a dummy node and $wu$ and $wv$ are dummy edges\label{step:dummynode}\\
let $H'''_2:=\RED(G'''_2)\setminus \{wu,wv\}$. \Return $H''':=\OPT_{1min}\cup H'''_2$
}
}
}
\Return $\ALG(G)$\label{step:ALG}
\end{algorithm}

%Obviously $\RED$ runs in polynomial time if $\ALG$ does so. 
\begin{lemma}\label{lem:runningTime}
$\RED$ runs in polynomial time in $|V(G)|$ if $\ALG$ does so.
\end{lemma}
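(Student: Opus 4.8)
The plan is to prove that $\RED$ runs in polynomial time by exhibiting a polynomial bound on the total number of recursive calls, each of which does only polynomial work (plus one call to $\ALG$, which is polynomial by hypothesis). The natural measure is $n = |V(G)|$, and I would argue that every recursive call is made on a graph whose vertex set is, up to a constant number of dummy vertices, a subset of $V(G)$, and that the sizes shrink in a controlled way.

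First I would handle the branching steps (Lines \ref{step:cutNode}, \ref{step:definitionH'}). A 1-vertex-cut split replaces $G$ by $G[V_1\cup\{v\}]$ and $G[V_2\cup\{v\}]$ with $|V_1|,|V_2|\ge 1$, so both parts have strictly fewer than $n$ vertices and the vertex counts sum to $n+1$; a standard argument (e.g. by induction on $n$, or by charging each split to the pair of vertices on either side) shows the recursion tree arising purely from such splits has $O(n)$ leaves. The split at Line \ref{step:definitionH'} is analogous: contracting $\{u,v\}$ in each $G_i$ gives graphs on $|V_i|+1$ vertices, and since $2\le|V_1|\le|V_2|$ and $|V_1|+|V_2| = n-2$, both recursive instances have strictly fewer vertices than $n$ (here $|V_1|\ge 2$ is what guarantees $|V_2|+1 < n$), again with the vertex counts summing to at most $n$.

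Next I would argue that the non-branching or "small-side" recursive calls strictly decrease a suitable potential so that they cannot chain more than polynomially many times between branchings: removing a self-loop or parallel edge (Line \ref{step:simple}) or an irrelevant edge (Line \ref{step:irrelevant}) decreases $|E(G)|$ by one while leaving $|V(G)|$ unchanged; contracting an $\alpha$-contractible subgraph $H$ with $|V(H)|\le 1/\eps$ but $|V(H)|\ge 3$ (Line \ref{step:contraction}) strictly decreases $|V(G)|$; and the calls at Lines \ref{step:definitionH''} and in Line \ref{step:OPT1Clarge} are on $G_2$ (or $G_2$ plus a dummy edge or a dummy node), which has $|V_2|+O(1) \le n - |V_1| + O(1) \le n - 2 + O(1)$ vertices — strictly fewer than $n$ once $\eps$ is a small enough constant, since in those branches $|V_1|\le \frac1\eps - 2$ is bounded but still $|V_1|\ge 2$ — wait, that only gives a decrease of $2$; more carefully, $|V_2| = n - 2 - |V_1| \le n - 4$, so $|V_2| + O(1) < n$ for $n$ large, and the base case at Line \ref{step:bruteForce} handles small $n$. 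Combining: along any root-to-leaf path in the recursion tree, $|V(G)| + |E(G)| \le n + \binom{n}{2}$ is a nonincreasing quantity that strictly decreases except possibly across the edge-removal steps, which themselves strictly decrease $|E(G)|$; hence each path has length $O(n^2)$, and since the branching factor is $2$ and there are $O(n)$ leaves from the branchings, the whole tree has polynomial size. Each node performs only polynomial work — the brute force at Line \ref{step:bruteForce} is constant time since $|V(G)| = O(1/\eps) = O(1)$, finding cuts, loops, contractible subgraphs of constant size, and computing $\OPT_{1A}, \OPT_{1B}, \OPT_{1C}$ are all polynomial (the last because $|V(G_1)| = O(1/\eps)$), finding $F'$ and $F''$ is polynomial, and one call to $\ALG$ is polynomial by assumption.

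The main obstacle I anticipate is bookkeeping the dummy vertices and edges so that the recursion measure is genuinely monotone: the instances $G'_i$, $G''_2$, $G'''_2$ are not literally induced subgraphs of $G$, so one must be a little careful that the $O(1)$ additive dummy terms do not accumulate across a long chain of recursive calls. The clean way to do this is to observe that each recursive call strictly decreases the number of \emph{non-dummy} vertices (which is always $\le n$) except for edge-deletion calls, which instead decrease $|E|$ without increasing the vertex count, and that at most $O(1)$ dummy vertices/edges are ever present in any single instance along a path because a dummy vertex $w$ introduced at Line \ref{step:dummynode} either gets removed immediately after the recursive call returns or, inside that call, is only ever part of $G_2 \cup \{w,\dots\}$ and does not get further duplicated. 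I would also need to double-check the edge cases of Lemma \ref{lem:nicePartition} ($|V(G)| \ge 6$) are subsumed by the base-case threshold $2/\eps$, which holds for $\eps \le 1/3$.
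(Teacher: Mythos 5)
Your accounting of the recursion tree is sound and, modulo packaging, matches the paper's: the paper folds everything into the single measure $S=n^2+m^2$, observes that every recursive call strictly decreases $S$ and that a branching step produces subproblems with $S'+S''\le S$, and concludes $T(S)\le S\cdot \mathrm{poly}(S)$; you instead count branching leaves via the vertex-count subadditivity ($|V_1|+|V_2|+2\le n+1$ at Line \ref{step:cutNode}, $\le n$ at Line \ref{step:definitionH'}) and bound the unary chains via the strictly decreasing quantity $|V|+|E|$. Both routes work, and your check that the $O(1)$ dummy vertices/edges in $G'_i$, $G''_2$, $G'''_2$ never make the measure grow is correct (e.g. $|V(G'''_2)|=|V_2|+3\le n-1$ because $|V_1|\ge 2$).

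The gap is in the part of the lemma you dispatch in one sentence, namely that each call does only polynomial work: the claim that ``finding \ldots contractible subgraphs of constant size'' is polynomial is not immediate, and it is the one per-call step that genuinely needs an argument. By Definition \ref{structuredGraphDef:contractible}, $\alpha$-contractibility of a subgraph $C$ is a condition quantified over \emph{all} 2EC spanning subgraphs of the whole graph $G$ (each must contain at least $\frac1\alpha|E(C)|$ edges with both endpoints in $V(C)$), so $|V(C)|\le 1/\eps$ does not by itself reduce the check to a constant-size local computation. The paper spends a substantial part of its proof on exactly this: for each of the polynomially many candidate sets $W$ with $|W|\le 1/\eps$ it computes by brute force a minimum-size 2EC spanning subgraph $\OPT'$ of $G[W]$ (if one exists) and a maximum-cardinality set $F$ of edges with both endpoints in $W$ such that $G\setminus F$ is still 2EC, and compares $|\OPT'|$ with $|E(G[W])\setminus F|$; the point is that $|E(G[W])\setminus F|$ equals the minimum number of $W$-internal edges that any 2EC spanning subgraph of $G$ must contain, which converts the universally quantified condition into a computable one. (Alternatively, one can note that when Line \ref{step:checkContractible} is reached the graph is simple, so $W$ spans only $O(1/\eps^2)$ edges and one may enumerate all subsets of them, testing 2-edge-connectivity of $G$ minus each subset.) Your proof needs this observation, or an equivalent one, to be complete; the remaining per-call steps you list (cut vertices, loops, the constant-size computations of $\OPT_{1A},\OPT_{1B},\OPT_{1C}$, and $F'$, $F''$) are indeed routine.
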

\begin{proof}%[Proof of Lemma \ref{lem:runningTime}]
Let $n=|V(G)|$ and $m=|E(G)|$. Define $S=n^2+m^2$ as the \emph{size} of the problem. Clearly every step of the algorithm can be performed in polynomial time $poly(S)$ (which is polynomial in $n$) excluding the recursive calls. In particular, it is possible to check in polynomial time whether $G$ contains and $\alpha$-contractible subgraph with at most $1/\eps$ nodes: simply enumerate over all the subsets $W$ of nodes of the desired size; for each such $W$ one can compute in polynomial time a minimum-size 2EC spanning subgraph $\OPT'$ of \fab{$G[W]$}, if any \fab{such subgraph exists}. Furthermore, one can compute in polynomial time a maximum cardinality subset of edges $F$ with endpoints in $W$ such that $G\setminus F$ is 2EC. Then it is sufficient to compare $|\OPT'|$, \fab{in case $\OPT'$ is defined}, with $|E(G[W])\setminus F|$.

Let $T(S)$ be the running time of the algorithm as a function of $S$. 
Observe that in each step from a problem of size $S$ we generate either a single subproblem of size $S'<S$ or two subproblems of size $S'$ and $S''$ with $S',S''<S$ and $S'+S''\leq S$. Thus $T(S)$ satisfies $T(S)\leq \max\{T(S'),T(S')+T(S'')\}+poly(S)$ with $S',S''<S$ and $S'+S''\leq S$. In particular $S\cdot poly(S)$ is a valid upper bound for $T(S)$.
\end{proof}

It is also not hard to prove that $RED$ returns a feasible solution.
\begin{lemma}\label{lem:feasibility}
If $G$ is 2EC, $\RED$ returns a 2EC spanning subgraph for $G$.
\end{lemma}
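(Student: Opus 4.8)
The plan is to prove Lemma~\ref{lem:feasibility} by structural induction on the size parameter $S=|V(G)|^2+|E(G)|^2$ used in Lemma~\ref{lem:runningTime}, following the case structure of Algorithm~\ref{fig:RED} exactly. The base case is Line~\ref{step:bruteForce}: if $|V(G)|\leq 2/\eps$ then $\OPT(G)$ is by definition a 2EC spanning subgraph. For the inductive step, I would assume that every recursive call $\RED(\cdot)$ inside the algorithm is made on a 2EC graph of strictly smaller size, so that by the inductive hypothesis it returns a 2EC spanning subgraph of its argument; then I verify that the set of edges returned in each branch is a 2EC spanning subgraph of $G$. The recursive calls being on 2EC graphs is something the text says ``we will show'', so part of the work is to actually record those 2-edge-connectivity claims; I would handle them inline in each case.

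First I would dispatch the easy branches. Line~\ref{step:cutNode} (1-vertex-cut $v$): each $G[V_i\cup\{v\}]$ is 2EC by a standard argument (a 2EC graph restricted to one side of a cut vertex, together with that vertex, stays 2EC), has smaller size, so by induction we get 2EC spanning subgraphs of the two pieces; their union spans $V$ and is 2EC because gluing two 2EC graphs at a single shared vertex yields a 2EC graph — this is essentially Fact~\ref{fact:decontraction} with $H$ a single node, or a direct check. Line~\ref{step:simple} (self-loop or parallel edge $e$): $G\setminus\{e\}$ is still 2EC and smaller, and a 2EC spanning subgraph of $G\setminus\{e\}$ is one of $G$. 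Line~\ref{step:contraction} ($\alpha$-contractible $H$): $G\vert H$ is 2EC by Fact~\ref{fact:contraction} and smaller, so by induction $\RED(G\vert H)$ is a 2EC spanning subgraph of $G\vert H$; since $H$ is itself 2EC (it is a 2EC subgraph by Definition~\ref{structuredGraphDef:contractible}), Fact~\ref{fact:decontraction} gives that $H\cup\RED(G\vert H)$ is a 2EC spanning subgraph of $G$. Line~\ref{step:irrelevant} (irrelevant edge $e$): $G\setminus\{e\}$ — wait, here one must be careful: $G\setminus\{e\}$ need not be 2EC. Actually an irrelevant edge $e=uv$ has $\{u,v\}$ a 2-vertex-cut, so removing $e$ can disconnect after one more edge deletion. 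The correct reading is that the recursion proceeds and Lemma~\ref{lem:irrelevant} is the optimality statement, not needed for feasibility; but for feasibility we still need $\RED(G\setminus\{e\})$ to be a 2EC spanning subgraph \emph{of $G$}, which holds as long as it is a 2EC spanning subgraph of $G\setminus\{e\}$ — and it is, provided $G\setminus\{e\}$ is 2EC. I would verify that in a 2VC simple graph, deleting an edge whose endpoints form a 2-vertex-cut still leaves a 2EC graph (this is the content one needs; it follows because $G$ is 2VC so $G\setminus\{e\}$ is still connected and in fact still 2EC by a short cut argument), and record it. Line~\ref{step:ALG}: if reached, $G$ is $(\alpha,\eps)$-structured, so $\ALG(G)$ is a 2EC spanning subgraph by the assumed guarantee on $\ALG$.

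The substantive case is Line~\ref{step:nonIsolating}, the non-isolating 2-vertex-cut $\{u,v\}$, with the partition $(V_1,V_2)$ from Lemma~\ref{lem:nicePartition} and $G_i=G[V_i\cup\{u,v\}]$. Here I split into the three sub-branches of the algorithm. (a) $|V_1|>1/\eps-2$: the contracted graphs $G'_i$ (obtained by identifying $u,v$) are 2EC — this needs a small argument using 2-vertex-connectivity of $G$, that contracting the two endpoints of a vertex-cut of one side keeps the side 2EC — and are smaller, so $H'_i=\RED(G'_i)$ are 2EC spanning subgraphs; one then checks that a valid $F'$ with $|F'|\le 2$ exists making $H'_1\cup H'_2\cup F'$ a 2EC spanning subgraph of $G$ (decontract $u,v$: each $H'_i$ becomes a graph where $u$ and $v$ are each in a 2EC piece, i.e.\ essentially type C or A/B after de-identifying; adding at most one edge per side between the two components reconstitutes a type-A/B configuration on each side and the union is 2EC). (b) $|V_1|\le 1/\eps-2$ and $|\OPT_{1C}|\le|\OPT_{1B}|-1$ (with $\OPT_{1C}$ defined): here $\OPT_{1min}$ is a constant-size subgraph of $G_1$ of type A, B or C; $G''_2=(V(G_2),E(G_2)\cup\{uv\})$ is 2EC (adding an edge to $G_2$; $G_2$ itself is 2EC because it is the complement side of a 2-vertex-cut plus the two cut nodes, hence 2EC by a cut argument) and smaller, so $H''_2=\RED(G''_2)\setminus\{uv\}$ is (the non-dummy part of) a 2EC spanning subgraph of $G''_2$. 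One then argues via Lemma~\ref{lem:ABC} that $\OPT_{1min}$ combined with $H''_2$ plus at most one extra edge $F''$ is 2EC spanning: the $\{uv\}$ dummy edge forces $H''_2$ restricted to $G_2$ to be of type A or B w.r.t.\ $\{u,v\}$, and then $\OPT_{1min}$ (type A, B, or C but with the type chosen compatibly — by Definition of $\OPT_{1C}$ being ``belonging to some 2EC spanning subgraph'', and the preprocessing of $\OPT_{1min}$) together with one connecting edge yields 2EC. (c) the remaining sub-branch: $G'''_2$ is obtained by adding a dummy node $w$ of degree two adjacent to $u,v$; this is 2EC (it simulates the role of $V_1$ contributing a $u$–$v$ path) and smaller, so $H'''_2=\RED(G'''_2)\setminus\{wu,wv\}$ has $u,v$ in either the same 2EC component or in two components joined by the removed length-2 path — i.e.\ type A or B w.r.t.\ $\{u,v\}$ on the $G_2$ side after removing $w$; combined with $\OPT_{1min}$ this is 2EC spanning without any extra edge, as claimed.

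The main obstacle, and the part I expect to take real care, is establishing in each of the three non-isolating-cut sub-branches that (i) the graph passed to the recursive call is genuinely 2EC — contracting $\{u,v\}$, adding a dummy edge $uv$, or adding a dummy degree-2 node $w$ each require a short but non-trivial verification using that $\{u,v\}$ is a 2-vertex-cut and $G$ is 2VC — and (ii) that a valid $F'$ ($|F'|\le 2$) or $F''$ ($|F''|\le 1$) as promised actually exists, which is where Lemma~\ref{lem:ABC} (the type A/B/C classification, and the fact that a type-C configuration can be upgraded to type B by one edge) does the heavy lifting. Concretely, I would phrase the argument as: after the recursive calls and de-contraction/de-dummying, the restriction of the returned solution to $G_1$ is some type $\in\{A,B,C\}$ and likewise for $G_2$; by Lemma~\ref{lem:ABC}\eqref{lem:ABC:feasible} the only infeasible combination is (C,C) or a (B/C)-with-dangling situation; the dummy-edge/dummy-node gadgets are precisely engineered to exclude the bad type on the $G_2$ side, and $\OPT_{1min}$ is chosen (via the case split on $|\OPT_{1C}|$ versus $|\OPT_{1B}|$) to be of a type that is completable; adding the at-most-two bridging edges $F'$ or the at-most-one edge $F''$ — whose existence is guaranteed by Lemma~\ref{lem:ABC}\eqref{lem:ABC:ABexists} applied to connect the at-most-two components on each side — produces a connected, bridgeless, hence 2EC, spanning subgraph of $G$. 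Assembling these case checks, together with the trivial branches and the base case, completes the induction.
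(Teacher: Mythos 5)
Your overall plan is the same as the paper's proof: induction on the instance size, one case per line of Algorithm~\ref{fig:RED}, with Facts~\ref{fact:contraction}--\ref{fact:decontraction} and Lemma~\ref{lem:ABC} carrying the non-isolating-cut branches, and most of your cases are handled correctly (including your correct observation that in the irrelevant-edge branch one needs $G\setminus\{e\}$ to be 2EC, which indeed holds -- in fact it is exactly what Lemma~\ref{lem:irrelevant} delivers, so it is not merely an ``optimality statement''). However, there is one genuinely wrong step: in branch (b) you justify that $G''_2$ is 2EC by asserting that ``$G_2$ itself is 2EC because it is the complement side of a 2-vertex-cut plus the two cut nodes.'' That claim is false in general. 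For instance, $V_2$ may induce a bare $u$--$v$ path $u,x_1,\ldots,x_m,v$ with no chords while $G$ is 2VC thanks to the $V_1$ side; then $G_2$ is a path and is not even 2-edge-covered, let alone 2EC. Since 2EC-ness of $G''_2$ (and of $G'''_2$) is precisely what licenses the recursive call and the inductive hypothesis, this step cannot be left as stated. The correct route, which the paper takes, is Lemma~\ref{lem:ABC}.\ref{lem:ABC:ABexists}: $G_2$ contains a spanning subgraph of type A or B w.r.t.\ $\{u,v\}$, and adding the dummy edge $uv$ (resp.\ the dummy path $u,w,v$) to a type-B spanning subgraph closes the chain of 2EC blocks into a 2EC spanning subgraph, so $G''_2$ (resp.\ $G'''_2$) is 2EC; alternatively a direct cycle-replacement argument in $G$ works, but not the one you gave.

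A second, smaller point: in the dummy-node branch (Line~\ref{step:OPT1Clarge}) no patching edge $F$ is available, so the argument collapses unless you show that $\OPT_{1min}$ is of type A or B there (if it were of type C and $H'''_2$ of type B, the union would have bridges). This follows because the condition of Line~\ref{step:OPT1Csmall} fails together with the tie-breaking rule of Line~\ref{step:ABCcomputation} (either $\OPT_{1C}$ is undefined, or $|\OPT_{1C}|\geq|\OPT_{1B}|$ and B is preferred), but your proposal only gestures at this (``chosen \dots to be of a type that is completable'') and should spell it out; symmetrically, in branch (b) the tie-breaking makes $\OPT_{1min}$ of type A or C, and the single edge $F''$ for the type-C case again comes from Lemma~\ref{lem:ABC}.\ref{lem:ABC:ABexists}. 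With these two repairs your induction goes through and coincides with the paper's argument.
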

\begin{proof}%[Proof of Lemma \ref{lem:feasibility}]
Let us prove the claim inductively for increasing values of the vector $(|V(G)|,|E(G)|)$ (sorted lexicographically). The base cases are given by Lines \ref{step:bruteForce} and \ref{step:ALG}, where the claim holds trivially and by the correctness of $\ALG$, resp.  If Line \ref{step:cutNode} is executed, both $G_1$ and $G_2$ (which must be 2EC since $G$ is so) contain at most $|V(G)|-1$ nodes, hence by inductive hypothesis $\RED(G_1)$ and $\RED(G_2)$ are 2EC spanning subgraphs of $V_1\cup\{v\}$ and $V_2\cup\{v\}$, resp. Thus their union is a 2EC spanning subgraph for $G$. If Line \ref{step:simple} is executed, $G':=G\setminus \{e\}$ must be 2EC by Lemma \ref{lem:simple}, and $(|V(G')|,|E(G')|)=(|V(G)|,|E(G)|-1)$. Hence by inductive hypothesis $\RED(G')$ is a 2EC spanning subgraph for $G'$, hence for $G$. The case when Line \ref{step:irrelevant} is executed can be handled similarly as in the previous case, using Lemma \ref{lem:irrelevant}. When Line \ref{step:contraction} is executed, the claim follows from Fact \ref{fact:decontraction}.

It remains to consider the case when the condition of Line \ref{step:nonIsolating} holds. We remark that in this case $G$ is 2VC since the condition of Line \ref{step:checkCutNode} does not hold. Hence Lemma \ref{lem:ABC} applies. We also notice that $G'_1$, $G'_2$, $G''_2$, and $G'''_2$ all have strictly fewer nodes than $G$ (since $|V_1|\geq 2$), which will be needed to apply the inductive hypothesis.

Suppose first that the condition of Line \ref{step:V1large} holds. Notice that $G'_1$ and $G'_2$ are 2EC by Fact \ref{fact:contraction} and, implicitly, by Lemma \ref{lem:cutNode} (since the contracted node becomes a 1-vertex-cut). Hence the inductive hypothesis holds for these graphs. Let us argue that a proper $F'$ exists. Notice that each $H'_i$ must be of type A, B or C (otherwise $\RED(G'_i)$ would not be 2EC). If one of them is of type $A$ or they are both of type $B$, then $H'_1\cup H'_2$ is 2EC. If one of them is of type $B$, say $H'_1$, and the other of type $C$, say $H'_2$, then let $H'_2(u)$ and $H'_2(v)$ be the two 2EC components of $H'_2$ containing $u$ and $v$ resp. 
By Lemma \ref{lem:ABC}.\ref{lem:ABC:ABexists} there must exist an edge $f\in G_2$ between $H'_2(u)$ and $H'_2(v)$. Hence $H'_1\cup H'_2\cup \{f\}$ is 2EC. The remaining case is that $H'_1$ and $H'_2$ are both of type C. In this case $H'_1\cup H'_2$ consists of precisely two 2EC components $H'(u)$ and $H'(v)$ containing $u$ and $v$, resp. Since $G$ is 2EC, there must exist two edges $f$ and $g$ between $H'(u)$ and $H'(v)$. Thus $H'_1\cup H'_2\cup \{f,g\}$ is 2EC. \fab{In all the cases the desired $F'$ exists.}

Suppose next that the condition of Line \ref{step:OPT1Csmall} holds. Observe that $G_2$ must contain a subgraph of type A or B by Lemma \ref{lem:ABC}.\ref{lem:ABC:ABexists}. As a consequence $G''_2$ is 2EC as required. Let us argue that a proper $F''$ exists.
Notice that $H''_2$ must be of type A or B. By the \amEdit{tie-}breaking rule of Line \ref{step:ABCcomputation}, $\OPT_{1min}$ must be of type A or C. In the first case $\OPT_{1min}\cup H''_2$ is 2EC. In the second case by Lemma \ref{lem:ABC}.\ref{lem:ABC:ABexists} there exists one edge $f$ such that $\OPT_{1min}\cup \{f\}$ is of type B for $G_1$, implying that $\OPT_{1min}\cup H''_2\cup \{f\}$ is 2EC.

It remains to consider the case when the condition of Line \ref{step:OPT1Clarge} holds. Again by Lemma \ref{lem:ABC}.\ref{lem:ABC:ABexists}, $G_2$ must contain a subgraph of type A or B, hence $G'''_2$ is 2EC as required. Notice that $\RED(G'''_2)$ must contain the two dummy edges $wv$ and $wu$, which induce a type B graph. Hence by Lemma \ref{lem:ABC}.\ref{lem:ABC:feasible}, $H'''_2$ must be of type A or B. Notice also that by the \amEdit{tie-}breaking rule of Line \ref{step:ABCcomputation} and given that the condition of Line \ref{step:OPT1Csmall} does not hold, $\OPT_{1min}$ must be of type A or B. Hence $\OPT_{1min}\cup H'''_2$ is 2EC as desired.
\end{proof}

It remains to upper bound the approximation factor of the algorithm. Let $\Red(G)=|\RED(G)|$. The following claim implies the approximation factor of $\RED$ since trivially $|V(G)| \leq \opt(G)$.
\begin{lemma}\label{lem:approximationFactor}
One has $\Red(G) \leq 
\begin{cases}
\opt(G) & \text{if } |V(G)| \leq \frac {2}\epsilon;\\
\alpha \cdot \opt(G) + 2\epsilon\cdot |V(G)| - 2 & \text{if } |V(G)| > \frac {2}\epsilon.
\end{cases}
$.
\end{lemma}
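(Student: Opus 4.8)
The plan is to prove the bound by strong induction on the pair $(|V(G)|,|E(G)|)$ ordered lexicographically, exactly mirroring the case analysis in Algorithm~\ref{fig:RED} and the feasibility proof of Lemma~\ref{lem:feasibility}. The base cases are Line~\ref{step:bruteForce}, where $\RED(G)=\OPT(G)$ and the first bound holds with equality, and Line~\ref{step:ALG}, where $G$ is $(\alpha,\eps)$-structured with $|V(G)|>2/\eps$, so $\Red(G)=|\ALG(G)|\le\alpha\cdot\opt(G)\le\alpha\cdot\opt(G)+2\eps|V(G)|-2$ since $2\eps|V(G)|-2\ge 2\cdot\frac{2}{\eps}\cdot\eps-2=2\ge 0$ (using $|V(G)|>2/\eps$; one should double-check the strict/non-strict boundary here). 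For the inductive step I would go case by case through the conditions of the algorithm, in each case invoking the inductive hypothesis on the strictly smaller recursive subproblems, adding the contribution of any extra edges ($|F'|,|F''|$, dummy-edge corrections, $|E(H)|$ in the contraction step), and checking that the $2\eps|V(G)|-2$ slack suffices to absorb everything.

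\textbf{The straightforward cases.} For the $1$-vertex-cut case (Line~\ref{step:cutNode}), I would use Lemma~\ref{lem:cutNode} to get $\opt(G)=\opt(G_1)+\opt(G_2)$ and $|V(G_1)|+|V(G_2)|=|V(G)|+1$, then split on how many of $G_1,G_2$ are small; the key arithmetic point is that when both are large the two ``$-2$'' terms combine to ``$-4$'', which covers the ``$+1$'' double-counted vertex (giving slack $2\eps\cdot 1$ to spare), and when one is small its bound $\opt(G_i)$ is even better than $\alpha\cdot\opt(G_i)-2$ as long as $\alpha\cdot\opt(G_i)\ge \opt(G_i)+2$, i.e. $\opt(G_i)\ge 2/(\alpha-1)$, which holds since a small 2EC graph still has $\ge 3$ vertices hence $\opt\ge 3$ and $\alpha\ge 6/5$ gives $2/(\alpha-1)=10$ — wait, this needs care, so I would instead handle a small $G_i$ by noting $|V(G_i)|\le 2/\eps$ is impossible here unless... actually one must be slightly careful and perhaps use that $\opt(G_i)\ge|V(G_i)|$ and track the vertex terms directly. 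The parallel-edge / self-loop removal (Line~\ref{step:simple}), irrelevant-edge removal (Line~\ref{step:irrelevant}), and contraction of an $\alpha$-contractible subgraph (Line~\ref{step:contraction}) all preserve or decrease $\opt$ appropriately: for the contraction step one uses that $\opt(G)\ge \opt(G\vert H)+\frac1\alpha|E(H)|$ (from Definition~\ref{structuredGraphDef:contractible}) together with $|\RED(G)|=|E(H)|+\Red(G\vert H)$ and the inductive bound on $G\vert H$, and the vertex count only drops so the slack term only helps — one must verify $|E(H)|+\alpha\cdot\opt(G\vert H)\le\alpha(\opt(G\vert H)+\frac1\alpha|E(H)|)=\alpha\cdot\opt(G\vert H)+|E(H)|$, which is an equality, so this case is tight and clean.

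\textbf{The non-isolating $2$-vertex-cut case — the main obstacle.} This is where the constant $2\eps|V(G)|$ and the hypotheses $\alpha\ge 6/5$, $\eps\le 1/24$ are really needed, and it is the part I expect to be delicate. In the subcase $|V_1|>1/\eps-2$ (Line~\ref{step:V1large}), both $G'_1,G'_2$ are 2EC with strictly fewer vertices and $|V(G'_1)|+|V(G'_2)|=|V(G)|+1-2+... $ (careful bookkeeping: contracting $\{u,v\}$ in each side removes one vertex each, so $|V(G'_i)|=|V_i|+1$ and the sum is $|V(G)|$); we pay $|F'|\le 2$ extra edges, and need $\opt(G)\ge\opt(G'_1)+\opt(G'_2)-(\text{something})$ — here I expect to need that restricting $\OPT$ to each side gives type-A/B/C pieces (Lemma~\ref{lem:ABC}) and hence $\opt(G)\ge\opt(G'_1)+\opt(G'_2)-2$ or similar, and then $|F'|\le 2$ and the $-2-2=-4$ from the two inductive bounds against $-2$ needed plus the $+2$ for $F'$ must balance. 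In the subcases where $|V_1|$ is small (Lines~\ref{step:OPT1Csmall} and~\ref{step:OPT1Clarge}), $|\OPT_{1min}|$ is a constant, $G''_2$ resp.\ $G'''_2$ has $\le|V(G)|$ vertices (one more dummy node in the $G'''_2$ case), and one needs a relation like $\opt(G)\ge |\OPT_{1min}|+\opt(G''_2)-1$ coming from the type analysis plus the observation that $\OPT_{1C}$ being cheaper than $\OPT_{1B}$ by at least $1$ is exactly the condition under which routing through the dummy edge is not wasteful. The hard part will be getting every one of these $\pm 1$ and $\pm 2$ adjustments to line up so that the right-hand side never exceeds $\alpha\cdot\opt(G)+2\eps|V(G)|-2$; in particular the constant ``$-2$'' (rather than ``$0$'') in the statement is precisely what makes the $1$-vertex-cut and the $2$-vertex-cut merges go through inductively, and I would make sure to always leave at least ``$-2$'' of slack after each recursion. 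I would also double check that in the small-$|V_1|$ recursive calls the added dummy structure does not inflate $\opt$ of the subproblem by more than the $+1$ (resp.\ $+2$) we can afford, using that a dummy edge/node can always be ``paid for'' by at most one extra real edge in decontraction.
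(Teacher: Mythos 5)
Your overall plan — strong induction on $(|V(G)|,|E(G)|)$ mirroring the lines of Algorithm~\ref{fig:RED} — is exactly the paper's route, and your treatment of the base cases, Line~\ref{step:simple}, Line~\ref{step:irrelevant}, Line~\ref{step:contraction} and Line~\ref{step:V1large} is essentially correct (for Line~\ref{step:V1large} you do not need any ``$\opt(G)\geq\opt(G'_1)+\opt(G'_2)-2$''-type inequality: since $uv\notin E(G)$ and there are no $V_1$--$V_2$ edges, $\OPT(G)\cap E(G_i)$ is itself feasible for $G'_i$, so $\opt(G'_1)+\opt(G'_2)\leq\opt(G)$, and $|V(G'_1)|+|V(G'_2)|=|V(G)|$ together with $|F'|\leq 2$ closes the case). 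The muddle in your $1$-vertex-cut case is also easily repaired along the paper's lines: when a piece $G_i$ is small you bound its contribution by $\opt(G_i)\leq\alpha\,\opt(G_i)$ using only $\alpha\geq 1$, and the slack $2\eps|V(G)|-2\geq 2$ (valid since $|V(G)|\geq 2/\eps$) or $|V(G_2)|\leq|V(G)|$ absorbs the rest; there is no need for anything like $\alpha\,\opt(G_i)\geq\opt(G_i)+2$.

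The genuine gap is the small-$|V_1|$ branch (Lines~\ref{step:OPT1Csmall} and~\ref{step:OPT1Clarge}), which you explicitly leave as ``delicate'' without supplying the argument. The missing idea is the paper's Claim~\ref{claim:OPT1typeBC}: one may assume an optimal solution whose restriction $\OPT_1$ to $G_1$ is of type B or C. This is precisely where $\alpha\geq 6/5$ and the absence of $\alpha$-contractible subgraphs on at most $1/\eps$ nodes (already excluded at Line~\ref{step:checkContractible}) are used: if $\OPT_1$ were of type A one shows $\opt_{1A}>\alpha\min\{\opt_{1B},\opt_{1C}\}$, and a short case analysis (forcing $\opt_{1A}\in\{4,5\}$, i.e.\ a $4$- or $5$-cycle through $u,v$) exhibits a small contractible subgraph, a contradiction. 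Without this claim you cannot justify $\opt_1\geq\min\{\opt_{1B},\opt_{1C}\}$, nor that $\OPT_{1min}$ is of type B (resp.\ C) when Line~\ref{step:OPT1Csmall} (resp.\ \ref{step:OPT1Clarge}) fires, and the subsequent bookkeeping collapses. Concretely, the paper then needs, for Line~\ref{step:OPT1Csmall}, $|\OPT_{1min}|=\opt_{1B}\leq\opt_1$ and $\opt(G''_2)\leq\opt_2+2$ (the dummy edge plus a type-A/B solution), absorbing the surplus via $(\alpha-1)\opt_1\geq 2\alpha-2$ using $\opt_1\geq 2$; and for Line~\ref{step:OPT1Clarge} a further split on whether $\OPT_2$ is of type A or type B, using $\opt(G''_2)\leq\opt_{2A}$ resp.\ $\opt(G''_2)\leq\opt_{2B}+1$ together with $\opt_{1min}=\opt_{1C}\leq\opt_{1B}-1$ and $(\alpha-1)(\opt_{1B}-1)\geq 0$. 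None of these inequalities, nor the type-A-vs-B split for $\OPT_2$, appears in your sketch, so as written the proposal does not yield the stated bound in the one case where the hypotheses $\alpha\geq 6/5$ and the contractibility exclusion actually matter.
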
 
\begin{proof}%[Proof of Lemma~\ref{lem:approximationFactor}]
Let us prove the claim by induction on $(|V(G)|,|E(G)|)$ similarly to Lemma \ref{lem:feasibility}. The base case is when Line \ref{step:bruteForce} is executed: here the claim trivially holds. For the inductive hypothesis, we distinguish a few cases depending on which line of $\mathsf{RED}$ is executed. If Line \ref{step:simple} is executed we consider a graph $G':=G\setminus \{e\}$ with the same number of nodes as $V(G)$ and the same optimal size by Lemma \ref{lem:simple}. The claim follows by inductive hypothesis. If Line \ref{step:irrelevant} is executed, the claim follows similarly using Lemma \ref{lem:irrelevant}. In Line \ref{step:ALG} is executed, one has $\Red(G)\leq \alpha \opt(G)\leq \alpha \opt(G)+2\eps |V(G)|-2$ since $\ALG$ is $\alpha$ approximate and $|V(G)|>\frac{2}{\eps}$. 

If Line \ref{step:cutNode} is executed, recall that by Lemma \ref{lem:cutNode} one has $\opt(G)=\opt(G_1)+\opt(G_2)$. 
We distinguish a few cases depending on the sizes of $V_1$ and $V_2$.
Assume w..l.o.g. $|V_1|\leq |V_2|$. If $|V_2|\leq \frac{2}{\eps}-1$ then $\Red(G)=\opt(G_1)+\opt(G_2)=\opt(G)\leq \alpha \opt(G)+2\eps |V(G)|-2$, where the last inequality follows from $\alpha\geq 1$ and $|V(G)|\geq \frac{2}{\eps}$. Otherwise, if $|V_1|\leq \frac{2}{\eps}-1$, by inductive hypothesis one has $\Red(G)\leq \opt(G_1)+\alpha \opt(G_2)+2\eps |V(G_2)|-2\leq \alpha \opt(G)+2\eps |V(G)|-2$, where we used $\alpha\geq 1$ and $|V(G_2)|\leq |V(G)|$. The remaining case is $|V_1|\geq \frac{2}{\eps}$. In this case $\Red(G)\leq \alpha \opt(G_1)+\alpha \opt(G_2)+2\eps (|V(G_1)|+|V(G_2)|)-4= \alpha \opt(G)+2\eps |V(G)|+2\eps-4\leq \alpha\opt(G)+2\eps |V(G)|-2$, where we used $2\eps\leq 2$. 

If Line \ref{step:contraction} is executed, one has $\opt(G)= |\OPT(G)\cap {{V(H)}\choose{2}}| +|\OPT(G)\setminus {{V(H)}\choose{2}}|\geq \frac{1}{\alpha}|H|+\opt(G\vert H)$ by the definition of $H$ and observing that $\OPT(G)\setminus {{V(H)}\choose{2}}$ induces a feasible 2EC spanning subgraph of $G\vert H$. If $|V(G\vert H)|\leq \frac{2}{\eps}$ one has $\Red(G)=|H|+\opt(G\vert H)\leq \alpha \opt(G)\leq \alpha \opt(G)+2\eps|V(G)|-2$ since $|V(G)|\geq 2/\eps$. Otherwise $\Red(G)\leq |H|+\alpha \opt(G\vert H)+2\eps|V(G\vert H)|-2\leq \alpha \opt(G)+2\eps|V(G)|-2$ since $|V(G)|\geq |V(G \vert H)|$.

We next assume that the condition of Line \ref{step:nonIsolating} holds. Let $\OPT_i=\OPT(G)\cap E(G_i)$ and $\opt_i=|\OPT_i|$. Assume first that the condition of Line \ref{step:V1large} holds. Observe that $\OPT_i$ induces a feasible solution for $G'_i$. Thus by inductive hypothesis 
$$
|H'|=|H'_1|+|H'_2|+|F|\leq \alpha (\opt_1+\opt_2)+2\eps(|V_1|+|V_2|+2)-4+2= \alpha \opt(G)+2\eps|V(G)|-2 .
$$ 

We next assume that one of the conditions of Lines \ref{step:OPT1Csmall} or \ref{step:OPT1Clarge} holds. 
Let $\OPT_{iX}$ be a minimum-size subgraph of type $X\in \{A,B,C\}$ in $G_i$ and set $\opt_{iX}=|\OPT_{iX}|$ (is there is no such subgraph, $\opt_{iX}=+\infty$). We will next assume w.l.o.g. that $\OPT_1$ is of  type $B$ or $C$ (hence $\OPT_2$ of type $A$ or $B$ by Lemma \ref{lem:ABC}.\ref{lem:ABC:feasible}) in view of the following claim.
\begin{claim}\label{claim:OPT1typeBC}
There exists and optimal solution $\OPT(G)$ such that $\OPT_1$ is of type $B$ or $C$.
\end{claim}
\begin{proof}%[Proof of Claim \ref{claim:OPT1typeBC}]
Let us assume that $\OPT_1$ is of type A, otherwise the claim holds. Notice that every feasible solution must use at least $\min\{\opt_{1A},\opt_{1B},\opt_{1C}\}$ edges from $G_1$ by Lemma \ref{lem:ABC}.\ref{lem:ABC:feasible}. Hence $\opt_{1A}> \alpha\cdot \min\{\opt_{1B},\opt_{1C}\}$ since otherwise $\OPT_{1A}$ would be an $\alpha$-contractible subgraph on at most $1/\eps$ nodes (which is excluded by the check of Line \ref{step:checkContractible}). If $\opt_{1A}\geq \opt_{1B}+1$, then we obtain an alternative optimum solution with the desired property by taking $\OPT_{1B}\cup \OPT_2$, and, in case $\OPT_2$ is of type C, by adding one edge $f$ between the connected components of $\OPT_2$ whose existence is guaranteed by Lemma \ref{lem:ABC}.\ref{lem:ABC:ABexists}. Otherwise ($\opt_{1A}\leq \opt_{1B}$), we must have $\opt_{1A}>\alpha \opt_{1C}$, hence in particular $\opt_{1A}\geq \opt_{1C}+1 \geq 4$ since $\alpha\geq 1$ and $\opt_{1C}\geq 3$. Observe however that Lemma \ref{lem:ABC}.\ref{lem:ABC:ABexists} implies $\opt_{1B}\leq \opt_{1C}+1$. Altogether $\opt_{1A}=\opt_{1B}=\opt_{1C}+1$. Then $\opt_{1A}> \alpha\opt_{1C}=\alpha(\opt_{1A}-1)$ implies $\opt_{1A}\leq 5$  since $\alpha\geq 6/5$. Altogether $\opt_{1A} \in \{4,5\}$.
%=\opt_{1B}=\opt_{1C}+1$.

If $\opt_{1A}=4$, $\OPT_{1A}$ has to be a $4$-cycle $C$ of type $u,a,v,b,u$. Notice that the edge $ab$ cannot exist since otherwise $\opt_{1B}=3$ due to the path $u,a,b,v$. Then every feasible solution must include the $4$ edges of $C$ (to guarantee degree at least $2$ on $a$ and $b$), which makes $C$ an $\alpha$-contractible subgraph on $4\leq 1/\eps$ nodes: this is excluded by the check of Line \ref{step:checkContractible}.

The other possible case is $\opt_{1A}=5$. The minimality of $\OPT_{1A}$ implies that $\OPT_{1A}$ is a $5$-cycle $C$: w.l.o.g. let $C=u,a,v,b,c,u$. Observe that the edges $ab$ cannot exists, since otherwise $\{va,ab,bc,cu\}$ would be a  type $B$ graph in $G_1$ of size $4$, contradicting $\opt_{1B}=\opt_{1A}=5$. A symmetric construction shows that edge $ac$ cannot exist. Hence every feasible solution restricted to $G_1$ must include the edges $\{au,av\}$ and furthermore at least $3$ more edges incident to $b$ and $c$, so at least $5$ edges altogether. This implies that $C$ is an $\alpha$-contractible subgraph of size $5\leq 1/\eps$: this is excluded by the check of Line \ref{step:checkContractible}.
\end{proof}

Claim \ref{claim:OPT1typeBC} implies $\opt_2\geq \min\{\opt_{2A},\opt_{2B}\}$ and $\opt_1\geq \min\{\opt_{1B},\opt_{1C}\}$. 
By the check of Line \ref{step:checkContractible} we also have that 
$
\opt_{1A}>\frac{1}{\alpha}\opt_1\geq \frac{1}{\alpha}\min\{\opt_{1B},\opt_{1C}\} 
$, 
hence in particular $\OPT_{1min}$ is of type $B$ or $C$ in this case. More precisely, by the tie-breaking rule, $\OPT_{1min}$ is of type C if the condition of Line \ref{step:OPT1Csmall} holds, and of type B otherwise (i.e. if the condition of Line \ref{step:OPT1Clarge} holds).

Suppose that the condition of Line \ref{step:OPT1Csmall} holds. In this case $\OPT_{1min}$ is of type B by the tie-breaking rule and the previous arguments, hence $|H_{1min}|=\opt_{1B}\leq \opt_1$. Notice also that $\OPT_{2A}$ and $\OPT_{2B}$ plus the two dummy edges are feasible solutions for $G'_2$, thus $\opt'_2\leq \min\{\opt_{2A},\opt_{2B}\}+2\leq \opt_2+2$. Thus
\begin{align*}
|H'''| & =|\OPT_{1min}|+|H'''_2|-2\leq \opt_1+\alpha(\opt_2+2)+2\eps(|V_2|+3)-2-2\\
& \leq \alpha\opt +2\eps |V(G)|-2+(2\alpha-2-(\alpha-1)\opt_1)\leq \alpha\opt +2\eps |V(G)|-2,
\end{align*}
where in the last inequality we used the obvious fact that $\opt_1\geq 2$.

It remains to consider the case that the condition of Line \ref{step:OPT1Clarge} holds.
 Recall that in this case $\OPT_{1min}$ is of type $C$. Remember also that $\OPT_2$ must be of type A or B. Suppose first that $\OPT_2$ is of type $A$. Then $\OPT_1$ has to be of type $C$ (otherwise $\OPT(G)$ would not be optimal), implying $\opt=\opt_{1C}+\opt_{2A}$. Notice that $\OPT_2$ is a feasible solution for $G''_2$, hence $|\OPT(G''_2)|\leq \opt_{2A}$. Then
$$
|H''|\leq \opt_{1min}+\alpha \opt_{2A}+2\eps (|V_2|+2)-2-1+1\leq \alpha \opt +2\eps |V(G)|-2.
$$ 
Otherwise, i.e. $\OPT_2$ is of type $B$, $\OPT_1$ must be of type $A$ or $B$, hence of type $B$ by Claim \ref{claim:OPT1typeBC}. Hence $\opt=\opt_{1B}+\opt_{2B}$. Notice that $\OPT_2\cup \{uv\}$ is a feasible solution for $G''_2$, hence $|\OPT(G''_2)|\leq \opt_{2B}+1$. Since $\opt_{1min}= \opt_{1C}\leq \opt_{1B}-1$, one has
\begin{align*}
|H''| & \leq \opt_{1B}-1+\alpha (\opt_{2B}+1)+2\eps (|V_2|+2)-2-1+1\\
%& \leq \opt_{1B}-1+\alpha (\opt_{2B}+1)+2\eps |V(G)|-2\\
& \leq \alpha \opt + 2\eps |V(G)|-2-(\alpha-1)(\opt_{1B}-1)\leq \alpha \opt + 2\eps |V(G)|-2.
\end{align*}
%Edit{
%In step (6.b), we have the following cases:\\
%(1) if $OPT_2$ is of type $A$, then $OPT_1$ is of type $C$, hence $\opt=\opt_{1C}+\opt_{2A}$ and the solution we obtain is of size at most $\opt_{1C}+\alpha(opt_{2A})+2\eps (|V_2|+2)-2<\alpha\opt+2\eps|V(G)|-2$.\\
%(2) if $OPT_2$ is of type $B$, then $OPT_1$ is of type $B$, hence $\opt=\opt_{1B}+\opt_{2B}$ and the solution we obtain is of size at most $(\opt_{1B}-1)+\alpha(\opt_{2B}+1)+2\eps|V_2|-2\le\alpha(\opt)+2\eps|V(G)|-2$
%}
%\af{In step (6.b), let $\OPT'_2$ be some optimal solution for $G'_2$ and set $\opt'_2=|\OPT'_2|$. Notice that $\OPT_{2A}$ and $\OPT_{2B} \cup \{\{u,v\}\}$ are feasible solutions for $G'_2$, hence $\opt'_2\leq \min\{\opt_{2A},\opt_{2B}+1\}\leq \min\{\opt_{2A},\opt_{2B}\}+1\leq \opt_2+1$. Recall that $H_{1min}$ is of type $C$ in this case. If $H''_2:=H'_2\setminus \{u,v\}$ is of type A, we have $F=\emptyset$. Otherwise, $H'_2$ has to include the dummy edge $\{u,v\}$ (thus $|H''_2|=|H'_2|-1$) and we have $|F|=1$. In both cases
%\begin{align*}
%|H''| & =|H_{1min}|+|H''_2|+|F|\leq |H_{1min}|+|H'_2|\leq \opt_{1}+\alpha(\opt_2+1)+2\eps(|V_2|+2)-2\\
%& \leq \alpha \opt +2\eps|V(G)|-2+(\alpha-(\alpha-1)\opt_1).
%\end{align*}
%The desired inequality follows if $\opt_1\geq \frac{\alpha}{\alpha-1}\geq it{6/5}$). The following claim shows that this is indeed the case in step (6.b).
%\begin{claim}\label{claim:largeopt1}
%In step (6.b) one has $\opt_1\geq $.
%\end{claim}
%}
\end{proof}

\section{Finding a Canonical 2-Edge-Cover}\label{sec:canonical}

In this section we prove Lemma \ref{lem:canonical}, \amEdit{which} we restate here together with the definition of canonical 2-edge-cover. 

\defCanonical*

\lemCanonical*

\begin{proof}
Starting with $H'=H$, while $H'$ is not canonical we perform one of the following updates, in the specified order of priority: 
%. The reader may check that each step does not decrease the number of components. Furthermore, each step decreases the number of edges, the number of components, or the number of bridges. Hence the process ends within a polynomial number of steps. 
%
%either decreases the number of edges without decreasing the number of components, or decreases the number of components, or decreases the number of bridges without decreasing 
%
%

\medskip\noindent {\bf (1)} If there exists $e\in H'$ such that $H'\setminus \{e\}$ is a 2-edge-cover, replace $H'$ with $H'\setminus \{e\}$. 
%Notice that the number of connected components of $H' $ decreases.

\medskip\noindent {\bf (2)} If there exist $D\subseteq H'$ and $A\subseteq E(G)\setminus H'$, with $|D|=|A|\leq 3$, such that $H'\setminus D\cup A$ is a  2-edge-cover with fewer components than $H'$, replace $H'$ with $H'\setminus D\cup A$. 

\medskip\noindent {\bf (3)} Suppose that there exists a 2EC component $C$ of $H'$ with fewer than $7$ edges that is not an $|V(C)|$-cycle. Notice that $|E(C)|\geq |V(C)|$ since otherwise $C$ would not be 2EC. This implies $|V(C)|\leq 5$. However $|V(C)|\geq 5$ since case (1) does not apply. Thus $C$ contains precisely $5$ nodes, it is 2EC, \amEdit{minimal,} and not a 5-cycle. The only possibilities are that $C$ is a bowtie (i.e., two triangles that share a common node) or a $K_{2,3}$ (i.e a complete bipartite graph with a side of size 2 and the other of size 3). Depending on which case applies, we perform one of the following steps:

\medskip\noindent {\bf (3.a)} $C$ is a bowtie (see Figure \ref{fig:CanonicalProofCase3}\textcolor{red}{.a}). Let $\{v_1,v_2,u\}$ and $\{v_3,v_4,u\}$ be the two triangles in the bowtie. Assume that there exists some edge in $E(G)\setminus E(H')$ between the nodes of the bowtie, say $v_1v_3$ w.l.o.g. In this case we replace $C$ with the cycle $v_1,v_3,v_4,u,v_2,v_1$. In other words, we  replace $H'$ with $H'\setminus E(C)\cup \{v_1v_3,v_3v_4,v_4u,uv_2,v_2v_1\}$. Notice that the number of edges of $H'$ decreases.
%Then replacing $C$ with the cycle $v_1,v_3,v_4,u,v_2,v_1$ yields an alternative 2-edge-cover of smaller size, contradicting the optimality of $H$. 
Otherwise, since $u$ is not a 1-vertex-cut of $G$ (as $G$ is structured), there must exist an edge $zw\in E(G)\setminus E(H')$ with $z\in V(C)\setminus \{u\}$ and $w\in V(G)\setminus V(C)$. This is however excluded by case (2) since $H'\setminus \{uz\}\cup \{zw\}$ would be a 2-edge-cover with the same size as $H'$ but with fewer components. 
%is a 2-edge-cover of $G$ with $cc(H')=cc(H)-1$. 

\begin{figure}
\begin{center}
\begin{tikzpicture}[scale=1.5]

%%%%%%%%%%%%%%Gluing path

\tikzset{vertex/.style={draw=black, very thick, circle,minimum size=0pt, inner sep=2pt, outer sep=2pt}
}

% \node () at (0.2,0.3){$C_0$};
% \node () at (2,-0.5){$C_1$};
% \node () at (3.8,-0.8){$C_2$};
% \node () at (5.3,-0.6){$C_3$};
% \node () at (7,-0.6){$C_4$};

% \node() at (1.05,-0.25){$e_1$};
% \node () at (2.85,-0.9){$e_2$};
% \node () at (4.33,-0.2){$e_3$};
% \node () at (6.15,-0.3){$e_4$};

\node () at (-3.8,1.5) {(a)};
\node () at (0.1,1.5) {(b)};

\node () at (-2,0){$v_1$};
\node () at (-2,1.2){$v_2$};
\node () at (-0.4,0){$v_3$};
\node () at (-0.4,1.2){$v_4$};
\node () at (-1.2,0.8){$u$};
\node () at (-3,-0.2){$w$};
\node () at (1.6,0.45){$u$};
\node () at (2.2,-0.4){$w_1$};
\node () at (2.15,0.4){$w_2$};
\node () at (2.15,1.2){$w_3$};
\node () at (3.62,0){$v_1$};
\node () at (3.62,0.8){$v_2$};

\begin{scope}[every node/.style={vertex}]

\node (1a) at (-3,0) {};
\node (2a) at (-3.5,0.5){};
\node (3a) at (-3.5,-0.5){};
\draw[ultra thick] (1a) to (2a) to (3a) to (1a);

\node (1b) at (-2,0.2) {};
\node (2b) at (-2,1) {};
\node (3b) at (-1.2,0.6){};
\node (4b) at (-0.4,0.2){};
\node (5b) at (-0.4,1){};
\draw[ultra thick] (3b) to (1b) to (2b) to (3b) to (4b) to (5b) to (3b);

\draw[dashed] (1a) to (1b);
\draw[dashed] (1b) to (4b);

\node (1c) at (1,0.8) {};
\node (3c) at (1,-0.2) {};
\node (2c) at (0.5,0.3){};
\node (4c) at (1.5,0.3){};

\draw[ultra thick] (1c) to (2c) to (3c) to (4c) to (1c);

\node (1da) at (2.4,-0.3){};
\node (2da) at (2.4,0.4){};
\node (3da) at (2.4,1.1){};
\node (1db) at (3.4,0){};
\node (2db) at (3.4,0.8){};
\draw[ultra thick] (1db) to (1da) to (2db) to (2da) to (1db) to (3da) to (2db);
\draw[dashed] (4c) to (1da) to (2da);

\end{scope}
\begin{scope}[very thick]
\end{scope}
\begin{scope}[very thick, densely dashed]
%\draw (1) -- (a);
\end{scope}
\end{tikzpicture}
\end{center}
\caption{(a) Illustration of case (3.a) of Lemma \ref{lem:canonical}. By adding $v_1v_3$ and removing $\{v_1u,v_3u\}$ we obtain a smaller 2-edge-cover. Alternatively, by adding $v_1w$ and removing $v_1u$ we obtain a 2-edge-cover of the same size with fewer components. (b) Illustration of case (3.b) of Lemma \ref{lem:canonical}. By adding $w_1w_2$ and removing $\{v_2w_1,v_1w_2\}$ we obtain a smaller 2-edge-cover. Alternatively, by adding $w_1u$ and removing $w_1v_2$ we obtain a 2-edge-cover of the same size with fewer components.
}
\label{fig:CanonicalProofCase3}
\end{figure}
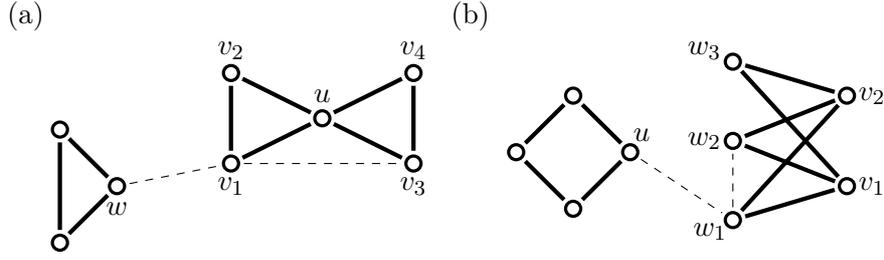 

\medskip\noindent {\bf (3.b)} $C$ is a $K_{2,3}$ (see Figure \ref{fig:CanonicalProofCase3}\textcolor{red}{.b}). Let $\{v_1, v_2\}$ and $\{w_1, w_2, w_3\}$ be the two sides of $C$. Suppose that there exists an edge between the nodes $\{w_1, w_2, w_3\}$, say $w_1w_2$. In this case we replace $C$ with the 5-cycle $w_1,w_2,v_1,w_3,v_2,v_1$, namely we replace $H'$ with $H'\setminus E(C)\cup \{w_1w_2,w_2v_1,v_1w_3,w_3v_2,v_2v_1\}$. Notice that the number of edges of $H'$ decreases. 

Suppose next that the nodes $\{w_1, w_2, w_3\}$ have degree precisely $2$ in $G$. Then every feasible 2EC spanning subgraph contains all the edges of $C$, making $C$ a $5/4$-contractible subgraph on $5$ nodes, which contradicts the assumption that $G$ is structured. 

The only remaining case is that there exists an edge $f\in E(G)\setminus E(H')$ between $\{w_1, w_2, w_3\}$ and $V(G)\setminus V(C)$, w.l.o.g. $f=w_1u$. This is however excluded by case (2) since $H'\setminus \{w_1v_1\}\cup \{w_1u\}$ would be a 2-edge-cover with fewer components. 

\medskip\noindent {\bf (4)} There exists a block $B$ of $H'$ which contains at most $5$ edges and such that all the edges of $H'$ with exactly one endpoint in $V(B)$ are incident to the same node $v_1$ of  $B$. We remark that this includes leaf blocks as a special case. Notice that by Case (1) $B$ must be an $\ell$-cycle $v_1,\ldots,v_\ell,v_1$ for $3\leq \ell\leq 5$.  Since $G$ is structured, by the 3-matching Lemma~\ref{lem:matchingOfSize3} there is a matching $M$ of size $3$ between $V(B)$ and $V\setminus V(B)$ in $G$. We distinguish two subcases:  

\medskip\noindent {\bf (4.a)} There exists an edge $zw\in E(G)$ with $z\in V(B)$ adjacent to $v_1$ in $B$, say $z=v_2$, and $w\notin V(B)$ (see Figure \ref{fig:CanonicalProofCase4}\textcolor{red}{.a}). In this case we replace $H'$ with $H'\setminus \{v_1v_2\}\cup \{v_2w\}$. Notice that $w$ must be in the same connected component of $B$ by case (2). Furthermore, the number of bridges of $H'$ decreases since some original bridge incident to $v_1$ is not a bridge any longer and no new bridge is created. 

\begin{figure}
\begin{center}
\begin{tikzpicture}[scale=1.5]

\tikzset{vertex/.style={draw=black, very thick, circle,minimum size=0pt, inner sep=2pt, outer sep=2pt}
}

\node () at (-4,1.5) {(a)};
\node () at (-0.2,1.5) {(b)};

\node () at (0.2,1.29) {$v_4$};
\node () at (0.9,1.29) {$v_5$};
\node () at (0.15,0.1){$z=v_3$};
\node () at (0.9,0.1){$v_2$};
\node () at (1.35,0.8){$v_1$};

\node () at (-2.25,0.45) {$e$};
\node () at (1.25,-0.3){$e$};
\node () at (-3.5,0.67){$z=v_2$};
\node () at (-3.5,-0.68){$v_3$};
\node () at (-2.82,0.15){$v_1$};
\node () at (-1.5,-0.15){$w$};
\node () at (2.6,-0.36){$w$};

\begin{scope}[every node/.style={vertex}]

\node[fill=gray] (1a) at (-3,0) {};
\node[fill=gray] (2a) at (-3.5,0.5){};
\node[fill=gray] (3a) at (-3.5,-0.5){};
\draw[ultra thick] (1a) to (2a) to (3a) to (1a);

\node (l) at (-2.3,0){};

\node (1b) at (-1.5,0) {};
\node (2b) at (-1.5,0.9) {};
\node (3b) at (-0.6,0.9){};
\node (4b) at (-0.6,0){};

\draw[ultra thick] (1b) to (2b) to (3b) to (4b) to (1b);

\draw[ultra thick] (1a) to (l) to (1b);
\draw[dashed,bend left=10] (2a) to (1b);

%\draw[dashed] (1b) to (4b);

\node[fill=gray] (1c) at (1.2,0.7) {};
\node[fill=gray] (2c) at (0.9,0.3){};
\node[fill=gray] (3c) at (0.2,0.3) {};
\node[fill=gray] (4c) at (0.2,1.1){};
\node[fill=gray] (5c) at (0.9,1.1){};

\draw[ultra thick] (1c) to (2c) to (3c) to (4c) to (5c) to (1c);
\draw[red] (2c) to (5c);

\node (1d) at (2.6,-0.2){};
\node (2d) at (2.2,0.6){};
\node (3d) at (3,0.6){};

% \node (1da) at (2.2,-0.5){};
% \node (2da) at (2.2,0.2){};
% \node (3da) at (2.2,0.9){};
% \node (1db) at (3,-0.2){};
% \node (2db) at (3,0.6){};
\draw[ultra thick] (1d) to (2d) to (3d) to (1d);
\draw[ultra thick] (2d) to (1c);
\draw[dashed,bend right=18] (3c) to (1d);
% \draw[dashed] (3c) to (1da) to (2da);

\end{scope}
\begin{scope}[very thick]
\end{scope}
\begin{scope}[very thick, densely dashed]
%\draw (1) -- (a);
\end{scope}
\end{tikzpicture}
\end{center}
\caption{(a) Illustration of case (4.a) of Lemma \ref{lem:canonical}. By adding $v_2w$ and removing $v_2v_1$ we obtain a 2-edge-cover with fewer bridges. (b) Illustration of case (4.b) of Lemma \ref{lem:canonical}. Nodes $v_2$ and $v_5$ are adjacent only to nodes in the 5-cycle $B=v_1,v_2,v_3,v_4,v_5,v_1$. Furthermore edge $v_2v_5$ must exist since otherwise $B$ would be a $5/4$-contractible subgraph. By adding $\{v_2v_5,v_3w\}$ and removing $\{v_2v_3,v_1v_5\}$ we obtain a 2-edge-cover with fewer bridges.}
\label{fig:CanonicalProofCase4}
\end{figure}
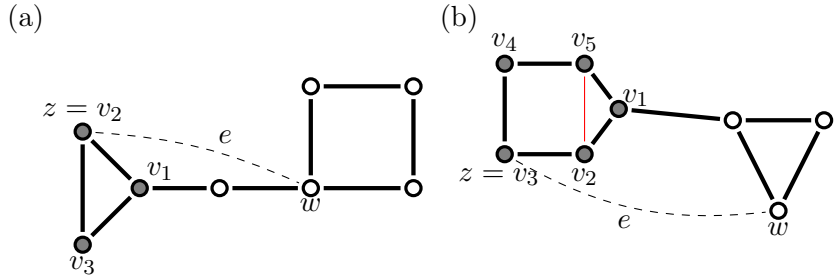

\medskip\noindent {\bf (4.b)} $v_2$ and $v_\ell$ are adjacent only to nodes in $V(B)$ in $G$ (see Figure \ref{fig:CanonicalProofCase4}\textcolor{red}{.b}). Since we have the matching $M$, then $\ell=5$ in this case. Furthermore, we can assume w.l.o.g. that an edge $zw\in E(G)$, with $z\in \{v_3,v_4\}$ and $w\notin V(B)$, exists (otherwise $v_1$ would be a 1-vertex-cut). Assume by contradiction that $v_2v_5\notin E(G)$. In this case any feasible solution would contain $4$ distinct edges (incident to $v_2$ and $v_5$) with both endpoints in $V(B)$: this makes $B$ a $5/4$-contractible subgraph on $5$ nodes, contradicting the assumption that $G$ is structured. Thus we can assume $v_2v_5\in E(G)$. In this case replace $H'$ with $H'\setminus \{v_1v_5,v_2v_3\}\cup \{zw,v_2v_5\}$. As in Case (4.a), the number of connected components of $H'$ decreases or the number of bridges of $H'$ decreases and its number of connected components does not change.

In Steps (1) and (3) the number of edges decreases strictly, while in Steps (2) and (4) it remains unchanged. Furthermore in Steps (2) and (4) either the number of connected components decreases or the number of bridges decreases and the number of connected components does not change. It is \amEdit{then} easy to check that the process ends within a polynomial number of steps (where each step can be executed in polynomial time).

At the end of the process clearly $H'$ satisfies properties (i) (by case (3)) and (iii) (by case (2)). By case (4), leaf blocks contain at least $6$ edges. The same holds for inner blocks $B$ such that all the bridge edges incident to $B$ are incident to the same node $v\in V(B)$. Consider any remaining inner block $B$ with $3$ edges, in particular $B$ is a triangle $v_1,v_2,v_3,v_1$. By assumption there are at least two bridge edges incident to distinct nodes of $B$, say $wv_1$ and $zv_2$. However in this case we would apply Step (1) and remove the edge $v_1v_2$, a contradiction. 
\end{proof}

\section{Case of Many Triangles}\label{sec:manyTriangles}

In this section we prove Lemma \ref{lem:caseB:gluingStep}, \amEdit{which} we next restate together with the definition of merging cycle for \amEdit{the} reader's convenience. We recall that $S$ is the current infeasible solution, whose connected components are 2EC. 

\lemmaCaseBgluingStep*
\definitionMergingCycle*

Recall that we build an auxiliary forest $G'=(\calC(S),F)$ on the components $\calC(S)$ of $S$, where initially $F=\emptyset$. The main tool to add edges to $F$ is the following lemma, whose proof is postponed.

\begin{lemma}\label{lem:caseB:mergingCycle}
If $G'$ consists of at least $2$ trees, in polynomial time one can compute a merging cycle $M$ incident to the components of at least $2$ distinct trees of $G'$.
\end{lemma}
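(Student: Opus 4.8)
The plan is to prove Lemma~\ref{lem:caseB:mergingCycle} by a careful connectivity argument exploiting that $G$ is structured (in particular 2VC, without irrelevant edges, and without non-isolating $2$-vertex-cuts), together with the $3$-Matching Lemma. First I would set up notation: let $\calT_1,\ldots,\calT_r$ ($r\geq 2$) be the trees of $G'$, and for each $\calT_j$ let $U_j:=\bigcup_{C\in V(\calT_j)} V(C)\subseteq V(G)$ be the set of original vertices living in the components of $\calT_j$. Since every vertex of $G$ lies in some component of $S$ (lonely nodes are degenerate $2$EC components), the $U_j$'s partition $V(G)$. The goal is to produce a merging cycle $M\subseteq E(G)\setminus S$ that, after collapsing each component of $\calC(S)$, forms a single cycle touching components belonging to at least two different $U_j$'s, and that moreover respects condition~(2) of Definition~\ref{def:mergingCycle} (two edges of $M$ incident to a light component must hit distinct nodes of it).

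The main step is to find, between some pair of ``sides'', a matching of size $3$ in $G$ using only edges outside $S$. Concretely, I would fix a bipartition $(V_1,V_2)$ of $V(G)$ obtained by grouping the $U_j$'s into two nonempty parts; the natural choice is $V_1=U_1$ and $V_2=U_2\cup\cdots\cup U_r$, but one must take care that both sides satisfy the hypotheses of the $3$-Matching Lemma~\ref{lem:matchingOfSize3}, i.e. $|V_i|\geq 3$ and if $|V_i|=3$ then $G[V_i]$ is a triangle. Since each component of $S$ is $2$EC it has $1$, $3$, or $\geq 3$ vertices; a singleton (lonely) component would make a side too small, so if $U_1$ is a single lonely node I would instead pair it with another component, or move components between sides, to ensure the size/triangle condition holds on both sides — here I'd use that $G$ has at least $2/\eps$ nodes and a short case analysis on the small trees. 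Applying Lemma~\ref{lem:matchingOfSize3} yields a matching $\{e_1,e_2,e_3\}$ of size $3$ between $V_1$ and $V_2$ in $G$. None of these edges lies in $S$, because $S$ has no edge between distinct components of $\calC(S)$ and $V_1,V_2$ are unions of (vertex sets of) whole components; so $\{e_1,e_2,e_3\}\subseteq E(G)\setminus S$.

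Next I would turn this $3$-matching into a merging cycle. Collapse $\calC(S)$; the three edges $e_1,e_2,e_3$ go between components, and since their $V_1/V_2$-endpoints are in distinct components (matching edges are vertex-disjoint) they hit at least $3$ components on the $V_1$ side and at least $3$ on the $V_2$ side -- at any rate they touch components of at least two distinct trees. Inside each side, the contracted quotient of $S$ restricted to $U_1$ (resp.\ to $U_2\cup\cdots\cup U_r$) need not be connected, so I would instead argue directly: pick two of the three edges, say $e_1,e_2$, whose four endpoints, after collapsing components, give at most $4$ (super-)nodes; then route a path on one side and a path on the other side connecting the relevant super-nodes using further edges of $G$ (not in $S$) together with $e_1,e_2$ to close a single cycle in the quotient graph. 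The existence of such connecting paths uses $2$-edge-connectivity / $2$-vertex-connectivity of $G$ after the contraction (Fact~\ref{fact:contraction}), and the third matching edge $e_3$ provides the slack needed to guarantee the two paths can be chosen internally vertex-disjoint at the super-nodes that are light components -- this is precisely where condition~(2) of Definition~\ref{def:mergingCycle} is enforced, since the $3$-matching gives us three \emph{distinct} attachment points and we only need two distinct ones per light component. Everything is constructive and runs in polynomial time (maximum matching, connectivity, shortest paths in the quotient graph).

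I expect the main obstacle to be the bookkeeping in the last step: extracting from the $3$-matching a \emph{single} cycle in the quotient graph that (a) crosses the $V_1/V_2$ cut and hence spans at least two trees, and (b) meets condition~(2) at every light component it passes through. The size-$3$ (rather than size-$2$) matching is clearly there to give one ``spare'' edge so that, whenever the natural $2$-matching would force the cycle to reuse a node of a light (triangle) component, we can reroute through the third edge; making this rerouting argument airtight -- in particular handling the boundary cases where a tree consists of a single lonely node or a single triangle, and where $V_1$ or $V_2$ would otherwise violate the hypotheses of Lemma~\ref{lem:matchingOfSize3} -- is the delicate part. The structural assumptions (no irrelevant edges, no non-isolating $2$-vertex-cuts) are exactly what rule out the bad configurations in that case analysis, as in the proof of the $3$-Matching Lemma itself.
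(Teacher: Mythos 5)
There is a genuine gap, and it sits exactly where you flagged the ``bookkeeping'': closing a single cycle in the quotient over $\calC(S)$ that satisfies condition~(2) of Definition~\ref{def:mergingCycle}. Your plan is to take a $3$-matching between $U_1$ and $U_2\cup\cdots\cup U_r$ and then ``route a path on one side and a path on the other side'' using edges of $G$ outside $S$, justified by $2$-connectivity after contraction. But $2$-vertex-connectivity of $G$ (or Fact~\ref{fact:contraction}) gives no guarantee that two components lying in the same tree of $G'$ can be joined by a path that stays inside that side: the only $G$-connectivity between them may run through the other side, in which case your two ``side paths'' do not exist and the construction collapses into just ``find some cycle in the quotient'', which is where the real difficulty lies. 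What actually makes within-tree routing possible in the paper is structure you never invoke: each edge of $G'$ records an \emph{expensive} merging cycle, i.e.\ a pair of $G$-edges $a_i,b_i$ directly between the two adjacent components, with distinct endpoints on the light one; the paper's conversion (Lemma~\ref{lem:caseB:NiceToMerging}, reusing the edge-selection of Lemma~\ref{lem:caseB:nonTreeEdge}) walks along the tree path and picks one of $a_i,b_i$ at each step so that condition~(2) holds. Relatedly, your claim that the third matching edge supplies enough ``slack'' only addresses the three attachment components; the light components visited in the interior of your connecting paths also need two distinct attachment nodes, and the $3$-matching gives you no control there. (Minor point: three vertex-disjoint matching edges need not hit three distinct components per side, since a component has several vertices.)

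It is also worth noting that the paper's route is genuinely different from yours: it does not use the $3$-Matching Lemma~\ref{lem:matchingOfSize3} here at all. It first proves, for an arbitrary partition of the vertex set of a $2$VC simple graph, the existence of a \emph{nice cycle} (Lemma~\ref{lem:caseB:NiceCycle}), via a rather involved induction on ``almost-nice cycles''; applied to the partition by trees of $G'$ this yields a cycle crossing at least two trees with the distinct-node condition enforced at every part, and only then is it converted into a merging cycle using the recorded $a_i,b_i$ pairs. The distinct-node condition at light components is precisely the part that needs this machinery (or a substitute for it), and your proposal currently asserts rather than proves it; without either the nice-cycle argument or the recorded expensive-merging-cycle edges, the proof does not go through.
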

If the merging cycle $M$ obtained via Lemma \ref{lem:caseB:mergingCycle} is cheap, we \fab{derive the desired $S'$ from $S$ and $M$ as described earlier.}. Otherwise, we add to $F$ the edge between the two 2EC components of $S$ incident to $M$. We recall that $M$ is incident to one light and one heavy component. 
%these two components must be one heavy and one light. 
%\begin{remark}
%For any edge $C_1C_2$ of $G'$, by construction exactly one $C_i$ is light and the other is heavy. 
%\end{remark}
Clearly the process ends in a polynomial number of steps by returning the desired cheap merging cycle or by obtaining a single tree $G'$. 

In the latter case, we check whether there exists an edge $e$ in $G$ between two connected components of $S$ which are \emph{not} adjacent in $G'$. If such an $e$ exists, we can find a cheap merging cycle via the following lemma. 
\begin{lemma}\label{lem:caseB:nonTreeEdge}
Suppose that $G'$ consists of a single tree and there exists an edge $e_0$ in $G$ with endpoints in two distinct 2EC components  of $S$ which are not adjacent in $G'$. Then in polynomial time one can compute a cheap merging cycle of $S$.   
\end{lemma}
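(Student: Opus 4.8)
The plan is to exploit the tree structure of $G'$ together with the extra edge $e_0$ to produce a merging cycle whose edges, when interpreted in the collapsed graph, form a single cycle involving at least three components, so that $n_H\ge 1$ and $n_L\ge 2$, making the cycle cheap (since the only expensive case is $n_H=1=n_L$). Concretely, let $e_0$ join components $C_1$ and $C_2$ of $S$ that are non-adjacent in $G'$. Since $G'$ is a single tree, there is a unique $C_1$--$C_2$ path in $G'$; because $C_1$ and $C_2$ are non-adjacent, this path has length at least $2$, hence passes through at least one intermediate component. Each edge of this tree path corresponds (by construction of $F$) to an expensive merging cycle, i.e.\ to a pair of edges of $G$ realizing a ``short cycle'' between the two endpoint components. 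I would like to splice $e_0$ together with one edge from each tree-path merging cycle to get a single cycle through $C_1,\dots,C_2$ in the collapsed graph.

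The key steps, in order, are as follows. First, recall that every edge $C_iC_{i+1}$ of $F$ was added because a previously discovered merging cycle $M_i$ was incident to exactly $C_i$ and $C_{i+1}$ (one heavy, one light). A merging cycle incident to only two components is, in the collapsed graph, just two parallel edges between those two nodes; equivalently, it consists of two edges $f_i, g_i \in E(G)$ each with one endpoint in $V(C_i)$ and one in $V(C_{i+1})$, and (by Definition~\ref{def:mergingCycle}(2)) if either $C_i$ or $C_{i+1}$ is a light triangle, then $f_i$ and $g_i$ touch two distinct nodes of it. Second, I would take the walk in the collapsed graph that starts at $C_1$, follows $e_0$ to $C_2$, and then walks back along the tree path $C_2 = D_0, D_1, \dots, D_k = C_1$ using, for each edge $D_jD_{j+1}$ of $F$, just \emph{one} of the two edges $f_j$ of the associated merging cycle $M_j$. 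This closes up into a single cycle $M$ in the collapsed graph on the components $C_1, D_1, \dots, D_{k-1}, C_2$ (with $k\ge 2$, so at least $3$ components). Third, I must verify that $M$ is a legal merging cycle: (a) it induces one cycle in the collapsed graph — true because the tree path is simple and $e_0$ connects its two endpoints; (b) for each light component $C$ on the cycle, the (at most two) edges of $M$ incident to $C$ go to two distinct nodes of $C$. For an internal component $D_j$ the two incident edges come from $M_{j-1}$ and $M_j$; I need them to be at distinct vertices. If they happen to coincide, I use the freedom in choosing which of $f_j,g_j$ to keep from each $M_j$: since a triangle has three vertices and each $M_j$ already uses two distinct ones on $D_j$, there is always a choice of representatives avoiding a collision (a simple pigeonhole / $2$-vertex-cut argument, using that $G$ has no irrelevant edges and no non-isolating $2$-cuts, exactly as in earlier proofs). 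Fourth, once $M$ is a valid merging cycle on $\ge 3$ components, at least two of them contribute to $n_H+n_L$ with at least... — more carefully, a cycle on $\ge 3$ components has $n_H + n_L \ge 3$, and the only expensive configuration is $(n_H,n_L)=(1,1)$ with $n_H+n_L=2$, so $M$ is automatically cheap; hence $cost(S')\le cost(S)$ and $S'$ has fewer components, as required. Finally, the whole construction is clearly polynomial: there are $O(|V|)$ tree edges, each with a stored merging cycle, and closing up and fixing collisions is local.

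The main obstacle I anticipate is step (b) — guaranteeing the ``two distinct nodes on each light component'' condition at the internal components where two stored merging cycles meet. The stored cycles $M_j$ were only required to be incident to two components each, and their incidence patterns on a shared triangle $D_j$ could in principle conflict; resolving this requires (i) carefully recording, when $F$-edges were inserted, enough information about which vertices of each component the merging cycle used, and (ii) arguing that the remaining degree of freedom (choice of $f_j$ vs.\ $g_j$, or a mild local rerouting) always suffices. A secondary subtlety is making sure the resulting object is a single cycle and not a cycle-plus-chords in the collapsed multigraph — this is fine because we take exactly one edge per tree edge plus $e_0$, giving exactly $|V(M)|$ edges on $|V(M)|$ collapsed nodes arranged in a path closed by $e_0$ — but it is worth stating explicitly. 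Everything else (the cost accounting, the ``$\ge 3$ components $\Rightarrow$ cheap'' implication) is immediate from the discussion preceding Lemma~\ref{lem:caseB:gluingStep}.
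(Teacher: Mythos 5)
Your proposal is correct and follows essentially the same route as the paper: walk the unique tree path between the two components joined by $e_0$, take one of the two stored edges per expensive merging cycle, greedily choosing representatives so that at each light (triangle) component the two incident edges hit distinct vertices, and close the cycle with $e_0$; since the cycle then involves $k\geq 3$ components it is automatically cheap. The only point you leave slightly informal (the possible double constraint when splicing the last edge, which must simultaneously avoid a vertex at its inner endpoint and the endpoint of $e_0$) is resolved exactly as in the paper by the fact that every $F$-edge joins one light and one heavy component, so each choice faces at most one distinct-vertex constraint and your pigeonhole argument suffices.
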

\begin{proof}
Assume that $e_0$ is incident to the components $C_1$ and $C_k$, and let $C_1,C_2,\ldots,C_k$, $k\geq 3$, be the components along the (unique) path in $G'$ between $C_1$ and $C_k$. We will define edges $e_1,\ldots,e_{k-1}$ such that each $e_i$, $i\geq 1$, has one endpoint in $C_i$ and the other one in $C_{i+1}$, and such that $M:=\{e_0,\ldots,e_{k-1}\}$ is a merging cycle. The claim follows since $k\geq 3$ (hence $M$ has to be cheap).  

Remember that for each pair $C_iC_{i+1}$ there is an expensive merging cycle incident to exactly $C_i$ and $C_{i+1}$ (by the construction of $G'$). In particular there will be two edges $a_i$ and $b_i$ with one endpoint in $C_i$ and the other in $C_{i+1}$. Furthermore, if $C_i$ is light, the endpoints of $a_i$ and $b_i$ in $C_i$ are distinct. By construction we also have that $C_i$ is light and $C_{i+1}$ is heavy or vice versa. 

%We choose the $e_i$'s, $i\geq 1$, iteratively as follows. 
Given the edges $e_0,\ldots,e_{i}$, $i\in\{0,\ldots,k-2\}$, we choose $e_{i+1}$ as follows. If $C_{i+1}$ is light, we choose any edge in $\{a_{i+1},b_{i+1}\}$ which is not incident to $e_i$ (this edge must exist since $a_{i+1}$ and $b_{i+1}$ are incident to distinct nodes of $C_{i+1}$). Otherwise, i.e. if $C_{i+1}$ is heavy, we set $e_{i+1}$ to any edge in $\{a_{i+1},b_{i+1}\}$ which is not incident to $e_0$ (the latter condition is always satisfied unless $i+1=k-1$).
\end{proof} 
If there is no edge $e_{0}$ as in the previous lemma, then we apply the following lemma to compute a desired $S'$.
%$S$ has to be core-triangle, as shown in the following lemma.
%\ref{lem:caseB:cheapMergingCycle}.
\begin{lemma}\label{lem:caseB:CoreTriangle}
Suppose that $G'$ consists of a single tree, all the edges $e$ of $G$ between distinct 2EC components $C_1$ and $C_2$ of $S$ involve two adjacent components in $G'$, and $S$ is not core-triangle. Then in polynomial time one can compute an $S'\subseteq E$ such that: 
%satisfying the conditions of Lemma \ref{lem:gluingStep} w.r.t. $S$, namely 
(1) the connected components of $S'$ are 2EC, (2) $S'$ has fewer connected components than $S$, and (3) $cost(S')\leq cost(S)$.
\end{lemma}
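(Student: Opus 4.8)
The plan is to exploit the hypothesis that $S$ is \emph{not} core-triangle together with the fact that every $G$-edge between distinct components of $S$ connects $G'$-adjacent components, where $G'$ is the single tree built from expensive merging cycles. Recall that each edge $C_iC_{i+1}$ of $G'$ was added because we discovered an expensive merging cycle touching exactly $C_i$ and $C_{i+1}$, so one of $C_i,C_{i+1}$ is light (a triangle from the original $H$) and the other is heavy, and moreover between $C_i$ and $C_{i+1}$ there are two $G$-edges $a_i,b_i$ whose endpoints in a light endpoint are distinct. Since $G'$ is a tree whose edges are light--heavy, the light components form an independent set in $G'$ and likewise the heavy ones. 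If \emph{every} heavy component were a leaf of $G'$ adjacent to light components only, and every light component had degree exactly $2$ (one could not have degree $1$: a degree-$1$ light leaf $C$ with its unique neighbor $C'$ heavy would, via the two edges $a,b$ to distinct nodes of $C$, give a merging cycle incident only to $C$ and $C'$ that is actually cheap once we recall $C$ is a triangle — wait, that cycle is expensive; so degree-$1$ light nodes are genuinely possible), we would be close to a core-triangle configuration. So first I would carefully identify, from "$S$ is not core-triangle", exactly which local obstruction must occur in $G'$.

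**The key case distinction.** A $2$-edge-cover $S$ with all components $2$EC is core-triangle iff it has one component $C$ plus triangles $T_1,\dots,T_k$ with no $G$-edges between distinct $T_i$'s. Under our hypotheses all inter-component $G$-edges are $G'$-tree edges, so the "no edges between triangles" condition is automatic once the triangles are pairwise non-adjacent in $G'$; and a light component is precisely a triangle. Hence $S$ fails to be core-triangle exactly when one of the following holds: (i) $G'$ has at least $2$ heavy components; or (ii) $G'$ has exactly one heavy component but at least two light ones, and some light component is $G'$-adjacent to another component — but that neighbor must be heavy, so this forces two heavy components, contradiction; re-examining: with one heavy component $C$, every light triangle is a leaf attached to $C$, and that \emph{is} core-triangle. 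Therefore the only way to be non-core-triangle is that $G'$ contains at least two heavy components, hence (since $G'$ is connected with light--heavy edges) there is a light component $C$ adjacent in $G'$ to two heavy components $C_1$ and $C_2$. This is the configuration flagged in the overview.

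**Merging the three components.** Given such $C$ (light, a triangle on nodes $\{x,y,z\}$), $C_1$ and $C_2$ (heavy), I would use the two edge-pairs guaranteed by the $G'$-edges: edges $a_1,b_1$ from $C$ to $C_1$ with distinct endpoints in $C$, and $a_2,b_2$ from $C$ to $C_2$ with distinct endpoints in $C$. Since $C$ has only three nodes, by pigeonhole we can select one edge from $\{a_1,b_1\}$ and one from $\{a_2,b_2\}$ that are incident to two distinct nodes of $C$, say $x$ and $y$; call them $e_1$ (to $C_1$) and $e_2$ (to $C_2$). Now set $S' := \bigl(S \setminus \{xy\}\bigr) \cup \{e_1,e_2\}$, and additionally we need to reconnect $C_1$ and $C_2$: but they may already be the same component, or we may need one more edge. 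Actually the merging cycle here should go $C_1 \to C \to C_2 \to \dots \to C_1$; since in the tree $G'$ the path from $C_1$ to $C_2$ goes through $C$ only if $C_1,C_2$ are the two $G'$-neighbors of $C$ on that path, and there is no $G$-edge closing $C_1$ to $C_2$ directly (that would be a non-tree edge, excluded), I instead treat this as simply gluing $C_1,C,C_2$ into a path and argue via a direct $2$EC check: add $e_1,e_2$, delete $xy$; the three original $2$EC pieces $C_1$, $C\setminus\{xy\}$ (a path), $C_2$ become a single component that is $2$EC because $x$ and $y$ are joined through $C_1$ (2EC) on one side and the third vertex $z$ of $C$ still has its two triangle-edges $zx,zy$. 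One verifies $2$-edge-connectivity of the merged piece directly; the remaining components of $S$ are untouched, so (1) holds, and (2) holds since $C_1,C,C_2$ (at least two pieces, since $C_1\ne C_2$ as $C$ is light and adjacent to two \emph{heavy} nodes) become one.

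**The cost accounting — the main obstacle.** The delicate point is (3): $cost(S') \le cost(S)$. We have $|S'| - |S| = 2 - 1 = 1$. For credits: $C_1,C,C_2$ carried $2 + \tfrac12 + 2 = 4.5$ credits (two heavy at $2$, one light at $\tfrac12$); the merged component $C'$ is $2$EC and gets credit according to the scheme. If $C'$ is heavy it gets $2$ credits, so $cr(S') - cr(S) = 2 - 4.5 = -2.5$, giving $cost(S') - cost(S) = 1 - 2.5 = -1.5 \le 0$. If $C'$ happened to be light — impossible, since $C'$ has at least $|V(C_1)|+3+|V(C_2)| \ge 3+3+3 = 9 > 3$ nodes, so it is not a triangle and hence heavy. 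Thus the accounting always closes with slack $1.5$. The genuine work, and where I expect to spend the most care, is (a) nailing down that "not core-triangle" really forces the light-between-two-heavies configuration (handling the degenerate possibilities where $C$ has $G'$-degree $1$, or where some "heavy" component is itself small, cleanly), and (b) the explicit $2$EC verification of the merged component $C'$ after deleting $xy$ and adding $e_1,e_2$, including the sub-case where the chosen edges from $\{a_i,b_i\}$ cannot avoid sharing a node of $C$ — in that last event one of the two heavy neighbors connects to $C$ at $\ge 2$ distinct nodes already (since $a_i,b_i$ hit distinct nodes of $C$), which only makes the $2$EC argument easier, so I would dispatch it by a short separate observation rather than a unified formula.
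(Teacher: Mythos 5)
Your identification of the obstruction (a light triangle $C$ adjacent in $G'$ to two heavy components $C_1,C_2$) matches the paper, but your merging step fails: the set $S'=(S\setminus\{xy\})\cup\{e_1,e_2\}$ does not have 2EC components. In $S$ the components $C_1$, $C$, $C_2$ are vertex-disjoint with no edges between them, so after your update the only edge between $V(C_1)$ and the rest of the merged piece is $e_1$ (and similarly $e_2$ for $C_2$); both are bridges, so the merged component is connected but not 2-edge-connected. Your justification that ``$x$ and $y$ are joined through $C_1$'' is false: $x$ is attached to $C_1$ by a single edge. Hence condition (1) fails, and the cost slack of $1.5$ you compute is an artifact of adding only two edges.

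A correct merge needs two edges into each heavy neighbor, i.e.\ four added edges, and to preserve (3) you must then delete two triangle edges; this forces the added edges to meet $C$ in the pattern $\{v_1,v_2\}$ for $C_1$ and $\{v_2,v_3\}$ for $C_2$. The pigeonhole fact you invoke (each pair $a_i,b_i$ hits two distinct nodes of $C$) does not guarantee this: both $C_1$ and $C_2$ could be attached only to $\{v_1,v_2\}$, in which case only one triangle edge can be deleted (deleting a second would leave $v_3$ with degree $1$) and the cost increases by $4-1-\frac{5}{2}=\frac12$, violating (3). The paper closes exactly this gap with a structural claim absent from your proposal: if some $G'$-neighbor of $C$ were adjacent in $G$ to only two nodes $v_1,v_2$ of $C$, then, because every inter-component edge joins $G'$-adjacent components and $G'$ is a tree, all edges leaving the subtree hanging off $C$ through that neighbor pass through $\{v_1,v_2\}$, so $\{v_1,v_2\}$ is a $2$-vertex-cut and $v_1v_2$ is an irrelevant edge, contradicting structuredness. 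Therefore every $G'$-neighbor of $C$ is adjacent to all three nodes of $C$, the pattern above is available, $S'=(S\setminus\{v_1v_2,v_2v_3\})\cup\{a_1v_1,b_1v_2,a_2v_2,b_2v_3\}$ contains the cycle $v_1$--$C_1$--$v_2$--$C_2$--$v_3$--$v_1$ and is genuinely 2EC, and $cost(S)-cost(S')=-2+2+2+\frac12-2=\frac12>0$.
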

\begin{proof}
By assumption all the edges of $G$ between different components of $S$ involve adjacent components in $G'$. Furthermore, as remarked before, by construction if $C_1$ and $C_2$ are adjacent in $G'$, then exactly one of them is light and the other is heavy. Notice that there must exist a light component $C=\{v_1,v_2,v_3\}$ of degree at least $2$  in $G'$: indeed otherwise $G'$ would be a star graph whose center is a heavy component and whose leaves are light components (hence in particular triangles). Since by assumption there are no edges between the leaf components of $G'$, this implies that $S$ is core-triangle, contradicting the assumptions. 

Suppose by contradiction that there exists a component $C'$ adjacent to $C$ in $G'$ such that $C'$ is adjacent only to two nodes in $V(C)$, say $v_1$ and $v_2$. Then $\{v_1,v_2\}$ is a 2-vertex-cut and the edge $v_1v_2$ is irrelevant, contradicting the assumption that $G$ is structured. Therefore we can choose any two distinct components $C_{1}$ and $C_{2}$ adjacent to $C$ in $G'$, and two edges $\{a_{1}v_1,b_1v_2\}$ (resp., $\{a_2v_2,b_2v_3\}$) between $C_{1}$ (resp., $C_2$) and $C$. Consider $S':=(S\setminus \{v_1v_2,v_2v_3\})\cup \{a_{1}v_1,b_1v_2,a_2v_2,b_2v_3\}$. Notice that $S'$ satisfies (1) and (2), in particular it contains a 2EC component $C'$ spanning the nodes $V(C)\cup V(C_{1})\cup V(C_{2})$. Furthermore $S'$ satisfies (3) since 
\begin{align*}
& cost(S)-cost(S') =|S|-|S'|+cr(S)-cr(S')\\  = & -2+cr(C_{1})+cr(C_{2})+cr(C)-cr(C')
 = -2+2+2+\frac{1}{2}-2>0.
\end{align*}
\end{proof}

\begin{proof}[Proof of Lemma \ref{lem:caseB:gluingStep}]
The claim follows directly from the above construction and Lemmas \ref{lem:caseB:mergingCycle}, \ref{lem:caseB:nonTreeEdge}, and \ref{lem:caseB:CoreTriangle}. 
\end{proof}

%The proof of Lemma \ref{lem:caseB:gluingStep} follows directly from the above construction and Lemmas \ref{lem:caseB:mergingCycle}, \ref{lem:caseB:nonTreeEdge}, and \ref{lem:caseB:CoreTriangle}. 

\begin{figure}
\begin{center}
\begin{tikzpicture}[scale=1.5]

\tikzset{vertex/.style={draw=black, very thick, circle,minimum size=0pt, inner sep=2pt, outer sep=2pt}
}

\node[right=0.5pt] () at (0,0.5){$e$};

\begin{scope}[every node/.style={vertex}]

\node[fill=red] (u) at (0,0){};
\node[fill=red] (v) at (0,1){};
\node [fill=red] (1) at (-0.5,1.5){};
\node [fill=red] (2) at (-1,1){};
\draw[ultra thick] (1) to (2) to (v) to (1);

\node[fill=red] (3) at (-2.5,1){};
\node[fill=red] (4) at (-3,0.7){};
\node[fill=red] (5) at (-3,0){};
\node[fill=red] (6) at (-2,0){};
\node[fill=red] (7) at (-2,0.7){};
\draw[ultra thick] (3) to (4) to (5) to (6) to (7) to (3);

\node[fill=blue!50!white] (11) at (2,0){};
\node[fill=blue!50!white] (12) at (1.2,0){};
\node[fill=blue!50!white] (13) at (1.2,0.6){};
\node[fill=blue!50!white] (14) at (2,0.6){};
\draw[ultra thick] (11) to (12) to (13) to (14) to (11);

\node[fill=black!10!white] (8) at (1.2,1.3){};
\node[fill=black!10!white] (9) at (2.2,1.3){};
\node[fill=black!10!white] (10) at (1.7,1.8){};
\draw[ultra thick] (8) to (9) to (10) to (8);

\node[fill=black!50!green] (20) at (3,0.9){};
\node[fill=black!50!green] (21) at (3.5,1.4){};
\node[fill=black!50!green] (22) at (4,0.9){};
\node[fill=black!50!green] (23) at (3.5,0.4){};
\draw[ultra thick] (20) to (21) to (22) to (23) to (20);

\node[fill=violet] (24) at (0.5,1.8){};

\draw[dashed,red] (u) to (v);
\draw [dashed,red] (u) to (2);
\draw [dashed,red] (2) to (6);
\draw [dashed,red] (1) to (3);

\draw[dashed,blue] (1) to (24) to (10);
\draw[dashed,blue] (8) to (14);
\draw[dashed,blue] (12) to (u);

\end{scope}
\begin{scope}[very thick]
\end{scope}

\begin{scope}[very thick, densely dashed]
%\draw (1) -- (a);
\end{scope}

\end{tikzpicture}
\end{center}
\caption{The black edges are the current solution $S$. The forest $G'$ has $5$ trees, which are identified by nodes of distinct colors. The red edges correspond to expensive merging cycles identified in previous steps of the construction. Let $\Pi$ the partition of the node set induced by distinct colors. The blue dashed edges form a nice cycle $N$ of $\Pi$. In particular if two edges in $N$ touch a given set $V_i$ in the partition, this happens at distinct nodes of $V_i$ unless $|V_i|=1$. A merging cycle $M$ can be easily derived from $N$ using the red edges. In particular, $M=N\cup \{e\}$ in this case, which is cheap since it involves at least $3$ components of $S$.}
\label{fig:NiceCycle}

\end{figure}
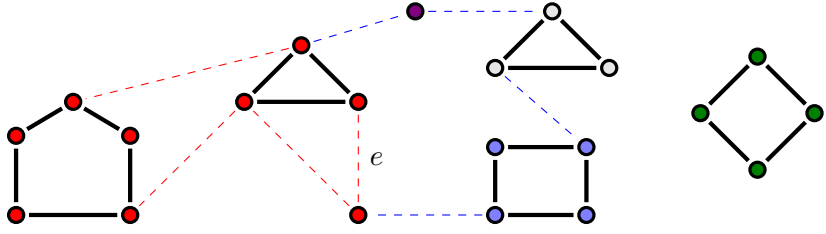

It remains to prove Lemma \ref{lem:caseB:mergingCycle}. It is convenient to introduce the following definition (see Figure \ref{fig:NiceCycle}).
\begin{definition}[Nice Cycle and Path]\label{def:nicePathCycle}
Let $\Pi=(V_1,\ldots,V_k)$, $k\geq 2$, be a partition of the node-set of a graph $G$. A nice cycle (resp. nice path) $N$ of $\Pi$ is a subset of edges with endpoints in distinct subsets of $\Pi$ such that: (1) $N$ induces one cycle of length at least $2$ (resp., one path of length at least $1$) in the graph obtained from $G$ by collapsing each $V_i$ into a single node; (2) given any two edges of $N$ incident to some $V_i$, these edges are incident to distinct nodes of $V_i$ unless $|V_i|=1$.  
\end{definition} 
Notice that, considering the partition $\Pi$ induced by the node sets of the 2EC components of $S$, a nice cycle for $\Pi$ is a merging cycle for $S$. More important for our goals, if we consider the partition $\Pi(G')$ induced by the nodes of the components in each tree of $G'$ (assuming that it contains at least $2$ trees), then we can derive a merging cycle $M$ for $S$ from a nice cycle $N$ for $\Pi(G')$ via the following lemma (see Figure \ref{fig:NiceCycle}).  
%Suppose that $G'$ contains at least two trees, and let $\Pi(G')$ be the partition of $V$ induced by the union of the node sets of the connected components corresponding to each tree of $G'$. The following lemma shows how to convert a nice cycle of $\Pi(G')$ into a merging cycle of $S$ (see Figure \ref{fig:NiceCycle}).
\begin{lemma}\label{lem:caseB:NiceToMerging}
Given a nice cycle $N$ of $\Pi(G')=\{V_1,\ldots,V_k\}$, $k\geq 2$, in polynomial time one can compute a merging cycle $M$ of $S$ incident to the components of at least two distinct trees of $G'$. 
%This cycle $M$ is incident to all the 2EC components of $S$ which are incident to $N$. Furthermore, suppose that $N$ is incident to two nodes $v$ and $u$ of some $V_i$, belonging to the 2EC components $C_v$ and $C_u$ of $S$, resp. (possibly $C_v=C_u$). Then $M$ is also incident to all the connected components of $S$ along the path in $G'$ between $C_v$ and $C_u$.  
\end{lemma}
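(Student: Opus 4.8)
The goal is to convert a nice cycle $N$ of the partition $\Pi(G')$ into a merging cycle $M$ of $S$, where $\Pi(G')$ groups together all nodes belonging to components lying in the same tree of $G'$. The rough idea is: $N$ already visits several trees of $G'$ cyclically, entering and leaving each visited tree at two distinct nodes (unless the tree is a single lonely component). Within each visited tree $T$, the two entry/exit nodes lie in (possibly different) components of $T$; since $T$ is a tree of $G'$ whose edges are expensive merging cycles, we can "route" through $T$ along the unique tree-path connecting the two relevant components, using the two edges $a_i,b_i$ associated with each $G'$-edge on that path. Concatenating these routes with the edges of $N$ produces a closed walk in the component-collapsed graph of $S$; we then extract from it a simple cycle, which will be the merging cycle $M$. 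Finally, we must check the light-component condition (2) of Definition~\ref{def:mergingCycle}.

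First I would set up notation: let $N = \{f_1, \dots, f_r\}$ traverse the trees $T_{j_1}, T_{j_2}, \dots, T_{j_r}, T_{j_1}$ of $G'$ cyclically, where $f_\ell$ goes from a node $p_\ell \in T_{j_\ell}$ to a node $q_{\ell+1} \in T_{j_{\ell+1}}$ (indices mod $r$). By condition (2) of the nice-cycle definition, if $|V(T_{j_\ell})| > 1$ then $p_\ell \neq q_\ell$, i.e. the arrival node $q_\ell$ and departure node $p_\ell$ inside tree $T_{j_\ell}$ are distinct; let $C^{\mathrm{in}}_\ell$ and $C^{\mathrm{out}}_\ell$ be the $S$-components of $T_{j_\ell}$ containing $q_\ell$ and $p_\ell$ respectively. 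If these two components coincide, the route inside $T_{j_\ell}$ is empty; otherwise I take the unique path $C^{\mathrm{in}}_\ell = D_0, D_1, \dots, D_s = C^{\mathrm{out}}_\ell$ in the tree $G'$, and for each consecutive pair $D_{t}D_{t+1}$ I have the two edges $a,b$ provided by the construction of $G'$ (these come from the expensive merging cycle that caused $D_tD_{t+1}$ to be added to $F$). I greedily pick one of $\{a,b\}$ at each step so that the chosen edge at $D_t$ is not incident to the same node of $D_t$ as the previously chosen edge, which is possible precisely because when a component along this path is light the two edges $a,b$ hit distinct nodes of it — exactly as in the proof of Lemma~\ref{lem:caseB:nonTreeEdge}. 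This yields, inside tree $T_{j_\ell}$, a simple path (in the $S$-component-collapsed graph) from $C^{\mathrm{in}}_\ell$ to $C^{\mathrm{out}}_\ell$ respecting the distinctness-of-endpoints rule at every light component.

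Next I concatenate: starting from component $C^{\mathrm{out}}_1$, follow $f_1$ to $C^{\mathrm{in}}_2$, route inside $T_{j_2}$ to $C^{\mathrm{out}}_2$, follow $f_2$, and so on, closing up after $f_r$. This is a closed walk $W$ in the graph obtained from $G$ by collapsing the $S$-components. It need not be simple, but since it uses at least two distinct trees of $G'$ (as $N$ is a cycle of $\Pi(G')$ with $k \ge 2$ parts in its support), $W$ is not a single loop; I extract a simple cycle $M$ from $W$ that still spans components of at least two distinct trees — concretely, $M$ must contain at least one edge $f_\ell$ of $N$, whose two endpoints lie in different trees, so $M$ is incident to components of $\ge 2$ trees. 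It remains to verify condition (2) of Definition~\ref{def:mergingCycle}: if two edges of $M$ are incident to the same light (triangle) component $C$, they hit distinct nodes of $C$. The two edges in question are either both internal to one tree (handled by the greedy choice above), both from $N$ (handled by condition (2) of the nice cycle, since a light component is a single $S$-component contained in one tree-part $V_i$, and if $|V_i|=1$... here one must note that a light component has $3$ nodes so $|V_i| \ge 3 > 1$ whenever that part contains $C$, so the nice-cycle distinctness applies), or one internal and one from $N$ — this last case is the one to handle carefully, by choosing the first internal edge in each tree-route to avoid the node used by the incoming $N$-edge, again possible since a light component offers two distinct attachment nodes.

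**Main obstacle.** The delicate point is the bookkeeping that guarantees condition (2) at light components across the "seams" where an $N$-edge meets an internal tree-route, and simultaneously the extraction of a genuinely simple merging cycle from the closed walk $W$ without destroying the two-trees property. I would handle this by being careful that each light component $C$ has three available nodes, so at any point where we must attach an edge to $C$ we have enough freedom to avoid at most one previously-used node; and by extracting $M$ as a minimal closed sub-walk still containing an $N$-edge, which automatically keeps it simple and spanning two trees. The rest is routine given the analogous reasoning already carried out in Lemmas~\ref{lem:caseB:nonTreeEdge} and, for the existence of $a_i,b_i$, in the construction of $G'$.
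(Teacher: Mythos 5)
Your overall plan coincides with the paper's: take $M=N$ together with, for each tree-part visited by $N$, a routing path inside that tree of $G'$ between the entry and exit components, built from the pairs $a_i,b_i$ supplied by the expensive merging cycles, chosen greedily as in Lemma~\ref{lem:caseB:nonTreeEdge}. (Your detour through a possibly non-simple closed walk and a cycle-extraction step is unnecessary: since $N$ visits each part of $\Pi(G')$ exactly once, each tree-route is a simple path inside its own tree, the trees are vertex-disjoint, and the route's endpoints are exactly the components hit by the two $N$-edges of that part, the union is already a single cycle in the component-collapsed graph. The extra step is at least harmless, because condition (2) of Definition~\ref{def:mergingCycle} is a condition on pairs of edges and is inherited by subsets, and keeping an $N$-edge keeps incidence to two trees.)

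The genuine gap is at the \emph{exit} seam of each tree-route. Your greedy fixes the first internal edge so as to avoid, at the entry component, the node used by the incoming $N$-edge, and each later internal edge so as to avoid its predecessor; but you never constrain the \emph{last} internal edge relative to the \emph{outgoing} $N$-edge. If the exit component $C^{\mathrm{out}}_\ell$ is light, that last internal edge and the outgoing $N$-edge both touch it, and nothing in your forward greedy prevents them from sharing a node of $C^{\mathrm{out}}_\ell$, which violates condition (2) of Definition~\ref{def:mergingCycle}; the blanket remark that a light component ``offers enough freedom'' does not save you here, because by the time this constraint appears the edge has already been fixed without looking at the outgoing $N$-edge. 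The repair is exactly the paper's: choose the last edge of each routing path so that it avoids the node of the exit component used by the outgoing $N$-edge (the paper phrases this as letting $e''$ play the role of $e_0$ in the choice of $e_{k-1}$ in the construction of Lemma~\ref{lem:caseB:nonTreeEdge}). This is always feasible, and no single edge ever faces two simultaneous constraints, because consecutive components along a path of $G'$ alternate light/heavy: the pair $a,b$ has distinct endpoints only on its light side, so each constraint sits on a light component and is satisfiable, and when the exit component is light its neighbour on the route is heavy, so the predecessor constraint is vacuous and the free choice can be spent on the exit seam (similarly, a length-one route has a light component at exactly one of its two ends, so entry and exit constraints never both bind).
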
  
\begin{proof}
Let $U_1,\ldots,U_q$ be the subsets in the partition incident to $N$. The set $M$ consists of $N$ plus a subset of edges $M_j$ for each set $U_j$ which is constructed as follows. Let $e'$ and $e''$ be the two edges of $N$ incident to $U_j$. Let $C_1$ and $C_k$ be the components of $S$ incident to $e'$ and $e''$, resp., (possibly $k=1$) and let $C_1,C_2,\ldots,C_k$ the set of components along the unique path between $C_1$ and $C_k$ in $G'$. 
If $k=1$, set $M_i=\emptyset$ (Notice that, if $C_1$ is light, the edges $e'$ and $e''$ are incident to distinct nodes of $C_1$ by the definition of nice cycle). Otherwise $M_i$ is constructed analogously to the construction of the edge set $\{e_1,\ldots,e_{k-1}\}$ in the proof of Lemma \ref{lem:caseB:nonTreeEdge} with the following difference: In the definition of $e_1$ the role of $e_0$ is played by $e'$ and in the definition of $e_{k-1}$ the role of $e_0$ is replaced by $e''$ (notice that $e_0$ did not play any role in the definition of the remaining $e_i$'s).
\end{proof}
A nice cycle can be computed efficiently via the following lemma. 
\begin{lemma}\label{lem:caseB:NiceCycle}
Let $\Pi=(V_1,\ldots,V_k)$, $k\geq 2$, be a partition of the node-set of a 2VC simple graph $G$. In polynomial time one can compute a nice cycle $N$ of $\Pi$. 
\end{lemma}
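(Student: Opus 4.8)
The plan is to build a cycle $\hat C$ in the multigraph $\hat G$ obtained from $G$ by contracting each part $V_i$ of $\Pi$ to a single node; equivalently, to choose a set $N$ of edges of $G$, each with endpoints in two distinct parts, which forms a simple cycle of length at least $2$ after contraction, and the remaining requirement of Definition~\ref{def:nicePathCycle} is only that at every traversed part $V_i$ with $|V_i|\ge 2$ the two edges of $N$ at $V_i$ meet it in two distinct vertices. Since $G$ is 2VC it has at least three vertices and no bridge, hence is 2EC, so by repeated application of Fact~\ref{fact:contraction} $\hat G$ is 2EC; as $k\ge 2$, $\hat G$ contains a cycle. The entire difficulty is the distinctness condition. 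I would dispose of two special situations before the general one; throughout, let $B_i\subseteq V_i$ denote the set of \emph{boundary vertices} of $V_i$, i.e.\ those incident to an edge of $G$ leaving $V_i$.

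First, the case $k=2$. Writing $V=V_1\cup V_2$, consider the bipartite graph $F$ of edges with one endpoint on each side and a maximum matching $M$ of $F$. If $|M|\ge 2$, two matching edges $x_1y_1,x_2y_2$ with $x_1\ne x_2\in V_1$ and $y_1\ne y_2\in V_2$ form a nice cycle of length $2$; and $|M|\le 1$ is impossible when $|V_1|,|V_2|\ge 2$, since by K\"onig's theorem a single vertex would then cover all crossing edges and hence be a cut vertex of $G$. If instead one part is a singleton $\{v\}$, then $v$ has at least two distinct neighbours (by simplicity and minimum degree two), necessarily in the other part (which is not a singleton, as $|V(G)|\ge 3$), and these two edges form a nice cycle. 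Second, assume $k\ge 3$ and some non-singleton part $V_i$ has $|B_i|=1$, say $B_i=\{b\}$, so every edge leaving $V_i$ is incident to $b$. Then no nice cycle can traverse $V_i$: it would use two edges of $N$ at $V_i$, both necessarily at $b$, violating distinctness. I would remove $V_i$ and recurse on $G':=G[V\setminus V_i]$ with the partition $(V_j)_{j\ne i}$. The points to check are that $G'$ is again 2VC and connected --- which holds because all of $V_i$ reaches the rest of $G$ only through $b$, so a cut vertex of $G'$ (or a separation of $G'$) would already be a cut vertex (or separation) of $G$ --- and that $|V(G')|\ge 3$, which follows from a short argument showing that otherwise $b$ is a cut vertex of $G$. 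Any nice cycle produced for $G'$ uses no vertex of $V_i$ and is a nice cycle for $\Pi$.

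It remains to treat the main case: $k\ge 3$ and $|B_i|\ge 2$ for every non-singleton part. Here I would pass to the auxiliary graph $G^\ast$ on the vertex set $\bigcup_i B_i$ whose edges are all edges of $G$ between distinct parts, together with a clique on $B_i$ for each non-singleton $V_i$; a cycle of $G^\ast$ that uses at least one crossing edge and in which, around each non-singleton part, the two incident crossing edges meet the part in distinct vertices, contracts to a nice cycle of $\Pi$. To produce such a cycle I would fix one crossing edge $e_0=a_0b_0$ with $a_0\in V_i$, $b_0\in V_j$, and search for a path of $G^\ast$ from $V_i$ to $V_j$ that begins with a crossing edge meeting $V_i$ outside $a_0$, ends with a crossing edge meeting $V_j$ outside $b_0$, and visits no vertex of $V_i$ or $V_j$ in between; the availability of a second boundary vertex at each non-singleton part along the way (a spare ``port''), combined with the 2-vertex-connectivity of $G$, is what forces such a path to exist. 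Closing it with $e_0$ yields the desired cycle.

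I expect the main obstacle to be exactly this last point: ensuring the assembled cycle does not degenerate by re-using a single boundary vertex of some non-singleton part at two places (which would be a forbidden transition). I would rule this out by taking the cycle of minimum length among cycles of $G^\ast$ using a crossing edge, so that a degenerate ``in-and-out at one vertex'' excursion could be short-circuited, contradicting minimality --- or, alternatively, by a direct Menger argument that builds the two arcs of the cycle between split copies of the parts containing $a_0$ and $b_0$, so that distinctness at those two parts is built in from the start and distinctness at the intermediate parts is secured greedily by the spare-port argument, exactly as in the proofs of Lemmas~\ref{lem:caseB:nonTreeEdge} and~\ref{lem:caseB:NiceToMerging}. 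Finally, each ingredient --- maximum matching, a minimum vertex cover via K\"onig, detecting cut vertices, the bounded recursion depth, and the shortest-path/Menger search in $G^\ast$ --- runs in polynomial time, so the whole procedure does.
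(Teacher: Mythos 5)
Your base case $k=2$ is the same as the paper's, and the special case you insert before the main argument is vacuous rather than wrong: if a non-singleton part $V_i$ has a single boundary vertex $b$ and $k\ge 3$, then $b$ separates $V_i\setminus\{b\}$ from $V\setminus V_i$, contradicting 2-vertex-connectivity, so no recursion is ever needed there. The genuine gap is in the main case $k\ge 3$, which is where the entire content of the lemma lies: you never establish that a cycle satisfying the distinctness condition at every traversed non-singleton part exists, and neither of your two proposed fixes closes this. Minimality in $G^{\ast}$ does rule out visiting a part twice (the two visits can be short-circuited through the clique on $B_i$), but it does not rule out a \emph{single-vertex transit} of a non-singleton part, because such a transit is not an excursion that can be shortened. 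Concretely, take $V_1=\{a,b\}$ with $ab\in E$, singleton parts $c,d,p_1,\dots,p_m$, and crossing edges $ac$, $ad$, $cd$, $bp_1$, $p_1p_2,\dots,p_{m-1}p_m$, $p_mc$. This graph is 2VC (a long cycle through $a,b,p_1,\dots,p_m,c$ plus the ear $a$-$d$-$c$), the triangle $a,c,d$ is the unique shortest cycle of $G^{\ast}$ containing a crossing edge, and it is not nice since both of its edges at $V_1$ are incident to $a$, while the nice cycles (which do exist, e.g.\ $\{ac,cp_m,p_mp_{m-1},\dots,p_1b\}$ closed through $V_1$) are strictly longer. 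So ``take a shortest cycle through a crossing edge'' provably fails.

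Your alternative, a ``direct Menger argument'' with distinctness at intermediate parts ``secured greedily by the spare-port argument,'' is exactly the unproved step, not a proof of it: Menger gives two internally disjoint paths between two prescribed terminals, but the requirement here is a transition constraint at \emph{every} part the cycle traverses, and a greedy choice of exit port at an intermediate part can force the walk to close up later at the very vertex it must avoid, after which no local repair is available. This is precisely the configuration the paper's proof is designed to handle: it grows the sequence of parts greedily just as you do, but when the extension closes at a forbidden vertex $u_0$ it does not stop -- it records the visited parts together with two nice paths from $u_0$ to each of them (an ``almost-nice cycle''), merges those parts into one block to obtain a strictly coarser partition, recurses to get a nice cycle of the coarser partition, and then patches that cycle using the stored paths (or grows a larger almost-nice cycle and repeats). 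Without some mechanism of this kind -- a way to reroute globally when the greedy closure is forbidden -- your construction is a plan, not a proof, and the lemma remains unestablished at its crux.
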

Given the above claims, it is easy to conclude the proof of Lemma \ref{lem:caseB:mergingCycle}.
\begin{proof}[Proof of Lemma \ref{lem:caseB:mergingCycle}]
It is sufficient to compute a nice cycle $N$ of $\Pi(G')$ via Lemma \ref{lem:caseB:NiceCycle}, and then convert it into a merging cycle $M$ via Lemma \ref{lem:caseB:NiceToMerging}.
\end{proof}
It remains to prove Lemma \ref{lem:caseB:NiceCycle} (see Figure \ref{fig:ProofOfNiceCycle}). We remark that the claim of this lemma sounds rather basic and natural, hence it is reasonable that it is already known in the literature (possibly with a simpler/more elegant proof).

%We need the following definition and technical lemma (see Figure \ref{fig:NiceCycle}). We find the lemma below rather natural, hence it is reasonable that it is already known in the literature (possibly with a simpler/more elegant proof).
%\begin{definition}[Nice Cycle and Path]\label{def:nicePathCycle}
%Let $\Pi=(V_1,\ldots,V_k)$, $k\geq 2$, be a partition of the node-set of a graph $G$. A nice cycle (resp. nice path) $N$ of $\Pi$ is a subset of edges with endpoints in distinct subsets of $\Pi$ such that: (1) $N$ induces one cycle of length at least $2$ (resp., one path of length at least $1$) in the graph obtained from $G$ by collapsing each $V_i$ into a single node; (2) given any two edges of $N$ incident to some $V_i$, these edges are incident to distinct nodes of $V_i$ unless $|V_i|=1$.  
%\end{definition} 
%

%\begin{lemma}\label{lem:caseB:NiceCycle}
%Let $\Pi=(V_1,\ldots,V_k)$, $k\geq 2$, be a partition of the node-set of a 2VC simple graph $G$. In polynomial time one can compute a nice cycle $N$ of $\Pi$. 
%\end{lemma}
\begin{proof}[Proof of Lemma \ref{lem:caseB:NiceCycle}]
We use induction on $|\Pi|$ to obtain the desired cycle. The base case is $|\Pi|=2$. Since $G$ is 2VC, if $|V_1|,|V_2|>1$, there is a matching $N$ of size $2$ between $V_1$ and $V_2$, which induces a nice cycle. If one of $V_1$ and $V_2$, say $V_1$, has cardinality $1$, then there are at least two edges $e_1$ and $e_2$ between $V_1$ and $V_2$ (with distinct endpoints in $V_2$ since $G$ is simple). These edges form a nice cycle $N$.

Now assume $|\Pi|>2$. We need the following (rather involved) definition of almost-nice cycle of a partition $\Pi'$: 
\begin{definition}
Let $\Pi'$ be a partition of $V$, $U_0,U_1,\ldots,U_r\in \Pi'$ be distinct sets, $|U_0|\geq 2$, and $u_0\in U_0$. Furthermore for every $1\le i\le r$ there exist two nice-paths (not necessarily distinct) $P_{i,1}$ and $P_{i,2}$ from $U_0$ to $U_i$ such that:
\begin{itemize}
    \item $P_{i,1}$ and $P_{i,2}$ are incident to $u_0$ and all their nodes belong to $U_0\cup \dots \cup U_{r}$.
    \item If $|U_i|>1$ then $P_{i,1}$ and $P_{i,2}$ are incident to different nodes of $U_i$.
\end{itemize}
In this case we say that $A=(U_0,u_0, (U_1,P_{1,1},P_{1,2}), ...,(U_r,P_{r,1},P_{r,2}))$ is an almost-nice cycle of $\Pi'$.
\end{definition}
We start by constructing an almost-nice cycle for $\Pi$ involving at least two subsets of $\Pi$  (or finding a nice cycle, in which case \fab{the claim holds}). To this aim, we build a sequence $V_0,V_1,\ldots,V_k$ of distinct sets in $\Pi$, and a sequence of edges $e_1,\ldots,e_{k}$ such that $e_i$ has one endpoint in $V_{i-1}$ and the other in $V_i$. Furthermore, the two edges $e_i$ and $e_{i+1}$, $1\leq i<k$, are incident to distinct nodes of $V_i$ if $|V_i|>1$. We set $V_0$ to be any set in $\Pi$. Then we take as $e_1$ any edge with exactly one endpoint in $V_0$ and we let $V_1$ be the set containing the other endpoint of $e_1$. Given $V_0,\ldots,V_k$ and $e_1,\ldots,e_k$, we proceed as follows. We take any edge $e_{k+1}=vu$ with $v\in V_k$ and $u\notin V_k$, with the extra condition that $v$ is not an endpoint of $e_k$ if $|V_k|> 1$. Notice that such edge $e_{k+1}$ must exist since the graph is 2VC. Now if $u\notin V_0\cup \ldots\cup V_{k-1}$, we expand the sequence by adding the edge $e_{k+1}$ and the set $V_{k+1}\in \Pi$ containing $u$. Notice that this can happen at most $|\Pi|-1$ times since $k+1\leq |\Pi|$. Otherwise, let $u\in V_{j}$ for some $0\leq j<k$. If $|V_j|=1$, $e_{j+1},\ldots,e_{k+1}$ is a nice cycle and \fab{the claim holds}. The same holds if $|V_j|>1$ and $u$ is not incident to $e_{j+1}$. The only remaining case is that $u$ is incident to $e_{j+1}$. In this case we set $u_0=u$ and $(U_0,\ldots,U_{r})=(V_j,\ldots,V_k)$. For each $U_{i-j}$, the edges $\{e_{j+1},\cdots,e_{k}\}$ can be partitioned into two nice paths $P_{i-j,1}$ and $P_{i-j,2}$. These paths involve only nodes in $U_0\cup \ldots \cup U_r$, they have 
one endpoint in $u_0$ and the other in $U_{i-j}$. Furthermore these two paths have distinct endpoints in $U_{i-j}$ if $|U_{i-j}|> 1$. Thus $A=(U_0,u_0,(U_1,P_{1,1},P_{1,2}),...,(U_{r},P_{r,1},P_{r,2}))$ is an almost-nice cycle involving at least $2$ sets in $\Pi$.

Given the current almost-nice cycle $A=(U_0,u_0,(U_1,P_{1,1},P_{1,2}),...,(U_r,P_{r,1},P_{r,2}))$, $r\geq 1$, we either find a nice cycle $N$, hence \fab{the claim holds}, or we find a larger almost-nice cycle $A'$ (incident to more subsets of $\Pi$). Clearly this process ends within a polynomial number of steps. 

Given $A$, we proceed as follows. From $\Pi$ we obtain a new partition $\Pi'$ of $V$ by replacing $U_0, \ldots, U_r$ with their union $U'=U_0\cup \cdots \cup U_r$. We observe that $|\Pi'|=k-r<k$. We have the following cases:

\smallskip\noindent {\bf (1)} $|\Pi'|=1$. In this case since $u_0$ is not a cut-node in $G$, then there must be an edge $e=vw\in E$ such that $v\in U_0\setminus\{u_0\}$ and $w\in U_i$ for some $i\in \{1,2,...,r\}$. If $|U_i|=1$ then $P_{i,1}\cup \{e\}$ forms a nice cycle. Otherwise, i.e. $|U_i|>1$, at least one path among $P_{i,1}$ and $P_{i,2}$, say $P_{i,1}$, is not incident to $w$. Therefore $P_{i,1}\cup \{e\}$ is a nice cycle.

\smallskip\noindent {\bf (2)} $|\Pi'|>1$. By inductive hypothesis we can compute a nice cycle $N'$ of $\Pi'$. We distinguish the following subcases:

\smallskip\noindent {\bf (2.a)} $N'$ is not incident to $U'$ or is incident to a unique $U_i\subseteq U'$. In this case $N'$ is also a nice cycle for $\Pi$. 

\smallskip\noindent {\bf (2.b)} $N'$ is incident to $U'$ and it is incident to a node $v\in U_0\setminus\{u_0\}$. Let $w\in U_i\subseteq U'$ be the other node of $U'$ which $N'$ is incident to. Notice that $i\neq 0$ by case (2.a). If $|U_i|=1$, then $N:=P_{i,1}\cup N'$ is a nice cycle w.r.t. $\Pi$. If $|U_i|\ge 2$ at least one of $P_{i,1}$ and $P_{i,2}$, say $P_{i,1}$, is not incident to $w$. Then $N:=P_{i,1}\cup N'$ is a nice cycle w.r.t. $\Pi$.

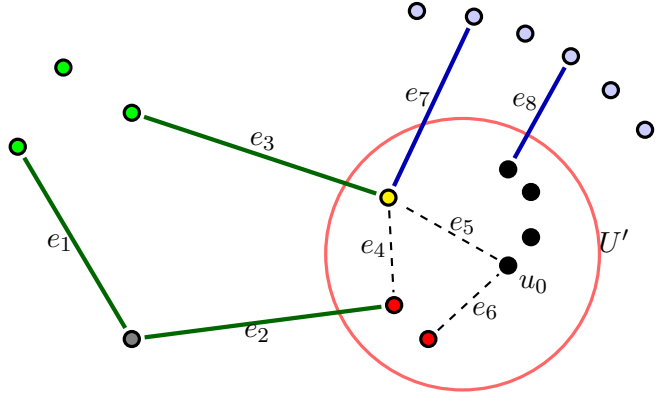
\begin{figure}
\begin{center}
\begin{tikzpicture}[scale=1.5]

\tikzset{vertex/.style={draw=black, very thick, circle,minimum size=0pt, inner sep=2pt, outer sep=2pt}
}

%\node[right] () at (0,0) {(a)};

\begin{scope}[every node/.style={vertex}]
\filldraw[color=red!60, fill=red!0, very thick](3.9,0.75) circle (1.2);
\node[fill=gray] (a1) at (1,0) {};

\node [fill=green] (b1) at (0,1.7) {};
\node [fill=green] (b2) at (1,2) {};
\node [fill=green] (b3) at (0.4,2.4) {};

\node [fill=red] (c1) at (3.3,0.3){};
\node [fill=red] (c2) at (3.6,0){};

\node [fill=yellow] (d1) at (3.25,1.25) {};

\node [fill=black] (e1) at (4.3,0.65) {};
\node [fill=black] (e2) at (4.5,0.9){};
\node [fill=black] (e3) at (4.5,1.3){};
\node [fill=black] (e4) at (4.3,1.5){};

\node [fill=blue!20!white] (f0) at (5.5,1.85){};
\node [fill=blue!20!white] (f1) at (5.2,2.2){};
\node [fill=blue!20!white] (f3) at (4.85,2.5){};
\node [fill=blue!20!white] (f2) at (4.45,2.7){};
\node [fill=blue!20!white] (f4) at (4,2.85){};
\node [fill=blue!20!white] (f5) at (3.5,2.9){};

\draw[ultra thick, green!40!black] (d1) to (b2);
\draw[ultra thick, green!40!black] (b1) to (a1) to (c1);

\draw[ultra thick, blue!70!black] (f3) to (e4);
\draw[ultra thick, blue!70!black] (f4) to (d1);
% \draw[dashed, red] 

\draw[black,dashed,thick] (c2) to (e1) to (d1) to (c1);

\end{scope}
\node[below right=0.1pt] () at (e1) {$u_0$};
\node () at (0.37,0.85) {$e_1$};
\node () at (2.1,0.05) {$e_2$};
\node () at (2.15,1.75) {$e_3$};

\node () at (3.12,0.8){$e_4$};
\node () at (3.9,1){$e_5$};
\node () at (4.1,0.25){$e_6$};

\node () at (3.51,2.14){$e_7$};
\node () at (4.45,2.1){$e_8$};

\node () at (5.23,0.87){$U'$};

\begin{scope}[red!80!black, very thick]

\end{scope}

\begin{scope}[very thick]
\end{scope}

\begin{scope}[densely dashed]
\end{scope}

\end{tikzpicture}
\end{center}
\caption{
 The initial partition $\Pi$ can be identified by the colors. Observe that $|\Pi|=6$. The circle corresponds to items that form an almost-nice cycle $A=(U_0,u_0,(U_1,P_{1,1},P_{1,2}),(U_2,P_{2,1},P_{2,2}))$, with $U_0$ being the set of black nodes, $U_1$ being the set of red nodes and $U_2$ is a singleton set containing the yellow node. Here $P_{1,1}=e_6$, $P_{1,2}=e_4,e_5$ and $P_{2,1}=P_{2,2}=e_5$. Now we obtain the new partition $\Pi'$ from $\Pi$ by simply merging the sets of $A$ into a single set $U'$. Observe that $|\Pi'|=4>1$. The blue (resp., green) edges correspond to a nice cycle $N_1$(resp., $N_2$) of $|\Pi'|$; The edge $e_8\in N_1$ is incident to a node in $U_0\setminus \{u_0\}$ and hence using $N_1$ and $A$, we can find the nice cycle $N'_1=\{e_8,e_5,e_7\}$ of $\Pi$ (This is case (2.b) in the proof of Lemma~\ref{lem:caseB:NiceCycle}). $N_2$ is not incident on any node of $U_0\setminus \{u_0\}$ and it is incident on $U'$ (this is case (2.c) in the proof of Lemma~\ref{lem:caseB:NiceCycle}). In this case we can form a larger almost-nice cycle $A'$ which includes the set $U_3$ of green nodes and the singleton set $U_4$ containing the gray node. In particular $P_{4,1}=P_{4,2}=e_2,e_6$, $P_{3,1}=e_1,e_2,e_6$, and $P_{3,2}=e_3,e_5$.
\label{fig:ProofOfNiceCycle}
}
\end{figure}

\smallskip\noindent {\bf (2.c)} $N'$ is incident to $U'$ and it is not incident to any node in $U_0\setminus\{u_0\}$. Let $U',U'_1,...,U'_s$ be the subsets of $\Pi'$ incident to $N'$. We next show how to obtain a larger almost-nice cycle 
$$
A'=(U_0,u_0,(U_1,P_{1,1},P_{1,2}),...,(U_r,P_{r,1},P_{r,2}),(U'_1,P'_{1,1},P'_{1,2}),...,(U'_s,P'_{s,1},P'_{s,2}).
$$ 
For every  
$j\in \{1,...,s\}$, we construct two nice paths $P'_{j,1}$ and $P'_{j,2}$ of $\Pi$ between $U_0$ and $U'_j$, which are incident to $u_0$ and are not incident to the same node of $U'_j$ if $|U'_j|>1$ as follows. Suppose first that $U'_j=\{w\}$. In this case we set $P'_{j,1}=P'_{j,2}=Q'\circ Q''$, defined as follows. We let $Q'\subset N'$ be any nice path from $U'_j$ to $U'$. Let $U_q\subseteq U'$ be the subset of $\Pi$ incident to $Q'$. If $U_q=U_0$ (in particular $Q'$ must be incident to $u_0$), $Q''=\emptyset$. Otherwise, we choose $Q''\in \{P_{q,1},P_{q,2}\}$ so that, if $|U_q|>1$, the last edge of $Q'$ and the first edge of $Q''$ are incident to distinct nodes of $U_q$. 

Suppose next that $|U'_j|>1$. In this case we consider the two nice paths $Q'_1$ and $Q'_2$ from $U'_j$ to $U'$ starting at two distinct nodes of $U'_j$ which are naturally induced by $N'$. For each such $Q'_i$, we define $Q''_i$ in the same way as in the previous case. Finally we set $P'_{j,1}=Q'_1\circ Q''_1$ and $P'_{j,2}=Q'_2\circ Q''_2$. 
\end{proof}

\section{Case of Few Triangles}
\label{sec:fewTriangles}

Let us start with the omitted proof of Lemma \ref{lem:caseA:initialCost}, \amEdit{which} we restate here for reader's convenience.
\lemmaCaseAinitialCost*
\begin{proof}
Let us initially assign $\frac{1}{3}$, $\frac{1}{4}$ and $\frac{3}{10}$ credits to each edge in $H$ which belongs to a triangle 2EC component, which is a bridge, and to the remaining edges, resp. We show how to use these credits to assign the credits according to the \fab{assignment scheme from Section \ref{sec:fewTrianglesOverview}}. The claim then follows since
$$
cost(H)=|H|+cr(H)\leq |H|+(\frac{1}{3}t+\frac{1}{4}b+\frac{3}{10}(1-t-b))|H|.
$$ 
Each 2EC component $C$ of $H$ retains the credits of its edges. The number of such credits is $\frac{1}{3}\cdot 3=1$ if $C$ is a triangle, and $\frac{3}{10}|E(C)|$ credits otherwise. This is sufficient to satisfy case (1) of the credit assignment scheme. Each bridge retains its credits: this satisfies case (3) of the credit assignment scheme.

Each 2EC block $B$ retains $1$ credit of the total amount of credits assigned to its edges, and the remaining credits of the mentioned edges are assigned to the connected component $C$ containing $B$. Notice that $|E(B)|\geq 4$ since $H$ is canonical, hence the edges of $B$ have at least $\frac{3}{10}\cdot 4>1$ credits, which is sufficient to assign one credit to $B$: this satisfies case (4) of the credit assignment scheme.  

Next consider any connected component $C$ of $H$ which is not 2EC. Notice that $C$ contains at least $2$ leaf blocks, namely blocks with exactly one edge of $C$ (which is a bridge) incident to them. Let $B_1$ and $B_2$ be any two such blocks. Since $H$ is canonical, each such $B_i$ contains at least $6$ edges. Therefore $C$ collects from $B_1$ and $B_2$ at least $2\cdot 6\cdot \frac{3}{10}-2>1$ credits: hence also case (2) of the credit assignment scheme is satisfied. 
\end{proof}

The proofs of the Bridge-Covering Lemma \ref{lem:caseA:bridgeCovering} and the Gluing Lemma \ref{lem:caseA:gluing} are given in the next two subsections.

\subsection{Bridge-Covering}

In this section we prove Lemma \ref{lem:caseA:bridgeCovering}, which we restate next.
 \lemmaCaseAbridgeCovering*

%In this section we complete the description of the Bridge-Covering stage by proving Lemma \ref{lem:caseA:bridgeCovering}. 

Let $C$ be any connected component of $S$ containing at least one bridge (i.e., $C$ is not 2EC). Let $G_C$ be the multi-graph obtained from $G$ by contracting into a single node each block $B$ of $C$ and each connected component $C'$ of $S$ other than $C$. Let $T_C$ be the tree in $G_C$ induced by the bridges of $C$: we call the nodes of $T_C$ corresponding to blocks \emph{block nodes}, and the remaining nodes of $T_C$ \emph{lonely nodes}. Observe that the leaves of $T_C$ are necessarily block nodes (otherwise $S$ would not be a 2-edge-cover). 

\amEdit{at a high leve}l, in this stage of our construction we will transform $S$ into a new solution $S'$ containing a component $C'$ spanning the nodes of $C$ (and possibly the nodes of some other components of $S$). Furthermore, no new bridge is created and at least one bridge $e$ of $C$ is not a bridge of $C'$ (intuitively, the bridge $e$ is covered). We remark that each component of the initial (canonical) 2-edge-cover $H$ which is not 2EC contains at least 12 nodes, and we only possibly merge together components in this stage of the construction. As a consequence Invariant \ref{inv:2ECcomponent} is preserved (in particular, if $C'$ happens to be 2EC, it contains at least 12 edges).

%The main building block in our construction is a \emph{bridge-covering path} which is defined as follows. Let us fix from now on an arbitrary component $C$ of $S$ which is \emph{not} 2EC. Let $G_C$ be the graph obtained from $G$ by contracting into one node each block of $C$ and each component $C'$ other than $C$. Let $T_C$ be the tree in $G_C$ corresponding to $C$ (in particular the edges of $T_C$ are the bridges of $C$): we call \emph{block nodes} the nodes of $T_C$ corresponding to blocks, and \emph{lonely nodes} the remaining nodes of $T_C$. Observe that the leaves of $T_C$ are necessarily block nodes (otherwise $S$ would not be a 2-edge-cover). 

A \emph{bridge-covering path} $P_C$ is any path in $G_C\setminus E(T_C)$ with its (distinct) endpoints $u$ and $v$ in $T_C$, and the remaining (internal) nodes outside $T_C$. Notice that $P_C$ might consist of a single edge, possibly parallel to some edge in $E(T_C)$. Augmenting $S$ along $P_C$ means adding the edges of $P_C$ to $S$, hence obtaining a new 2-edge-cover $S'$. Notice that $S'$ obviously has fewer bridges than $S$: in particular all the bridges \fab{of $S$} along the $u$-$v$ path in $T_C$ are \fab{not bridges in $S'$ (we also informally say that such bridges are \emph{removed})}, and \fab{the bridges of $S'$ are a subset of the bridges of $S$}. Let us analyze $cost(S')$. Suppose that the distance between $u$ and $v$ in $T_C$ is $br$ and such path contains $bl$ blocks. 
%Let also $cm$ be the number of internal nodes of $P_C$ (corresponding to components other than $C$). 
Then the number of edges w.r.t. $S$ grows by $|E(P_C)|$. The number of
credits w.r.t. $S$ decreases by at least $\frac{1}{4}br+bl+|E(P_C)|-1$ since we remove $br$ bridges, $bl$ blocks and $|E(P_C)|-1$ components (each one bringing at least one credit). However the number of credits also grows by $1$ since we create a new block $B'$ (which needs $1$ credit) or a new 2EC component $C'$ (which needs $1$ additional credit w.r.t. the credit of $C$). Altogether $cost(S)-cost(S')\geq \frac{1}{4}br+bl-2$. We say that $P_C$ is cheap if the latter quantity is non-negative, and expensive otherwise. In particular $P_C$ is cheap if it involves at least $2$ block nodes or $1$ block node and at least $4$ bridges. Notice that a bridge-covering path, if \fab{at least one such path exists}, can be computed in polynomial time. 

Before proving Lemma \ref{lem:caseA:bridgeCovering} we need the following two technical lemmas (see Figures \ref{fig:LBonSizeOfR(W)} and \ref{fig:BridgeCoveringPathsIntersection}). We say that a node $v\in V(T_C)\setminus \{u\}$ is reachable from $u\in V(T_C)$ if there exists a bridge-covering path between $v$ and $u$. Let $R(W)$ be the nodes in $V(T_C)\setminus W$ reachable from some node in $W\subseteq V(T_C)$, and let us use the shortcut $R(u)=R(\{u\})$ for $u\in V(T_C)$. Notice that $v\in R(u)$ iff $u\in R(v)$.

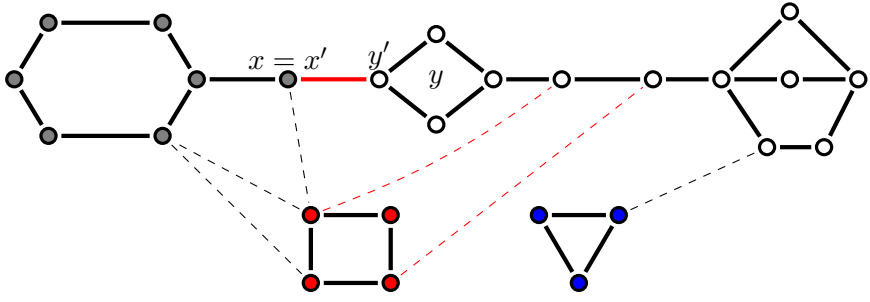
\begin{figure}
\begin{center}
\begin{tikzpicture}[scale=1.5]

\tikzset{vertex/.style={draw=black, very thick, circle,minimum size=0pt, inner sep=2pt, outer sep=2pt}
}

%\node[right] () at (0,0) {(a)};

\begin{scope}[every node/.style={vertex}]
\node[fill=gray] (u1) at (0,0) {};
\node[fill=gray] (u4) at (-1.6,0){};
\node[fill=gray] (u2) at (-0.3,-0.5) {};
\node[fill=gray] (u6) at (-0.3,0.5){};
\node[fill=gray] (u3) at (-1.3,-0.5){};
\node[fill=gray] (u5) at (-1.3,0.5){};
\draw[ultra thick] (u1) to (u2) to (u3) to (u4) to (u5) to (u6) to (u1);

\node[fill=gray] (x) at (0.8,0) {};
\node (y) at (1.6,0){};
\node (y') at (2.6,0){};
\node (z) at (2.1,-0.4){};
\node (z') at (2.1,0.4){};
\node (w1) at (3.2,0) {};
\node (w2) at (4,0) {};

\node (a1) at (4.6,0) {};
\node (a2) at (5.2,0) {};
\node (a3) at (5.8,0){};
\node (a4) at (5.2,0.6){};
\node (a5) at (5,-0.6){};
\node (a6) at (5.5,-0.6){};

\node[fill=red] (r1) at (1,-1.2){};
\node[fill=red] (r2) at (1.7,-1.2){};
\node[fill=red] (r3) at (1.7,-1.8){};
\node[fill=red] (r4) at (1,-1.8){};

\draw[ultra thick] (r1) to (r2) to (r3) to (r4) to (r1);

\draw[dashed,bend right=8,red] (r1) to (w1);
\draw[dashed,red] (r3) to (w2);
\draw[dashed] (x) to (r1) to (u2) to (r4);

\draw[ultra thick] (u1) to (x);
\draw[ultra thick,red] (x) to (y);
\draw[ultra thick] (y) to (z') to (y') to (w1) to (w2) to (a1) to (a4) to (a3) to (a2) to (a1) to (a5) to (a6) to (a3);
\draw[ultra thick] (y) to (z) to (y');

\node[fill=blue] (s1) at (3,-1.2){};
\node[fill=blue] (s2) at (3.7,-1.2){};
\node[fill=blue] (s3) at (3.35,-1.8){};
%\node[fill=blue] (s4) at (1,-1.8){r4};

\draw[ultra thick] (s1) to (s2) to (s3) to (s1);

\draw[dashed] (s2) to (a5);

\end{scope}

%\node () at (1.2,0.15){e};
\node[above] () at (x){$x=x'$};
\node[above] () at (y){$y'$};
\node () at (2.1,0){$y$};

\begin{scope}[red!80!black, very thick]

\end{scope}

\begin{scope}[very thick]
\end{scope}

\begin{scope}[densely dashed]
\end{scope}

\end{tikzpicture}
\end{center}
\caption{Illustration of Lemma \ref{lem:reachable}. The solid edges define $S$ and the dashed ones are part of the remaining edges. The gray and white nodes induce the sets $X_C$ and $Y_C$, resp. The set $X$ is induced by the gray and red nodes. Notice that $y$ corresponds to the block with $4$ nodes (belonging to $C$). The red edges form the 3-matching used in the proof of the mentioned lemma. 
}
\label{fig:LBonSizeOfR(W)}
\end{figure}

%Notice also that $R(W)$ can be computed in polynomial time. 
%Using the 3-matching Lemma \ref{lem:matchingOfSize3} we can lower bound the size of $R(W)$. 
\begin{lemma}\label{lem:reachable}
Let $e=xy\in E(T_C)$ and let $X_C$ and $Y_C$ be the nodes of the two trees obtained from $T_C$ after removing the edge $e$, where $x\in X_C$ and $y\in Y_C$. Then $R(X_C)$ contains a block node or $R(X_C)\setminus\{y\}$ contains at least $2$ lonely nodes.
% Let $W'\subseteq V(T_C)$ such that both $W'$ and $W'':=V(T_C)\setminus W'$ contain a block node. Then  $R(W')$ contains a block node or $|R(W')|\geq 3$. 
\end{lemma}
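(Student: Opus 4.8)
\emph{Proof sketch.} I would prove the contrapositive: assuming that $R(X_C)$ contains no block node \emph{and} that $R(X_C)\setminus\{y\}$ contains at most one lonely node, I derive a contradiction with the fact that $G$ is structured, via the $3$-Matching Lemma~\ref{lem:matchingOfSize3}.

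First, translate to $G$. Let $X$ (resp.\ $Y$) be the set of nodes of $G$ collapsed into the block/lonely nodes of $X_C$ (resp.\ $Y_C$); since $V(C)=X\sqcup Y$ and the edges of $S$ inside $X$ form a connected subgraph (a tree of $2$EC blocks, because $X_C$ is a subtree of $T_C$), $X$ lies in a single connected component of $G-Y$. Let $D$ be the node set of that component and put $\bar Z:=V(G)\setminus(D\cup Y)$; by the choice of $D$ there is no edge of $G$ between $D$ and $\bar Z$. Now I check the hypotheses of Lemma~\ref{lem:matchingOfSize3} for the partition $(D,\,Y\cup\bar Z)$: a subtree of $T_C$ consisting of a single node is a leaf of $T_C$, hence a block node, and every block is $2$EC with at least $3$ nodes, so it induces a triangle in the simple graph $G$ when it has exactly $3$ nodes; a subtree of $T_C$ with at least two nodes contains a leaf of $T_C$ different from its incident endpoint of $e$, again a block node, so the corresponding node set of $G$ has size at least $3$, and strictly more unless it equals that single block. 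Applying this to $X_C$ and $Y_C$ gives $|D|\ge|X|\ge 3$ and $|Y\cup\bar Z|\ge|Y|\ge 3$, with the triangle condition met whenever either side has size exactly $3$. Hence Lemma~\ref{lem:matchingOfSize3} yields a matching $M=\{a_1b_1,a_2b_2,a_3b_3\}$ of $G$ with $a_i\in D$, $b_i\in Y\cup\bar Z$; since there is no $D$–$\bar Z$ edge, in fact $b_i\in Y$ for every $i$.

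For each $i$ let $v_i$ be the block node containing $b_i$, or $v_i:=b_i$ if $b_i$ is lonely; note $v_i\in Y_C$. The key step is: \emph{if $a_ib_i\ne e$ then $v_i\in R(X_C)$.} If $a_i\in X$, then $a_ib_i$ is an edge between $X$ and $Y$ other than $e$; but the only $S$-edge between $X$ and $Y$ is the bridge $e$ (an $S$-edge between $X$ and $Y$ is an edge of $C$ not inside a block, hence a bridge of $C$, hence an edge of $T_C$ crossing the $X_C/Y_C$ split, hence $e$), so $a_ib_i\notin S$ and it is a bridge-covering path of length $1$ from $X_C$ to $v_i$. If $a_i\in D\setminus X$, then $a_i$ lies in a component of $S$ other than $C$ that is connected to $X$ inside $G-Y$; concatenating such a connecting path (whose internal contracted components are all $\ne C$, hence outside $T_C$) with the edge $a_ib_i$ gives a bridge-covering path from $X_C$ to $v_i$. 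Either way $v_i\in R(X_C)$.

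To finish: $e$ occurs in $M$ at most once, so at least two of the $a_ib_i$ differ from $e$; keep all that do (two or three). Each corresponding $v_i$ lies in $R(X_C)$; if one of them is a block node we already contradict the assumption, so suppose they are all lonely, i.e.\ $v_i=b_i$, and pairwise distinct. If $e\notin M$ we have three such lonely nodes in $R(X_C)$, at most one equal to $y$, so at least two lie in $R(X_C)\setminus\{y\}$. If instead $e=a_3b_3$, then $b_3$ is the $Y$-endpoint $y^*$ of $e$, which lies in the node $y$ of $T_C$, and $b_1,b_2\ne b_3$ forces $v_1,v_2\ne y$ (whether $y$ is a block node or is the lonely node $y^*$). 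Either way $R(X_C)\setminus\{y\}$ contains two distinct lonely nodes, a contradiction. I expect the only delicate points to be the size/triangle checks for Lemma~\ref{lem:matchingOfSize3} and the bookkeeping that keeps the two retained matching edges certifying membership in $R(X_C)\setminus\{y\}$ rather than merely in $R(X_C)$; the observation making everything go through is that the absence of $D$–$\bar Z$ edges forces the guaranteed $3$-matching to lie entirely inside $Y$.
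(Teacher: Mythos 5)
Your proof is correct and follows essentially the same route as the paper's: your partition $(D,\,Y\cup\bar Z)$ built from the component of $G-Y$ containing $X$ is exactly the paper's $(X',Y')$, the $3$-Matching Lemma then forces all three matching edges to land in $Y$, and lifting each such edge to a bridge-covering path (prepending a path through other components of $S$ when its $X$-side endpoint lies outside $C$) yields, via distinctness, the two lonely nodes of $R(X_C)\setminus\{y\}$. The only cosmetic differences are the contrapositive framing and that you verify the matching lemma's hypotheses from ``every block has at least $3$ nodes'' together with the triangle check, whereas the paper uses the stronger fact that each side contains a leaf block with at least $6$ nodes.
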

\begin{proof}
Let us assume that $R(X_C)$ contains only lonely nodes, otherwise \fab{the claim holds}. Let $X$ be the nodes in $G_C$ which are connected to $X_C$ after removing $Y_C$, and let $Y$ be the remaining nodes in $G_C$. Let $X'$ and $Y'$ be the nodes in $G$ corresponding to $X$ and $Y$, resp. Let $e'$ be the edge in $G$ that corresponds to $e$ and let $y'$ be the node of $Y'$ that is incident to $e'$. In particular if $y$ is a lonely node then $y'=y$ and otherwise $y'$ belongs to the block of $C$ corresponding to $y$.

Observe that both $X_C$ and $Y_C$ (hence $X$ and $Y$) contain at least one (leaf) block node, hence both $X'$ and $Y'$ contain at least 6 nodes\footnote{Leaf blocks initially contain at least 6 nodes being $H$ canonical, and the bridge-covering stage does not create smaller leaf blocks.}. Therefore we can apply the 3-matching Lemma \ref{lem:matchingOfSize3} to $X'$ and obtain a matching $M=\{u'_1v'_1,u'_2v'_2,u'_3v'_3\}$ between $X'$ and $Y'$, where $u'_i\in X'$ and $v'_i\in Y'$. Let $u_i$ and $v_i$ be the nodes in $G_C$ corresponding to $u'_i$ and $v'_i$, resp. We remark that $v_i\in Y_C$: indeed otherwise $v_i$ would be connected to $X_C$ in $G_C\setminus Y_C$, contradicting the definition of $X$. Let us show that the $v_i$'s are all distinct. Assume by contradiction that $v_i=v_j$ for $i\neq j$. Since $v'_i\neq v'_j$ (being $M$ a matching) and $v'_i,v'_j$ are both associated with $v_i$, this means that $v_i$ is a block node in $R(X_C)$, a contradiction.   

For each $u_i$ there exists a path $P_{w_iu_i}$ in $G_C\setminus E(T_C)$ between $u_i$ and some $w_i\in X_C$ (possibly $w_i=u_i$). Observe that $P_{w_iu_i}\circ u_iv_i$ is a bridge-covering path between $w_i\in X_C$ and $v_i\in Y_C$ unless $u_iv_i=xy$ (recall that the edges $E(T_C)$ cannot be used in a bridge-covering path). In particular, since the $v_i$'s are all distinct, at least two such paths are bridge-covering paths from $X_C$ to distinct (lonely) nodes of $Y_C\setminus \{y\}$, implying $|R(X_C)\setminus\{y\}|\geq 2$.
%Observe that if $y'\neq v'_i$ then $P_{w'u_i}\cup \{u_iv_i\}$ is a bridge-covering path between $w'$ and $v_i$. We claim that all the $v_i$'s are distinct, hence the claim. Assume by contradiction that $v_i=v_j$ for $i\neq j$. Since $v'_i\neq v'_j$ (being $M$ a matching) and $v'_i,v'_j$ are both associated with $v_i$, this means that $v_i$ is a block node in $R(W')$, a contradiction. 
%
%Since the $v_i$'s are distinct we can w.l.o.g assume that $v_1\neq y \neq v_2$. Therefore $\{v_1,v_2\}\subseteq R(W')\setminus\{y\}$.
\end{proof}

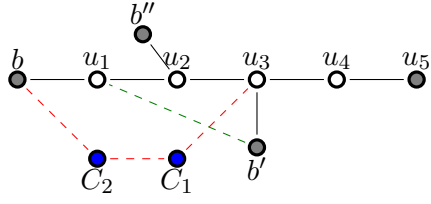
\begin{figure}
\begin{center}
\begin{tikzpicture}[scale=1.5]

\tikzset{vertex/.style={draw=black, very thick, circle,minimum size=0pt, inner sep=2pt, outer sep=2pt}
}

%\node[right] () at (0,0) {(a)};

\begin{scope}[every node/.style={vertex}]

\node[fill=gray] (b) at (0.4,0) {};
\node (u1) at (1.1,0){};
\node (u2) at (1.8,0){};
\node (u3) at (2.5,0) {};
\node (u4) at (3.2,0){};
\node[fill=gray] (u5) at (3.9,0){};
\node[fill=gray] (x) at (1.5, 0.4){};
%\node[fill=gray] (b'') at (1.2,0.8 ){};
\node [fill=gray] (b') at (2.5,-0.6){};

\node[fill=blue] (c1) at (1.8,-0.7){};
\node[fill=blue] (c2) at (1.1,-0.7){};

\draw (b) to (u1) to (u2) to (u3) to (u4) to (u5);
\draw (u2) to (x);% to (b'');
\draw (u3) to (b');

\draw[dashed,red] (b) to (c2) to (c1) to (u3);
\draw[dashed,green!50!black] (b') to (u1);
\end{scope}

\node[above] () at (x) {$b''$};
\node[above] () at (b) {$b$};
\node[above] () at (u1) {$u_1$};
\node[above] () at (u2) {$u_2$};
\node[above] () at (u3) {$u_3$};
\node[above] () at (u4) {$u_4$};
\node[above] () at (u5) {$u_5$};
\node[below] () at (b') {$b'$};
\node[below] () at (c2) {$C_2$};
\node[below] () at (c1) {$C_1$};

\begin{scope}[red!80!black, very thick]

\end{scope}

\begin{scope}[very thick]
\end{scope}

\begin{scope}[densely dashed]
\end{scope}

\end{tikzpicture}
\end{center}
\caption{The solid edges induce $T_C$. The blue nodes correspond to contracted connected components other than $C$. The black and white nodes are block and lonely nodes of $T_C$, resp. The red edges induce an (expensive) bridge-covering path covering the bridges $u_1u_2$, $u_2u_3$ and a bridge corresponding to $bu_1$. The green edge induces another (expensive) bridge-covering path. An edge $bb''$ or $bu_4$ would induce a cheap bridge-covering path. Lemma \ref{lem:merge2paths} can be applied with $u=u_3\in R(b)$ and $u'=u_1\in R(b')$ (both $T_C(b,u_3)$ and $T_C(b',u_1)$ contain $w=u_2$). In particular, one obtains $S'$ by adding to $S$ the red and green edges. Notice that the new component $C'$ of $S'$ which contains $V(C)$ has $4$ fewer bridges and $1$ fewer block than $C$.
}
\label{fig:BridgeCoveringPathsIntersection}
\end{figure}

Let $T_C(u,v)$ denote the path in $T_C$ between nodes $u$ and $v$. 
%We will use the following technical lemma a few times:
\begin{lemma}\label{lem:merge2paths}
Let $b$ and $b'$ be two leaf (block) nodes of $T_C$. Let $u\in R(b)$ and $u'\in R(b')$ be nodes of $V(T_C)\setminus \{b,b'\}$. Suppose that $T_C(b,u)$ and $T_C(b',u')$ both contain some node $w$ (possibly $w=u=u'$) and $|E(T_{C}(b,u))\cup E(T_C(b,'u'))|\geq 4$. Then in polynomial time one can find a 2-edge-cover $S'$ satisfying the conditions (1) and (2) of Lemma \ref{lem:caseA:bridgeCovering}.  
\end{lemma}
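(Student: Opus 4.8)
The plan is to combine the two bridge-covering paths witnessing $u\in R(b)$ and $u'\in R(b')$ into a single augmentation that covers enough bridges to keep $cost$ under control, even though each path alone may be expensive. First I would fix a bridge-covering path $P$ realizing $u\in R(b)$ (so $P$ goes from some $w_1\in T_C$ with $T_C(b,u)\ni w_1$... actually more simply, $P$ has endpoints $b$ and $u$ in $T_C$ and internal nodes outside $T_C$), and similarly a bridge-covering path $P'$ with endpoints $b'$ and $u'$. Set $S' = S \cup E(P) \cup E(P')$. Since $b$ and $b'$ are leaf block nodes, adding $P$ and $P'$ attaches the corresponding leaf blocks into cycles; the key structural claim is that in $S'$ the union of nodes of $C$ (together with any component $C'\neq C$ that a path passes through) forms a single connected component $C''$, and that every edge of $E(T_C(b,u)) \cup E(T_C(b',u'))$ is no longer a bridge. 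Because $T_C(b,u)$ and $T_C(b',u')$ share the node $w$, their union is a connected subtree of $T_C$, so after augmenting along both $P$ and $P'$ this whole subtree lies inside cycles — in particular at least $|E(T_C(b,u))\cup E(T_C(b',u'))| \geq 4$ bridges of $C$ are removed, and since $b,b'$ are distinct leaves, at least $2$ block nodes disappear (they get absorbed).

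Next I would carry out the cost accounting exactly as in the discussion preceding the lemma, but with the doubled augmentation. Let $br \geq 4$ be the number of bridges removed and $bl \geq 2$ the number of block nodes that disappear, and suppose $P\cup P'$ passes through $\ell$ connected components of $S$ other than $C$. The number of edges grows by $|E(P)| + |E(P')|$. The credit decreases by at least $\tfrac14 br + bl + \ell' - 1$ where... here I have to be slightly careful: the standard single-path bound was $\tfrac14 br + bl + (|E(P_C)|-1)$, counting one unit per merged external component and using $|E(P_C)|-1 \geq (\text{number of external components on the path})$. For two paths sharing $w$, the analogous bound is that the credit released is at least $\tfrac14 br + bl + (|E(P)|+|E(P')|-2)$, because each of the two paths individually merges at least $(\text{its edge count}) - 1$ external objects. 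Against this we must pay $1$ new credit for the freshly created block or $2$ for a new $2$EC component (one beyond the credit $C$ already held). So $cost(S) - cost(S') \geq \tfrac14 br + bl - 2 \geq \tfrac14\cdot 4 + 2 - 2 = 1 > 0$, which gives condition (2); condition (1) ($S'$ has fewer bridges than $S$) is immediate since $br\geq 4 \geq 1$. Also, by the remark preceding the lemma, merging only ever grows components, so Invariant \ref{inv:2ECcomponent} is preserved and any new $2$EC component has at least $12$ edges.

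The main obstacle I anticipate is the careful bookkeeping of which bridges actually get covered and which blocks actually disappear when the two paths $P$ and $P'$ can themselves intersect (they may share internal nodes, or one path's endpoint may be an internal node of the other, or $u=u'=w$). One has to verify that the subtree $T_C(b,u)\cup T_C(b',u')$ is entirely "covered" — i.e. every edge of it lies on some cycle of $S'$ — which follows because $b,b'$ are leaves of $T_C$ (so going from $b$ to $u$ along $T_C$ and back via $P$ closes a cycle through every edge of $T_C(b,u)$, and likewise for the primed path), and because the two $T_C$-paths share $w$ their union is connected and all its edges get absorbed into one $2$EC block/component. A secondary subtlety is ensuring that no new bridge is created and that the bridges of $S'$ form a subset of those of $S$: this holds because we only add edges (never delete any in this lemma) and adding edges can only destroy bridges, never create them. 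Once these structural facts are nailed down, the inequality $\tfrac14 br + bl - 2 \geq 1$ with $br\geq 4$, $bl\geq 2$ finishes the argument, and the polynomial running time is clear since the two constituent bridge-covering paths and the node $w$ are all computable in polynomial time.
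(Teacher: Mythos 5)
Your overall strategy (add both bridge-covering paths $P_{bu}$ and $P_{b'u'}$, then charge the new edges against the credits of the merged components, the two absorbed leaf blocks, and the at least four covered bridges) is the same as the paper's, but there is a genuine gap exactly at the point you flag and then leave unresolved: the case where the two paths intersect outside $T_C$. Your credit bound ``released $\geq \frac14 br + bl + (|E(P)|+|E(P')|-2)$'' double-counts any external component of $S$ that both paths pass through, and in that situation the estimate is simply false. For instance, if $P_{bu}=b,x,u$ and $P_{b'u'}=b',x,u'$ go through the same contracted component $x$, you add $4$ edges but release only $1+\frac14\cdot 4+2=4$ credits while paying $1$ for the new block/component, so $cost$ can strictly increase. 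The paper handles this by a case split: if $P_{bu}$ and $P_{b'u'}$ share an internal node, their concatenation yields a single bridge-covering path between the two block nodes $b$ and $b'$, which is \emph{cheap} (it involves $bl\geq 2$ blocks), and the standard single-path augmentation from the discussion preceding Lemma~\ref{lem:caseA:bridgeCovering} already gives the desired $S'$; only when the paths are internally node-disjoint does one add both paths and run your accounting, which is then valid. Without this case distinction (or some other argument controlling shared internal nodes) your proof does not go through.

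There is also a small arithmetic slip in the disjoint case: with the ingredients you list, the correct bound is
\begin{align*}
cost(S)-cost(S') \;\geq\; \tfrac14 br + bl - 3 \;\geq\; \tfrac14\cdot 4 + 2 - 3 \;=\; 0,
\end{align*}
not $\frac14 br + bl - 2\geq 1$ (you lose one unit to the credit of the newly created block or 2EC component, on top of the two units from the edges $|E(P)|+|E(P')|$ not compensated by merged components). This does not hurt the conclusion, since $cost(S')\leq cost(S)$ is all that condition (2) of Lemma~\ref{lem:caseA:bridgeCovering} requires, but the claimed margin of $1$ is not there.
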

\begin{proof}
Let $P_{bu}$ (resp., $P_{b'u'}$) be a bridge-covering path between $b$ and $u$ (resp., $b'$ and $u'$). Suppose that $P_{bu}$ and $P_{b'u'}$ share an internal node. Then there exists a (cheap) bridge-covering path between $b$ and $b'$ and we can find the desired $S'$ by the previous discussion. So next assume that $P_{bu}$ and $P_{b'u'}$ are internally node disjoint. Let $S'=S\cup E(P_{bu})\cup E(P_{b'u'})$. All the nodes and bridges induced by $E(P_{bu})\cup E(P_{b'u'})\cup E(T_C(b,u))\cup E(T_C(b',u'))$ become part of the same block or 2EC component $C'$ of $S'$. Furthermore $C'$ contains at least $4$ bridges of $C$ and the two blocks $B$ and $B'$ corresponding to $b$ and $b'$, resp. One has $|S'|=|S|+|E(P_{bu})|+|E(P_{b'u'})|$. Moreover $cr(S')\leq cr(S)+1-(|E(P_{bu})|-1)-(|E(P_{b'u'})|-1)-2-\frac{1}{4}4$. In the latter inequality the $+1$ is due to the extra credit required by $C'$, the $-2$ to the removed blocks $B$ and $B'$ \fab{(i.e., blocks of $S$ which are not blocks of $S'$)}, and the $-\frac{1}{4}4$ to the removed bridges. Altogether $cost(S)-cost(S')\geq \frac{1}{4}\cdot 4+2-3=0$, hence (2) holds. Notice that $S'$ satisfies (1) too.
\end{proof}

We are now ready to prove Lemma \ref{lem:caseA:bridgeCovering}.
\begin{proof}[Proof of Lemma \ref{lem:caseA:bridgeCovering}]
If there exists a cheap bridge-covering path $P_C$ (condition that we can check in polynomial time), we simply augment $S$ along $P_C$ hence obtaining the desired $S'$. Thus we next assume that no such path exists. 

Let $P'=b,u_1,\ldots,u_\ell$ in $T_C$ be a longest path in $T_C$ (interpreted as a sequence of nodes).  Notice that $b$ must be a leaf of $T_C$, hence a block node (corresponding to some leaf block $B$ of $C$). Let us consider $R(b)$. Since by assumption there is no cheap bridge-covering path, $R(b)$ does not contain any block node. Hence $|R(b)\setminus\{u_1\}|\geq 2$ and $R(b)$ contains only lonely nodes by Lemma \ref{lem:reachable} (applied to $xy=bu_1$). 

We next distinguish a few subcases depending on $R(b)$. Let $V_i$, $i\geq 1$, be the nodes in $V(T_C)\setminus V(P')$ such that their path to $b$ in $T_C$ passes through $u_i$. Notice that $\{V(P'),V_1, \ldots,V_\ell\}$ is a partition of $V(T_C)$. We observe that any node in $V_i$ is at distance at most $i$ from $u_i$ in $T_C$ (otherwise $P'$ would not be a longest path in $T_C$). We also observe that as usual the leaves of $T_C$ in $V_i$ are block nodes. 
\begin{remark}\label{rem:V1V2}
This implies that all the nodes in $V_1$ are block nodes, and all the nodes in $V_2$ are block nodes or are lonely non-leaf nodes at distance $1$ from $u_2$ in $T_C$.
\end{remark}

\medskip\noindent{\bf (1)} There exists $u\in R(b)$ with $u\notin \{u_1,u_2,u_3\}\cup V_1\cup V_2$. By definition there exists a bridge-covering path between $b$ and $u$ containing at least $4$ bridges, hence cheap. This is excluded by the previous steps of the construction. 

\medskip\noindent{\bf (2)} There exists $u\in R(b)$ with $u\in V_1\cup V_2$. Since $u$ is not a block node, by Remark \ref{rem:V1V2} $u$ must be a lonely non-leaf node in $V_2$ at distance $1$ from $u_2$. Furthermore $V_2$ must contain at least one leaf block node $b'$ adjacent to $u$. Consider $R(b')$. By the assumption that there are no cheap bridge-covering paths and Lemma~\ref{lem:reachable} (applied to $xy=b'u$), $|R(b')\setminus \{u\}|\geq 2$.
%$R(b')$ contains only lonely nodes and using lemma~\ref{lem:reachable} $|R(b')|\geq 2$. 
In particular $R(b')$ contains at least one lonely node $u'\notin \{b',b,u\}$. The tuple $(b,b',u,u')$ satisfies the conditions of Lemma \ref{lem:merge2paths} (in particular both $T_C(b,u)$ and $T_C(b',u')$ contain $w=u_2$), hence we can obtain the desired $S'$. 
%Let $P_{bu}$ be a bridge-covering path between $b$ to $u$ and $P_{b'u'}$be a bridge covering path between $b'$ and $u'$.
%
%
%We remark that $P_{bu}$ and $P_{b'u'}$ must be internally node disjoint: indeed otherwise there would exist a bridge-covering path between $b$ and $b'$ in $P_{bu}\cup P_{b'u'}$, a situation which is excluded by the previous steps of the construction. We notice that the path $T_{bu}$ between $b$ and $u$ in $T_C$ contains node $u_2$.  The same holds for the path $T_{b'u'}$ between $b'$ and $u'$ in $T_C$: indeed, this condition holds trivially if $u'\in V(P')$ or $u'\in V_j$ with $j\neq 2$. Otherwise $u'$ must be a neighbor of $u_2$ in $V_2$ distinct from $u$. Let $S'=S\cup P_{bu}\cup P_{b'u'}$. All the nodes and bridges corresponding to $P_{bu}\cup P_{b'u'}\cup T_{bu}\cup T_{b'u'}$ become part of the same 2EC subgraph $C'$ of $S'$ (which can be a block or an isolated 2EC component). Furthermore $C'$ contains at least $4$ bridges of $C$ (namely $bu_1,u_1u_2,u_2u,ub'$) and the two blocks $B$ and $B'$ corresponding to $b$ and $b'$, resp. The cost of $S'$ w.r.t. $S$ grows by $2$ due to the two edges of $P_{bu}$ and $P_{b'u'}$ not compensated by the corresponding component credits excluding $C$, and might grow by an extra $1$ due to the credits associated with $C'$. However it decreases by at least $\frac{1}{4}\cdot 4=1$ due to the $4$ removed bridges and by $2$ due the $2$ removed blocks. Altogether $cost(S)-cost(S')\geq \frac{1}{4}\cdot 4+2-3=0$ as desired.

\medskip\noindent{\bf (3)} $R(b)\setminus\{u_1\}=\{u_2,u_3\}$. Recall that $u_2$ and $u_3$ are lonely nodes. We distinguish $2$ subcases:

\medskip\noindent{\bf (3.a)} $V_1\cup V_2\neq \emptyset$. Take any leaf (block) node $b'\in V_1\cup V_2$, say $b'\in V_i$. Let $\ell'$ be the node adjacent to $b'$. By the assumption that there are no cheap bridge-covering paths and Lemma~\ref{lem:reachable} (applied to $xy=b'\ell'$), $R(b')\setminus \{\ell'\}$ has cardinality at least $2$ and contains only lonely nodes. Choose any $u'\in R(b')\setminus \{\ell'\}$. Notice that $u'\notin V_1$ by Remark \ref{rem:V1V2} (but it could be a lonely node in $V_2$ other than $\ell'$). 
The tuple $(b,b',u_3,u')$ satisfies the conditions of Lemma \ref{lem:merge2paths} (in particular both $T_C(b,u_3)$ and $T_C(b',u')$ contain $w=u_2$), hence we can compute the desired $S'$.

\medskip\noindent{\bf (3.b)} $V_1\cup V_2= \emptyset$. By Lemma \ref{lem:reachable} (applied to $xy=u_1u_2$) the set $R(\{b,u_1\})$ contains a block node or $R(\{b,u_1\})\setminus\{u_2\}$ contains at least 2 lonely nodes. Suppose first that $R(\{b,u_1\})$ contains a block node $b'$. Notice that $b'\notin R(b)$ by the assumption that there are no cheap bridge-covering paths, hence $u_1 \in R(b')$. Notice also that $b'\notin \{u_2,u_3\}$ since those are lonely nodes. Thus the tuple $(b,b',u_2,u_1)$ satisfies the conditions of Lemma \ref{lem:merge2paths} (in particular both $T_C(b,u_2)$ and $T_C(b',u_1)$ contain $w=u_2$), hence we can obtain the desired $S'$. 

The remaining case is that $R(\{b,u_1\})\setminus\{u_2\}$ contains at least $2$ lonely nodes. Let us choose $u'\in R(\{b,u_1\})\setminus\{u_2\}$ with $u'\neq u_3$. Let $P_{bu_2}$ (resp., $P_{u'u_1}$) be a bridge-covering path between $b$ and $u_2$ (resp., $u'$ and $u_1$). Notice that $P_{bu_2}$ and $P_{u'u_1}$ must be internally node disjoint otherwise $u'\in R(b)$, which is excluded since $R(b)\subseteq \{u_1,u_2,u_3\}$ by assumption. Consider $S'=S\cup E(P_{bu_2})\cup E(P_{u'u_1})\setminus \{u_1u_2\}$. Notice that $S'$ satisfies (1). One has $|S'|=|S|+|E(P_{bu_2})|+|E(P_{u'u_1})|-1$, where the $-1$ comes from the edge $u_1u_2$. Furthermore $cr(S')\leq cr(S)+1-(|E(P_{bu_2})|-1)-(|E(P_{u'u_1})|-1)-\frac{1}{4}{4}-1$, where the $+1$ comes from the extra credit needed for the block or 2EC component $C'$ containing $V(B)$, the final $-1$ from the removed block $B$, and the $-\frac{1}{4}{4}$ from the at least $4$ bridges removed from $S$ (namely a bridge corresponding to $bu_1$, the edges $u_1u_2$ and $u_2u_3$, and one more bridge incident to $u'$). Altogether $cost(S)-cost(S')\geq 0$, hence $S'$ satisfies (2).
\end{proof}

\subsection{Gluing}

In this section we complete the description of the Gluing stage by proving Lemma \ref{lem:caseA:gluing}, \amEdit{which} we next restate.
\lemmaCaseAgluing*

Recall that the component graph $\hat{G}_S$ has one node per 2EC component of $S$ and one edge $C_1C_2$ iff $G$ contains at least one edge between the components $C_1$ and $C_2$ of $S$. In the gluing step associated with $S$ we will check if $\hat{G}_S$ satisfies certain properties, and based on that we derive a convenient $S'$. We start by dealing with certain \emph{local configurations} involving $2$ or $3$ 2EC components that can be merged together in Section \ref{sec:localConfigurations}. Then we consider the case that $\hat{G}_S$ is a tree in Section \ref{sec:caseA:tree}. Finally we consider the case that $\hat{G}_S$ is \emph{not} a tree in Section \ref{sec:caseA:nonTree}.

\subsubsection{Local Merges}
\label{sec:localConfigurations}

If one of the following cases applies (which we can check in polynomial time), then we obtain the desired $S'$. We remark that in each case $S'$ has the same 2EC components of $S$ a part for a new 2EC component $C'$ that spans the nodes of two or three 2EC components of $S$. In particular conditions (1) and (2) of Lemma \ref{lem:caseA:gluing} immediately hold. We also remark that $C'$ contains at least $7$ edges, hence Invariant \ref{inv:2ECcomponent} is preserved by a simple induction.

The next $3$ lemmas can be used to merge two 2EC components of $S$ into one.
%We start with local configurations involving $2$ components of $S$. 
\begin{lemma}\label{lem:4cycle}
Suppose that there exists two components $C_1$ and $C_2$ of $S$ where $C_1$ is a 4-cycle, $C_2$ is \emph{not} a $5$-cycle, and there exists a 3-matching $M$ between $C_1$ and $C_2$. Then in polynomial time one can compute an $S'$ satisfying the conditions (1), (2) and (3) of Lemma \ref{lem:caseA:gluing}. 
\end{lemma}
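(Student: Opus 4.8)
I want to merge the $4$-cycle $C_1$ with $C_2$ using the $3$-matching $M$. Write $C_1 = a_1,a_2,a_3,a_4,a_1$. Since $M$ is a matching of size $3$ between $V(C_1)$ and $V(C_2)$, exactly three of the four nodes of $C_1$ are matched; say $a_1,a_2,a_3$ are matched to distinct nodes $b_1,b_2,b_3 \in V(C_2)$ via edges $e_1=a_1b_1$, $e_2=a_2b_2$, $e_3=a_3b_3$. I will pick two of these matching edges whose endpoints in $C_1$ are \emph{adjacent} on the $4$-cycle — namely $e_1=a_1b_1$ and $e_2=a_2b_2$ (the edge $a_1a_2$ lies on $C_1$). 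Set
$$S' := \big(S \setminus \{a_1a_2\}\big) \cup \{e_1, e_2\}.$$
Then the component $C'$ obtained from $C_1$, $C_2$, $e_1$, $e_2$ by removing $a_1a_2$ is $2$-edge-connected: every edge of $C_2$ still lies on a cycle inside $C_2$, the edges $e_1,e_2$ together with the path $a_1,a_4,a_3,a_2$ on $C_1$ and a $b_1$–$b_2$ path in $C_2$ form a cycle through $e_1,e_2$, and the remaining $C_1$-edges $a_2a_3,a_3a_4,a_4a_1$ lie on that same cycle. So conditions (1) and (2) of Lemma~\ref{lem:caseA:gluing} hold: all connected components of $S'$ are $2$EC, and $S'$ has one fewer connected component than $S$. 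Also $C'$ has $|E(C_1)|+|E(C_2)| + 2 - 1 = |E(C_2)| + 5 \geq 8$ edges, so Invariant~\ref{inv:2ECcomponent} is maintained.

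**The cost accounting.** Now I need $cost(S') \leq cost(S)$. We have $|S'| = |S| + 1$ (two edges added, one removed). For the credits: $C'$ is large (at least $7$ edges), so $cr(C') = 2$. The components that disappeared are $C_1$, a $4$-cycle with $cr(C_1) = \frac{3}{10}\cdot 4 = \frac{6}{5}$, and $C_2$. The worst case for us is when $cr(C_2)$ is as small as possible; by the invariant $C_2$ is a triangle, a $k$-cycle with $4\le k\le 6$, or large, with credit $1$, $\tfrac{3}{10}k$, or $2$ respectively — the hypothesis forbids $C_2$ being a $5$-cycle, but the \emph{smallest} credit among the remaining options is still $cr(C_2)=1$ (triangle). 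Hence
$$cr(S) - cr(S') = cr(C_1) + cr(C_2) - cr(C') \ge \tfrac{6}{5} + 1 - 2 = \tfrac{1}{5},$$
so $cost(S) - cost(S') = -1 + \big(cr(S)-cr(S')\big) \ge -1 + \tfrac{1}{5}$... which is negative. That does not close, so the naive bookkeeping is not enough, and this is exactly the point of the $5$-cycle exclusion: I must look more carefully.

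**Where the real work is.** The issue is that merging costs one extra edge and gains at most $2-\tfrac65-1 = -\tfrac15$ net credit against it, so I cannot afford to simply add two edges. The fix is to remove a \emph{second} edge from $C_1\cup C_2$ while keeping $C'$ $2$EC, using the \emph{third} matching edge $e_3 = a_3b_3$. Concretely I add all three of $e_1,e_2,e_3$ and delete two edges of $C_1$, say $a_1a_2$ and $a_2a_3$ (so $a_2$ keeps degree $2$ via $e_2$ together with whatever, and $a_1,a_3$ keep degree $2$ via $e_1,e_3$ plus $a_1a_4,a_3a_4$); then $C'$ still contains the cycle $b_1,\dots,b_3$ in $C_2$ augmented by $e_1, a_1,a_4,a_3, e_3$, and every edge lies on a cycle. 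Now $|S'| = |S|$ (three added, ... no: three added, two removed, so $|S'|=|S|+1$ still). To actually save an edge I should instead delete $a_1a_2$ and one edge \emph{of $C_2$} that is made redundant by having two matching edges landing in $C_2$ — but a single pair $b_1,b_2$ of entry points into a $2$EC $C_2$ does not in general make any $C_2$-edge redundant unless $C_2$ is a cycle and $b_1,b_2$ are adjacent on it. This is the crux: I expect the proof to do a small case analysis on the structure of $C_2$ (triangle vs. large vs. $6$-cycle vs. $4$-cycle), choosing which two matching edges to use and which one or two edges to delete so that $|S'| = |S|$ (net zero edge change, three added three removed, or two added two removed) while $C'$ stays $2$EC; then $cost(S)-cost(S') = cr(C_1)+cr(C_2)-2 \ge \tfrac{6}{5}+1-2 > 0$ suffices. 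I would first handle the easy subcase ($C_2$ large, so $cr(C_2)=2$ and even a $+1$ edge is affordable since $\tfrac65+2-2 = \tfrac65 > 1$), then the $6$-cycle subcase ($cr(C_2)=\tfrac95$, and $\tfrac65+\tfrac95-2 = 1$ exactly affords the extra edge), and the genuinely tight subcase $C_2$ a triangle or $4$-cycle, where I must arrange a net-zero edge exchange by removing two edges — and the $5$-cycle exclusion is precisely what rules out the one case where no such exchange is possible. The main obstacle is verifying, in that tight subcase, that the three matching edges plus the cycle structure of both small components always permit deleting two edges while preserving $2$-edge-connectivity of $C'$; I would use the fact that in a triangle or $4$-cycle any two matched nodes are connected by two internally disjoint paths, so redirecting through the matching edges leaves enough redundancy to drop two original edges.
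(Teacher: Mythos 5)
Your accounting for the easy subcases ($C_2$ a $6$-cycle or large: add two matching edges with adjacent $C_1$-endpoints, delete the $C_1$-edge between them, pay the $+1$ edge with $\frac{3}{10}\cdot 4+\frac{3}{10}\cdot 6-2\geq 1$ credits) is exactly the paper's argument. But for the case you correctly identify as tight ($C_2$ a triangle or a $4$-cycle) your proposal stops at a declared obstacle: you assert that one can always pick two of the three matching edges and delete one edge from \emph{each} cycle so that the merged component stays 2EC, but you never establish which edges to pick or why 2EC survives, and the fact you invoke (``two internally disjoint paths between any two matched nodes'') does not by itself justify dropping an edge of $C_2$. The missing step is concrete: one must show that among the three matching edges there are two, say $u_1v_1,u_2v_2$, with $u_1u_2\in E(C_1)$ \emph{and} $v_1v_2\in E(C_2)$ simultaneously (a short case analysis on which three of the at most four nodes of each cycle are matched), after which $S\cup\{u_1v_1,u_2v_2\}\setminus\{u_1u_2,v_1v_2\}$ turns the two cycles into a single cycle of $7$ or $8$ edges. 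As written, this is a genuine gap, even though the construction can in fact be completed and would then yield a valid (and cost-wise fine, since $\frac{6}{5}+1-2>0$) alternative proof.

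The paper takes a different route for that tight subcase, and it is an idea you missed entirely: it does not construct $S'$ there at all. By Invariant \ref{inv:2ECcomponent}, a $4$-cycle and a triangle/$4$-cycle component of the current $S$ are still 2EC components of the \emph{initial canonical} $2$-edge-cover $H$; the very exchange above (add two edges, remove two edges, same size, fewer components) would therefore violate the $3$-optimality property (iii) of Definition \ref{def:canonical}. Hence the case $C_2\in\{\text{triangle},4\text{-cycle}\}$ simply cannot occur, and only the $6$-cycle/large case needs the credit argument. So where you guessed that the $5$-cycle exclusion ``rules out the one case where no exchange exists,'' the paper's actual mechanism is canonicality of $H$: the small-versus-small configurations are excluded up front, and $5$-cycles are deferred to the separate lemma with its own accounting.
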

\begin{proof}
Let $M=\{u_1v_1,u_2v_2,u_3v_3\}$ with $u_i\in C_1$ and $v_j\in C_2$. 
%Since $C_1C_2$ is a bridge edge, applying the 3-Matching Lemma \ref{lem:matchingOfSize3} there exists a matching $M=\{u_1v_1,u_2v_2,u_3v_3\}$ of size $3$ between the component $C_1$ and $C_2$, where $u_i\in C_1$ and $v_j\in C_2$. 

Assume by contradiction that $C_2$ is a triangle or a $4$-cycle. By a simple case analysis we can always choose $2$ edges in $M$, say $u_1v_1$ and $u_2v_2$, such that $u_1$ and $u_2$ (resp., $v_1$ and $v_2$) and adjacent in $C_1$ (resp., in $C_2$). Let $S'=S\cup \{u_1v_1,u_2v_2\}\setminus \{u_1u_2,v_1v_2\}$. Notice that $S'$ is a 2-edge-cover of the same size as $S$ but with fewer connected components than $S$. By Invariant \ref{inv:2ECcomponent}, both $C_1$ and $C_2$ are 2EC components of the initial canonical 2-edge-cover $H$. Hence the existence of $S'$ contradicts the 3-optimality of $H$.

%In $S'$ the nodes of $C_1$ and $C_2$ belong to a unique (large) 2EC component (indeed a cycle) $C'$ (the remaining 2EC components of $S'$ are the same as in $S$). Thus $cr(S)-cr(S')=cr(C_1)+cr(C_2)-cr(C')\geq \frac{3}{10}4+1-2>0$. Hence $cost(S')\leq cost(S)$ as required.

Therefore $C_2$ is a $6$-cycle or large. We remark that $cr(C_2)\geq \frac{3}{10}6$. In this case we choose any two edges in $M$, say 
$u_1v_1$ and $u_2v_2$, such that $u_1$ and $u_2$ are adjacent in $C_1$. Let $S'=S\setminus \{u_1u_2\}\cup \{u_1v_1,u_2v_2\}$. Notice that $S'$ satisfies (1) and (2). In particular the nodes of $C_1$ and $C_2$ belong to a unique (large) 2EC component $C'$ of $S'$. One has $|S'|=|S|+1$ and 
\begin{align*}
cr(S)-cr(S')=cr(C_1)+cr(C_2)-cr(C')\geq \frac{3}{10}4+\frac{3}{10}6-2=1. 
\end{align*}
Hence $cost(S')\leq cost(S)$, i.e. $S'$ satisfies (3). 
\end{proof}

%Let us introduce an auxiliary graph $\hat{G}_S$ (\emph{component graph}) whose nodes are the 2EC components of $S$ and where there is an edge $C_1C_2$ iff $G$ contains at least one edge between the components $C_1$ and $C_2$ of $S$. In the gluing step associated with $S$ we will check if $\hat{G}_S$ satisfies certain properties, and based on that we derive a convenient $S'$. 

%, involving edges or length-2 paths in $\hat{G}_S$, exist. 

%We first consider certain bridge edges $C_1C_2$ in $\hat{G}_S$. Such edges induce a cut $(V_1,V_2)$ in $G$ with both sizes containing at least $3$ nodes. Therefore we can apply the 3-matching Lemma \ref{lem:matchingOfSize3} to infer that there exists a matching of size $3$ between $V_1$ and $V_2$, and consequently between $C_1$ and $C_2$.
\begin{lemma}\label{lem:triangle}
Suppose that there exists two components $C_1$ and $C_2$ of $S$ where $C_1$ is a triangle, $C_2$ is \emph{not} a $6$-cycle, and there exists a 3-matching $M$ between $C_1$ and $C_2$. Then in polynomial time one can compute an $S'$ satisfying the conditions (1), (2) and (3) of Lemma \ref{lem:caseA:gluing}. 
\end{lemma}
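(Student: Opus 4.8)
The plan is to follow the proof of Lemma~\ref{lem:4cycle}, now with $C_1$ a triangle (so $cr(C_1)=1$) in place of a $4$-cycle. Write $M=\{u_1v_1,u_2v_2,u_3v_3\}$ with $u_i\in V(C_1)$ and $v_i\in V(C_2)$. Since $M$ is a matching, $u_1,u_2,u_3$ are distinct, hence $\{u_1,u_2,u_3\}=V(C_1)$ and any two of them are adjacent in the triangle $C_1$. By Invariant~\ref{inv:2ECcomponent}, $C_2$ is either an $i$-cycle with $i\in\{3,4,5\}$ (it is not a $6$-cycle by hypothesis) or a large component, and I would split the argument accordingly.

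\textbf{Case $C_2$ is a $3$-, $4$- or $5$-cycle.} Here I would derive a contradiction, so that this case never occurs. Every cycle of length at most $5$ has no independent set of size $3$, so among the three distinct vertices $v_1,v_2,v_3\in V(C_2)$ two are adjacent in $C_2$; relabel so that $v_1v_2\in E(C_2)$, and note $u_1u_2\in E(C_1)$ automatically. Then $(S\setminus\{u_1u_2,v_1v_2\})\cup\{u_1v_1,u_2v_2\}$ is a $2$-edge-cover of the same size as $S$ whose components are those of $S$ with $C_1$ and $C_2$ merged into one. By Invariant~\ref{inv:2ECcomponent} the components $C_1$ and $C_2$ of $S$ are also $2$EC components of the canonical $2$-edge-cover $H$ (and they coincide with $S$ restricted to their vertex sets), so the same swap applied to $H$ yields a $2$-edge-cover of size $|H|$ with strictly fewer connected components, obtained from $H$ by adding and removing two edges — contradicting property (iii) of Definition~\ref{def:canonical} (the $3$-optimality of $H$). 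Hence under the hypotheses $C_2$ must be large.

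\textbf{Case $C_2$ is large ($cr(C_2)=2$).} Pick any two matching edges, say $u_1v_1$ and $u_2v_2$, and set $S'=(S\setminus\{u_1u_2\})\cup\{u_1v_1,u_2v_2\}$. Since $v_1\neq v_2$, the path $v_1,u_1,u_3,u_2,v_2$ is an ear attached to the $2$EC graph $C_2$, so $S'$ has a $2$EC component $C'$ on $V(C_1)\cup V(C_2)$ with $|E(C_2)|+4\ge 11$ edges; thus $C'$ is large and Invariant~\ref{inv:2ECcomponent} is preserved, while all other components of $S'$ are unchanged components of $S$, so conditions (1) and (2) of Lemma~\ref{lem:caseA:gluing} hold. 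For (3), $|S'|=|S|+1$ while $cr(S)-cr(S')=cr(C_1)+cr(C_2)-cr(C')=1+2-2=1$, so $cost(S')=cost(S)$. All steps run in polynomial time once $M$ is given.

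The argument is essentially a short credit computation together with a completeness check of the case split, so I do not expect a real obstacle. The one point needing care is the appeal to $3$-optimality in the first case: it is crucial that $C_1$ and $C_2$ are components of the \emph{original} canonical $H$ (guaranteed by Invariant~\ref{inv:2ECcomponent}), since the forbidden swap is ruled out by property (iii) of $H$ rather than of $S$; and one should double-check that $C_3,C_4,C_5$ all lack an independent set of size $3$ so that the adjacent pair $v_1,v_2$ really exists.
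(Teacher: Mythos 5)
Your proof is correct and follows essentially the same route as the paper: rule out small-cycle $C_2$ via the $3$-optimality of the canonical $H$ (using Invariant~\ref{inv:2ECcomponent}), and for large $C_2$ add two matching edges, drop $u_1u_2$, and balance the extra edge against $cr(C_1)+cr(C_2)-cr(C')=1$. The only cosmetic difference is that you handle the $4$-cycle sub-case directly via the same swap/contradiction argument, whereas the paper defers it to Lemma~\ref{lem:4cycle}, which performs exactly that argument.
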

\begin{proof}
Let $M=\{u_1v_1,u_2v_2,u_3v_3\}$, where $u_i\in C_1$ and $v_j\in C_2$. 
%Since $C_1C_2$ is a bridge edge, applying the 3-matching Lemma \ref{lem:matchingOfSize3} there exists a matching $M=\{u_1v_1,u_2v_2,u_3v_3\}$ of size $3$ between the component $C_1$ and $C_2$, where $u_i\in C_1$ and $v_j\in C_2$. 
We can exclude the case where $C_2$ is a $4$-cycle since it is captured by Lemma \ref{lem:4cycle}. If $C_2$ is a triangle or a $5$-cycle, we can choose two edges in $M$, say $u_1v_1$ and $u_2v_2$, such that $v_1$ and $v_2$ are adjacent in $C_2$ ($u_1$ and $u_2$ are obviously adjacent in $C_1$). Consider $S'=S\cup \{u_1v_1,u_2v_2\}\setminus \{u_1u_2,v_1v_2\}$. The same argument as in the proof of Lemma \ref{lem:4cycle} leads to a contradiction (of the 3-optimality of the initial $H$). 

The remaining case is that $C_2$ is large. In this case consider $S'=S\setminus \{u_1u_2\}\cup \{u_1v_1,u_2v_2\}$. Notice that $S'$ satisfies (1) and (2). Let $C'$ be the (large) 2EC component of $S'$ containing the nodes of $C_1$ and $C_2$. One has $|S'|=|S|+1$ and 
\begin{align*}
cr(S)-cr(S')=cr(C_1)+cr(C_2)-cr(C')\geq 1+2-2=1. 
\end{align*}
Hence $cost(S')\leq cost(S)$, i.e. $S'$ satisfies (3).
\end{proof}

\begin{lemma}\label{lem:5cycle}
Suppose that there exists two components $C_1$ and $C_2$ of $S$ where $C_1$ is a $5$-cycle, $C_2$ is not a $4$-cycle, and that there exists a 3-matching $M$ between $C_1$ and $C_2$.
Then in polynomial time one can compute an $S'$ satisfying the conditions (1), (2) and (3) of Lemma \ref{lem:caseA:gluing}. 
\end{lemma}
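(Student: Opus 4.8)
The plan is to mirror the proofs of Lemmas \ref{lem:4cycle} and \ref{lem:triangle}, doing a case analysis on the type of $C_2$. Write $M=\{u_1v_1,u_2v_2,u_3v_3\}$ with $u_i\in V(C_1)$ and $v_j\in V(C_2)$. The $4$-cycle case for $C_2$ is excluded by hypothesis (it is handled by Lemma \ref{lem:4cycle} with the roles swapped). So the remaining possibilities are that $C_2$ is a triangle, a $5$-cycle, a $6$-cycle, or large.

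For the "cheap" cases — when $C_2$ is a triangle or a $5$-cycle — I would use the same swap trick as before: among the three matching edges, pick two, say $u_1v_1$ and $u_2v_2$, so that $u_1,u_2$ are adjacent in the cycle $C_1$ and $v_1,v_2$ are adjacent in $C_2$. Since $C_1$ is a $5$-cycle, any two of the three $u_i$'s: either two of them are adjacent on the cycle, which one checks by a pigeonhole argument on a $5$-cycle with three marked vertices (some two of the three marked vertices must be consecutive), and the same pigeonhole applies to $C_2$ when it is a triangle (trivially) or a $5$-cycle. A small subtlety: we need a single pair $\{i,j\}$ that is simultaneously adjacent in $C_1$ and in $C_2$; I would argue this by noting that a $5$-cycle with $3$ marked vertices has at least one marked non-edge only between at most one pair, so in fact at least two of the three pairs are cycle-edges — hence among the three pairs, the ones that are $C_1$-edges and the ones that are $C_2$-edges must overlap. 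Then set $S'=S\cup\{u_1v_1,u_2v_2\}\setminus\{u_1u_2,v_1v_2\}$; this is a $2$-edge-cover of the same size as $S$ with one fewer connected component, and since by Invariant \ref{inv:2ECcomponent} both $C_1$ and $C_2$ are $2$EC components of the initial canonical $H$ (a $5$-cycle or triangle from $H$), this contradicts the $3$-optimality of $H$ — so these subcases cannot occur.

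For the remaining cases — $C_2$ a $6$-cycle or large — I would proceed exactly as in the second half of the proofs of Lemmas \ref{lem:4cycle} and \ref{lem:triangle}: note $cr(C_2)\ge\frac{3}{10}\cdot 6$ (if $C_2$ is a $6$-cycle) or $cr(C_2)=2$ (if large), both at least $\frac{18}{10}$; pick two matching edges $u_1v_1,u_2v_2$ with $u_1,u_2$ adjacent in $C_1$, and set $S'=S\setminus\{u_1u_2\}\cup\{u_1v_1,u_2v_2\}$. Then $S'$ satisfies (1) and (2), with $V(C_1)\cup V(C_2)$ now forming a single large $2$EC component $C'$ with $cr(C')=2$, and
\begin{align*}
cr(S)-cr(S')=cr(C_1)+cr(C_2)-cr(C')\ge \tfrac{3}{10}\cdot 5+\tfrac{18}{10}-2=\tfrac{13}{10}\ge 1,
\end{align*}
while $|S'|=|S|+1$, so $cost(S')\le cost(S)$ and (3) holds.

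The main obstacle I anticipate is the adjacency bookkeeping in the cheap cases: verifying cleanly that one can always choose two matching edges simultaneously adjacent in the $5$-cycle $C_1$ and in $C_2$ (when $C_2$ is a triangle or $5$-cycle). This is a finite check, but it needs to be argued carefully rather than waved through — the cleanest route is the pigeonhole observation that on a $5$-cycle, among any three vertices at least two consecutive pairs arise, so the set of $C_1$-adjacent pairs among $\{12,13,23\}$ has size $\ge 2$ and likewise for $C_2$, forcing a common pair. Everything else is a routine repetition of the credit arithmetic already carried out in the preceding two lemmas.
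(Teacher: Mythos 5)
Your handling of the subcase where $C_2$ is a $5$-cycle has a genuine gap. The pigeonhole claim you rely on --- that on a $5$-cycle with three marked vertices at least two of the three pairs are cycle-edges --- is false: marking vertices at positions $1,2,4$ of a $5$-cycle gives exactly one adjacent pair. Consequently, when both $C_1$ and $C_2$ are $5$-cycles there need not exist a pair of matching edges that is simultaneously adjacent in $C_1$ and in $C_2$ (e.g., $u$'s at positions $1,2,4$ of $C_1$, so only the pair $\{1,2\}$ is $C_1$-adjacent, and $v$'s at positions $1,3,4$ of $C_2$, so only the pair $\{2,3\}$ is $C_2$-adjacent). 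In that situation your double swap $S'=S\cup\{u_1v_1,u_2v_2\}\setminus\{u_1u_2,v_1v_2\}$ is unavailable, so the claimed contradiction with $3$-optimality is not established, and you have provided no fallback for this subcase; the conclusion ``these subcases cannot occur'' is therefore unproven for $C_2$ a $5$-cycle. (Your treatment of the triangle subcase is fine, since there every pair of $v_i$'s is adjacent and only $C_1$-adjacency is needed; it matches what the paper gets by invoking Lemma \ref{lem:triangle} with the roles swapped.)

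The paper avoids this issue entirely: it does not try to rule out the $5$-cycle/$5$-cycle configuration, but treats $C_2\in\{5\text{-cycle},6\text{-cycle},\text{large}\}$ uniformly with the single-removal merge $S'=S\cup\{u_1v_1,u_2v_2\}\setminus\{u_1u_2\}$, which only requires $u_1,u_2$ adjacent in $C_1$ (always available on a $5$-cycle). The credit arithmetic then closes because a $5$-cycle carries $\frac{3}{10}\cdot 5=\frac{3}{2}$ credits, so $cr(C_1)+cr(C_2)-cr(C')\geq \frac{3}{2}+\frac{3}{2}-2=1=|S'|-|S|$. Your credit computation for the $6$-cycle/large subcase is correct and agrees with the paper; simply fold the $5$-cycle subcase into that same construction and the proof is repaired.
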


\begin{proof}
Let $M=\{u_1v_1,u_2v_2,u_3v_3\}$, with $u_i\in C_1$ and $v_j\in C_2$. If $C_2$ is a triangle then we can apply Lemma~\ref{lem:triangle} and obtain the desired $S'$. Therefore we assume $C_2$ is $5$-cycle, $6$-cycle or large. Select any two edges in $M$, say $u_1v_1$ and $u_2v_2$, such that $u_1$ and $u_2$ are adjacent in $C_1$. Consider $S'=S\cup \{u_1v_1,u_2v_2\}\setminus \{u_1u_2\}$ and define $C'$ analogously to the previous case. Also in this case $S'$ satisfies (1) and (2). One has $|S'|=|S|+1$ and 
\begin{align*}
cr(S)-cr(S')=cr(C_1)+cr(C_2)-cr(C')\geq \frac{3}{10}5+\frac{3}{10}5-2=1. 
\end{align*}
Hence $cost(S')\leq cost(S)$, i.e. $S'$ satisfies (3).

\end{proof}

\begin{lemma}\label{lem:large}
Suppose that there exists two large components $C_1$ and $C_2$ and there exists a 3-matching $M$ between $C_1$ and $C_2$.
Then in polynomial time one can compute an $S'$ satisfying the conditions (1), (2) and (3) of Lemma \ref{lem:caseA:gluing}. 
\end{lemma}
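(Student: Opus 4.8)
The plan is to handle this exactly as the ``both components large'' counterpart of Lemmas~\ref{lem:4cycle}, \ref{lem:triangle} and \ref{lem:5cycle}; it is in fact the easiest of these cases, because a large component carries $2$ credits and so there is plenty of credit to spend. Concretely, I would pick two edges $e_1=u_1v_1$ and $e_2=u_2v_2$ of the given $3$-matching $M$, with $u_1,u_2\in V(C_1)$, $v_1,v_2\in V(C_2)$, and $u_1\neq u_2$, $v_1\neq v_2$ (any two of the three edges of $M$ have this property), and set $S':=S\cup\{e_1,e_2\}$. Note that, unlike the situations where $C_2$ is small, here \emph{no} edge is deleted.

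Next I would verify that $C':=C_1\cup C_2\cup\{e_1,e_2\}$ is $2$EC. The cleanest route is to use that a connected graph is $2$EC if and only if every edge lies on a cycle: every edge of $C_1$ (resp.\ of $C_2$) lies on a cycle inside $C_1$ (resp.\ inside $C_2$), since those subgraphs are $2$EC; and both $e_1$ and $e_2$ lie on the cycle obtained by concatenating $e_1$, a $v_1$--$v_2$ path in $C_2$, $e_2$, and a $u_2$--$u_1$ path in $C_1$. Since $|E(C')|=|E(C_1)|+|E(C_2)|+2\geq 7$, the component $C'$ is large, so Invariant~\ref{inv:2ECcomponent} is preserved; moreover the $2$EC components of $S'$ are exactly those of $S$ with $C_1$ and $C_2$ replaced by the single component $C'$, which immediately yields conditions (1) and (2) of Lemma~\ref{lem:caseA:gluing}.

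Finally I would do the credit bookkeeping: $|S'|=|S|+2$, while
$$cr(S')=cr(S)-cr(C_1)-cr(C_2)+cr(C')=cr(S)-2-2+2=cr(S)-2,$$
so $cost(S')=|S'|+cr(S')=|S|+2+cr(S)-2=cost(S)$, establishing condition (3).

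There is no real obstacle here; the only point to be careful about—and it is exactly what distinguishes this from the ``small $C_2$'' cases—is that one should \emph{not} attempt to save an edge by deleting some edge of $C_1$: the endpoints $u_1,u_2$ of the chosen matching edges need not be adjacent in the large graph $C_1$, so there may be no edge $u_1u_2$ available to remove, and in any case it is unnecessary, since the surplus credit ($2+2$ against $2$) already pays for the two added edges.
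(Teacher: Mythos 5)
Your proposal is correct and matches the paper's proof: the paper also simply adds any two edges of $M$ to $S$, obtaining $|S'|=|S|+2$ and $cr(C_1)+cr(C_2)-cr(C')=2+2-2=2$, hence $cost(S')\leq cost(S)$. Your extra details (the cycle argument for 2-edge-connectivity and the check of Invariant~\ref{inv:2ECcomponent}) are fine elaborations of the same argument.
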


\begin{proof} 

In this case take any two edges $e,f\in M$, and consider $S'=S\cup \{e,f\}$. Notice that $S'$ satisfies (1) and (2). Let $C'$ be the 2EC component of $S'$ containing the nodes of $C_1$ and $C_2$. One has $|S'|=|S|+2$ and 
\begin{align*}
cr(S)-cr(S')=cr(C_1)+cr(C_2)-cr(C')=2+2-2=2. 
\end{align*}
Hence $cost(S')\leq cost(S)$, i.e. $S'$ satisfies (3).  

\end{proof}

We next describe two cases when we can merge $3$ 2EC components of $S$ into one.
\begin{lemma}\label{lem:3componentsC1_4cycle}
Let $C_1$, $C_2$ and $C_3$ be 2EC components of $S$ such that $C_1$ is a $4$-cycle, $C_2$ and $C_3$ are $5$-cycles, and there is a 3-matching $M_{12}$ (resp., $M_{13}$) between $C_1$ and $C_2$ (resp., $C_3$).
Then in polynomial time one can compute an $S'$ satisfying the conditions (1), (2) and (3) of Lemma \ref{lem:caseA:gluing}. 
\end{lemma}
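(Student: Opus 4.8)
The plan is to treat the $4$-cycle $C_1$ as a hub and to glue both $5$-cycles onto it, deleting one edge from each $5$-cycle so that only two extra edges are needed overall. Write $C_1=a_1a_2a_3a_4a_1$, $C_2=b_1b_2b_3b_4b_5b_1$, $C_3=c_1c_2c_3c_4c_5c_1$. The only combinatorial input I need is that the independence number of a $5$-cycle is $2$: since the three edges of $M_{12}$ form a matching, their endpoints in $C_2$ are three distinct nodes, so two of them, say the endpoints of $e_1=ab$ and $e_2=a'b'$ (with $a\ne a'$ in $V(C_1)$ and $b\ne b'$ in $V(C_2)$), satisfy $bb'\in E(C_2)$. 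Symmetrically I pick $f_1=\hat a c$, $f_2=\hat a' c'$ in $M_{13}$ with $\hat a\ne\hat a'$ in $V(C_1)$ and $cc'\in E(C_3)$. Then I set $S'=\bigl(S\setminus\{bb',cc'\}\bigr)\cup\{e_1,e_2,f_1,f_2\}$; note that $bb'\in E(C_2)\subseteq S$ and $cc'\in E(C_3)\subseteq S$, while $e_1,e_2,f_1,f_2\notin S$ as they join distinct components of $S$.

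First I would check conditions (1) and (2). The graph $P_2:=C_2\setminus\{bb'\}$ is a Hamiltonian path of $C_2$ from $b$ to $b'$, so $a,b,\dots,b',a'$ (that is, $P_2$ together with $e_1,e_2$) is an ear: a path with distinct endpoints $a,a'\in V(C_1)$ and all internal vertices ($V(C_2)$) new. Attaching it to the $2$-edge-connected graph $C_1$ gives a $2$-edge-connected graph on $V(C_1)\cup V(C_2)$. Attaching the further ear $\hat a,c,\dots,c',\hat a'$ (namely $P_3:=C_3\setminus\{cc'\}$ with $f_1,f_2$) at the distinct vertices $\hat a,\hat a'$ of that graph then yields a $2$-edge-connected component $C'$ on $V(C_1)\cup V(C_2)\cup V(C_3)$, while every other component of $S$ is left intact. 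Hence the components of $S'$ are $2$-edge-connected and $S'$ has two fewer of them than $S$. Since $|E(C')|=4+(5-1)+(5-1)+4=16\ge 7$, the component $C'$ is large, so Invariant~\ref{inv:2ECcomponent} is preserved (all other components being unchanged).

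For condition (3) I would just account for edges and credits. We deleted $2$ edges and added $4$, so $|S'|=|S|+2$. On the credit side $C_1$ (a $4$-cycle) and $C_2,C_3$ (both $5$-cycles) disappear and the large $C'$ appears, so
\[
cr(S)-cr(S')=cr(C_1)+cr(C_2)+cr(C_3)-cr(C')=\tfrac{3}{10}\cdot 4+\tfrac{3}{10}\cdot 5+\tfrac{3}{10}\cdot 5-2=\tfrac{11}{5},
\]
whence $cost(S)-cost(S')=-2+\tfrac{11}{5}=\tfrac{1}{5}>0$, i.e.\ $cost(S')\le cost(S)$. All the choices above are clearly computable in polynomial time. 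I do not expect a genuine obstacle: the only points needing care are selecting the two matching edges whose $5$-cycle endpoints are adjacent (handled by the independence-number observation) and verifying that the two successive merges preserve $2$-edge-connectivity, which is the routine ``adding an ear keeps a graph $2$-edge-connected'' argument; the slack of $\tfrac{1}{5}$ in the credit count leaves comfortable room.
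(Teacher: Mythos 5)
Your proof is correct and follows essentially the same route as the paper: pick two matching edges whose endpoints in each $5$-cycle are adjacent (guaranteed since the independence number of $C_5$ is $2$), delete the corresponding cycle edge in $C_2$ and in $C_3$, add the four matching edges, and check that the credit balance $-2+\tfrac{3}{10}\cdot4+\tfrac{3}{10}\cdot5+\tfrac{3}{10}\cdot5-2=\tfrac{1}{5}>0$. Your write-up merely makes explicit the ear-attachment argument for $2$-edge-connectivity and the adjacency observation, which the paper leaves implicit.
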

\begin{proof}
Let us choose two edges $u_1v_1,u_2v_2\in M_{12}$ with $u_i\in C_2$ and $v_j\in C_1$, such that $u_1$ and $u_2$ are adjacent in $C_2$. Similarly, let us choose two edges $w_1v'_1,w_2v'_2\in M_{12}$ with $w_i\in C_3$ and $v'_j\in C_1$, such that $w_1$ and $w_2$ are adjacent in $C_3$. Consider $S'=S\setminus \{u_1u_2,w_1w_2\}\cup \{u_1v_1,u_2v_2,w_1v'_1,w_2v'_2\}$. Observe that $S'$ satisfies (1) and (2). In particular it contains a 2EC component $C'$ spanning the nodes $V(C_1)\cup V(C_2)\cup V(C_3)$. $S'$ satisfies (3) since
\begin{align*}
cost(S)-cost(S') & =|S|-|S'|+cr(C_1)+cr(C_2)+cr(C_3)-cr(C')\\
& =-2+\frac{3}{10}4+\frac{3}{10}5+\frac{3}{10}5-2>0
\end{align*}
\end{proof}

We need the following technical lemma (see Figure \ref{fig:c5c4c9}).

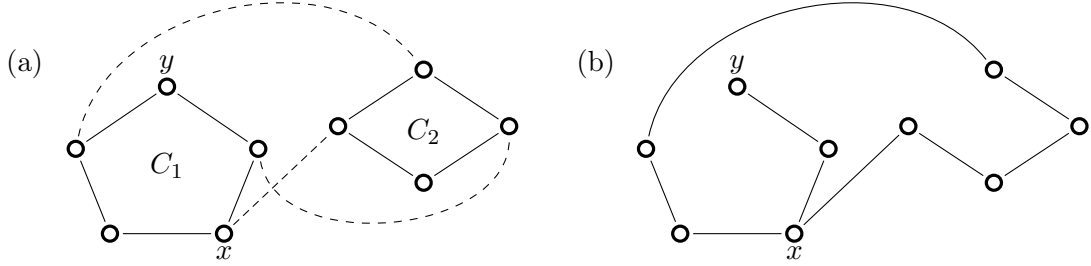
\begin{figure}
\begin{center}
\begin{tikzpicture}[scale=1.5]

\tikzset{vertex/.style={draw=black, very thick, circle,minimum size=0pt, inner sep=2pt, outer sep=2pt}
}

%%%%%%%%%%%%%%%  picture (a) with graph G_X %%%%%%%%%%%%%%%%%%%%%%%%%

\node[right] () at (-1,1.5) {(a)};
\node[right] () at (4,1.5) {(b)};

\node[below] () at (0.5,0.8) {$C_1$};
\node[below] () at (2.75,1.1) {$C_2$};

\begin{scope}[every node/.style={vertex}]

\node (0) at (0,0) {};
\node (1) at (1,0) {};
\node (2) at (1.3,0.75) {};

\node (4) at (-0.3,0.75) {};
\node (3) at (0.5,1.3) {};

\node (0') at (5,0) {};
\node (1') at (6,0) {};
\node (2') at (6.3,0.75) {};

\node (4') at (4.7,0.75) {};
\node (3') at (5.5,1.3) {};

\draw (1) to (2);
\draw (2) to (3);
\draw (3) to (4);
\draw (4) to (0);
\draw (0) to (1);

\draw (1') to (2');
\draw (2') to (3');
%\draw (3') to (4');
\draw (4') to (0');
\draw (0') to (1');

% \node (5) at (5,0) {};
% \node (6) at (6,0) {};
% \node (7) at (7,0) {};
% \node[green!50!black] (8) at (8,0) {};

\node (a) at (2,0.95) {};

\node (b) at (2.75,0.45) {};
\node (d) at (2.75,1.45) {};
\node (c) at (3.5,0.95) {};

\node (a') at (7,0.95) {};

\node (b') at (7.75,0.45) {};
\node (d') at (7.75,1.45) {};
\node (c') at (8.5,0.95) {};

% \node (5a) at (4.5,0.5) {};
% \node (8a) at (7.5, -0.7) {};
% \node (8b) at (8, 0.7) {};
% \node (8c) at (7.5, 1.2) {};
% \node (8d) at (8.5, 1.2) {};

% \node (a) at (1,-1) {};
% \node (b) at (3,1) {};
% \node (c) at (4.7,1) {};
%\node (d) at (0.5,1) {};
%\node (e) at (1.5,1) {};

%\draw (1) to (2);
\draw (d) to (a);
\draw (c) to (d);
\draw (b) to (c);
\draw (a) to (b);

%\draw (d') to (a');
\draw (c') to (d');
\draw (b') to (c');
\draw (a') to (b');

\draw [bend left=65, dashed](4) to (d);

\draw [bend right=85,dashed](2) to (c);

\draw [bend left=65, ](4') to (d');

\draw [](1') to (a');

\end{scope}

%\node[left=1pt, green!50!black] (x) at (0) {$x$};
%\node[right=1pt, green!50!black] (z) at (8) {$z$};

\begin{scope}[red!80!black, very thick]
%\draw (2) to (1);
%\draw (3) -- (4);
%\draw (6) to (5);
%\draw (7) to (6);
\end{scope}

% \node[above=1pt, red!80!black] () at (1.6,0) {$\ell_1$};
% \node[above=1pt, red!80!black] () at (3.5,0) {$\ell_2$};
% \node[above=1pt, red!80!black] () at (5.5,0) {$\ell_3$};
% \node[above=1pt, red!80!black] () at (6.6,0) {$\ell_4$};

\begin{scope}[very thick]
%\draw (8b) -- (8d);
%\draw (3) -- (3b);
\end{scope}

\node[below=1pt] () at (1) {$x$};
\node[below=1pt] () at (1') {$x$};
\node[above=1pt] () at (3) {$y$};
\node[above=1pt] () at (3') {$y$};
%\node[below=1pt] () at (4) {$v_2$};
% \node[below=1pt] () at (5) {$u_3$};
% \node[below=1pt] () at (6) {$v_3 = u_4$};
% \node[below=1pt] () at (7) {$v_4$};

\begin{scope}[densely dashed]
\draw (1) -- (a);

%\draw [bend right=65](4) -- (d);
%\draw (2) -- (3);
% \draw (3) -- (3a);
% \draw (3c) -- (3b) -- (3d);
% \draw (5) -- (4);
% \draw (5) -- (5a);
% \draw (7) -- (8);
% \draw (8a) -- (8) -- (8b) -- (8c);
\end{scope}

\end{tikzpicture}
\end{center}
\caption{
(a) Illustration of Lemma~\ref{lem:c5c4to9}, where the dashed edges correspond to the edges of $M$.
(b) A valid subset $F$ of $9$ edges. Note that $F\cup\{xy\}$ forms a 2EC graph on $V(C_1)\cup V(C_2)$.
\label{fig:c5c4c9}
}
\end{figure}

\begin{lemma}\label{lem:c5c4to9}
Let $C_1$ and $C_2$ be two 2EC components of $S$ such that $C_1$ is a 5-cycle, $C_2$ is a 4-cycle, and there is a 3-matching $M$ between $C_1$ and $C_2$. Then for any two nodes $x,y\in V(C_1)$ there exists a set $F$ of $9$ edges in $G$ such that $F\cup \{xy\}$ induces a 2EC spanning subgraph of $V(C_1)\cup V(C_2)$.
\end{lemma}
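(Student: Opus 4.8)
The plan is to build $F$ from almost all of $C_1$, a Hamiltonian path through $C_2$, and two of the three matching edges, and then to reduce the statement to an elementary fact about a $5$-cycle with two extra chords.

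\emph{A super-path through $C_2$.} Since $M$ is a matching of size $3$, it saturates exactly $3$ of the $4$ vertices of $C_2$, and those three vertices form a sub-path $b'-b''-b'''$ of the $4$-cycle $C_2$. So the two edges of $M$ incident to $\{b',b''\}$ (and, alternatively, the two incident to $\{b'',b'''\}$) land on two adjacent vertices $b_\alpha,b_\beta$ of $C_2$, and there is a Hamiltonian path $\rho$ of $C_2$ from $b_\alpha$ to $b_\beta$ using $3$ edges of $C_2$. Let $a_p,a_q\in V(C_1)$ be the $C_1$-endpoints of these two matching edges, and let $\pi=\rho\cup\{a_pb_\alpha,a_qb_\beta\}$. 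Then $\pi$ is a path from $a_p$ to $a_q$ visiting all of $V(C_2)$, with every vertex of $C_2$ of degree exactly $2$, and $|E(\pi)|=5$. There are two choices of matching pair, yielding two options for $\{a_p,a_q\}$, and these two options share the ``middle'' vertex $v(b'')$; in particular the two options are distinct $2$-subsets of $V(C_1)$. I would then set $F:=(E(C_1)\setminus\{g\})\cup E(\pi)$ for a cycle-edge $g$ of $C_1$ to be chosen; note $|F|=4+5=9$ and all these edges are distinct edges of $G$.

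\emph{Reduction to a $5$-cycle with two chords.} In $F\cup\{xy\}$ every vertex of $C_2$ has degree exactly $2$ (the only edges touching $V(C_2)$ are the $5$ edges of $\pi$), so suppressing the degree-$2$ path $\pi$ turns $F\cup\{xy\}$ into $(C_1-g)+A+X$ on $V(C_1)$, where $A:=\{a_p,a_q\}$ and $X:=\{x,y\}$; since suppression of degree-$2$ vertices preserves $2$-edge-connectivity in both directions, it suffices to choose the option $A$ and the cycle-edge $g$ so that $(C_1-g)+A+X$ is $2$-edge-connected. Now $C_1+A+X$ contains the spanning cycle $C_1$, hence is $2$-edge-connected, and $(C_1-g)+A+X$ is $2$-edge-connected iff $g$ lies in no $2$-edge-cut of $C_1+A+X$. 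A short argument shows that every $2$-edge-cut of $C_1+A+X$ consists of a pair of edges of $C_1$ whose induced bipartition separates neither $A$ nor $X$ (if a $2$-edge-cut contained a non-cycle edge, the cycle $C_1$ would cross the bipartition fewer than twice or some edge would be a third crossing edge).

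\emph{Choosing $g$.} I would distinguish two cases. (a) If $A$ or $X$ is parallel to an edge $e$ of $C_1$ (in particular if $xy\in E(C_1)$), then $e$ is in no $2$-edge-cut: after deleting $e$ the remaining edges of $C_1$ together with that parallel chord already form a spanning cycle. So take $g:=e$. (b) Otherwise $A$ and $X$ are both ``distance-$2$'' chords of $C_1$. If $A\ne X$ (which we can always arrange: if one super-chord option equals $X$, the other does not, since the two options are distinct; and if a super-chord option is parallel to a cycle-edge we are in case (a)), then a direct computation on the $5$-cycle shows that two distinct distance-$2$ chords share at most two non-separating pairs of cycle-edges, so at most four edges of $C_1$ lie in some $2$-edge-cut; pick $g$ among the remaining edge(s). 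This exhausts all cases. The only place any care is needed is the bookkeeping in case (b) — ruling out that the two chords could have three non-separating pairs of cycle-edges in common (which forces $X=A$) and, in the degenerate branch where $\{x,y\}$ equals a matching pair, remembering to pass to the other super-chord; everything else (constructing $\pi$, the suppression argument, the edge count) is routine.
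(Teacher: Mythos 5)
Your proof is correct, and it takes a genuinely different route from the paper's. The paper argues by direct case analysis: it first treats the case where two matching edges land on adjacent vertices of both $C_1$ and $C_2$ (your case of a super-chord $A$ parallel to a cycle edge, which yields a Hamiltonian $9$-cycle), then normalizes $M=\{u_1v_1,u_2v_3,u_4v_4\}$ and enumerates the positions of $\{x,y\}$ ($x,y$ adjacent in $C_1$; $\{x,y\}$ disjoint from $\{v_1,v_3\}$; meeting it in one vertex), writing down an explicit $9$-edge set $F$ in each subcase. You construct the same family of candidates --- four edges of $C_1$, three of $C_2$, and two matching edges hitting adjacent vertices of $C_2$ --- but analyze them uniformly: suppressing the degree-$2$ path through $C_2$ reduces the statement to the $5$-cycle $C_1$ with the two chords $A$ and $X$; the $2$-edge-cuts of $C_1+A+X$ are exactly pairs of cycle edges whose bipartition separates neither chord; and it remains to pick the deleted cycle edge $g$ outside all such cuts, which is possible precisely because you can force $A\neq X$ via the two distinct super-chord options (the three $M$-endpoints in the $4$-cycle form a path, giving two adjacent pairs sharing the middle vertex). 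Your counting claims are right: equal chords block every cycle edge, while two distinct distance-$2$ chords share two non-separating pairs if they share an endpoint and one otherwise, so at most four cycle edges are blocked and a valid $g$ survives; the chosen sets then coincide with the paper's (e.g.\ for $(x,y)=(v_1,v_4)$ your recipe forces $g=v_3v_4$, exactly the paper's case (d)). What your route buys is a uniform argument whose exhaustiveness is transparent, replacing the positional case analysis; what it costs is the (standard but extra) machinery of multigraph suppression and the bond characterization, which the paper avoids by simply exhibiting $F$ case by case. Like the paper, you implicitly assume $x\neq y$, which is how the lemma is applied.
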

\begin{proof}
%Our goal is to describe a set of edges $F$ spanning $V(C_1)\cup V(C_2)$ consisting of one cycle $C$ plus (possibly empty) paths from $x$ (resp., $y$) to $C$. 
Let $C_1=v_1,v_2,v_3,v_4,v_5,v_1$ and $C_2=u_1,u_2,u_3,u_4,u_1$. 
We consider the following cases, each time assuming that the previous cases do not apply:

\medskip\noindent {\bf (a)} There exist $e_1,e_2\in M$ which are incident to adjacent nodes of $C_1$, say $v',v''$, and adjacent nodes of $C_2$, say $u',u''$, the cycle $F=(E(C_1)\setminus \{v'v''\}) \cup (E(C_2)\setminus \{u'u''\})\cup\{e_1,e_2\}$ satisfies the claim.

In the following we can assume w.l.o.g. that $M=\{u_1v_1,u_2v_3,u_4v_4\}$.

\medskip\noindent {\bf (b)} $x$ and $y$ are adjacent nodes of $C_1$. Consider $F= (E(C_1)\setminus \{xy\}) \cup (E(C_2)\setminus \{u_1u_2\})\cup\{u_1v_1,u_2v_3\}$. Clearly $F\cup \{xy\}$ is 2EC and spans all the considered nodes.

In the following we can assume $\{x,y\}\neq \{v_1,v_3\}$, since $\{x,y\} = \{v_1,v_4\}$ is a symmetric case. 

\medskip\noindent {\bf (c)} $\{x,y\}\cap\{v_1,v_3\}=\emptyset$. The case $\{x,y\}=\{v_4,v_5\}$ is excluded by case (b). Thus, $\{x,y\}=\{v_2,v_4\}$ or $\{x,y\}=\{v_2,v_5\}$ may occur. In both cases we will not use the edge $v_4u_4$ in $F$. Under tris restriction, the two mentioned cases are symmetric hence assume w.l.o.g. $\{x,y\}=\{v_2,v_5\}$. In this case $F=(E(C_1)\setminus \{v_1v_5\})\cup (E(C_2)\setminus \{u_1u_2\})\cup\{u_1v_1,u_2v_3\}$ satisfies the claim. 

\medskip\noindent {\bf (d)} $|\{x,y\}\cap\{v_1,v_3\}|=1$. W.l.o.g. $x\in \{v_1,v_3\}$, therefore either $(x,y)=(v_1,v_4)$ or $(x,y)=(v_3,v_5)$ by case (b). In both cases we will not use the edge $u_4v_4$ in $F$. Under this restriction, the two mentioned cases are symmetric hence assume w.l.o.g. $(x,y)=(v_1,v_4)$. In this case  $F=(E(C_1)\setminus \{v_3v_4\})\cup (E(C_2)\setminus \{u_1u_2\})\cup\{u_1v_1,u_2,v_3\}$ satisfies the claim.
\end{proof}

\begin{lemma}\label{lem:3componentsC1_5cycle}
Let $C_1$, $C_2$ and $C_3$ be 2EC components of $S$ such that $C_1$ is a $5$-cycle, $C_2$ and $C_3$ are $4$-cycles, and there is a 3-matching $M_{12}$ (resp., $M_{13}$) between $C_1$ and $C_2$ (resp., $C_3$).
Then in polynomial time one can compute an $S'$ satisfying the conditions (1), (2) and (3) of Lemma \ref{lem:caseA:gluing}.  
\end{lemma}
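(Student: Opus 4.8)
The goal is to merge the three 2EC components $C_1$ (a $5$-cycle), $C_2$ and $C_3$ (both $4$-cycles) into a single 2EC component $C'$ of $S'$, while showing $cost(S')\le cost(S)$. The credit budget is exactly $cr(C_1)+cr(C_2)+cr(C_3) = \frac{3}{10}\cdot 5 + \frac{3}{10}\cdot 4 + \frac{3}{10}\cdot 4 = \frac{39}{10}$, and $C'$ (being large) needs only $cr(C')=2$. So we may spend up to $\frac{39}{10}-2 = \frac{19}{10}$ credits on extra edges, i.e.\ $|S'|-|S|\le 1$ (since one extra edge costs $1\ge$ nothing and two cost $2>\frac{19}{10}$ — wait, $2>1.9$, so in fact we can afford at most $1$ extra edge). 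Hmm, this is tight, so the key is to be economical with edges.

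First I would apply Lemma~\ref{lem:c5c4to9} to the pair $(C_1,C_2)$ with the $3$-matching $M_{12}$: choosing two nodes $x,y\in V(C_1)$ that are \emph{adjacent} in $C_1$, it gives a set $F$ of $9$ edges such that $F\cup\{xy\}$ is a 2EC spanning subgraph of $V(C_1)\cup V(C_2)$. A $9$-edge 2EC graph on $|V(C_1)|+|V(C_2)|=9$ vertices is a Hamiltonian-type structure; the point of the freedom in $x,y$ is that we want $xy$ to remain a usable edge of $C_1$ so we can still attach $C_3$ cheaply. Concretely: keep the edge $xy\in E(C_1)$ (so it stays in $S$), realize the $9$ edges $F$ by adding to $S$ only the $F$-edges that are not already present (those in $E(C_1)\setminus\{xy\}$ and $E(C_2)$ minus the two removed edges are already in $S$; the "new" edges are the two matching edges, and we pay $2$ for these but recover by deleting two cycle edges), so that merging $C_1$ and $C_2$ this way is edge-neutral: $|S|$ unchanged after the $C_1$--$C_2$ merge, giving a single component spanning $V(C_1)\cup V(C_2)$ plus $C_3$ separate. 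Then attach $C_3$ (a $4$-cycle) to this merged component: take two edges $w_1v_1,w_2v_2$ of $M_{13}$ with $w_1,w_2$ adjacent in $C_3$, add them and delete $w_1w_2$, a net $+1$ edge. The result $C'$ is 2EC and spans everything.

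Then I would tally the cost: $|S'|-|S| = 0 + 1 = 1$, and $cr(S)-cr(S') = \frac{3}{10}\cdot 5+\frac{3}{10}\cdot 4+\frac{3}{10}\cdot 4 - 2 = \frac{19}{10}$, so $cost(S)-cost(S') = \frac{19}{10}-1 = \frac{9}{10}>0$, giving property (3); properties (1) and (2) are immediate since $C'$ is a single new 2EC component replacing three old ones and Invariant~\ref{inv:2ECcomponent} is preserved because $C'$ is large. The computation is clearly polynomial.

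The main obstacle is the first merge: I need Lemma~\ref{lem:c5c4to9} to produce an $F$ in which the retained edge $xy$ can be chosen to lie in $E(C_1)$ (so that it's already in $S$ and costs nothing, and more importantly its two endpoints are still "free" to receive the $M_{13}$ edges), and I need to verify that the two matching edges used in $F$ do not conflict with the two $M_{13}$ edges at the nodes of $C_1$ — but since $C_1$ has $5$ nodes and we only touch at most $4$ of them across both attachments, and each attachment only needs its two $C_1$-endpoints to be distinct from each other, a short case check (possibly re-choosing which pair of $M_{13}$ edges to use, exploiting that a $3$-matching always contains a pair hitting adjacent nodes of a triangle/$4$-cycle side) handles it. I would double-check that $F\cup\{xy\}$ having exactly $10$ edges on $9$ vertices is genuinely 2EC from Lemma~\ref{lem:c5c4to9}, and that adding the two $M_{13}$-edges and removing one edge of $C_3$ indeed keeps $2$-edge-connectivity — standard since we splice a path through $V(C_3)$ into the already-2EC merged component.
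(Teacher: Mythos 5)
Your plan fails at condition (3), because the $C_1$--$C_2$ merge you describe is not edge-neutral. Lemma \ref{lem:c5c4to9} produces a set $F$ of $9$ edges such that $F\cup\{xy\}$ is 2EC on the $9$ vertices $V(C_1)\cup V(C_2)$; if you retain the real edge $xy\in E(C_1)$, the merged piece therefore has $10$ edges, whereas $E(C_1)\cup E(C_2)$ has only $9$. Concretely, in the relevant case of that lemma $F=(E(C_1)\setminus\{xy\})\cup(E(C_2)\setminus\{u_1u_2\})\cup\{u_1v_1,u_2v_3\}$: only \emph{one} cycle edge ($u_1u_2$) is genuinely deleted, since the other ``deleted'' edge is $xy$ itself, which you then put back; so two matching edges are added against one deletion, a net $+1$ (indeed $F$ alone is not even a 2-edge-cover, as one node of $C_1$ has degree $1$ in it). Together with the $+1$ for attaching $C_3$ you get $|S'|=|S|+2$, while the released credits are only $\frac{3}{10}(5+4+4)-2=\frac{19}{10}<2$, so $cost(S')>cost(S)$ and there is no slack to absorb this. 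Nor can you fall back on a truly edge-neutral direct merge of $C_1$ and $C_2$ (two matching edges with adjacent endpoints on \emph{both} sides, deleting one edge of each cycle): such a pair need not exist in $M_{12}$, e.g.\ for $C_1=v_1,\ldots,v_5,v_1$, $C_2=z_1,\ldots,z_4,z_1$ and $M_{12}=\{v_1z_1,v_2z_3,v_4z_2\}$ every pair of matching edges fails adjacency on at least one side.

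The paper's proof uses Lemma \ref{lem:c5c4to9} in the opposite way, exploiting precisely the strength you give up: $x,y$ may be an arbitrary, possibly non-adjacent pair, and $xy$ need not be an edge of $G$ at all. It picks two edges $v_iz_1,v_jz_2\in M_{12}$ whose $C_2$-endpoints $z_1,z_2$ are adjacent (always possible on a $4$-cycle), applies the lemma to $(C_1,C_3,v_i,v_j)$, and then \emph{simulates} the virtual edge $v_iv_j$ by the path through $C_2$, setting $F'=F\cup E(C_2)\cup\{v_iz_1,v_jz_2\}\setminus\{z_1z_2\}$. This $F'$ is 2EC on all three components and has $9+4+2-1=14$ edges against the original $13$, so only one extra edge is paid and the $\frac{19}{10}$ credits suffice. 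To salvage your two-step plan you would need the first merge to cost $0$ edges, which the hypotheses do not guarantee; merging all three components in one shot, with $C_2$ playing the role of the virtual edge, is what makes the budget close.
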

\begin{proof}
 Let $C_1=v_1,v_2,v_3,v_4,v_5,v_1$ and $C_2=z_1,z_2,z_3,z_4,z_1$. Notice that $M_{12}$ must contain two edges with adjacent nodes in $C_2$, say $v_iz_1,v_jz_2\in M_{12}$. We apply Lemma \ref{lem:c5c4to9} to the tuple $(C_1,C_3,v_i,v_j)$, hence obtaining a set $F$ of $9$ edges such that $F\cup \{v_iv_j\}$ is a 2EC spanning subgraph of $V(C_1)\cup V(C_3)$. As a consequence $F':=F\cup E(C_2)\cup \{v_iz_1\}\cup \{v_jz_2\}\setminus \{z_1z_2\}$ induces a 2EC component $C'$ spanning the nodes $V(C_1)\cup V(C_2)\cup V(C_3)$. Define $S'=S\setminus (E(C_1)\cup E(C_2)\cup E(C_3))\cup F'$. Notice that $S'$ satisfies (1) and (2) trivially, and satisfies (3) since
 \begin{align*}
 cost(S)-cost(S') & =|S|-|S'|+\credit(C_1)+\credit(C_2)+\credit(C_3)-\credit(C')\\
 & =13-14+\frac{3}{10}5+\frac{3}{10}4+\frac{3}{10}4-2>0
 \end{align*}
 \end{proof}

\subsubsection{When the Component Graph is a Tree}
\label{sec:caseA:tree}

Assuming that no local merge of the above kind can be performed, in this section we show how to obtain the desired $S'$ when $\hat{G}_S$ (which is a connected graph) is a tree. A critical observation here is that each edge $C_1C_2\in E(\hat{G}_S)$ naturally induces a partition $(V_1,V_2)$ of the node set (with both sides containing at least $3$ nodes). The 3-matching Lemma \ref{lem:matchingOfSize3} then guarantees that there is a 3-matching between $V_1$ and $V_2$, hence between $C_1$ and $C_2$ by the definition of $\hat{G}_S$. The next lemma follows immediately:
\begin{lemma}\label{lem:treeFiltering}
Assume that the conditions of Lemmas \ref{lem:4cycle}-\ref{lem:3componentsC1_5cycle} do not hold and $\hat{G}_S$ is a tree. Then, for any edge $C_1C_2$ and any two edges $C'_1C'_2,C'_1C'_3\in E(\hat{G}_S)$: 
\begin{enumerate}\itemsep0pt
\item\label{lem:treeFiltering:1} If $C_1$ is a $4$-cycle, then $C_2$ is a $5$-cycle.
\item\label{lem:treeFiltering:2} If $C_1$ is a $5$-cycle then $C_2$ is a $4$-cycle.
\item\label{lem:treeFiltering:3} If $C_1$ is a triangle, then $C_2$ is a $6$-cycle.
\item\label{lem:treeFiltering:4} If $C_1$ is a large then $C_2$ is not large.
\item\label{lem:treeFiltering:5} If $C'_1$ is a $4$-cycle, then $C'_2$ and $C'_3$ cannot be both $5$-cycles.
\item\label{lem:treeFiltering:6} If $C'_1$ is a $5$-cycle, then $C'_2$ and $C'_3$ cannot be both $4$-cycles.

\end{enumerate}
\end{lemma}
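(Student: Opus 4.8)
Lemma \ref{lem:treeFiltering} is a routine consequence of combining the local-merge lemmas (Lemmas \ref{lem:4cycle}--\ref{lem:3componentsC1_5cycle}) with the 3-Matching Lemma. The key structural observation, which I would state first, is this: if $C_1C_2$ is an edge of $\hat G_S$, then taking $V_1 = V(C_1)$ and $V_2 = V \setminus V(C_1)$ (where $V = \bigcup_{C \in \calC(S)} V(C)$) gives a partition in which $V_1$ has at least $3$ nodes and, if $|V_1| = 3$, then $C_1$ is a triangle so $G[V_1]$ is a triangle; the same holds symmetrically for $V_2$ if $V_2$ happens to be a single component, and if $V_2$ spans several components then $|V_2| \ge 3$ trivially (and it cannot be a triangle unless it is a single triangle component). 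Hence the hypotheses of the 3-Matching Lemma \ref{lem:matchingOfSize3} are met, and there is a $3$-matching between $V_1$ and $V_2$. Since $\hat G_S$ is a tree, the only edges of $G$ crossing the cut $(V(C_1), V\setminus V(C_1))$ go between $C_1$ and $C_2$ (any other crossing edge would create a second $\hat G_S$-edge incident to $C_1$ on the $C_2$-side of the tree, forming a cycle in $\hat G_S$). Therefore the $3$-matching lies entirely between $C_1$ and $C_2$, i.e. there is a $3$-matching $M$ between $C_1$ and $C_2$.

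**Carrying it out.** With a $3$-matching between any two $\hat G_S$-adjacent components in hand, items \ref{lem:treeFiltering:1}--\ref{lem:treeFiltering:4} follow by contraposition from Lemmas \ref{lem:4cycle}, \ref{lem:triangle}, \ref{lem:5cycle}, \ref{lem:large} respectively. For item \ref{lem:treeFiltering:1}: if $C_1$ is a $4$-cycle and $C_2$ is \emph{not} a $5$-cycle, then (since a $3$-matching exists) Lemma \ref{lem:4cycle} applies and produces the desired $S'$, contradicting the assumption that no local merge can be performed; the remaining possibility for $C_2$ — being a $3$-cycle, $4$-cycle, $5$-cycle, $6$-cycle or large — is thus narrowed to a $5$-cycle. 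Item \ref{lem:treeFiltering:2} is symmetric via Lemma \ref{lem:5cycle} (noting a $5$-cycle neighbour of a $5$-cycle, or a $6$-cycle/large neighbour, is ruled out, and a triangle neighbour is ruled out because then Lemma \ref{lem:triangle} applies to the pair). Item \ref{lem:treeFiltering:3} uses Lemma \ref{lem:triangle}: a triangle $C_1$ adjacent to anything that is not a $6$-cycle triggers a merge. Item \ref{lem:treeFiltering:4} uses Lemma \ref{lem:large}: two adjacent large components would merge. For items \ref{lem:treeFiltering:5} and \ref{lem:treeFiltering:6}, if $C'_1$ is a $4$-cycle adjacent to two $5$-cycles $C'_2, C'_3$, then since $\hat G_S$ is a tree the cut $(V(C'_1), V \setminus V(C'_1))$ — or rather the finer cuts separating $V(C'_1)$ from the subtrees containing $C'_2$ and $C'_3$ — yield $3$-matchings $M_{12}$ between $C'_1, C'_2$ and $M_{13}$ between $C'_1, C'_3$ (same tree argument: edges from $C'_1$ into the $C'_2$-subtree all go to $C'_2$, etc.). Then Lemma \ref{lem:3componentsC1_4cycle} applies, contradiction; item \ref{lem:treeFiltering:6} is the same via Lemma \ref{lem:3componentsC1_5cycle}.

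**Main obstacle.** The only nontrivial point is verifying the hypotheses of the $3$-Matching Lemma at the boundary cases $|V_i| = 3$: one must check that when a side of the relevant partition has exactly three nodes it is induced by a triangle component of $S$, which is immediate from Invariant \ref{inv:2ECcomponent} (the $2$EC components of $S$ are $i$-cycles for $3 \le i \le 6$ or large, so a $3$-node component is a triangle and $G[V(C)]$ contains that triangle). A secondary bookkeeping point is the ``tree implies the crossing edges go to a unique neighbour'' step used to localize the $3$-matching to the pair $(C_1, C_2)$; this is just the observation that in a tree $\hat G_S$, removing the edge $C_1C_2$ disconnects it, and every $G$-edge between $V(C_1)$ and the $C_2$-side component corresponds to an $\hat G_S$-edge from $C_1$ into that side, which by the tree property must be $C_1C_2$ itself. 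Everything else is a direct invocation of the already-proved local-merge lemmas, so the proof is short.
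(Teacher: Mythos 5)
Your overall plan (produce a $3$-matching between tree-adjacent components via Lemma \ref{lem:matchingOfSize3}, then apply Lemmas \ref{lem:4cycle}--\ref{lem:3componentsC1_5cycle} in contrapositive) is exactly the paper's route, but the cut you use to get the $3$-matching for items \ref{lem:treeFiltering:1}--\ref{lem:treeFiltering:4} does not do the job. You take $V_1=V(C_1)$ and $V_2=V\setminus V(C_1)$ and claim that, because $\hat{G}_S$ is a tree, every $G$-edge across this cut goes to $C_2$. That is false whenever $C_1$ has degree at least $2$ in the tree: an edge of $G$ from $C_1$ to another tree-neighbour $C_3$ crosses your cut and creates no cycle in $\hat{G}_S$ --- it merely witnesses the tree edge $C_1C_3$. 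Consequently the $3$-matching that Lemma \ref{lem:matchingOfSize3} gives you across $(V(C_1),V\setminus V(C_1))$ may have its endpoints spread over several neighbours of $C_1$ (say two edges to $C_2$ and one to $C_3$), and then none of Lemmas \ref{lem:4cycle}, \ref{lem:triangle}, \ref{lem:5cycle}, \ref{lem:large} is applicable to the pair $C_1,C_2$. So, as written, items \ref{lem:treeFiltering:1}--\ref{lem:treeFiltering:4} are not established.

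The repair is precisely the ``finer cut'' you already invoke for items \ref{lem:treeFiltering:5}--\ref{lem:treeFiltering:6}, and it is what the paper does: delete the tree edge $C_1C_2$ from $\hat{G}_S$ and let $V_1$ (resp.\ $V_2$) be the union of the node sets of the components of $S$ in the subtree containing $C_1$ (resp.\ $C_2$). Each side is a union of whole components, hence has at least $3$ nodes, and if it has exactly $3$ it is a single triangle component, so Lemma \ref{lem:matchingOfSize3} applies. Any $G$-edge crossing this partition joins two components on opposite sides, which are therefore adjacent in $\hat{G}_S$; since the only $\hat{G}_S$-edge across the split is $C_1C_2$ (any other would close a cycle in the tree), \emph{every} crossing edge --- not only those leaving $C_1$ --- runs between $C_1$ and $C_2$, and the $3$-matching is between $C_1$ and $C_2$ as required. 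Note that the same stronger statement is needed in your treatment of items \ref{lem:treeFiltering:5}--\ref{lem:treeFiltering:6}: the phrase ``edges from $C'_1$ into the $C'_2$-subtree all go to $C'_2$'' only controls edges leaving $C'_1$, whereas one must also rule out edges entering the $C'_2$-subtree from components other than $C'_1$; this too follows from treeness as above. With this correction the contrapositive applications of Lemmas \ref{lem:4cycle}--\ref{lem:3componentsC1_5cycle} go through and your argument coincides with the paper's.
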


\begin{lemma}\label{lem:caseA:treeCase}
Assume that the conditions of Lemmas \ref{lem:4cycle}-\ref{lem:3componentsC1_5cycle} do not hold and $\hat{G}_S$ is a tree. Then in polynomial time one can compute an $S'$ satisfying the conditions (1), (2) and (3) of Lemma \ref{lem:caseA:gluing}.
\end{lemma}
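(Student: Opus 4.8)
The plan is to root $\hat G_S$ at an arbitrary node and repeatedly work at a deepest leaf in order to reduce the number of components, using Lemma \ref{lem:treeFiltering} to control the local shape of the tree. Concretely, pick a leaf $C_1$ of $\hat G_S$ and let $C_2$ be its unique neighbor. If $C_2$ has degree $1$ as well, then $\hat G_S$ is a single edge and a $3$-matching between $C_1$ and $C_2$ (guaranteed by Lemma \ref{lem:matchingOfSize3} applied to the partition induced by $C_1C_2$) immediately gives $S'$ via whichever of Lemmas \ref{lem:4cycle}--\ref{lem:large} applies — and Lemma \ref{lem:treeFiltering} together with the assumption that no local merge is possible forces us into exactly the situation that is \emph{not} already handled, which is the ``large–large'' or the two‑cycle cases that the local lemmas do cover; so this base case is fine. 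The substance is the inductive step: $C_2$ has further neighbors. First I would dispose of the easy cases. If $C_1$ is a triangle, by Lemma \ref{lem:treeFiltering}.\ref{lem:treeFiltering:3} $C_2$ is a $6$-cycle, so Lemma \ref{lem:triangle} (with $C_2$ not a $6$-cycle — wait, $C_2$ \emph{is} a $6$-cycle) — so here I instead merge $C_1$ with \emph{another} neighbor $C_3$ of $C_2$: by Lemma \ref{lem:treeFiltering}.\ref{lem:treeFiltering:3} applied to $C_3C_2$, $C_3$ is also forced, and I glue $C_1$, $C_2$, $C_3$ using the two $3$-matchings (on $C_1C_2$ and $C_2C_3$). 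If $C_1$ is large, by \ref{lem:treeFiltering:4} $C_2$ is not large; then Lemmas \ref{lem:4cycle}--\ref{lem:5cycle} applied with the roles swapped (small component against large $C_1$) give $S'$ directly, \emph{unless} $C_2$ is a $6$-cycle — but a $6$-cycle against a large component is covered by none of the ``two-component'' lemmas as stated with $C_1$ large, so I would instead merge via $C_2$'s parent. Similarly if $C_1$ is a $6$-cycle, \ref{lem:treeFiltering}.\ref{lem:treeFiltering:3} forces $C_2$ to be a triangle, and again I glue $C_1,C_2,C_3$ where $C_3$ is the triangle's parent.

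The genuinely delicate case — and the one I expect to be the main obstacle — is when every component near the chosen leaf is a $4$-cycle or a $5$-cycle. By Lemma \ref{lem:treeFiltering}.\ref{lem:treeFiltering:1}--\ref{lem:treeFiltering:2} the subtree near $C_1$ is bipartite with $4$-cycles and $5$-cycles alternating, and by \ref{lem:treeFiltering:5}--\ref{lem:treeFiltering:6} no node can have two children of the ``opposite'' type. So a $4$-cycle has at most one $5$-cycle child (and conversely), meaning the relevant piece of $\hat G_S$ hanging below $C_2$ is essentially a \emph{path} of alternating $4$- and $5$-cycles. I would take the deepest three consecutive components along this path: they are either ($4$-cycle, $5$-cycle, $4$-cycle) or ($5$-cycle, $4$-cycle, $5$-cycle) with $3$-matchings on both tree edges. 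The first configuration is exactly Lemma \ref{lem:3componentsC1_5cycle} (middle component a $5$-cycle) and the second is exactly Lemma \ref{lem:3componentsC1_4cycle} (middle a $4$-cycle) — so a deepest triple along the alternating path always lets us merge three components into a large $C'$, and by Invariant \ref{inv:2ECcomponent} and the credit accounting inside those lemmas, $cost$ does not increase. One subtlety: if the alternating path below $C_2$ has only two components ($C_1$ and $C_2$ with $C_2$ a leaf in the induced subtree but having a parent of a different type), I need to reach up one more level; the rooted/deepest-leaf choice plus the degree bounds from \ref{lem:treeFiltering:5}--\ref{lem:treeFiltering:6} guarantee that three consecutive alternating components always exist somewhere, because $\hat G_S$ is finite and not a single edge (the single-edge case is the base case).

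Putting it together: in every case we produce $S'$ with strictly fewer $2$EC components, all $2$EC, and $cost(S')\le cost(S)$, and the merge is computable in polynomial time (finding the leaf, the neighbors, and the $3$-matchings via Lemma \ref{lem:matchingOfSize3} are all polynomial). Iterating Lemma \ref{lem:caseA:treeCase} then drives $S$ to a single $2$EC component. I expect the write-up to require a careful enumeration of which of Lemmas \ref{lem:4cycle}--\ref{lem:3componentsC1_5cycle} applies in each configuration of $(C_1,C_2)$ — in particular tracking the ``exception'' component type in each two-component lemma ($5$-cycle for Lemma \ref{lem:4cycle}, $6$-cycle for Lemma \ref{lem:triangle}, $4$-cycle for Lemma \ref{lem:5cycle}) and verifying that whenever the two-component merge is blocked by the exception, Lemma \ref{lem:treeFiltering} forces the neighbor's type so that a three-component merge (Lemma \ref{lem:3componentsC1_4cycle} or \ref{lem:3componentsC1_5cycle}, or the triangle/$6$-cycle gluing) is available instead. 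That bookkeeping, rather than any new idea, is the main work.
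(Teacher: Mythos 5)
Your plan for the $4$-/$5$-cycle part is roughly in the right place but logically inverted: under the hypothesis of the lemma the conditions of Lemmas \ref{lem:3componentsC1_4cycle} and \ref{lem:3componentsC1_5cycle} do \emph{not} hold, so a deepest $(4,5,4)$ or $(5,4,5)$ triple with $3$-matchings on both tree edges cannot exist — this is exactly what items \ref{lem:treeFiltering:5}--\ref{lem:treeFiltering:6} of Lemma \ref{lem:treeFiltering} record, and the correct conclusion (the one the paper draws) is that $4$- and $5$-cycles cannot occur at all, since by items \ref{lem:treeFiltering:1}--\ref{lem:treeFiltering:2} their presence would force every component to be a $4$- or $5$-cycle and then any degree-$\geq 2$ node of the tree yields a forbidden triple. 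That slip is repairable. The genuine gap is in the cases you treat as "easy": triangles, $6$-cycles and large components. Your concrete move for a triangle leaf $C_1$ with $6$-cycle neighbour $C_2$ — glue $C_1,C_2$ and a third neighbour $C_3$ "using the two $3$-matchings" — fails on both counts. First, $C_3$ is not forced by Lemma \ref{lem:treeFiltering}.\ref{lem:treeFiltering:3}: it may be a triangle, a $6$-cycle, or large. Second, the credit scheme does not pay for a naive merge: adding the four matching edges and deleting one triangle edge changes the cost by $3-\bigl(1+\tfrac{18}{10}+cr(C_3)-2\bigr)\geq \tfrac{1}{5}>0$ since $cr(C_3)\leq 2$, so the cost strictly increases; similarly, merging two adjacent $6$-cycles, or a $6$-cycle with a large neighbour, by just adding two matching edges costs $+2-\bigl(\tfrac{18}{10}+\tfrac{18}{10}-2\bigr)=\tfrac{2}{5}>0$. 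So "Lemmas \ref{lem:4cycle}--\ref{lem:5cycle} with roles swapped" and "merge via $C_2$'s parent" do not produce a valid $S'$, and your single-edge base case (e.g.\ a $6$-cycle adjacent to one large component, which no two-component lemma covers) is left unresolved.

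What is actually needed — and what the paper's proof supplies — is a dedicated analysis around a $6$-cycle $C$ (whose existence follows because triangles need $6$-cycle neighbours and large components cannot neighbour large ones): one must look at \emph{where} the matching edges from a neighbour land on $C$. If two of them hit adjacent nodes of $C$, one deletes that edge of $C$ (making the merge affordable, and for a triangle neighbour this situation even contradicts the $3$-optimality of the canonical $H$); otherwise every neighbour attaches only to the odd positions $\{v_1,v_3,v_5\}$ or only to the even positions $\{v_2,v_4,v_6\}$ of $C$, and one either merges $C$ simultaneously with one "odd" and one "even" neighbour while deleting two edges of $C$ (plus triangle edges as appropriate), or — when one of the two sides is empty — uses a chord such as $v_1v_3$, or concludes that $C$ is a $5/4$-contractible subgraph, contradicting that $G$ is structured. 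These mechanisms (deleting $6$-cycle edges keyed to adjacency of matching endpoints, the odd/even dichotomy, and the contractibility and $3$-optimality contradictions) are the substance of the lemma and are absent from your proposal, so as written the proof does not go through.
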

\begin{proof}
Assume by contradiction that $S$ contains a component $C$ which is a $4$-cycle or a $5$-cycle. By Lemma \ref{lem:treeFiltering}.\ref{lem:treeFiltering:1}-\ref{lem:treeFiltering:2} all the neighbors of a $4$-cycle (resp., $5$-cycle) $C'$ are $5$-cycles (resp., $4$-cycles). In particular, $S$ must consist only of $4$-cycles and $5$-cycles. Notice that $S$ must contain at least $3$ components (otherwise $G$ would contain at most $9$ nodes), hence there exists a node in $\hat G_S$ of degree at least $2$. However a $5$-cycle $C_1$ cannot have degree at least $2$ in $\hat G_S$, otherwise $C_1$ would have $2$ neighbouring $4$-cycles $C_2$ and $C_3$ which is excluded by Lemma \ref{lem:treeFiltering}.\ref{lem:treeFiltering:6}.  Similarly, a $4$-cycle $C_1$ cannot have degree at least $2$ by Lemma \ref{lem:treeFiltering}.\ref{lem:treeFiltering:5}. This leads to the desired contradiction. 

Thus the components of $S$ are triangles, 6-cycles or large. Assume by contradiction that there is \emph{not} a $6$-cycle in $S$. Then the existence of a triangle would contradict Lemma \ref{lem:treeFiltering}.\ref{lem:treeFiltering:2}. However in the complementary case all the components of $S$ would be large, contradicting \ref{lem:treeFiltering}.\ref{lem:treeFiltering:3}. Thus we can assume that $S$ contains a $6$-cycle $C$. By the 3-matching Lemma \ref{lem:matchingOfSize3} and the assumptions, for each neighbor $C'$ of $C$ in $\hat{G}_S$, there is a 3-matching between $C$ and $C'$. We distinguish three subcases:  

\medskip\noindent {\bf (a)} For some triangle component $C_1$ of $S$ adjacent to $C$ in $\hat{G}_S$, there exists a 2-matching $M=\{u_1v_1,u_2v_2\}$ with $u_i\in C_1$ and $v_j\in C$ such that $v_1$ and $v_2$ are adjacent nodes of $C$. Consider $S'=S\setminus \{v_1v_2,u_1u_2\}\cup \{u_1v_1,u_2v_2\}$. The same argument as in the proof of Lemma \ref{lem:4cycle} leads to a contradiction, hence this case cannot happen.

%, which trivially satisfies (1) and (2). Let $C'$ be the 2EC component in $S'$ with node set $V(C)\cup V(C_1)$. Notice that $cost(S)-cost(S')=|S|-|S'|+cr(C)+cr(C_1)-cr(C')=\frac{3}{10}6+1-2>0$, hence $S'$ satisfies (3).

\medskip\noindent {\bf (b)} For some 6-cycle or large component $C_1$ of $S$ adjacent to $C$ in $\hat{G}_S$, there exists two edges $u_1v_1,u_2v_2$ with $u_i\in C_1$ and $v_j\in C$ such that $v_1$ and $v_2$ are adjacent nodes of $C$. Consider $S'=S\setminus \{v_1v_2\}\cup \{u_1v_1,u_2v_2\}$, which trivially satisfies (1) and (2). Let $C'$ be the 2EC component in $S'$ with node set $V(C)\cup V(C_1)$. Notice that $S'$ satisfies (3) since
\begin{align*}
cost(S)-cost(S')=|S|-|S'|+cr(C)+cr(C_1)-cr(C')\geq -1+\frac{3}{10}6+\frac{3}{10}6-2>0.
\end{align*}    
 
\medskip\noindent {\bf (c)} None of the above cases applies. Let $C=v_1,v_2,v_3,v_4,v_5,v_6,v_1$. 
Notice that for each neighbour $C'$ of $C$ in $\hat{G}_S$ there is a 3-matching between $C$ and $C'$. In particular $C'$ is adjacent (in $G$) to at least $3$ nodes in $C$. Furthermore the subset of nodes in $V(C)$ adjacent to $C'$ are not adjacent in $C$, namely they are either $V_{odd}:=\{v_1,v_3,v_5\}$ or $V_{even}:=\{v_2,v_4,v_6\}$. This allows us to partition the neighbours of $C$ in $\hat G_S$ into the components $\calC_{odd}$ and $\calC_{even}$ adjacent to $V_{odd}$ and $V_{even}$, resp. We distinguish two subcases:

\medskip\noindent {\bf (c.1)} One of the two sets $\calC_{odd}$ and $\calC_{even}$ is empty, say $\calC_{odd}=\emptyset$. Notice that any neighbour of $\{v_1,v_3,v_5\}$ in $G$ must belong to $C$ in this case. Assume that $v_1v_3\in E(G)$. Consider any $C_1\in \calC_{even}$ (which cannot be empty). Let $u_2v_2,u_4v_4\in E(G)$, with $u_2,u_4\in V(C_1)$ and $u_2\neq u_4$. If $C_1$ is \emph{not} a triangle (hence it is a 6-cycle or large by the previous cases), let $S':=S\cup \{u_2v_2,u_4v_4,v_1v_3\}\setminus \{v_1v_2,v_3v_4\}$. Clearly $S'$ satisfies (1) and (2), and it satisfies (3) since 
\begin{align*}
cost(S)-cost(S')\geq -1+\frac{3}{10}6+\frac{3}{10}6-2>0. 
\end{align*}
If $C_1$ is a triangle, consider 
$S':=S\cup \{u_2v_2,u_4v_4,v_1v_3\}\setminus \{v_1v_2,v_3v_4,u_2u_4\}$ (i.e., the same $S'$ as before, but without the edge $u_2u_4$). The same argument as in the proof of Lemma \ref{lem:4cycle} leads to a contradiction, hence this case cannot happen.
%Again $S'$ trivially satisfies (1) and (2), and it satisfies (3) since $cost(S)-cost(S')\geq \frac{3}{10}6+1-2>0$. 

A symmetric argument holds if $v_1v_5\in E(G)$ or $v_3v_5\in E(G)$, hence we next assume $v_1v_3,v_1v_5,v_3v_5\notin E(G)$. This however implies that any feasible solution must include at least $6$ edges with endpoints in $V(C)$, making $C$ a contractible subgraph. This contradicts the assumption that $G$ is structured. 

\medskip\noindent {\bf (c.2)} Let $C_{odd}\in \calC_{odd}$ and $C_{even}\in \calC_{even}$. Let $M_1=\{u_1v_1,u_2v_3,u_3v_5\}$ be a $3$-matching between $C$ and $C_{odd}$, with $u_i\in C_{odd}$. Similarly, let 
$M_2=\{w_1v_2,w_2v_4,w_3v_6\}$ be a $3$-matching between $C$ and $C_{even}$, with $w_i\in C_{even}$. We initially set $S'=S\setminus \{v_1v_2,v_3v_4\}\cup \{u_1v_1,u_3v_3,w_1v_2,w_2v_4\}$. Then, if $C_{odd}$ (resp., $C_{even}$) is a triangle, we also remove $u_1u_3$ (resp., $w_1w_2$). In any case, $S'$ satisfies (1) and (2). In particular it contains a 2EC component $C'$ spanning the nodes set $V(C)\cup V(C_{odd})\cup V(C_{even})$. If $C_{odd}$ and $C_{even}$ are both $6$-cycles or large (in any combination), one has 
\begin{align*}
cost(S)-cost(S')& =|S|-|S'|+cr(C)+cr(C_{odd})+cr(C_{even})-cr(C')\\
& \geq -2+\frac{3}{10}6+\frac{3}{10}6+\frac{3}{10}6-2>0. 
\end{align*}
If exactly one of the two components, say $C_{even}$, is a triangle, then 
\begin{align*}
cost(S)-cost(S')& =|S|-|S'|+cr(C)+cr(C_{odd})+cr(C_{even})-cr(C')\\
& \geq -1+\frac{3}{10}6+\frac{3}{10}6+1-2>0. 
\end{align*}
If both the components are triangles, one has 
\begin{align*}
cost(S)-cost(S')& =|S|-|S'|+cr(C)+cr(C_{odd})+cr(C_{even})-cr(C')\\
& \geq \frac{3}{10}6+1+1-2>0. 
\end{align*}
In all the cases $S'$ satisfies (3). 
\end{proof}

\subsubsection{When the Component Graph is Not a Tree}
\label{sec:caseA:nonTree}

It remains to consider the case where $\hat G_S$ is not a tree. In the rest of this section we will prove the following lemma. 
\begin{lemma}\label{lem:caseA:nonTreeCase}
Assume that the conditions of Lemmas \ref{lem:4cycle}-\ref{lem:3componentsC1_5cycle} do not hold and $\hat{G}_S$ is not a tree. Then in polynomial time one can compute an $S'$ satisfying the conditions (1), (2) and (3) of Lemma \ref{lem:caseA:gluing}.
\end{lemma}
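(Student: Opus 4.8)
The plan is to exploit the cycle that $\hat G_S$ is forced to contain (being connected but not a tree): it yields a merging-cycle-like object through which we glue several 2EC components of $S$ into one new large 2EC component, and a credit computation then shows this does not increase $cost$, except in a short explicit list of degenerate configurations that are handled separately.

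Call a merging cycle $M$ of $S$ (in the sense of Definition~\ref{def:mergingCycle}) touching the components $C_0,C_1,\dots,C_{k-1}$ in cyclic order (indices mod $k$, $C_k:=C_0$) \emph{good} if for every $0<i<k$ with $C_i$ a triangle or a 4-cycle, the two edges of $M$ incident to $C_i$ reach two adjacent nodes $in_i,out_i$ of $C_i$. The first step is to produce a good $M$ with $k\ge 3$, choosing $k$ as small as possible. Since $\hat G_S$ has a cycle, picking one $G$-edge per edge of that cycle already gives a merging cycle with $k\ge 3$ whose attachment nodes at triangle components can be taken distinct (hence adjacent, as a triangle is complete); the only obstruction to goodness is a 4-cycle component $C_i=v_1v_2v_3v_4v_1$ all of whose $G$-edges toward its two cycle-neighbours land on an opposite pair, say $\{v_1,v_3\}$. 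In that case either some $G$-edge leaves $C_i$ at $v_2$ or $v_4$ and one can reroute $M$ through it and around a longer cycle of $\hat G_S$ (enlarging $k$, which only helps below), or $v_2,v_4$ have all their $G$-neighbours in $V(C_i)$, so that $\{v_1,v_3\}$ is a non-isolating 2-vertex-cut of $G$ (whether or not $v_2v_4\in E(G)$) — impossible in a structured $G$. I expect making this rerouting argument precise to be the main obstacle.

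Given a good $M$ on $C_0,\dots,C_{k-1}$, define $S'$ by adding $E(M)$ to $S$ and removing $in_iout_i$ for each $0<i<k$ with $C_i$ a triangle or a 4-cycle. Then $S'$ is a 2-edge-cover whose components are those of $S$ untouched by $M$ plus one 2EC component $C'$ with $V(C')=\bigcup_i V(C_i)$; since $|E(C')|\ge|V(C')|\ge 3k\ge 9$, $C'$ is large, Invariant~\ref{inv:2ECcomponent} is preserved, $S'$ has strictly fewer components than $S$, and (1) and (2) hold. For (3), $cost(S')-cost(S)$ equals $|E(M)|+cr(C')$ minus the removed credits and edges; writing $d_i$ for the amount $C_i$ contributes, an internal triangle gives $d_i=2$, an internal 4-cycle gives $d_i=\tfrac{11}{5}$, a 5-cycle gives $d_i=\tfrac{3}{2}$, a 6-cycle gives $d_i=\tfrac{9}{5}$, a large component gives $d_i=2$, and $C_0$ gives $d_0=cr(C_0)\ge 1$, so
\[
cost(S)-cost(S')\;=\;\Bigl(\sum_{i=0}^{k-1}d_i\Bigr)-k-2\;\ge\;\tfrac{3}{2}(k-1)+1-k-2\;=\;\tfrac{1}{2}k-\tfrac{5}{2},
\]
which is $\ge 0$ for $k\ge 5$. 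A short direct check — choosing $C_0$ among the components of the cycle that are neither triangles nor 4-cycles whenever one exists — shows $cost(S)-cost(S')\ge 0$ for $k=4$ as well, and for $k=3$ except when the component multiset of the cycle is $\{5\text{-cycle},5\text{-cycle},5\text{-cycle}\}$ or $\{5\text{-cycle},5\text{-cycle},6\text{-cycle}\}$.

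It remains to handle these two residual configurations, where the naive update falls short of being non-increasing by exactly $\tfrac{1}{2}$ or $\tfrac{1}{5}$ of a credit. Here the plan is to argue that one of the cycle's 5-cycle components $C_i$ has at least three distinct nodes incident in $G$ to the other two components of the cycle: otherwise a 3-matching between $C_i$ and $V\setminus V(C_i)$, guaranteed by Lemma~\ref{lem:matchingOfSize3}, must send an edge of $C_i$ to a fourth component, producing a cycle of length $\ge 4$ in $\hat G_S$ (a case already settled) or, failing that, forcing all relevant nodes to have degree $2$ in $G$ so that $\bigcup_i V(C_i)$ is an $\alpha$-contractible subgraph of size at most $1/\eps$ or exhibits a non-isolating 2-vertex-cut — both forbidden in a structured $G$. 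Since any three nodes of a 5-cycle contain an edge, we may then route $M$ so that two of its edges reach an adjacent pair $in_i,out_i$ of $C_i$, delete the edge $in_iout_i$, and recover one extra credit, covering the deficit (this is exactly the slack illustrated by the choice of the closing edge in Figure~\ref{fig:GluingPath}). Combining the three parts produces the desired $S'$ and completes the proof. The crux of the whole argument is the rerouting/existence step above and this final finite-but-delicate case distinction.
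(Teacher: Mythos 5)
Your high-level strategy matches the paper's own overview: take a cycle in $\hat G_S$, glue the components along a merging-cycle-like object while removing an edge $in_iout_i$ at each triangle/4-cycle attached at adjacent nodes, and your credit bookkeeping ($d_i$ values, the bound $\tfrac12 k-\tfrac52$, and the residual $\{5,5,5\}$ and $\{5,5,6\}$ configurations for $k=3$) is arithmetically sound. However, the two places you flag as delicate are genuine gaps, and they are exactly where the paper's proof lives. First, the rerouting/existence step for a ``good'' merging cycle does not go through as stated: if a 4-cycle component attaches to its cycle-neighbours only at an opposite pair, an edge leaving at $v_2$ or $v_4$ may lead into a component that hangs off the cycle in a pendant subtree of $\hat G_S$, through which no cycle of $\hat G_S$ passes, so you can neither ``reroute around a longer cycle'' nor conclude a non-isolating 2-vertex-cut. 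The same problem hits your unproved claim that attachments at triangle components can be made at distinct nodes. The paper avoids this by pruning $\hat G_S$ to its 2-core $S^*$ (iteratively deleting degree-1 components), proving a restricted 3-matching lemma (Lemma \ref{lem:restricted3matching}) whose matching edges are forced to stay inside $V^*$, and then running an incremental gluing-path construction with an extensive case analysis (Lemmas \ref{lem:expand5cycle_root}--\ref{lem:expand3cycle}); pendant components are not ignored but used explicitly, e.g.\ when the last component is a 5-cycle with a pendant neighbour, that neighbour must be a 4-cycle (since Lemma \ref{lem:5cycle} does not apply) and the 9-edge gadget of Lemma \ref{lem:c5c4to9} supplies the missing credit.

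Second, your treatment of the residual $k=3$ cases has two holes. The argument that some 5-cycle $C_i$ has three distinct nodes adjacent to the other two cycle components again relies on the dichotomy ``edge to a fourth component gives a longer cycle, or everything is forced'', which fails for edges into pendant components (they create no cycle and force no contractible subgraph or forbidden cut). And even granting the three attachment nodes $a,b,c$, ``any three nodes of a 5-cycle contain an edge'' is not enough: with $k=3$ the two $M$-edges at $C_i$ must go to the two \emph{different} neighbouring components, and it can happen that the adjacent pair among $\{a,b,c\}$ is attached only to one of them (e.g.\ $a,b$ adjacent but both only adjacent to $C_{i-1}$, and $c$ non-adjacent to both), so no edge of $C_i$ can be deleted and the $\tfrac12$ (resp.\ $\tfrac15$) deficit is not recovered. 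Closing these cases requires exactly the kind of fine-grained analysis the paper performs (e.g.\ the 2-matching-plus-Hamiltonian-path claim and subcases (a)--(d.3) in Lemma \ref{lem:expand5cycle_nonRoot}), where the fix sometimes deletes edges from the endpoint components, re-roots the path, or involves a component outside the cycle. So the proposal is a correct skeleton of the intended argument, but the parts you defer are not routine; as written the proof is incomplete.
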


%. In particular (see Lemmas \ref{lem:4cycle}-\ref{lem:5cycleOrLarge}) given an edge $C_1C_2$ of $\hat{G}_S$, if $C_1$ is a 4-cycle then $C_2$ is a 5-cycle, if $C_1$ is a triangle then $C_2$ is a $6$-cycle, and if $C_1$ is a 5-cycle or large, then the same holds for $C_2$. Furthermore (see Lemmas \ref{lem:3componentsC1_4cycle} and \ref{lem:3componentsC1_5cycle}) a 4-cycle (resp. 5-cycle) $C_1$ cannot be adjacent to two 5-cycles (resp., 4-cycles) $C_2$ and $C_3$. Given that, it is not hard to find the desired $S'$ (see Lemma \ref{lem:caseA:treeCase}).

We first need some more notation. Let $\calC(S^*)$ be a subset of $\calC(S)$ obtained as follows: iteratively remove from $\calC(S)$ any component which has degree $1$ in $\hat{G}_S$, and update $\hat{G}_S$ consequently. Let $S^*$ be the set of edges in $\calC(S^*)$, and $\hat{G}_{S^*}$ the subgraph of $\hat{G}_{S}$ induced by $S^*$. Notice that by assumption $\hat{G}_{S^*}$ is not empty and by construction each node in $\hat{G}_{S^*}$ has degree at least $2$. We also denote by $V^*\subseteq V$ the nodes set of $\calC(S^*)$. Notice that each removed component $C$ belongs to a subtree of $\hat{G}_S$ rooted at some component $C^*$ of $S^*$: we say that $C$ is rooted at $C^*$.

%for each component $C\in S$ such that $C$ is a leaf in $\hat{G}_S$, remove $C$ from both $S$ and $\hat{G}_S$. Let $\hat{G}_{S^*}$ be the subgraph of $\hat{G}_S$ corresponding to $S^*$. We remark that the nodes of $\hat{G}_{S^*}$ have degree at least $2$, a property that we will use later. Clearly the components $\calC(S^*)$ induced by $S^*$ are a subset of the components $\calC(S)$ induced by $S$, in particular each $C\in \calC(S^*)$ is 2EC and it is an $i$-cycle for some $3\leq i\leq 6$ or large (i.e. $C$ contains at least $7$ edges). 

We start by showing a variant of the 3-matching Lemma restricted to $S^*$. 
\begin{lemma}\label{lem:restricted3matching}
Let $\calC_1,\calC_2\neq \emptyset$ be a partition of $\calC(S^*)$, and let $V_i$ be the node set of $\calC_i$. Then there is a $3$-matching between $V_1$ and $V_2$.
\end{lemma}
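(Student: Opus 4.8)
The plan is to imitate the König--Egerváry argument from the proof of the $3$-Matching Lemma (Lemma~\ref{lem:matchingOfSize3}), but working with the sub-partition $(V_1,V_2)$ of $V^*\subseteq V$ rather than a partition of all of $V$. First I would record what the pruning that produces $\calC(S^*)$ does. Since $G$ is $2$EC and $S$ spans $V$, the component graph $\hat G_S$ is connected; as $\hat G_S$ is not a tree, its $2$-core $\hat G_{S^*}$ is a non-empty connected graph of minimum degree $\ge 2$ (deleting degree-$\le 1$ vertices one at a time never disconnects a connected graph). The pruned components $\calC(S)\setminus\calC(S^*)$ induce a forest in $\hat G_S$, and each tree $T$ of this forest is attached to $\hat G_{S^*}$ by exactly one $\hat G_S$-edge --- equivalently, it is rooted at a unique $C^*\in\calC(S^*)$, and only the ``top'' component of $T$ is $\hat G_S$-adjacent to a $2$-core component (namely $C^*$). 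In particular, every $G$-edge with one endpoint in the node set of $T$ has its other endpoint either in the node set of $T$ or in $V(C^*)$. Write $R_i$ for the union of the node sets of all pruned trees rooted at a component of $\calC_i$, so that $(V_1\cup R_1,\,V_2\cup R_2)$ partitions $V$.

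Now suppose, for contradiction, that $F$, the bipartite graph on $V_1\sqcup V_2$ whose edges are the $G$-edges with one endpoint in each of $V_1,V_2$, has maximum matching $M$ with $|M|\le 2$. Connectivity of $\hat G_{S^*}$ together with $\calC_1,\calC_2\ne\emptyset$ yields an edge of $\hat G_{S^*}$ between $\calC_1$ and $\calC_2$, hence $F\ne\emptyset$ and $|M|\in\{1,2\}$; by König--Egerváry, $F$ has a minimum vertex cover $K$ with $|K|=|M|$. The core of the argument is that such a small cover, combined with the structural fact above, produces a small vertex cut of $G$: no $G$-edge can leave the set $V_i\cup R_i$ except through $K\cap V_i$, because a $G$-edge between $V_i$ and $V_{3-i}$ not meeting $K$ would violate the covering property, a $G$-edge from $V_i$ to a pruned tree rooted outside $\calC_i$ does not exist, and pruned trees never connect to anything outside their root's side. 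Concretely: if $K=\{u\}\subseteq V_1$ then $u$ alone separates $(V_1\setminus\{u\})\cup R_1$ (of size $\ge|V_1|-1\ge 2$) from the non-empty set $V_2\cup R_2$, contradicting that $G$ is $2$VC. If $K=\{u,v\}$ then $\{u,v\}$ is a $2$-vertex-cut, and, following the three subcases of the proof of Lemma~\ref{lem:matchingOfSize3}: when $u,v$ lie in the same $V_i$ with $|V_i|\ge 4$, or when $u\in V_1$ and $v\in V_2$, the cut separates two sets each of size $\ge 2$ and is therefore non-isolating; when $u,v$ lie in the same $V_i$ with $|V_i|=3$, the family $\calC_i$ is a single triangle containing both $u$ and $v$, so $uv\in E(G)$ is an irrelevant edge. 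Either way we contradict the fact that $G$ is structured, so $|M|\ge 3$.

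The main obstacle --- and the only genuine difference from the proof of Lemma~\ref{lem:matchingOfSize3} --- is that a vertex cover of $F$ controls only \emph{direct} edges between $V_1$ and $V_2$, whereas a vertex cut of $G$ must block \emph{all} paths; the work is in verifying that the pruned trees (which live outside $V^*$) never supply a detour between the two sides, which is exactly what the ``rooted at a unique $2$-core component'' property guarantees. The degenerate sub-case $|V_i|=3$ in the two-element-cover case is handled, as in Lemma~\ref{lem:matchingOfSize3}, via the absence of irrelevant edges rather than the absence of non-isolating $2$-cuts.
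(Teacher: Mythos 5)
Your proof is correct, but it takes a genuinely different route from the paper's. The paper proves Lemma~\ref{lem:restricted3matching} by a short reduction: it applies the 3-Matching Lemma~\ref{lem:matchingOfSize3} as a black box to the partition $(V_1\cup V'_1,\,V_2\cup V'_2)$ of the \emph{whole} vertex set, where $V'_i$ collects the nodes of the pruned components rooted on side $i$, and then observes that a pruned node is adjacent in $G$ only to nodes of its own pruned tree and of its root component, so no edge of the resulting $3$-matching can touch $V'_1\cup V'_2$; hence the matching already runs between $V_1$ and $V_2$. You instead inline the proof of Lemma~\ref{lem:matchingOfSize3}: you run K\"onig--Egerv\'ary on the bipartite graph of direct $V_1$--$V_2$ edges and use the same structural fact about the pruning (each pruned tree meets $\hat{G}_{S^*}$ through exactly one edge of $\hat{G}_S$, so pruned trees cannot supply a detour between the two sides) to upgrade a vertex cover of size at most $2$ into a $1$-vertex-cut, a non-isolating $2$-vertex-cut, or an irrelevant edge of $G$, contradicting structuredness; in the degenerate subcase $|V_i|=3$ you implicitly rely on Invariant~\ref{inv:2ECcomponent} to know that a $3$-node component is a triangle, which is worth citing explicitly. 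Both arguments hinge on exactly the same observation about where edges incident to pruned nodes can go --- which you justify more carefully than the paper's ``by construction'' --- but the paper's version buys brevity and reuse of the already-proved case analysis, while yours re-derives that case analysis for the restricted bipartite graph and, in exchange, makes explicit why covering only the direct cross edges blocks \emph{all} paths of $G$ between the two sides.
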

\begin{proof}
Let $\calC'_i$ be the (removed) components in $\calC(S)\setminus \calC(S^*)$ which are rooted at some component in $\calC_i$, and $V'_i$ be the corresponding node set. The 3-matching Lemma \ref{lem:matchingOfSize3} guarantees that there is a $3$-matching $M$ between $V_1\cup V'_1$ and $V_2\cup V'_2$. However by construction nodes in $V'_1$ (resp., $V'_2$) are not adjacent to nodes in $V_2\cup V'_2$ (resp., $V_2\cup V'_2$). Consequently the endpoints of $M$ must belong to $V_1\cup V_2$. The claim follows. 
\end{proof}
In order to prove Lemma \ref{lem:caseA:nonTreeCase}, we introduce the notion of \emph{gluing path} of $S^*$ which has some similarities with the notion of nice path (Definition \ref{def:nicePathCycle} in Section \ref{sec:manyTriangles}), however with some critical differences (see Figure \ref{fig:GluingPath}). 
\begin{definition} 
A gluing path $P^*$ of $S^*$ is a sequence of edges with endpoints in distinct components of $S^*$ such that:
(1) $P^*$ induces a simple path in the graph obtained by contracting each connected component of $S^*$ into a single node; (2) If $e_1$ and $e_2$ are two edges of $P^*$ incident to a component $C^*\in \calC(S^*)$ which is a triangle or a 4-cycle, then the two endpoints of these edges in $V(C^*)$ are adjacent in $C^*$. 
\end{definition}
The intuition behind this definition is as follows (see Figure \ref{fig:GluingPath}). Suppose that we are given a gluing path $P_\ell=e_1,\ldots,e_\ell$, with $e_i=out_{i-1}in_i$. Let $C_{i-1}$ and $C_i$ be the two components incident to $e_i$. Notice that $C_i$ contains $in_i$ for $2\leq i\leq \ell$ and $out_i$ for $1\leq i\leq \ell-1$. Suppose now that we are also given an edge $e'=v'u'$ with $v'\in C_\ell$ and $u'\in C_{\ell'}$ for some $\ell'<\ell$ (in Figure \ref{fig:GluingPath} one has $\ell'=0$). Consider $S'=S\cup \{e_{\ell'+1},\ldots,e_\ell,e'\}$. Notice that $S'$ contains a 2EC component $C'$ spanning the nodes $V(C_{\ell'})\cup \ldots \cup V(C_{\ell})$. 
%Furthermore $|S'|=|S|+\ell-\ell'+2$. 
This construction can be refined as follows (which motivated the definition of gluing path). For each component $C_j$, $\ell'< j< \ell$, which is a triangle or a 4-cycle, remove from $S'$ (and $C'$) the edge $in_jout_j$. Notice that $C'$ remains 2EC. Observe also that $S'$ satisfying the conditions (1) and (2) of Lemma \ref{lem:caseA:gluing}. 

It remains to check condition (3) of Lemma \ref{lem:caseA:gluing}, namely $cost(S')\leq cost(S)$. Our goal is to show that $cost(S)-cost(S')=|S|-|S'|+\sum_{j=\ell'}^{\ell}cr(C_j)-cr(C')\geq 0$. Excluding the $-2$ coming from $e_\ell$ and $e'$, and the $-2$ coming from $cr(C')=2$, we can decompose the above difference as follows. We associate to each pair $(C_j,e_{j+1})$, $\ell'< j< \ell$, the following contribution: $-1$ due to $e_{j+1}$, $+1$ due to $in_jout_j$ if $C_j$ is a triangle or 4-cycle, and $+cr(C_j)$. Hence the neat contribution of the pair is $+1$, $+\frac{12}{10}$, $+\frac{5}{10}$,  $+\frac{8}{10}$ and $+1$ if $C_j$ is a triangle, $4$-cycle, $5$-cycle, $6$-cycle, or large, resp. In particular these contributions are always positive, which implies that $cost(S)\geq cost(S')$ trivially if $\ell'$ is sufficiently smaller than $\ell$. In the remaining cases we need to be more careful in the analysis. In particular, we will need to carefully choose $e'$. Furthermore, sometimes we will need to remove one edge from $C_{\ell'}$ and/or $C_\ell$. In one case we will need also to involve some component not in $S^*$. We remark that in the above construction the new 2EC component $C'$ that we create always contains at least $7$ nodes, hence Invariant \ref{inv:2ECcomponent} is maintained by a simple induction.

In more details, the construction proceeds as follows: we start with a gluing path $P_\ell$ of \emph{length} $|P_\ell|=\ell\geq 1$. The base case $\ell=1$ is obtained by taking any component $C_0\in \calC(S^*)$, and selecting any edge $e_1$ with one endpoint in $C_0$ and the other endpoint in some other component $C_1\in \calC(S^*)$ (notice that $e_1$ must exist since $C_0$ is not an isolated node in $\hat G_{S^*}$). Given such a $P_\ell$, we either build a new gluing path $P_{\ell+1}$ of length $\ell+1$ (which replaced $P_\ell$ in the next steps), or we build an $S'$ satisfying the properties of Lemma \ref{lem:caseA:nonTreeCase}. Clearly we obtained a desired $S'$ within a polynomial number of steps (since $\ell\leq |\calC(S^*)|-1$).

Consider the given gluing path $P_\ell$ with the above notation. We say that $P_\ell$ ends at $C_\ell$. We sometimes will need to modify $P_\ell$ as in the next lemma.  
\begin{lemma}\label{lem:change_e_ell}
Given a gluing path $P_\ell=e_1,\ldots,e_\ell$ ending at $C_\ell$, in polynomial time one can compute a gluing path $P'_\ell=e_1,\ldots,e'_\ell$ ending at $C_\ell$ and a 3-matching $M_\ell$ between $V(C_\ell)$ and $V^*\setminus V(C_\ell)$ with $e'_\ell\in M_\ell$.  
\end{lemma}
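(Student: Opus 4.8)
The plan is to keep the prefix $e_1,\dots,e_{\ell-1}$ untouched and only replace the last edge $e_\ell$. Write $e_\ell=out_{\ell-1}in_\ell$ with $out_{\ell-1}\in V(C_{\ell-1})$ and $in_\ell\in V(C_\ell)$, and for $\ell\ge 2$ let $in_{\ell-1}\in V(C_{\ell-1})$ be the endpoint of $e_{\ell-1}$ in $C_{\ell-1}$. Define $N\subseteq V(C_{\ell-1})$ as the set of \emph{admissible} attachment nodes: if $\ell\ge 2$ and $C_{\ell-1}$ is a triangle or a $4$-cycle, let $N$ be the set of the (one or two) neighbours of $in_{\ell-1}$ in $C_{\ell-1}$; otherwise let $N=V(C_{\ell-1})$. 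Since $P_\ell$ is a gluing path, condition~(2) of the definition forces $out_{\ell-1}$ to be adjacent to $in_{\ell-1}$ in $C_{\ell-1}$ whenever $C_{\ell-1}$ is a triangle or $4$-cycle, so in all cases $out_{\ell-1}\in N$; in particular $e_\ell$ is an edge of $G$ from $V(C_\ell)$ to $N$. The role of $N$ is this: for \emph{any} edge $zw$ of $G$ with $z\in V(C_\ell)$ and $w\in N$, the sequence $P'_\ell:=(e_1,\dots,e_{\ell-1},zw)$ is again a gluing path ending at $C_\ell$. Indeed, condition~(1) holds because $C_\ell$ is distinct from $C_0,\dots,C_{\ell-1}$ (a gluing path is a simple path after contraction); and condition~(2) holds because the only component whose incident edges change is $C_{\ell-1}$, where the new attachment node $w$ is admissible by the choice of $N$, while the terminal component $C_\ell$ carries only one edge of $P'_\ell$. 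Hence it suffices to produce a $3$-matching $M_\ell$ between $V(C_\ell)$ and $V^*\setminus V(C_\ell)$ containing at least one edge that lands in $N$; that edge will serve as $e'_\ell$.

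To produce such a matching, first apply Lemma~\ref{lem:restricted3matching} to the partition $\calC_1=\{C_\ell\}$, $\calC_2=\calC(S^*)\setminus\{C_\ell\}$ (both non-empty since every node of $\hat G_{S^*}$ has degree $\ge 2$, so $|\calC(S^*)|\ge 3$), obtaining a $3$-matching $M$ between $V(C_\ell)$ and $V^*\setminus V(C_\ell)$. If $M$ already contains an edge landing in $N$, we are done. Otherwise $out_{\ell-1}\in N$ is unmatched by $M$. Recall $e_\ell$ joins $in_\ell\in V(C_\ell)$ to $out_{\ell-1}\in V(C_{\ell-1})\subseteq V^*\setminus V(C_\ell)$, so $e_\ell$ is an available edge of the bipartite graph between $V(C_\ell)$ and $V^*\setminus V(C_\ell)$ that lands in $N$. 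If $in_\ell$ is unmatched by $M$, then $M\cup\{e_\ell\}$ is a matching of size $4$, and deleting any one of the three original edges yields a $3$-matching $M_\ell\ni e_\ell$. If instead $in_\ell$ is matched by $M$, say by $f=in_\ell u$, set $M_\ell:=(M\setminus\{f\})\cup\{e_\ell\}$; this is again a matching, since the two edges of $M\setminus\{f\}$ avoid $in_\ell$ and $u$, and — being non-$N$-edges, as $M$ has none — they also avoid $out_{\ell-1}$. In all cases $M_\ell$ is a $3$-matching between $V(C_\ell)$ and $V^*\setminus V(C_\ell)$ containing an edge landing in $N$.

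Finally, take $e'_\ell$ to be any edge of $M_\ell$ landing in $N$ and set $P'_\ell:=(e_1,\dots,e_{\ell-1},e'_\ell)$; by the first paragraph this is a gluing path ending at $C_\ell$, and $e'_\ell\in M_\ell$, as required. Each step runs in polynomial time: Lemma~\ref{lem:restricted3matching} is constructive (reducing to bipartite matching via Lemma~\ref{lem:matchingOfSize3}), and the swaps are of constant size. I do not anticipate a genuine obstacle here; the only points that need care are bookkeeping ones — choosing $N$ so that the triangle/$4$-cycle case and the base case $\ell=1$ are treated uniformly, and noting that the hypothesis ``$M$ has no edge landing in $N$'' is precisely what keeps $out_{\ell-1}$ free, so that inserting $e_\ell$ leaves a valid matching.
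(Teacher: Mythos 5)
Your proof is correct and follows essentially the same route as the paper: invoke Lemma~\ref{lem:restricted3matching} to get the $3$-matching, then either reuse one of its edges attached at an admissible node of $C_{\ell-1}$ as the new last edge, or swap $e_\ell$ into the matching (removing the edge at $in_\ell$ if present, which is exactly the paper's priority rule). The only, harmless, difference is that you admit any attachment node $w\in N$ (any neighbour of $in_{\ell-1}$, or any node of $C_{\ell-1}$ when it is not a triangle/$4$-cycle) rather than insisting on $out_{\ell-1}$ as the paper does, a mild generalization of the same case split that changes nothing essential.
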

\begin{proof}
By the 3-matching Lemma \ref{lem:restricted3matching}, there exists a $3$-matching $M_\ell$ between $V(C_\ell)$ and $V^*\setminus V(C_{\ell})$. Suppose that $M_{\ell}$ is not incident to $out_{\ell-1}$. In this case we add $e_\ell$ to $M_\ell$ and remove some other edge from $M_\ell$ giving priority to the edge containing $in_\ell$ if any. Then $P'_\ell=P_\ell$ and (the modified) $M_\ell$ satisfy the claim. Otherwise, let $e'_\ell$ be the edge in $M_\ell$ incident to $out_{\ell-1}$. In this case $P'_\ell=e_1,\ldots,e'_\ell$ and $M_\ell$ satisfy the claim.  
\end{proof}

In the next few lemmas we perform a case analysis based on the type of $C_{\ell}$, namely if it is an $i$-cycle for some $3\leq i\leq 6$ or it is large (i.e., if it contains at least $7$ edges). We start with the case that $C_\ell$ is a $5$-cycle, the most complicated case in our analysis. We need to distinguish between the subcases when $C_\ell$ has an adjacent component in $S\setminus S^*$ or not w.r.t. $\hat{G}_S$. 
\begin{lemma}\label{lem:expand5cycle_root}
Suppose that $C_\ell$ is a $5$-cycle and it is adjacent to some component $C'_\ell$ in $S\setminus S^*$. Then in polynomial time one can either compute a gluing path $P_{\ell+1}$ of length $\ell+1$ or a solution $S'$ satisfying the conditions (1), (2) and (3) of Lemma \ref{lem:caseA:gluing}.
\end{lemma}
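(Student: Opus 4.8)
The plan is to run the generic ``extend-or-close'' procedure for gluing paths, and to invoke the hypothesis that $C_\ell$ is adjacent to a removed component $C'_\ell\in S\setminus S^*$ only to settle the single stubborn configuration that the standard credit count cannot pay for.

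First I would ask whether $C_\ell$ has a neighbour in $\hat{G}_{S^*}$ that does not already lie on $P_\ell=C_0,\dots,C_\ell$. If it does, I pick any $G$-edge $e_{\ell+1}$ realising that adjacency; since $C_\ell$ is a $5$-cycle the gluing-path condition imposes nothing at $C_\ell$, and the new endpoint carries only one path-edge, so $P_{\ell+1}:=(e_1,\dots,e_\ell,e_{\ell+1})$ is a gluing path of length $\ell+1$ and we are done. So assume every $\hat{G}_{S^*}$-neighbour of $C_\ell$ lies on $P_\ell$. By Lemma~\ref{lem:change_e_ell} I may replace $P_\ell$ by a gluing path with the same endpoints whose last edge $e_\ell$ sits in a $3$-matching $M_\ell=\{e_\ell,f_1,f_2\}$ between $V(C_\ell)$ and $V^*\setminus V(C_\ell)$; thus $e_\ell,f_1,f_2$ meet three distinct vertices of $C_\ell$. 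Let $C_{j_1},C_{j_2}$ be the $S^*$-components hit by $f_1,f_2$, so $j_1,j_2\in\{0,\dots,\ell-1\}$; set $j^{*}=\min\{j_1,j_2\}$ and use the corresponding $f_i$ as the closing edge $e'$. Feeding $P_\ell$ and $e'$ into the standard closure construction (adding $e_{j^{*}+1},\dots,e_\ell,e'$, deleting $in_j\,out_j$ from each intermediate triangle or $4$-cycle $C_j$) yields an $S'$ satisfying conditions~(1) and~(2) of Lemma~\ref{lem:caseA:gluing}; crucially $e_\ell$ and $e'$ reach $C_\ell$ at distinct vertices, which is exactly what makes the new component $C'$ $2$-edge-connected, and since $C'$ has $\ge7$ nodes Invariant~\ref{inv:2ECcomponent} is preserved.

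For condition~(3) I would run the bookkeeping already set up for gluing paths. Using $cr(C_\ell)=\tfrac{3}{10}\cdot 5=\tfrac32$, $cr(C_{j^{*}})\ge1$, and the fact that each intermediate component contributes at least $\tfrac12$ to the decrease, one gets
\[
cost(S)-cost(S')\;\ge\;\tfrac12\bigl(\ell-1-j^{*}\bigr)+cr(C_{j^{*}})+\tfrac32-3 ,
\]
which is non-negative as soon as $j^{*}\le\ell-2$. Hence only $j_1=j_2=\ell-1$ remains. In that case all three edges of $M_\ell$ join $C_\ell$ to $C_{\ell-1}$, i.e.\ $M_\ell$ is a $3$-matching between the $5$-cycle $C_\ell$ and $C_{\ell-1}$; since the local-merge Lemmas~\ref{lem:triangle}, \ref{lem:5cycle} and~\ref{lem:large} are assumed inapplicable, $C_{\ell-1}$ must be a $4$-cycle. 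Moreover $C_\ell\in\hat{G}_{S^*}$ has degree $\ge2$ there, so (as we could not extend) it has a second neighbour $C_p\in\hat{G}_{S^*}$ with $p\le\ell-2$; if some $G$-edge from $C_\ell$ to $C_p$ avoids the vertex of $C_\ell$ used by $e_\ell$ I may take it as $e'$ and again land in the good subcase $j^{*}=p\le\ell-2$, so we may assume all such edges hit that one vertex. This is exactly the configuration the hypothesis on $C'_\ell$ is meant to break: I would bring $C'_\ell$ (which, being removed, is used by no gluing path and still carries $cr(C'_\ell)\ge1$) into the merge, combining the $3$-matching between the $5$-cycle $C_\ell$ and the $4$-cycle $C_{\ell-1}$ with the edge(s) between $C_\ell$ and $C'_\ell$ — and, when $C'_\ell$ is a leaf of its pendant subtree with unique $\hat{G}_S$-neighbour $C_\ell$, with the two edges between $V(C'_\ell)$ and $C_\ell$ forced at distinct vertices of $C_\ell$ by the $2$-vertex-connectivity of $G$ — to build a $2$-edge-connected component on $V(C_{\ell'})\cup\dots\cup V(C_\ell)\cup V(C'_\ell)$ for a suitable $\ell'$, possibly after deleting one further edge from $C_{\ell-1}$ or from $C'_\ell$; the released credit $cr(C'_\ell)\ge1$ then covers the $\tfrac{3}{10}$-tight deficit left by the short loop through the $4$-cycle $C_{\ell-1}$.

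The hard part is this final step. The $\tfrac{3}{10}$-per-edge credit on short cycles is so tight that the merge must be chosen very carefully, and one has to split into a handful of sub-cases depending on whether $C'_\ell$ is a leaf of its pendant subtree, whether both of its $G$-edges return to $C_\ell$, and on the type of $C'_\ell$; verifying $2$-edge-connectivity of $C'$ and $cost(S')\le cost(S)$ \emph{simultaneously} in each sub-case — in some of them only after deleting an auxiliary edge from $C_{\ell-1}$ or $C'_\ell$ — is where essentially all the work lies, the rest being the routine closure/credit accounting already established for gluing paths before Lemma~\ref{lem:caseA:nonTreeCase}.
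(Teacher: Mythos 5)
There is a genuine gap, and it sits exactly where the lemma is hard. Your claim that the plain closure works whenever $j^{*}\le \ell-2$ rests on a credit count that is off by one: the correct constant term is $cr(C_{j^{*}})+cr(C_\ell)-2-2$ (the two closing edges $e_\ell,e'$ plus $cr(C')=2$), i.e.\ $cr(C_{j^{*}})-\tfrac52$ since $cr(C_\ell)=\tfrac32$, not the $cr(C_{j^{*}})-\tfrac32$ in your display. With the correct constant the bound fails in exactly the boundary configurations: e.g.\ $j^{*}=\ell-2$ with $C_{\ell-1}$ a $5$-cycle and $C_{j^{*}}$ a triangle gives $\tfrac12+1-\tfrac52=-1<0$. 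This is why the paper never runs the bare closure in this lemma: in its closing case it first notes that $C'_\ell C_\ell$ is a bridge of $\hat G_S$, so the 3-matching Lemma plus the inapplicability of Lemma~\ref{lem:5cycle} force $C'_\ell$ to be a $4$-cycle, and then it replaces $E(C_\ell)\cup E(C'_\ell)$ by the $9$-edge set $F$ of Lemma~\ref{lem:c5c4to9} (with $F\cup\{in_\ell u'\}$ 2EC on $V(C_\ell)\cup V(C'_\ell)$), thereby cashing in the extra $\tfrac{3}{10}\cdot 4$ credits of the pendant $4$-cycle and obtaining $cost(S)-cost(S')\ge\tfrac12 n''+n'-\tfrac{3}{10}>0$. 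So the hypothesis about $C'_\ell$ is needed in \emph{every} closing case, not only in the single stubborn one you reserve it for.

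The case you defer to the end (all three matching edges land in $C_{\ell-1}$) is also not actually proved: you acknowledge that "essentially all the work lies" there and only sketch a plan with several unresolved sub-cases, including the type of $C'_\ell$. In the paper this case requires no construction at all: since $C'_\ell$ is a $4$-cycle (as above) and $C_{\ell-1}$ is a $4$-cycle with a 3-matching to the $5$-cycle $C_\ell$, the triple $(C_\ell,C_{\ell-1},C'_\ell)$ satisfies the hypotheses of Lemma~\ref{lem:3componentsC1_5cycle}, contradicting the standing assumption that the local-merge lemmas do not apply — hence the case simply cannot occur. Because you never pin down $C'_\ell$ as a $4$-cycle, you miss both this contradiction and the Lemma~\ref{lem:c5c4to9} gadget, and the two places where your argument is incomplete or wrong are precisely the two points the paper's proof is built around.
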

\begin{proof}
By construction $C'_\ell C_\ell$ is a bridge edge in $\hat G_S$. Hence by the 3-matching Lemma \ref{lem:matchingOfSize3} there exists a 3-matching $M'_\ell$ between $V(C'_\ell)$ and $V(C_\ell)$. Since by assumption the conditions of Lemma~\ref{lem:5cycle} do not hold, it must be the case that $C'_\ell$ is a $4$-cycle. 

Let us apply Lemma \ref{lem:change_e_ell} to $P_\ell$, in order to obtain a pair $(P'_\ell,M_\ell)$ as in the claim, and set  $P_\ell$ to $P'_\ell$. Suppose that there exists an edge $e_{\ell+1}$ with one endpoint in $V(C_\ell)$ and the other endpoint in $V^*\setminus (V(C_0)\cup \ldots \cup V(C_{\ell}))$. Then $P_{\ell+1}=P_\ell,e_{\ell+1}$ satisfies the claim. 

Otherwise, all the edges in $M_\ell$ must have one endpoint in $V(C_\ell)$ and the other one in $V(C_0)\cup \ldots \cup V(C_{\ell-1})$. We distinguish two cases:

\medskip\noindent {\bf (a)} All the edges in $M_\ell$ are between $V(C_\ell)$ and $V(C_{\ell-1})$. Since by assumption the conditions of Lemma~\ref{lem:5cycle} do not hold, $C_{\ell-1}$ must be a $4$-cycle. However in this case the triple $(C_\ell,C_{\ell-1},C'_\ell)$ satisfies the conditions of Lemma \ref{lem:3componentsC1_5cycle}, a contradiction. Hence this case cannot happen. 

\medskip\noindent {\bf (b)} There exists an edge $e'=u'v'\in M_\ell$ with $u'\in V(C_\ell)\setminus \{in_\ell\}$ and $v'\in V(C_{\ell'})$, $\ell'\leq \ell-2$. In this case we invoke Lemma \ref{lem:c5c4to9}  on the pair $(C_\ell,C'_\ell)$ to obtain a subset $F$ of $9$ edges such that $F\cup \{in_\ell u'\}$ is a 2EC spanning subgraph of $V(C_\ell)\cup V(C'_\ell)$. Let us initially set $S'=S\setminus (E(C_\ell)\cup E(C'_\ell))\cup F\cup \{e_{\ell'},\ldots,e_\ell,e'\}$. We then remove from $S'$ any edge $in_{j}out_{j}\in E(C_j)$, $\ell'+1\leq j\leq \ell-1$, whenever $C_j$ is a triangle or a $4$-cycle (recall that $in_{j}$ and $out_j$ must be adjacent in $C_j$ since $P_\ell$ is a gluing path). Let $n'$ be the number of such removed edges and $n''=\ell-\ell'-1-n'$.  
%Notice that $n'\geq 1$ since we already argued that $C_{\ell-1}$ must be a $4$-cycle.

Observe that $S'$ satisfies (1) and (2). In particular, it contains a 2EC component $C'$ spanning the nodes $V(C_{\ell'})\cup\ldots \cup V(C_\ell)\cup V(C'_{\ell})$.
%Let us analyze $cost(S')$. Let $n'$ be the number of components $C_j$ with $\ell'+1\leq j\leq \ell-1$ which are triangles or $4$-cycles (corresponding to the remove edges of type $u_{j-1}u_{j}$). Notice that $n'\geq 1$ since we already argued that $C_{\ell-1}$ must be a $4$-cycle. One has $|S'|=|S|+\ell-\ell'+2-n'$. Furthermore $cr(S)-cr(S')=cr(C'_\ell)+cr(C_{\ell-1})+cr(C_\ell)+cr(C_{\ell}')+\sum_{j=\ell'+1}^{\ell-2}cr(C_j)-cr(C')\geq \frac{3}{10}4+\frac{3}{10}4+\frac{3}{10}5+1+(n'-1)1+\frac{3}{10}5(\ell-\ell'-1-n')-2$. Altogether $cost(S)-cost(S')\geq \frac{1}{2}n''+n'-\frac{1}{10}>0$. Trivially $S'$ satisfies the other desired properties.
One has $|S'|=|S|+n''+2$. Furthermore 
\begin{align*}
cr(S)-cr(S') & =cr(C'_{\ell})+cr(C_\ell)+cr(C_{\ell'})+\sum_{j=\ell'+1}^{\ell-1}cr(C_j)-cr(C')\\
& \geq \frac{3}{10}4+\frac{3}{10}5+1+n'+\frac{3}{10}5n''-2.
\end{align*}
Altogether $cost(S)-cost(S')\geq \frac{1}{2}n''+n'-\frac{3}{10}>0$, hence $S'$ satisfies (3).
\end{proof}

\begin{lemma}\label{lem:expand5cycle_nonRoot}
Suppose that $C_\ell$ is a $5$-cycle and it is \emph{not} adjacent to any component in $S\setminus S^*$. Then in polynomial time one can either compute a gluing path $P_{\ell+1}$ of length $\ell+1$ or a solution $S'$ satisfying the conditions (1), (2) and (3) of Lemma \ref{lem:caseA:gluing}.
\end{lemma}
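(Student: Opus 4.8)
The plan is to mirror the structure of Lemma~\ref{lem:expand5cycle_root}, but now exploit the fact that $C_\ell$ is an \emph{interior} node of $\hat{G}_{S^*}$ in the sense that it has no pendant subtree hanging off it in $\hat{G}_S$. First I would apply Lemma~\ref{lem:change_e_ell} to $P_\ell$ to replace it by a gluing path $P'_\ell=e_1,\ldots,e'_\ell$ ending at $C_\ell$ together with a $3$-matching $M_\ell$ between $V(C_\ell)$ and $V^*\setminus V(C_\ell)$ with $e'_\ell\in M_\ell$; from now on I rename $P'_\ell$ as $P_\ell$. If some edge of $M_\ell$ has its non-$C_\ell$ endpoint in $V^*\setminus(V(C_0)\cup\cdots\cup V(C_\ell))$, then appending that edge extends $P_\ell$ to a gluing path $P_{\ell+1}$ of length $\ell+1$ (here I must check that, if $C_\ell$ is a $5$-cycle, there is no adjacency constraint on the pair of edges incident to $C_\ell$, which is exactly why the definition of gluing path only restricts triangles and $4$-cycles — so any choice of the two incident edges is legal). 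Hence the remaining case is that all three edges of $M_\ell$ land inside $V(C_0)\cup\cdots\cup V(C_{\ell-1})$.

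In that remaining case I split as in the root-version. Subcase~(a): all three edges of $M_\ell$ go between $C_\ell$ and a single component $C_{\ell'}$ with $\ell'\le\ell-1$. By Lemma~\ref{lem:5cycle} (whose hypotheses we assume fail), $C_{\ell'}$ must be a $4$-cycle. If $\ell'=\ell-1$ I would like a contradiction; the natural tool is to observe that $C_\ell$ (a $5$-cycle) and $C_{\ell-1}$ (a $4$-cycle) admit a $3$-matching, and since $\hat{G}_S$ is not a tree there is a further edge somewhere producing a third component adjacent to one of them — more precisely I expect to invoke the non-tree hypothesis together with Lemma~\ref{lem:3componentsC1_5cycle} or Lemma~\ref{lem:c5c4to9} to rule this out, exactly analogously to case~(a) of Lemma~\ref{lem:expand5cycle_root}; if instead $\ell'\le\ell-2$, this is really subcase~(b). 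Subcase~(b): there is an edge $e'=u'v'\in M_\ell$ with $u'\in V(C_\ell)\setminus\{in_\ell\}$ and $v'\in V(C_{\ell'})$ for some $\ell'\le\ell-2$. Then I build $S'$ exactly as in Lemma~\ref{lem:expand5cycle_root}, but \emph{without} the extra component $C'_\ell$: set $S'=S\cup\{e_{\ell'},\ldots,e_\ell,e'\}$ and then, for each $j$ with $\ell'+1\le j\le\ell-1$ such that $C_j$ is a triangle or a $4$-cycle, delete the edge $in_jout_j$ (legal because $P_\ell$ is a gluing path); also I should check whether deleting one edge inside $C_\ell$ keeps $C'$ 2EC — since $C_\ell$ is a $5$-cycle now absorbed into $C'$ via $e'_\ell$ and $e'$ attached at two distinct nodes, the $5$-cycle edges themselves cannot be pruned, so I keep all of $E(C_\ell)$. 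Writing $n'$ for the number of deleted $in_jout_j$ edges and $n''=\ell-\ell'-1-n'$, I get $|S'|=|S|+n''+1$ (one fewer than in the root case, since there is no $e_\ell$-plus-$F$ overhead beyond the chain edges), and

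\begin{align*}
cr(S)-cr(S') &= cr(C_\ell)+cr(C_{\ell'})+\sum_{j=\ell'+1}^{\ell-1}cr(C_j)-cr(C')\\
&\ge \tfrac{3}{10}5+1+n'+\tfrac{3}{10}5\,n''-2,
\end{align*}

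so $cost(S)-cost(S')\ge \tfrac12 n''+n'-\tfrac{1}{2}\ge 0$ once $n''\ge 1$ or $n'\ge 1$, i.e.\ whenever $\ell-\ell'\ge 2$, which holds since $\ell'\le\ell-2$. So (3) is satisfied, and (1),(2) are immediate because $C'$ spans $V(C_{\ell'})\cup\cdots\cup V(C_\ell)$ and contains at least $7$ nodes, preserving Invariant~\ref{inv:2ECcomponent}.

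The main obstacle I anticipate is the boundary accounting: unlike the root case, here I cannot spend the credits of an extra absorbed $4$-cycle $C'_\ell$, so the slack in the $cost$ inequality is thinner, and I have to be careful that $e'$ attaches at a node of $C_\ell$ distinct from $in_\ell$ (guaranteed by taking $e'\in M_\ell$, a matching, together with the fact that $e'_\ell\in M_\ell$ is the edge carrying $in_\ell$'s role — wait, $in_\ell$ is the endpoint of $e_\ell$, and $e'_\ell$ replaces $e_\ell$, so I must re-examine which node plays the ``entry'' role after the swap; the cleanest fix is to note that $M_\ell$ is a matching so its three $C_\ell$-endpoints are distinct, hence at most one of them equals the entry node and I can always pick $e'$ avoiding it). A secondary subtlety is subcase~(a) with $\ell'=\ell-1$: I should double-check that the contradiction genuinely uses the non-tree hypothesis (there must be a cycle in $\hat{G}_S$, so following it out of $\{C_{\ell-1},C_\ell\}$ yields a third adjacent $5$-cycle or $4$-cycle, triggering Lemma~\ref{lem:3componentsC1_5cycle} or one of the local-merge lemmas that we assumed inapplicable) — this is the step I would write most carefully.
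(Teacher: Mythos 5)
There is a genuine gap, and it sits exactly at the two points you flagged as ``subtleties'' but did not resolve. First, the accounting in your main construction (your subcase (b)) is off by one edge. Since you keep all of $E(C_\ell)$, the new solution adds the chain edges $e_{\ell'+1},\ldots,e_\ell$ plus $e'$ and removes only the $n'$ edges $in_jout_j$, so $|S'|=|S|+n''+2$, not $|S|+n''+1$ (your ``one fewer than the root case'' is unfounded: in the root case the $9$ edges of $F$ exactly replace the $5+4$ edges of $E(C_\ell)\cup E(C'_\ell)$, so there is no overhead to save). With the correct count the bound becomes $cost(S)-cost(S')\geq \frac{1}{2}n''+n'-\frac{3}{2}$, which is negative for instance when $\ell'=\ell-2$ and the single intermediate component is a $5$-cycle ($n'=0$, $n''=1$). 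Moreover your justification for keeping $E(C_\ell)$ intact (``the $5$-cycle edges cannot be pruned'') is false: if the two attachment nodes of $C_\ell$ are joined by a Hamiltonian path of $C_\ell$, one edge of the $5$-cycle can be dropped while $C'$ stays 2EC. This saved edge is precisely what the paper's proof needs: it first proves a technical claim that one can always choose a $2$-matching $\{w\,out_{\ell-1},v'u'\}$ out of $C_\ell$ whose endpoints $w,v'$ in $C_\ell$ admit a Hamiltonian path $H_\ell$ (using that $G$ is structured: either a suitable chord such as $v_2v_5$ exists, or $C_\ell$ would be a $5/4$-contractible subgraph), re-orients $P_\ell$ so that its last edge is the matching edge at $w=in_\ell$ (this also settles your unresolved ``which node is the entry node after the swap'' issue), and then replaces $E(C_\ell)$ by $H_\ell$, recovering exactly the unit you are missing.

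Second, the case $\ell'=\ell-1$ cannot be dismissed the way you hope. Your subcase (a) contradiction mimics case (a) of Lemma \ref{lem:expand5cycle_root}, but there the third component of the triple in Lemma \ref{lem:3componentsC1_5cycle} was the pendant $4$-cycle $C'_\ell\in S\setminus S^*$, which by hypothesis does not exist here; the non-tree hypothesis (or the degree-$\geq 2$ property of $\hat G_{S^*}$) only gives some further adjacency of $C_\ell$, with no guarantee of a $3$-matching, nor that the neighbour is a $4$-cycle, so Lemma \ref{lem:3componentsC1_5cycle} need not apply. In the paper's proof the situation $\ell'=\ell-1$ is not contradictory at all: it is handled head-on by separate arguments according to whether $C_{\ell-1}$ is large/$6$-cycle/$5$-cycle, a triangle, or a $4$-cycle, and the $4$-cycle case is by far the longest part of the proof, requiring a second edge $e''$ from $C_\ell$ to an earlier component $C_{\ell''}$ with $\ell''\leq\ell-2$ (whose existence follows from the definition of $S^*$) and several switches to alternative gluing paths, combined again with contractibility arguments. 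As written, your proposal neither closes the cost inequality in the generic case nor covers the $\ell'=\ell-1$ configurations, and both require the additional ideas above rather than a routine adaptation of the root lemma.
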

\begin{proof}
Let us apply Lemma \ref{lem:change_e_ell} to $P_\ell$, in order to obtain a pair $(P'_\ell,M_\ell)$ as in the claim, and set  $P_\ell$ to $P'_\ell$. Suppose that there exists an edge $e'=u'v'$ with $u'\in V(C_\ell)$ and $v'\in V^*\setminus (V(C_0)\cup \ldots \cup V(C_{\ell}))$. In this case $P_{\ell+1}:=P_\ell,e'$ satisfies the claim. 

Otherwise, let us prove the following technical claim, which exploits a simple case analysis and the fact that $G$ is structured.
\begin{claim}
There exists a $2$-matching $M=\{wout_{\ell-1},v'u'\}$, with $v',w\in V(C_\ell)$, between $V(C_\ell)$ and $V^*\setminus V(C_\ell)$ such that there exists a $v'$-$w$ Hamiltonian path $H_\ell$ restricted to $V(C_\ell)$. 
\end{claim}
\begin{proof}
%Assume by contradiction that the claim does not hold. 
Let $C_\ell=v_1,v_2,v_3,v_4,v_5,v_1$. Consider the 3-matching $M_\ell=\{e_\ell,e_3,e_4\}$. W.l.o.g. assume $e_\ell=v_1out_{\ell-1}$. 
%We distinguish a few subcases, where in each subcase we assume that the conditions of the previous ones do not hold
If one of the nodes $v_2$ or $v_5$, say $v_2$, is adjacent in $G$ to some node $u_2\in V^*\setminus (V(C_\ell)\cup \{out_{\ell-1}\})$, then $M=\{e_\ell,v_2u_2\}$ satisfies the claim. We next assume that $v_2$ and $v_5$ are adjacent only to nodes in $V(C_\ell)\cup \{out_{\ell-1}\}$. Notice that by the previous case the endpoints of $e_3$ and $e_4$ are distinct from $v_2$ and $v_5$.
%If one endpoint of $e_3$ or $e_4$ is adjacent to $v_1$, say the endpoint $v_2\in e_2$, then $M=\{e_\ell,e_2\}$ satisfies the claim with $H_\ell=v_1,v_5,v_4,v_3,v_2$dit{This is impossible by the previous sentence}. 
W.l.o.g. assume $e_3=v_3u_3$ and $e_4=v_4u_4$. If one of the nodes $v_2$ and $v_5$, say $v_2$, is adjacent to $out_{\ell-1}$, then $M=\{v_2out_{\ell-1},e_3\}$ satisfies the claim with $H_\ell=v_2,v_1,v_5,v_4,v_3$. If $v_2v_5\in E(G)$, then $M=\{e_\ell,e_3\}$ satisfies the claim with $H_\ell=v_1,v_2,v_5,v_4,v_3$. Otherwise every feasible solution must contain at least $4$ distinct edges incident to $v_2$ or $v_5$ with the other endpoint in $C_\ell$. Hence $C_\ell$ is a contractible subgraph of size $5$, a contradiction.
%dit{Actually at this stage we can say $C$ is contractible since we must pick $2$ edges for $v_2$ and two edges for $v_5$}Otherwise, if $v_5v_3\in E(G)$, then $M=\{e_\ell,e_4\}$ satisfies the claim with $H_\ell=v_1,v_2,v_3,v_5,v_4$. A symmetric argument works if $v_2v_4\in E(G)$. 
%If none of the above cases apply, the the node $v_2$ (resp. $v_5$) is adjacent only to nodes $v_1$ and $v_3$ (resp., $v_1$ and $v_4$) in $G$. Thus any feasible solution must include the $4$ edges $v_1v_4,v_4v_5,v_1v_2,v_2,v_3$. This makes $V(C_\ell)$ a contractible subgraph of $G$, contradicting the assumption that $G$ is structured.  
\end{proof}
We next update $P_\ell$ such that its final edge $e_\ell$ is the first edge in the 2-matching $M$ (in particular, $w=in_\ell$) guaranteed by the above lemma. Notice that $|E(H_\ell)|=4=|E(C_\ell)|-1$. We let $e'=v'u'$ be the remaining edge in $M$. W.l.o..g. assume $in_{\ell}=v_1$. By the previous case analysis we know that $u'\in V(C_{\ell'})$ for some $0\leq \ell'<\ell$. We distinguish a few cases:

\medskip\noindent {\bf (a)} $\ell'\leq \ell-2$. Define $S'=S\setminus E(C_\ell)\cup H_\ell \cup \{e_{\ell'},\ldots,e_\ell,e'\}$. Remove from $S'$ any edge $in_j out_j$ for $\ell'+1\leq j\leq \ell-1$ such that $C_j$ is a triangle or a $4$-cycle. Observe that $S'$ trivially satisfies (1) and (2). With the usual definition of $C'$, $n'$, and $n''$, $S'$ also satisfies (3) since 
\begin{align*}
cost(S)-cost(S') & =|S|-|S'|+cr(C_{\ell'})+cr(C_{\ell})+\sum_{j=\ell'+1}^{\ell-1}cr(C_j)-cr(C')\\ 
& \geq -(n''+2-1)+1+\frac{3}{10}5+n'+\frac{3}{10}5n''-2=\frac{1}{2}n''+n'-\frac{1}{2}\geq 0.
\end{align*}

\medskip\noindent {\bf (b)} $\ell'=\ell-1$ and $C_{\ell-1}$ is a 5-cycle, 6-cycle, or large.  Define $S'$ and $C'$ as in case (a) (in particular $S'$ satisfies (1) and (2)). $S'$ satisfies (3) since 
\begin{align*}
cost(S)-cost(S') & =|S|-|S'|+cr(C_{\ell-1})+cr(C_{\ell})-cr(C')\\ 
& \geq -(2-1)+\frac{3}{10}5+\frac{3}{10}5-2= 0.
\end{align*}

\medskip\noindent {\bf (c)} $\ell'=\ell-1$ and $C_{\ell-1}$ is a triangle. In this case consider $S'=S\setminus E(C_\ell)\cup H_\ell \setminus \{u'out_{\ell-1}\} \cup \{e_\ell,e'\}$, which satisfies (1) and (2). With the usual definition of $C'$, one has that $S'$ satisfies (3) since 
\begin{align*}
cost(S)-cost(S') & =|S|-|S'|+cr(C_{\ell-1})+cr(C_{\ell})-cr(C')\\
& \geq -(2-2)+1+\frac{3}{10}5-2> 0.
\end{align*}

\medskip\noindent {\bf (d)} $\ell'=\ell-1$ and $C_{\ell-1}$ is a $4$-cycle. By the definition of $S^*$, there must exist an edge $e''=v''u''$ with $v''\in V(C_\ell)$ and $u''\in V(C_{\ell''})$ for some $\ell''\leq \ell-2$. 

\medskip\noindent {\bf (d.1)} $out_{\ell-1}$ and $u'$ are adjacent in $C_{\ell-1}$. Then essentially the same construction and analysis as in case (c) gives the desired $S'$. In particular one has 
\begin{align*}
cost(S)-cost(S')\geq -(2-2)+\frac{3}{10}4+\frac{3}{10}5-2>0.
\end{align*} 

We next assume w.l.o.g. that $C_{\ell-1}=u_1,u_2,u_3,u_4,u_1$ where $u_1=out_{\ell-1}$, $u_3=u'$, and $\ell-1=0$ or $in_{\ell-1}=u_2$. 

\medskip\noindent {\bf (d.2)} $v'=v_3$ (the case $v'=v_4$ is symmetric). If $v''\in \{v_2,v_3,v_5\}$, we can exploit the analysis of case (a) where the role of $e'$ is replaced by $e''$. This is because there exists a Hamiltonian $v''$-$v_1$ path over $V(C_\ell)$. Otherwise, consider the alternative gluing path $P'_\ell=e_1,\ldots,e_{\ell-1},e'$ of length $\ell$. If $v''\in \{v_1,v_4\}$, thanks to the existence of edge $e''$, we can apply to $P'_{\ell}$ the analysis of case (a).

\medskip\noindent {\bf (d.3)} $v'=v_2$ (the case $v'=v_5$ is symmetric). If $v''\in \{v_2,v_5\}$, we can exploit the analysis of case (a) where the role of $e'$ is replaced by $e''$. If $v''\in \{v_1,v_3\}$, consider the alternative gluing path $P'_\ell=e_1,\ldots,e_{\ell-1},e'$ of length $\ell$. Thanks to the existence of edge $e''$, we can apply to $P'_{\ell}$ the analysis of case (a). 

The remaining case is $v''=v_4$. If $v_3v_5\in E(G)$, then there exists a $v_1$-$v_4$ Hamiltonian path over $V(C_\ell)$, namely $v_1,v_2,v_3,v_5,v_4$. Hence again case (a) applies with $e'$ replaced by $e''$. Otherwise, assume by contradiction that $v_3$ and $v_5$ are not adjacent to any node outside $V(C_\ell)$. This would make $V(C_\ell)$ a contractible subgraph on $5$ nodes (at least $4$ edges with endpoints in $V(C_\ell)$ belong to every feasible solution), which is a contradiction since $G$ is structured. Thus there must exist an edge $e'''=v'''u'''$ with $v'''\in \{v_3,v_5\}$ and $u'''\notin V(C_{\ell})$. If $u'''\notin V(C_{\ell-1})$, again we can reduce to case (a) applied to $P_\ell$ or $P'_\ell$ with edge $e'''$ playing the role of $e'$. If $u'''\in \{u_2,u_4\}$, we can exploit the analysis of case (d.1) applied to $P_\ell$ (if $v'''=v_5$) or to $P'_\ell$ (if $v'''=v_3$). The remaining case is $u'''\in \{u_1,u_3\}$. Consider $P''_\ell=e_1,\ldots,e_{\ell-1},e'''$. We can apply to $P''_\ell$ the analysis of case (a) where the role of $e'$ is played by $e''$ (there is a $v'''$-$v_4$ Hamiltonian path over $V(C_\ell)$).
\end{proof}

In the next three lemmas we address the case when $C_\ell$ is a $4$-cycle, a 6-cycle or large, or a triangle, resp.

\begin{lemma}\label{lem:expand4cycle}
Suppose that $C_\ell$ is a $4$-cycle. Then in polynomial time one can either compute a gluing path $P_{\ell+1}$ of length $\ell+1$ or a solution $S'$ satisfying the conditions (1), (2) and (3) of Lemma \ref{lem:caseA:gluing}.
\end{lemma}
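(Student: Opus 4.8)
The plan is to mirror the structure of the $5$-cycle analysis (Lemmas \ref{lem:expand5cycle_root}--\ref{lem:expand5cycle_nonRoot}), but the situation is easier here because a $4$-cycle carries $\frac{3}{10}\cdot 4 = \frac{12}{10} > 1$ credits, so the per-component contribution of an internal $4$-cycle in a gluing path is comfortably positive, and a $4$-cycle that becomes the endpoint can be traversed by a Hamiltonian path of length $3 = |E(C_\ell)|-1$ after removing one of its edges. First I would apply Lemma \ref{lem:restricted3matching} (via Lemma \ref{lem:change_e_ell}) to obtain a $3$-matching $M_\ell$ between $V(C_\ell)$ and $V^*\setminus V(C_\ell)$ containing the last edge $e_\ell$ of (a possibly rewritten) $P_\ell$. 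If some edge of $M_\ell$ leaves $V(C_0)\cup\dots\cup V(C_\ell)$, we extend $P_\ell$ to $P_{\ell+1}$ and are done. Otherwise all three matching edges land in $V(C_0)\cup\dots\cup V(C_{\ell-1})$, and since $|E(C_\ell)|=4$ but $M_\ell$ has $3$ edges with distinct endpoints in $V(C_\ell)$, at least two distinct nodes of $C_\ell$ other than $in_\ell$ are hit; pick one such edge $e'=u'v'$ with $v'\in V(C_\ell)\setminus\{in_\ell\}$, $u'\in V(C_{\ell'})$, $\ell'\le \ell-1$. Since $C_\ell$ is a $4$-cycle, for any two of its nodes $in_\ell, v'$ there is a Hamiltonian path $H_\ell$ over $V(C_\ell)$ of length $3$ (a $4$-cycle is Hamiltonian-connected up to this weak requirement: if $in_\ell$ and $v'$ are adjacent, drop the edge between them; if they are antipodal, drop either of the two edges on one side).

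The construction then proceeds exactly as in Lemma \ref{lem:expand5cycle_nonRoot}: set $S' = S \setminus E(C_\ell) \cup H_\ell \cup \{e_{\ell'},\dots,e_\ell,e'\}$, and remove each edge $in_jout_j$ for $\ell'+1\le j\le \ell-1$ with $C_j$ a triangle or $4$-cycle (legal since $P_\ell$ is a gluing path so these endpoints are adjacent). Then $S'$ contains a 2EC component $C'$ spanning $V(C_{\ell'})\cup\dots\cup V(C_\ell)$, which has at least $8$ nodes, so Invariant \ref{inv:2ECcomponent} is preserved; conditions (1) and (2) of Lemma \ref{lem:caseA:gluing} hold trivially. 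For (3), with $n'$ the number of removed $in_jout_j$ edges and $n'' = \ell - \ell' - 1 - n'$, one has $|S'| = |S| + n'' + 1$ (the $+1$ because $|H_\ell| = |E(C_\ell)| - 1$ and we add $e_{\ell'+1},\dots,e_\ell,e'$, which is $\ell-\ell'+1$ edges, minus the $n'+n''$ internal edges, minus $|E(C_\ell)|$, plus $|H_\ell|$ — I would just track this arithmetic carefully). The credit drop is at least $\mathrm{cr}(C_{\ell'}) + \mathrm{cr}(C_\ell) + \sum_{j}\mathrm{cr}(C_j) - \mathrm{cr}(C') \ge 1 + \tfrac{12}{10} + n' + \tfrac{3}{10}\cdot 5 \cdot n'' - 2$ (using $\mathrm{cr}(C_{\ell'})\ge 1$ and that each non-removed internal component, being a $5$-cycle, $6$-cycle, or large, contributes at least $\frac{3}{10}\cdot 5$), giving $cost(S)-cost(S') \ge -(n''+1) + \tfrac{12}{10} + n' + \tfrac{3}{2}n'' \ge \tfrac12 n'' + n' + \tfrac{1}{5} > 0$.

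The only place requiring care is the boundary case $\ell' = \ell-1$ (a single internal component, so no $n',n''$ to help absorb cost) combined with $C_{\ell-1}$ being itself a small cycle. If $C_{\ell-1}$ is a $5$-/$6$-cycle or large we get $cost(S)-cost(S') \ge -(2-1) + \tfrac{12}{10} + \tfrac{3}{2} - 2 = \tfrac{6}{10} > 0$ comfortably; if $C_{\ell-1}$ is a triangle or $4$-cycle we additionally remove $in_{\ell-1}out_{\ell-1}$ (when those are adjacent, which holds by the gluing-path property), gaining an extra $+1$ and giving $-(2-2) + \tfrac{12}{10} + 1 - 2 = \tfrac{1}{5} > 0$ for a triangle, and $-(2-2) + \tfrac{12}{10} + \tfrac{12}{10} - 2 = \tfrac{2}{5} > 0$ for a $4$-cycle. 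There is no analogue of the delicate sub-case tree of Lemma \ref{lem:expand5cycle_nonRoot}(d), because a $4$-cycle has enough credit on its own; likewise there is no separate "root vs.\ non-root" split, since even if $C_\ell$ has a neighbour $C'_\ell$ in $S\setminus S^*$ we do not need to absorb it. So I expect this lemma to be essentially routine given the machinery already developed; the main obstacle is purely bookkeeping — getting the edge count $|S'|-|S|$ and the Hamiltonian-path length right in every sub-case, and making sure the edge $e'$ is chosen so that $v'\ne in_\ell$ (guaranteed since the $3$-matching hits $\ge 2$ distinct non-$out_{\ell-1}$ nodes, hence at least one $\ne in_\ell$).
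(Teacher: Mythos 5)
Your plan has several genuine gaps; the delicate sub-cases you declare unnecessary are in fact the substance of the paper's proof. First, the extension step ignores property (2) of a gluing path: since $C_\ell$ is a $4$-cycle, an edge $e_{\ell+1}$ leaving $V(C_0)\cup\dots\cup V(C_\ell)$ only yields a valid $P_{\ell+1}$ if its endpoint in $C_\ell$ is \emph{adjacent} to $in_\ell$; if the only outgoing edges sit at the node antipodal to $in_\ell$ you cannot extend, and the paper must handle this residual situation by rerouting the path (its case (d)). Second, a $4$-cycle has \emph{no} Hamiltonian path between antipodal vertices using its own edges: dropping ``an edge on one side'' leaves a path whose endpoints are not $in_\ell$ and $v'$, and in $S'$ the untouched antipodal vertex gets degree $1$, so $S'$ is not even a $2$-edge-cover. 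Hence your $H_\ell$ need not exist; you must choose $e'$ at a neighbour of $in_\ell$ (which is what the paper's case (b) does by deleting the single edge $in_\ell u'$), and when every such admissible edge lands in $C_{\ell-1}$ at a node not adjacent to $out_{\ell-1}$ you again need the rerouting argument of case (d). Third, in your boundary case $\ell'=\ell-1$ with $C_{\ell-1}$ a triangle or $4$-cycle, removing $in_{\ell-1}out_{\ell-1}$ is not legitimate: $e_{\ell-1}$ is \emph{not} added to $S'$, so $in_{\ell-1}$ can be left with degree $1$. The edge one may remove is the one joining the two attachment points $out_{\ell-1}$ and the endpoint of $e'$ in $C_{\ell-1}$, and only if they are adjacent -- this is the paper's case (c), with case (d) covering the situation where no such $e'$ exists.

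The cost analysis also contains errors that hide exactly the configurations you dismiss. In your general bound the net $-1$ coming from $cr(C_{\ell'})-cr(C')\le 1-2$ is dropped: with $|S'|=|S|+n''+1$ the correct estimate is
\begin{equation*}
cost(S)-cost(S')\;\ge\;-(n''+1)+1+\tfrac{12}{10}+n'+\tfrac{3}{2}n''-2\;=\;n'+\tfrac{1}{2}n''-\tfrac{4}{5},
\end{equation*}
which is negative precisely when $n'=0$ and $n''\le 1$, i.e.\ when $\ell'=\ell-2$ with $C_{\ell-1}$ a $5$-cycle (value $-\tfrac{3}{10}$) and in the case $\ell'=\ell-1$ (similarly, your displayed computation $-1+\tfrac{12}{10}+\tfrac32-2$ equals $-\tfrac{3}{10}$, not $\tfrac{6}{10}$). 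These are exactly the cases the paper treats specially: removing the extra edge $v'out_{\ell'}$ when possible (its case (b.3)), rerouting the gluing path so that it ends at the $5$-cycle $C_{\ell-1}$ and invoking Lemmas \ref{lem:expand5cycle_root}--\ref{lem:expand5cycle_nonRoot} (its case (b.4)), and the two-edge-removal/rerouting cases (c)--(d) for $\ell'=\ell-1$. So the claim that ``a $4$-cycle has enough credit on its own'' and that no delicate case analysis is needed is incorrect; without those arguments the proof does not go through.
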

\begin{proof}
We distinguish a few subcases (each time excluding the previous cases):

\medskip\noindent {\bf (a)} There exists an edge $e'=v'u'$ with $v'\in V(C_\ell)$ adjacent to $in_{\ell}$ in $C_\ell$ and $u'\in V^*\setminus (V(C_0)\cup \ldots \cup V(C_\ell))$. Then $P_{\ell+1}:=P_\ell,e_{\ell+1}$ satisfies the claim. 

\medskip\noindent {\bf (b)} There exists an edge $e'=u'v'\in E(G)$ with $u'\in V(C_\ell)\setminus \{in_\ell\}$ adjacent to $in_\ell$ and $v'\in V(C_{\ell'})$, $\ell'\leq \ell-2$. Let $S'=S\cup \{e_{\ell'},\ldots,e_{\ell},e'\}\setminus \{in_\ell'u'\}$. Next remove from $S'$ any edge $in_j out_j$ with $\ell'+1\leq j\leq \ell-1$ where $C_j$ is a triangle or a $4$-cycle. Let $n'$ be the number of removed edges and $n''=\ell-\ell'-1-n'$. Observe that $S'$ satisfies conditions (1) and (2). In particular, $S'$ has a 2EC component $C'$ spanning the nodes $V(C_{\ell'})\cup \ldots \cup V(C_{\ell})$. Notice also that $|S'|=|S|+n''+1$. We further distinguish a few subcases (each time excluding the previous subcases):

\medskip\noindent {\bf (b.1)} $C_{\ell-1}$ is \emph{not} a 5-cycle.  If $C_{\ell-1}$ is a triangle or 4-cycle (which implies $n'\geq 1$), one has 
\begin{align*}
cost(S)-cost(S') & =-n''-1+cr(C_{\ell'})+cr(C_\ell)+\sum_{j=\ell'+1}^{\ell-1}ct(C_j)-cr(C')\\
& \geq -n''-1+1+\frac{3}{10}4+n'+\frac{3}{10}5n''-2=\frac{1}{2}n''+n'-\frac{8}{10}>0.  
\end{align*}
Otherwise $C_{\ell-1}$ is a 6-cycle or large, hence
\begin{align*}
& cost(S)-cost(S')  =-n''-1+cr(C_{\ell-1})+cr(C_{\ell'})+cr(C_\ell)+\sum_{j=\ell'+1}^{\ell-2}cr(C_j)-cr(C')\\
\geq & -n''-1+\frac{3}{10}6+1+\frac{3}{10}4+n'+\frac{3}{10}5(n''-1)-2=\frac{1}{2}n''+n'-\frac{1}{2}\geq 0.
\end{align*}
In both cases (3) holds.

\medskip\noindent {\bf (b.2)} $\ell'\leq \ell-3$. $S'$ satisfies (3) since 
\begin{align*}
cost(S)-cost(S') & =-n''-1+cr(C_{\ell'})+cr(C_\ell)+\sum_{j=\ell'+1}^{\ell-1}cr(C_j)-cr(C')\\
& \geq -n''-1+1+\frac{3}{10}4+n'+\frac{3}{10}5n''-2\geq \frac{1}{2}n''+n'-\frac{8}{10}>0, 
\end{align*}
where in the last inequality we used $n'+n''\geq 2$. We next assume $\ell'=\ell-2$.

\medskip\noindent {\bf (b.3)} $C_{\ell-1}$ is a 5-cycle and $v'$ is adjacent to $out_{\ell'}$. In this case modify $S'$ (and $C'$) by removing $v'out_{\ell'}$. Notice that $S'$ still satisfies (1) and (2). Furthermore it satisfies (3) since 
\begin{align*}
cost(S)-cost(S') & =-n''+cr(C_{\ell'})+cr(C_\ell)+\sum_{j=\ell'+1}^{\ell-1}ct(C_j)-cr(C')\\
& \geq -n''+1+\frac{3}{10}4+n'+\frac{3}{10}5n''-2=\frac{1}{2}n''+n'+\frac{1}{5}>0.
\end{align*}

\medskip\noindent {\bf (b.4)} $C_{\ell-1}$ is a 5-cycle and $v'$ is not adjacent to $out_{\ell'}$. Consider $P'_\ell=e_1,\ldots,e_{\ell'},e',e_\ell$. Notice that this is a gluing path of length $\ell$. This is trivial if $C_{\ell'}$ is not a triangle nor a 4-cycle. Otherwise, by Case (b.3), $v'=out_{\ell'}$ if $C_{\ell'}$ is a triangle and $v'=out_{\ell'}$ or $v'$ is the node opposite to $out_{\ell'}$ in $C_{\ell'}$ if $C_{\ell'}$ is a 4-cycle. In both cases $v'$ is adjacent to $in_{\ell'}$ unless the latter node is not defined (i.e., if $\ell'=0$). Hence $P'_\ell$ satisfies all the desired properties. Since $P'_{\ell}$ ends with a 5-cycle by construction, we can apply Lemma \ref{lem:expand5cycle_root} or \ref{lem:expand5cycle_nonRoot} to infer the claim. 

In the following cases we assume $\ell'=\ell-1$.

\medskip\noindent {\bf (c)} There exists an edge $e'=u'v'$ with $u'\in V(C_\ell)$ adjacent to $in_\ell$ in $C_\ell$ and $v'\in V(C_{\ell-1})$ adjacent to $out_{\ell-1}$ in $C_{\ell-1}$. Consider $S''=S\setminus \{in_\ell u',out_{\ell-1}v'\}\cup \{e_\ell,e'\}$. Notice that $S''$ satisfies (1) and (2). In particular $S''$ has a 2EC component $C'$ spanning the nodes $V(C_\ell)\cup V(C_{\ell-1})$. $S''$ also satisfies (3) since \begin{align*}
cost(S)-cost(S'') =cr(C_\ell)+cr(C_{\ell-1})-cr(C')\geq \frac{3}{10}4+1-2> 0
\end{align*}

\medskip\noindent {\bf (d)} All the edges $e'=u'v'$, with $u'\in V(C_\ell)\setminus \{in_\ell\}$ adjacent to $in_\ell$ in $C_\ell$ and $v'\notin V(C_\ell)$, have $v'\in V(C_{\ell-1})$ not adjacent to $out_{\ell-1}$ in $C_{\ell-1}$. Let $e'=u'v'$ be one such edge. Notice that such an edge must exist since there is a 3-matching between $V(C_\ell)$ and $V^*\setminus V(C_\ell)$. Let $C_\ell=v_1,v_2,v_3,v_4,v_1$ with $v_1=in_{\ell}$.   By the definition of $S^*$, one among $v_1$ and $v_3$ must be adjacent to some node in $V^*\setminus (V(C_\ell)\cup V(C_{\ell-1}))$. Consider $P'_\ell=e_1,\ldots,e_{\ell-1},e'$. This happens to be a gluing path of length $\ell$. In particular, if $C_{\ell-1}$ is a triangle or a 4-cycle, then $v'$ is adjacent to $in_{\ell-1}$ in $C_{\ell-1}$ or $\ell-1=0$. Observe that $u'$ is adjacent to both $v_1$ and $v_3$, and at least one of them is incident to an edge with the other endpoint in $V^*\setminus (V(C_\ell)\cup V(C_{\ell-1}))$. The claim then follows by applying to $P'_\ell$ the construction of cases (a) or (b). 
\end{proof}

\begin{lemma}\label{lem:expand6cycleLarge}
Suppose that $C_\ell$ is a $6$-cycle or large. Then in polynomial time one can either compute a gluing path $P_{\ell+1}$ of length $\ell+1$ or a solution $S'$ satisfying the conditions (1), (2) and (3) of Lemma \ref{lem:caseA:gluing}.
\end{lemma}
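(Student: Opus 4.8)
The plan is to mimic the structure used for the $5$-cycle and $4$-cycle cases (Lemmas~\ref{lem:expand5cycle_root}--\ref{lem:expand4cycle}), exploiting that a $6$-cycle or large component $C_\ell$ carries $cr(C_\ell)\geq \frac{3}{10}6=\frac{9}{5}$ credits, which is more than the $\frac{3}{2}$ of a $5$-cycle. First I would apply Lemma~\ref{lem:change_e_ell} to $P_\ell$ to obtain a gluing path $P'_\ell=e_1,\ldots,e'_\ell$ ending at $C_\ell$ together with a $3$-matching $M_\ell$ between $V(C_\ell)$ and $V^*\setminus V(C_\ell)$ with $e'_\ell\in M_\ell$, and replace $P_\ell$ by $P'_\ell$. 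If some edge of $M_\ell$ has its non-$C_\ell$ endpoint outside $V(C_0)\cup\cdots\cup V(C_\ell)$, appending it yields the desired gluing path $P_{\ell+1}$ of length $\ell+1$ and we are done. (Note that since $C_\ell$ is a $6$-cycle or large, condition (2) in the definition of gluing path imposes no adjacency restriction at $C_\ell$, so any such edge works.)

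Otherwise all three edges of $M_\ell$ have their other endpoint in $V(C_0)\cup\cdots\cup V(C_{\ell-1})$. Since $|M_\ell|=3$ and only the edge $e'_\ell$ is incident to $out_{\ell-1}$, there is an edge $e'=u'v'\in M_\ell\setminus\{e'_\ell\}$ with $u'\in V(C_\ell)$, $u'\ne in_\ell$, and $v'\in V(C_{\ell'})$ for some $\ell'\le\ell-1$. I would then set
$$
S'=S\cup\{e_{\ell'},\ldots,e_\ell,e'\},
$$
and, for each $j$ with $\ell'+1\le j\le\ell-1$ such that $C_j$ is a triangle or a $4$-cycle, remove the edge $in_j out_j$ (these endpoints are adjacent in $C_j$ because $P_\ell$ is a gluing path). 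Let $n'$ be the number of edges removed this way and $n''=\ell-\ell'-1-n'$. As in the earlier lemmas, $S'$ satisfies (1) and (2) and contains a $2$EC component $C'$ spanning $V(C_{\ell'})\cup\cdots\cup V(C_\ell)$ with $|S'|=|S|+n''+1$. For condition (3): when $\ell'\le\ell-2$ (so $n'+n''\ge1$), using $cr(C_\ell)\ge\frac{3}{10}6$, $cr(C_{\ell'})\ge1$, the contribution $\ge n'$ from removed edges and $\ge\frac{3}{10}5\,n''$ from the intermediate $5$-cycles (and more for the others), one gets
$$
cost(S)-cost(S')\ge -(n''+1)+1+\tfrac{3}{10}6+n'+\tfrac{3}{10}5n''-2=\tfrac12 n''+n'-\tfrac25>0,
$$
so (3) holds. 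The only remaining case is $\ell'=\ell-1$: here I would additionally remove one edge incident to $out_{\ell-1}$ if $C_{\ell-1}$ is a triangle or $4$-cycle (choosing $e'$ or re-routing analogously to case (c)/(d) of Lemma~\ref{lem:expand4cycle}), so that the count becomes $|S'|=|S|+1$ or $|S|$; then $cost(S)-cost(S')\ge cr(C_{\ell-1})+cr(C_\ell)-cr(C')-1\ge 1+\frac{3}{10}6-2-1=-\frac{7}{10}$ is \emph{not} quite enough if $C_{\ell-1}$ is a triangle, so in that subcase one removes the edge $out_{\ell-1}u'$ (as in case (c) of Lemma~\ref{lem:expand4cycle}), giving $|S'|=|S|$ and $cost(S)-cost(S')\ge 1+\frac{3}{10}6-2>0$; if $C_{\ell-1}$ is a $5$-cycle, $6$-cycle, or large then no removal at $C_{\ell-1}$ is needed and $cost(S)-cost(S')\ge -1+\frac{3}{10}5+\frac{3}{10}6-2\ge -\frac{1}{10}$, which forces a re-routing to a length-$\ell$ gluing path $P'_\ell=e_1,\ldots,e_{\ell-1},e'$ ending at a $5$-cycle handled by Lemma~\ref{lem:expand5cycle_nonRoot}, or one uses the extra credit $\frac{3}{10}6$ of $C_\ell$ more carefully when $C_{\ell-1}$ is a $6$-cycle/large where $\frac{3}{10}6+\frac{3}{10}6-2-1=\frac{7}{5}>0$.

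Finally, since the new $2$EC component $C'$ always spans at least two of the original components, it has at least $7$ nodes, so Invariant~\ref{inv:2ECcomponent} is preserved by induction, and the whole construction clearly runs in polynomial time. The main obstacle I anticipate is the bookkeeping in the $\ell'=\ell-1$ subcase: unlike for a $5$-cycle $C_\ell$ (where one sometimes must remove an edge of $C_\ell$ itself via a Hamiltonian path), here $C_\ell$ is large enough in credit that no edge of $C_\ell$ need be touched, but one still has to verify in each shape of $C_{\ell-1}$ that the credit budget closes — exactly as in cases (b.1)--(d) of Lemma~\ref{lem:expand4cycle} — possibly by re-routing the gluing path and deferring to the $5$-cycle lemmas. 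I expect the argument to be strictly easier than the $5$-cycle case because of the larger credit of $C_\ell$.
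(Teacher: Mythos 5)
There is a genuine gap, and it sits exactly where you predicted the argument would be ``strictly easier'' than the $5$-cycle case. Your central calculation for $\ell'\le\ell-2$ miscounts the added edges: to close the cycle through $C_{\ell'},\ldots,C_\ell$ you add the $\ell-\ell'$ path edges $e_{\ell'+1},\ldots,e_\ell$ \emph{plus} $e'$, and (unlike the $4$-cycle lemma, where the edge $in_\ell u'$ of $C_\ell$ is deleted) you remove no edge of $C_\ell$, so $|S'|=|S|+n''+2$, not $|S|+n''+1$. With the correct count the bound becomes $\tfrac12 n''+n'-\tfrac65$, which is \emph{negative} whenever $\ell'=\ell-2$ (either $n'=1,n''=0$ or $n'=0,n''=1$) and also when $\ell'=\ell-3$ with two intermediate $5$-cycles ($n'=0,n''=2$). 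These short-range configurations are precisely what the paper's proof spends cases (a)--(d) on: it chooses $e'$ with a tie-breaking rule (smallest $\ell'$, then $v'$ adjacent to $out_{\ell'}$ so an edge of $C_{\ell'}$ can be deleted), reroutes to length-$\ell$ gluing paths ending at a $4$- or $5$-cycle and defers to Lemmas~\ref{lem:expand5cycle_root}--\ref{lem:expand4cycle}, and in the hardest subcase invokes Lemma~\ref{lem:restricted3matching} on $V(C_\ell)\cup V(C_{\ell-1})$ to bring in a further component. None of this is recoverable from your single generic inequality.

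The second gap is the case $\ell'=\ell-1$, which you create for yourself by sourcing $e'$ from the $3$-matching of Lemma~\ref{lem:change_e_ell}. The paper avoids this case entirely: since every node of $\hat G_{S^*}$ has degree at least $2$, once no edge leaves $V(C_0)\cup\cdots\cup V(C_\ell)$ the component $C_\ell$ must be adjacent to some $C_{\ell'}$ with $\ell'\le\ell-2$, and that is the edge the paper uses. Your ad hoc fixes for $\ell'=\ell-1$ do not close: in the triangle subcase only one triangle edge can be deleted (and none at all if $v'=out_{\ell-1}$), so $|S'|=|S|+1$ and the balance is $-1+1+\tfrac{18}{10}-2=-\tfrac15<0$, not the claimed $|S'|=|S|$; in the $5$-cycle subcase the proposed reroute $P'_\ell=e_1,\ldots,e_{\ell-1},e'$ ends at $C_\ell$ (a $6$-cycle or large component), not at a $5$-cycle, and falling back to the shorter path $e_1,\ldots,e_{\ell-1}$ would break the progress measure (path length must not decrease, or the procedure may cycle). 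So the proposal needs to be restructured along the paper's lines: obtain $e'$ with $\ell'\le\ell-2$ from the degree property of $S^*$, add the tie-breaking choice of $e'$, and carry out the case analysis (or reroutes to the earlier lemmas) for the deficit configurations listed above.
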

\begin{proof}
Suppose there exists an edge $e'=u'v'$ with $u'\in V(C_\ell)$ and $v'\in V^*\setminus (V(C_0)\cup \ldots \cup V(C_{\ell}))$. Then $P_{\ell+1}:=P_\ell,e_{\ell+1}$ satisfies the claim. We next assume that such $e'$ does not exist. 

By the definition of $S^*$ there exists an edge $e'=u'v'\in E(G)$ with $u'\in V(C_\ell)$ and $v'\in V(C_{\ell'})$ with $\ell'\leq \ell-2$. In case of ties, we choose $e'$ so that $\ell'$ is as small as possible, and as a second choice so that $v'$ is adjacent to $out_{\ell'}$ in $C_{\ell'}$ if possible.

We define $S'$ as follows. Set initially $S'=S\cup \{e_{\ell'+1},\ldots,e_{\ell},e'\}$. Then remove from $S'$ any edge $in_jout_j$ with $\ell'+1\leq j\leq \ell-1$ such that $C_j$ is a triangle or $4$-cycle. Let $n'$ be the number of such removed edges and $n''=\ell-\ell'-1+1-n'$. Observe that $S'$ satisfies (1) and (2). In particular, there is a component $C'$ spanning the nodes $V(C_{\ell'})\cup \ldots \cup V(C_\ell)$. Notice also that $|S'|=|S|+n''+2$ and $cr(C')=2$. 

We distinguish a few subcases, each time assuming that the previous cases do not apply:

\medskip\noindent {\bf (a)} $\ell'\leq \ell-3$. One has 

 \begin{align*}
& cost(S)-cost(S')  = |S|-|S'|+cr(C_{\ell'})+cr(C_{\ell})+\sum_{j=\ell'+1}^{\ell-1}cr(C_j)-cr(C')\\
\geq & -(n''+2)+cr(C_{\ell'})+\frac{3}{10}6+n'+\frac{3}{10}5n''-2=\frac{1}{2}n''+n'+cr(C_{\ell'})-\frac{22}{10}.
\end{align*}

Notice that $n'+n''\geq 2$. 
If $C_{\ell'}$ is not a triangle, then $\frac{1}{2}n''+n'+cr(C_{\ell'})\geq 1+\frac{3}{10}4=\frac{22}{10}$ and (3) holds. Similarly, if $n'\geq 1$ or $n''\geq 3$ since in that case $\frac{1}{2}n''+n'+cr(C_{\ell'})\geq \frac{3}{2}+1>\frac{22}{10}$. Therefore we can assume that $C_{\ell'}$ is a triangle, $n'=0$ and $n''=2$. Notice that $\ell'=\ell-3$ in this case. Observe that $C_{\ell-1}$ and $C_{\ell-2}$ are not triangles nor 4-cycles. If at least one of them is a 6-cycle or large, we can refine the above inequality to 
$$
cost(S)-cost(S')\geq -(n''+2)+1+\frac{3}{10}6+n'+\frac{3}{10}5(n''-1)+\frac{3}{10}6-2=\frac{1}{2}n''+n'-\frac{9}{10}>0,
$$ 
where we used $n''=2$. Thus we can assume that both $C_{\ell-1}$ and $C_{\ell-2}$ are $5$-cycles. 

Suppose now that $v'\neq out_{\ell'}$. In this case consider the same $S'$ as before, but removing the edge $v'out_{\ell}$ (this leads to a modified $C'$ which is still 2EC, in particular (1) and (2) still hold). With a similar notation we obtain 
$$
cost(S)-cost(S')\geq -(n''+2-1)+1+\frac{3}{10}6+n'+\frac{3}{10}5n''-2=\frac{1}{2}n''+n'-\frac{2}{10}>0.
$$

In the remaining case $v'= out_{\ell'}$, consider the alternative gluing path $P'_\ell=\\e_1,\ldots,e_{\ell-3},e',e_{\ell},e_{\ell-1}$ of length $\ell$. This path ends at $C_{\ell-2}$, which is a $5$-cycle. The claim follows by applying to $P'_\ell$ one of the Lemmas \ref{lem:expand5cycle_root} and \ref{lem:expand5cycle_nonRoot}. In the following cases we will assume $\ell'=\ell-2$, hence in particular $v'\in C_{\ell-2}$.
%In the following cases we will assume $\ell'=\ell-2$. We distinguish a few subcases, each time assuming that the previous case does not apply. 

\medskip\noindent {\bf (b)} $v'$ is adjacent to $out_{\ell-2}$. Modify $S'$ (and $C'$) by removing the edge $v'out_{\ell-2}$. Notice that $C'$ remains 2EC, in particular (1) and (2) still hold. If $C_{\ell-dit{1}}$ is a triangle or a 4-cycle, one has 
\begin{align*}
cost(S)-cost(S') & =|S|-|S'|+cr(C_{\ell-2})+cr(C_{\ell-1})+cr(C_{\ell})-cr(C')\\
& \geq -1+1+1+\frac{3}{10}6-2>0. 
\end{align*}
Otherwise 
\begin{align*}
cost(S)-cost(S')\geq -2+1+\frac{3}{10}5+\frac{3}{10}6-2>0. 
\end{align*}
In both cases (3) holds.

\medskip\noindent {\bf (c)} $C_{\ell-2}$ is a 5-cycle, a 6-cycle or large.  

\medskip\noindent {\bf (c.1)} $C_{\ell-1}$ is a 4-cycle or a 5-cycle. Consider the alternative gluing path  $P'_\ell=\\e_1,\ldots,e_{\ell-2},e',e_{\ell}$ of length $\ell$. Notice that the last component of $P'_{\ell}$ is $C_{\ell-1}$, which is a 4-cycle or a 5-cycle. The claim follows by applying to $P'_\ell$ one of the Lemmas \ref{lem:expand5cycle_root}, \ref{lem:expand5cycle_nonRoot}, and \ref{lem:expand4cycle}. 

\medskip\noindent {\bf (c.2)} $C_{\ell-1}$ is a triangle. $S'$ satisfies (3) since
\begin{align*}
cost(S)-cost(S') & =|S|-|S'|+cr(C_{\ell-2})+cr(C_{\ell-1})+ cr(C_{\ell})-cr(C')\\
& \geq -2+\frac{3}{10}5+1+\frac{3}{10}6-2>0.
\end{align*}

\medskip\noindent {\bf (c.3)} $C_{\ell-1}$ is a 6-cycle, or large. $S'$ satisfies (3) since 
\begin{align*}
cost(S)-cost(S') & =|S|-|S'|+cr(C_{\ell-2})+cr(C_{\ell-1})+cr(C_{\ell})-cr(C')\\
& \geq -3+\frac{3}{10}5+\frac{3}{10}6+\frac{3}{10}6-2>0.
\end{align*}

\medskip\noindent {\bf (d)} $C_{\ell-2}$ is a triangle or a 4-cycle. Notice that by case (2), $v'$ is not adjacent to $out_{\ell-2}$, hence it must be adjacent to $in_{\ell-2}$ or $\ell-2=0$.

\medskip\noindent {\bf (d.1)} $C_{\ell-1}$ is a 4-cycle or a 5-cycle. The claim follows analogously to case (c.1) by considering the same $P'_{\ell}$ and applying  
Lemma \ref{lem:expand5cycle_root}, \ref{lem:expand5cycle_nonRoot}, or \ref{lem:expand4cycle}.

%\medskip\noindent {\bf (2.b)} $\ell'\leq \ell-3$. Let $S'$, $C'$, $n'$ and $n''$ be defined as in case (1.b). If $C_{\ell'+1}$ is a triangle, one has $cost(S)-cost(S')=|S|-|S'|+cr(C_{\ell'})+cr(C_{\ell})+\sum_{j=\ell'+1}^{\ell-1} cr(C_{\ell})-cr(C')\geq -(n''+2)+1+\frac{3}{10}6+n'+\frac{3}{10}5n''-2=\frac{1}{2}n''+n'  -\frac{12}{10}> 0$, where the last inequality derives from $n'\geq 1$ and $n'+n''\geq 2$. Otherwise $C_{\ell'+1}$ is a 6-cycle or large, then 
%$cost(S)-cost(S')=|S|-|S'|+cr(C_{\ell'})+cr(C_{\ell'+1})+cr(C_{\ell})+\sum_{j=\ell'+2}^{\ell-1} cr(C_{\ell})-cr(C')\geq -(n''+2)+1+\frac{3}{10}6+\frac{3}{10}6+n'+\frac{3}{10}5(n''-1)-2=\frac{1}{2}n''+n'   -\frac{9}{10}> 0$, where the last inequality derives from $n'+n''\geq 2$. In both cases the other desired properties of $S'$ are trivially satisfied.

\medskip\noindent {\bf (d.2)} $C_{\ell-1}$ is a triangle. Suppose first that $C_{\ell-2}$ is a 4-cycle. Then $S'$ satisfies (3) since
\begin{align*}
cost(S)-cost(S') & =|S|-|S'|+cr(C_{\ell-2})+cr(C_{\ell-1})+cr(C_{\ell})-cr(C')\\
& \geq -2+\frac{3}{10}4+1+\frac{3}{10}6-2=0.
\end{align*}

Otherwise (i.e., $C_{\ell-2}$ is a triangle), by Lemma \ref{lem:restricted3matching}, there exists a 3-matching $M$ between $V(C_{\ell})\cup V(C_{\ell-1})$ and $V^*\setminus (V(C_{\ell})\cup V(C_{\ell-1}))$. By the previous cases $M$ does not contain an edge $e''=u''v''$ with $u''\in V(C_\ell)$ and $v''\in V(C_{\ell-2})$ adjacent to $out_{\ell-2}$. Hence $M$ must contain an edge $e''=u''v''$ with $u''\in V(C_{\ell-1})\setminus \{out_{\ell-1}\}$ and $v''\neq out_{\ell-2}$. If $v''\notin V(C_0)\cup \ldots \cup V(C_{\ell-2})$, then $P_{\ell+1}=(e_1,\ldots,e_{\ell-2},e',e_{\ell},e'')$ satisfies the claim. If $v''\in V(C_{\ell-2})$, then consider $S''=S\cup \{e_{\ell},e',e''\}\setminus \{v''out_{\ell-2},u''out_{\ell-1}\}$ and let $C''$ be the 2EC component spanning the nodes $V(C_{\ell-2})\cup V(C_{\ell-1})\cup V(C_{\ell})$. 
Clearly $S''$ satisfies (1) and (2), and it satisfies (3) since: 
\begin{align*}
cost(S)-cost(S'') & =|S|-|S''|+cr(C_{\ell-2})+cr(C_{\ell-1})+cr(C_\ell)-cr(C'')\\
& =-1+1+1+\frac{3}{10}6-2>0.
\end{align*}

Otherwise $v''\in V(C_{\ell''})$ for some $\ell''<\ell-2$. In this case let $S''=\\S\cup \{e_{\ell''+1},\ldots,e_{\ell-1},e',e_\ell,e''\}\setminus \{u''out_{\ell-1}\}$. Remove from $S''$ any edge $in_jout_j$ with $\ell''+1\leq j\leq \ell-2$ where $C_j$ is a triangle or a 4-cycle, and let $n'$ be the number of such removed edges. Notice that $S''$ satisfies (1) and (2). Observe that $n'\geq 1$ due to $C_{\ell-2}$. Let $n''=\ell-2-\ell''-n'$. Then one has $|S''|=|S|+n''+2$. Observe that $S''$ contains a 2EC component $C''$ spanning the nodes $V(C_{\ell''})\cup \ldots  \cup V(C_{\ell})$. Then $S'$ satisfies (3) since
\begin{align*}
cost(S)-cost(S'') & =|S|-|S''|+cr(C_{\ell''})+cr(C_{\ell-1})+cr(C_{\ell})+\sum_{j=\ell''+1}^{\ell-2} cr(C_{j})-cr(C'')\\
& \geq -(n''+2)+1+1+\frac{3}{10}6+n'+\frac{3}{10}5n''-2=\frac{1}{2}n''+n'-\frac{2}{10}> 0,
\end{align*}
where in the last inequality we used $n'\geq 1$.

\medskip\noindent {\bf (d.3)} $C_{\ell-1}$ is a $6$-cycle or large. By Lemma \ref{lem:restricted3matching}, there exists a 3-matching $M$ between $V(C_{\ell})\cup V(C_{\ell-1})$ and $V^*\setminus (V(C_{\ell})\cup V(C_{\ell-1}))$. Since by the previous cases $V(C_\ell)$ can be adjacent only to $V(C_{\ell-1})\cup \{out_{\ell-2}\}$, there must exist one edge $e''=u''v''\in E(G)$ with $u''\in V(C_{\ell-1})$ and $v''\neq out_{\ell-2}$. If there exists one such $e''$ with $v''\notin V(C_0)\cup \ldots \cup V(C_{\ell-2})$, then $P_{\ell+1}:=(e_1,\ldots,e_{\ell-2},e',e_{\ell},e'')$ satisfies the claim. 

Otherwise, consider $P'_{\ell}:=(e_1,\ldots,e_{\ell-2},e',e_{\ell})$. Notice that $P'_{\ell}$ is a gluing path of length $\ell$ ending at $C_{\ell-1}$. In particular, the endpoint $v'$ of $e'$ in $V(C_{\ell-2})$ is not adjacent to $out_{\ell-2}$, hence it must be adjacent to $in_{\ell-2}$. If there exists one such $e''$ with $v''\in V(C_{\ell''})$ and $\ell''\le \ell-3$, then we can apply case (a) to $P'_{\ell}$. Otherwise the $3$ edges of $M$ must be all incident to $V(C_{\ell-2})$, and the ones  incident to $V(C_\ell)$ (if any) cannot be adjacent to the two neighbours of $out_{\ell-2}$. Thus there exists one edge $e''=u''v''\in M$ with $u''\in V(C_{\ell-1})$ and $v''\in V(C_{\ell-2})$ adjacent to $out_{\ell-2}$. Then we can apply case (b) to $P'_{\ell}$.
\end{proof}

\begin{lemma}\label{lem:expand3cycle}
Suppose that $C_\ell$ is a triangle. Then in polynomial time one can either compute a gluing path $P_{\ell+1}$ of length $\ell+1$ or a solution $S'$ satisfying the conditions (1), (2) and (3) of Lemma \ref{lem:caseA:gluing}.
\end{lemma}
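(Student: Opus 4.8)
The plan is to mirror the structure of Lemmas~\ref{lem:expand4cycle} and~\ref{lem:expand6cycleLarge}, exploiting two special features of the case $C_\ell$ a triangle: $C_\ell$ carries only $1$ credit, so the cost balance is more forgiving; and $|V(C_\ell)|=3$, so every $3$-matching out of $C_\ell$ saturates all three of its nodes. First I would apply Lemma~\ref{lem:change_e_ell} to $P_\ell$ to obtain a gluing path (still called $P_\ell$) whose last edge $e_\ell$ lies in a $3$-matching $M_\ell$ between $V(C_\ell)$ and $V^*\setminus V(C_\ell)$; since $C_\ell$ is a triangle we may write $M_\ell=\{e_\ell,f_1,f_2\}$, where $f_1,f_2$ are incident to the two nodes $a,b$ of $C_\ell$ other than $in_\ell$. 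Observe that condition~(2) in the definition of a gluing path is automatic at any triangle, because all three of its nodes are pairwise adjacent; for the same reason the edge $in_\ell a$ of $C_\ell$ may always be removed from a 2EC component formed by attaching external edges at $in_\ell$ and $a$, since the remaining path $in_\ell\, b\, a$ keeps $C_\ell$ connected. Both facts will be used throughout.

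If some $f_i=a u'$ (say) has $u'\notin V(C_0)\cup\dots\cup V(C_\ell)$, then $P_{\ell+1}:=P_\ell,f_i$ is a gluing path of length $\ell+1$ and we are done. Otherwise both far endpoints of $f_1,f_2$ lie in $V(C_0)\cup\dots\cup V(C_{\ell-1})$; more generally, whenever \emph{some} edge joins $V(C_\ell)$ to a component $C_{\ell'}$ with $\ell'\le\ell-2$, I pick $f_1=a u'$ so that $\ell'$ is as small as possible and form the standard glued solution $S'=S\cup\{e_{\ell'+1},\dots,e_\ell,f_1\}\setminus(\{in_\ell a\}\cup R)$, where $R$ collects the edges $in_jout_j$ over the middle triangles and $4$-cycles $C_{\ell'+1},\dots,C_{\ell-1}$. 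Writing $n'=|R|$ and $n''$ for the number of remaining middle components, a direct count gives $|S'|=|S|+n''+1$ and $cost(S)-cost(S')\ge \tfrac12 n''+n'+cr(C_{\ell'})-2$, which is non-negative whenever $n'\ge1$, or $n''\ge2$, or $cr(C_{\ell'})$ is large enough; moreover the new component $C'$ spans at least three components of $S$, hence at least $7$ nodes, so Invariant~\ref{inv:2ECcomponent} is preserved. The residual sub-cases are exactly $\ell'\in\{\ell-1,\ell-2\}$ with the components near the end of $P_\ell$ of small credit: there I would either (i) delete one further edge incident to $out_{\ell'}$ (or to $out_{\ell-2}$) when the relevant matching endpoint allows it, gaining $+1$ in the balance exactly as in cases~(b.3) of Lemma~\ref{lem:expand4cycle} and~(b) of Lemma~\ref{lem:expand6cycleLarge} (and recalling that condition~(2) forces $in_j$ and $out_j$ to be adjacent in a $4$-cycle, so such an edge exists), or (ii) reroute $P_\ell$ through $f_1$, or through an edge guaranteed by the definition of $S^*$, so that it ends at a $4$-, $5$-, $6$-cycle or large component, and then invoke the already-proven Lemmas~\ref{lem:expand4cycle}--\ref{lem:expand6cycleLarge}.

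The one genuinely new difficulty is the ``all triangles at the end'' configuration: $\ell'=\ell-1$, both $f_1,f_2$ point to $C_{\ell-1}$, and $C_{\ell-1}$ is itself a triangle. Here merging only $C_{\ell-1}$ and $C_\ell$ is forbidden — not by the cost balance (which reduces to $cr(C_{\ell-1})-1\ge0$) but by Invariant~\ref{inv:2ECcomponent}, since $6<7$, and in fact when $\ell-1=0$ one cannot even delete an extra edge of $C_{\ell-1}$ without breaking $2$-edge-connectivity. To handle it I would first note that $C_{\ell-1}$ and $C_\ell$ cannot be joined by a $3$-matching, since that would contradict the $3$-optimality of the canonical $H$ underlying $S$ (by the argument in the proof of Lemma~\ref{lem:triangle}); together with structuredness of $G$ — a triangle all of whose nodes have degree $2$ would be a $\tfrac54$-contractible subgraph of size $3$ — this limits the possible adjacencies. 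Then, applying the restricted $3$-matching Lemma~\ref{lem:restricted3matching} to $V(C_\ell)\cup V(C_{\ell-1})$ against the rest of $V^*$ (or, if $C_{\ell-1}$ also has a pendant subtree, the plain $3$-matching Lemma~\ref{lem:matchingOfSize3}), I obtain a third component $C_{\ell''}$ — with $\ell''\le\ell-2$, or a fresh one, or one in $S\setminus S^*$ — to absorb into the merge; the resulting $C'$ then spans at least three components ($\ge 9$ nodes), and a short case analysis on the types of $C_{\ell''}$ and of the intermediate components, again using the optional extra edge deletions and, if needed, a reroute of $P_\ell$, shows $cost(S')\le cost(S)$. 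I expect this last configuration, and checking that every reroute of $P_\ell$ still yields a legal gluing path (i.e.\ that condition~(2) holds at each triangle and $4$-cycle on the rerouted path), to be the main obstacle; all remaining cases are cheaper variants of the calculations already carried out for $C_\ell$ a $4$-cycle.
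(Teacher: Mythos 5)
Your overall skeleton is the same as the paper's: apply Lemma~\ref{lem:change_e_ell}, extend the path if a matching edge leaves $V(C_0)\cup\dots\cup V(C_\ell)$, otherwise merge along the path deleting the triangle edge at $C_\ell$ and the $in_jout_j$ edges of intermediate triangles/4-cycles, with exactly the credit balance $\tfrac12 n''+n'+cr(C_{\ell'})-2$, and reroute to the already-proven expansion lemmas in the tight sub-cases. But there is a genuine gap. The configuration you single out as the ``genuinely new difficulty'' (both $f_1,f_2$ into a triangle $C_{\ell-1}$) cannot occur at all: as you yourself half-observe, $\{e_\ell,f_1,f_2\}$ would then be a 3-matching between two triangles, contradicting the standing assumption that the conditions of Lemma~\ref{lem:triangle} fail, so no absorption of a third component is needed there. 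The case that genuinely remains, and that your proposal never addresses, is when \emph{all three} edges of $M_\ell$ go to $C_{\ell-1}$. That same assumption then forces $C_{\ell-1}$ to be a 6-cycle, and none of your tools apply: the direct two-component merge adds two edges against one deleted triangle edge and credits $1+\tfrac{18}{10}-2$, i.e.\ margin $-\tfrac{2}{10}<0$, and one cannot in general also delete a 6-cycle edge because the three attachment nodes may be pairwise non-adjacent (alternate nodes of the 6-cycle); extension via $f_1,f_2$ is unavailable; and by the earlier cases every edge from $C_\ell$ to a component other than $C_{\ell-1}$ is incident to $in_\ell$, so your ``standard glued solution'' --- which deletes $in_\ell a$ and hence needs the two external attachments of $C_\ell$ at two distinct triangle nodes --- cannot be formed, while keeping all three triangle edges degrades the balance to $\tfrac12 n''+n'+cr(C_{\ell'})-3$, negative e.g.\ for $\ell'=\ell-2$.

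The paper closes exactly this case by using the fact that every node of $\hat G_{S^*}$ has degree at least $2$: there is an edge $e'=in_\ell z$ to some component $\tilde C\neq C_{\ell-1}$ of $S^*$, and one swaps $e_\ell$ for another matching edge $e''\in M_\ell$, thereby re-designating which triangle node plays the role of $in_\ell$; then either $e_1,\dots,e_{\ell-1},e'',e'$ is a gluing path of length $\ell+1$ (if $\tilde C$ is off the path), or the rerouted path $e_1,\dots,e_{\ell-1},e''$ has $e'$ incident to a non-$in_\ell$ node of the triangle and the merge-along-the-path case applies. Your option ``reroute through an edge guaranteed by the definition of $S^*$ so that the path ends at a 4-, 5-, 6-cycle or large component'' does not deliver this: any reroute that keeps $C_\ell$ last still ends at a triangle, and rerouting to end at $C_{\ell-1}$ shortens the path, so invoking Lemma~\ref{lem:expand6cycleLarge} there carries no termination guarantee. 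A smaller, fixable omission: in your residual sub-cases you still need the dichotomy that when $C_{\ell'}$ is a triangle or 4-cycle and the reroute is illegal ($v'$ not adjacent to $in_{\ell'}$ and $\ell'\neq 0$), then $v'$ is adjacent to $out_{\ell'}$ so the extra deletion is available; you currently only assert ``when the relevant matching endpoint allows it''.
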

\begin{proof}
Let us apply Lemma \ref{lem:change_e_ell} to $P_\ell$, in order to obtain a pair $(P'_\ell,M_\ell)$ as in the claim, and set  $P_\ell$ to $P'_\ell$. Suppose there exists an edge $e_{\ell+1}$ with one endpoint in $V(C_\ell)\setminus \{in_\ell\}$ (in particular not incident to $e_\ell$) and the other in $V^*\setminus (V(C_0)\cup \ldots \cup V(C_{\ell}))$. Then $P_{\ell+1}:=P_\ell,e_{\ell+1}$ satisfies the claim.

Otherwise all the edges in $M_\ell$ have one endpoint in $V(C_\ell)$ and the other one in $V(C_0)\cup \ldots \cup V(C_{\ell-1})$. We distinguish two cases:

\medskip\noindent {\bf (a)} There exists an edge $e'=u'v'\in E(G)$ with $u'\in V(C_\ell)\setminus \{in_\ell\}$ and $v'\in V(C_{\ell'})$, $\ell'\leq \ell-2$. We further distinguish some subcases:

\medskip\noindent {\bf (a.1)} $C_{\ell'}$ is not a triangle nor a $4$-cycle. In this case consider $S'=S\setminus \{in_\ell u'\}\cup \{e_{\ell'},\ldots,e_\ell,e'\}$. Let us remove from $S'$ any edge of type $in_{j}out_{j}$ where $\ell'+1\leq j \leq \ell-1$ and $C_j$ is a triangle or a $4$-cycle. Let $n'$ be the number of such removed edges, and $n''=\ell-\ell'-1-n'$. Observe that $S'$ satisfies (1) and (2). In particular it contains a 2EC component $C'$ spanning the nodes $V(C_{\ell'})\cup \ldots \cup V(C_\ell)$. $S'$ satisfies (3) since 
\begin{align*}
cost(S)-cost(S') & =|S|-|S'|+cr(C_{\ell'})+cr(C_\ell)+\sum_{j=\ell'+1}^{\ell-1}cr(C_j)-cr(C')\\
& \geq -n''-1+\frac{3}{10}5+1+n'+\frac{3}{10}5n''-2=\frac{1}{2}n''+n'-\frac{1}{2}\geq 0.
\end{align*}

\medskip\noindent {\bf (a.2)} $C_{\ell'}$ is a triangle or a $4$-cycle and $C_{\ell'+1}$ is a triangle or a $4$-cycle. In this case we can take the same $S'$ as in case (2.a), which satisfies (1) and (2). With the same notation, we observe that in this case the difference is that $n'\geq 1$ (due to $C_{\ell'+1}$) and $cr(C_{\ell'})\geq 1$. Thus $S'$ satisfies (3) since 
\begin{align*}
cost(S)-cost(S') & =|S|-|S'|+cr(C_{\ell'})+cr(C_\ell)+\sum_{j=\ell'+1}^{\ell-1}cr(C_j)-cr(C')\\
& \geq -n''-1+1+1+n'+\frac{3}{10}5n''-2=\frac{1}{2}n''+n'-1\geq 0.
\end{align*}

\medskip\noindent {\bf (a.3)} $C_{\ell'}$ is a triangle or a $4$-cycle and $C_{\ell'+1}$ is not a triangle nor a $4$-cycle. If $v'$ is a neighbour of $in_{\ell'}$ in $C_{\ell'}$ or $\ell'=0$ (in which case $in_{\ell'}$ is not defined), consider $P'_\ell:=e_1,\ldots,e_{\ell'},e',e_{\ell},e_{\ell-1},\ldots,e_{\ell'+2}$. Observe that $P'_{\ell}$ is an alternative  gluing path of length $\ell$ whose ending component $C_{\ell'+1}$ is not a triangle nor a $4$-cycle. The claim then follows by applying one of the Lemmas \ref{lem:expand5cycle_root}, \ref{lem:expand5cycle_nonRoot}, and \ref{lem:expand6cycleLarge}. 

The remaining case is when $v'$ is not a neighbour of $in_{\ell'}$ in $C_\ell'$ and $\ell'\neq 0$. Observe that, since $P_\ell$ is a gluing path and $C_{\ell'}$ is a triangle or a $4$-cycle, then $v'$ must be adjacent to $out_{\ell'}$. Consider $S'=S\setminus \{in_\ell u',v' out_{\ell'}\}\cup \{e_{\ell'+1},\ldots,e_{\ell},e'\}$. Remove as usual from $S'$ any edge $in_{j}out_{j}$ where $\ell'+1\leq j \leq \ell-1$ and $C_j$ is a triangle or a $4$-cycle. $S'$ satisfies (1) and (2). With the usual notation for $n'$, $n''$ and $C'$, $S'$ satisfies (3) since 
\begin{align*}
cost(S)-cost(S') & =|S|-|S'|+cr(C_{\ell'})+cr(C_\ell)+\sum_{j=\ell'+1}^{\ell-1}cr(C_j)-cr(C')\\
& \geq -n''+1+1+n'+\frac{3}{10}5n''-2=\frac{1}{2}n''+n'\geq 0.
\end{align*}

\medskip\noindent {\bf (b)} All the edges in $M_\ell$ are between $V(C_\ell)$ and $V(C_{\ell-1})$ (notice that this must be the case when case (1) does not hold). Since by assumption the conditions of Lemma \ref{lem:triangle} do not hold, $C_{\ell-1}$ must be a $6$-cycle. Notice that by the definition of $S^*$, $C_\ell$ is adjacent to at least two components in $S^*$, hence to at least one component $\tilde{C}$ other than $C_{\ell-1}$. This implies that there exists an edge $e'=vz$ with $v\in V(C_\ell)$ and $z\in V(\tilde{C})$. More precisely, $v=in_\ell$ by the previous cases. In this case we choose any edge $e''\in M_\ell\setminus \{e_\ell\}$. 
If $\tilde{C}\notin \{C_1,\ldots,C_{\ell-2}\}$, then $P_{\ell+1}=e_1,\ldots,e_{\ell-1},e'',e'$ satisfies the claim. Otherwise $P'_{\ell}=e_1,\ldots,e_{\ell-1},e''$ is another gluing path of length $\ell$ whose ending component is $C_\ell$. We can now apply case (a) to $P'_\ell$.

\end{proof}

We now have all the ingredients to prove Lemma \ref{lem:caseA:nonTreeCase}.
\begin{proof}[Proof of Lemma \ref{lem:caseA:nonTreeCase}]
It follows easily from the previous discussion and Lemmas \ref{lem:expand5cycle_root}-\ref{lem:expand3cycle}.
\end{proof}

\section{A Refined Approximation}
\label{sec:refinedApproximation}

In this section we present the following refined version of Lemma \ref{lem:manyTriangles:main} for the case of many triangles. 
\begin{lemma}\label{lem:manyTriangles:refined}
Given a canonical minimum-size 2-edge-cover $H$ of a structured graph $G$ with $b|H|$ bridges and $t|H|$ edges belonging to triangle 2EC components. In polynomial time one can compute a $\frac{14}{9}-\frac{8}{27}t+\frac{4}{9}b$ approximate solution for 2-ECSS on $G$.
\end{lemma}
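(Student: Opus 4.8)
The plan is to follow the strategy sketched at the end of Section~\ref{sec:manyTrianglesOverview}: run the gluing phase exactly as in Lemma~\ref{lem:caseB:gluingStep} until we reach a core-triangle 2-edge-cover $S$ with core $C$ and triangles $T_1,\dots,T_k$, and then, instead of attaching each triangle to $C$ by two arbitrary edges, choose the connecting edge-pairs $Q_i=E(T_i)\setminus\{u_iv_i\}\cup\{e_i,f_i\}$ so as to minimize the number $\alpha(S)$ of connected components of $Q:=\bigcup_i Q_i$. As remarked in the text, this optimization can be done in polynomial time via a reduction to matroid intersection, analogously to \cite{SV14}. From $Q$ we then add a minimal edge set $F\subseteq E(G)$ making $Q\cup F$ connected, and finally a minimum-size $T$-join $J$ over the odd-degree vertices of $Q\cup F$; set $\APX':=Q\cup F\cup J$. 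Taking the better of $\APX$ (the solution from Lemma~\ref{lem:manyTriangles:main}) and $\APX'$ gives the claimed bound.

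The two quantitative claims I need are: (i) $|\APX'|\le 4k+\alpha(S)+\tfrac12|E(C)|-1$, and (ii) $\opt\ge 4k+\alpha(S)-1$. For (i): $|Q|=\sum_i|Q_i|=4k$ since each $Q_i$ has $4$ edges; $Q$ has $\alpha(S)$ components (the core together with the triangles that failed to be merged into a common component), so $|F|=\alpha(S)-1$; and the $T$-join bound follows from the standard ear-decomposition / parity argument as in \cite{SV14} — the number of odd-degree vertices of $Q\cup F$ is controlled, and one can route a $T$-join of size at most $\tfrac12|E(C)|$ (roughly, pairing up odd vertices along the structure of $C$, using that $C$ is 2EC and $|E(C)|\ge 7$). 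For (ii): every feasible solution must contain $\ge4$ edges inside each $V(T_i)$ and, since distinct triangles are non-adjacent (Definition~\ref{specialConfigurationDef}), these $4k$ edges are distinct and the solution restricted outside them, together with how the $T_i$ connect to the rest, forms a connected structure forcing at least $\alpha(S)-1$ further "linking" edges — this is the combinatorial heart and mirrors the matroid lower bound in \cite{SV14}.

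Given (i) and (ii), the arithmetic runs as follows. From Lemma~\ref{lem:caseB:initialCost} and Lemma~\ref{lem:caseB:gluingStep}, $\cost(S)=|S|+\cr(S)\le(\tfrac32-\tfrac13t+\tfrac12b)|H|$, and $|S|=|E(C)|+3k$ while $\cr(S)\ge 2+\tfrac12 k$ in the worst case (core has $2$ credits, each light triangle $\tfrac12$). Hence $|E(C)|\le(\tfrac32-\tfrac13t+\tfrac12b)|H|-3k-\tfrac12 k-2$, i.e. $\tfrac12|E(C)|\le\tfrac12(\tfrac32-\tfrac13t+\tfrac12b)|H|-\tfrac{7}{4}k-1$. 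Then
\begin{align*}
|\APX'| &\le 4k+\alpha(S)+\tfrac12|E(C)|-1\\
&\le \alpha(S)+\tfrac{9}{4}k+\tfrac12(\tfrac32-\tfrac13t+\tfrac12b)|H|-2,
\end{align*}
and combining with $|\APX|\le(\tfrac32-\tfrac13t+\tfrac12b)|H|+\tfrac12 k$ from the proof of Lemma~\ref{lem:manyTriangles:main}, together with the two lower bounds $\opt\ge|H|$, $\opt\ge 4k$, and $\opt\ge4k+\alpha(S)-1$, one takes the minimum of the two upper bounds and optimizes over the regime of $k$ and $\alpha(S)$ relative to $\opt$; the worst case yields the coefficient $\tfrac{14}{9}-\tfrac{8}{27}t+\tfrac49 b$ (the $27$ in the denominator signals a convex combination with weights summing against the $9$ from $\tfrac{9}{4}k$ and the $\tfrac13 t$ term). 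I expect the main obstacle to be proving the lower bound (ii) rigorously — the matroid-intersection lower bound argument of \cite{SV14} must be adapted to our core-triangle structure, carefully using that triangles are pairwise non-adjacent and that $C$ is already 2EC — and, secondarily, verifying the $T$-join size bound $\tfrac12|E(C)|$ and that choosing $Q$ to minimize $\alpha(S)$ is genuinely a matroid-intersection problem in our setting; the final LP-style balancing of the two solutions is then routine.
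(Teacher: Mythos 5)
Your proposal follows the same route the paper itself takes in Section~\ref{sec:refinedApproximation} (choose the $Q_i$'s via matroid intersection to minimize $\alpha(S)$, build $\APX'=Q\cup F\cup J$, take the better of $\APX$ and $\APX'$), but as written it is a plan rather than a proof: the two quantitative claims you isolate are exactly the parts you leave unproven, and the first of them is the mathematical heart of the lemma. For the lower bound $\opt\ge 4k+\alpha(S)-1$, your one-sentence justification ("the solution restricted outside them \ldots forces at least $\alpha(S)-1$ further linking edges") begs the question: $\alpha(S)$ is defined as a minimum over the special four-edge gadgets $Q_i$ (a path or cycle through $T_i$ with both attachment points in $V(C)$), whereas an optimal solution may spend more than four edges on a triangle or attach it in a different shape, so it is not at all immediate that its triangle-incident edge set leaves at least $\alpha(S)$ components. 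The paper closes this gap with Lemma~\ref{minimumComponents}, an induction on the number of triangle-incident edges of a feasible solution containing $E(C)$, with an exchange argument split according to how many of those edges lie inside a triangle ($t_i\in\{0,1,2,3\}$), and only then deduces Corollary~\ref{cor:minimumComponents}; nothing in your sketch substitutes for that induction, and you yourself flag it as the "main obstacle".

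There are secondary gaps as well. The $T$-join bound $|J|\le\frac12|E(C)|$ does not follow from "pairing up odd vertices along the structure of $C$"; the paper gets it from Frank's even-ear theorem (Lemma~\ref{Join}) applied to $G[V(C)]$, using that $T\subseteq V(C)$ and that $E(C)$ is a 2EC spanning subgraph of $G[V(C)]$, so $2|J|\le |V(C)|+\phi-1\le \opt(G[V(C)])\le|E(C)|$; one must also repair possible parallel edges in $Q\cup F\cup J$ (replacing a duplicate by another edge across the cut it covers) to obtain a simple 2EC subgraph of the same size. The polynomial-time computation of $Q$ minimizing $\alpha(Q)$ also needs an argument (graphic matroid on pseudo-edges versus partition matroid on triangle colors), which you assert but do not give. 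Finally, the "routine LP-style balancing" is in fact a four-case optimization over $|E(C)|$, $\alpha(S)$ and $k$ against the lower bounds $\opt\ge|H|\ge\bigl(\frac72 k+|E(C)|\bigr)/f(t,b)$ and $\opt\ge 4k+\alpha(S)-1$, with $f(t,b)=\frac32-\frac13 t+\frac12 b$; the paper must identify the worst case (around $|E(C)|=\frac{8f(t,b)-7}{2-f(t,b)}k$ and $\alpha(S)-1=\frac12|E(C)|$) to obtain $\frac{2+8f(t,b)}{9}=\frac{14}{9}-\frac{8}{27}t+\frac49 b$, and your appeal to the shape of the coefficient is not a verification. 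So the approach is the right one -- it is the paper's own -- but the lemma is not established until these items, above all the lower bound, are actually carried out.
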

The proof of Theorem \ref{thr:refinedApx} follows easily.
\begin{proof}[Proof of Theorem \ref{thr:refinedApx}]
It is sufficient to describe a $\frac{118}{89}$-approximation algorithm for the case of a structured graph $G$. The claim then follows from Lemma \ref{preprocessingLemma}. Let $H$ be obtained from a minimum-size 2-edge-cover of $G$ via Lemma \ref{lem:canonical}. Taking the best solution among the ones guaranteed by Lemmas \ref{lem:manyTriangles:refined} and \ref{lem:fewTriangles:main} w.r.t $H$, with the usual notation for $t$ and $b$ one obtains an approximation factor
\begin{align*}
apx(t,b):=min\{apx_{mny}(t,b),apx_{few}(t,b)\},\\
apx_{mny}(t,b):=\frac{14}{9}-\frac{8}{27}t+\frac{4}{9}b, \quad apx_{few}:=\frac{13}{10}+\frac{1}{30}t-\frac{1}{20}b.
\end{align*}
For a fixed value of $b$, $apx_{few}(t,b)$ is an increasing function of $t$, while $apx_{mny}(t,b)$ is a decreasing function of $t$. The two factors are equal for $t=t(b):=\frac{69}{89}+\frac{3}{2}b$. Notice that $t(b)\geq 0$. For the values of $b$ such that $t(b)\leq 1-b$, namely for $b\leq \frac{8}{89}$, the worst case is achieved for $t=t(b)$, leading to the approximation factor
$$
apx(b):=\frac{13}{10}+\frac{1}{30}(\frac{69}{89}+\frac{3}{2}b)-\frac{1}{20}b=\frac{118}{89}.
$$
For $b\geq \frac{8}{89}$, the worst case is achieved for $t=1-b$ and it is given by $apx_{few}(t,b)$, leading to the approximation factor
$$
apx(b):=apx_{few}(1-b,b)=\frac{13}{10}+\frac{1}{30}(1-b)-\frac{1}{20}b=\frac{4}{3}-\frac{1}{12}b\leq \frac{118}{89}. 
$$
\end{proof}

It remains to prove Lemma \ref{lem:manyTriangles:refined}. The initial part of the construction, leading to a core-triangle 2-edge-cover, is the same as in Lemma \ref{lem:manyTriangles:main}. In particular, we will obtain a 2-edge-cover $S$ consisting of a core $C$ (which is 2EC) and $k$ triangles $T_1,\ldots,T_k$, such that nodes of distinct triangles are not adjacent in $G$. Recall that at this point in Lemma \ref{lem:manyTriangles:main} we consider a solution $APX$ obtained from $S$ by adding any two edges between two distinct nodes $v_i,u_i\in V(T_i)$ and $C$, and removing the edge $v_iu_i$: let $Q_i$ be the (four) edges with at least one endpoint in $V(T_i)$. Our final solution is $APX=E(C)\cup \bigcup_{i=1}^{k}Q_i$. We recall that
$$
|APX|=|E(C)|+4k=|S|+k\leq (\frac{3}{2}-\frac{1}{3}t+\frac{1}{2}b)|H|+\frac{k}{2}.
$$
The main idea here is to choose the $Q_i$'s more carefully among the possible options. To do that, we use an approach very similar in spirit to the one in \cite{SV14}. 

In more detail, let $Q=\cup_{i=1}^{k}Q_i$. For a subset of edges $F$, let $\alpha(F)$ be the number of connected components of the graph $(V,F)$. Define $\alpha(S)$ to be the minimum possible value of $\alpha(Q)$ for any valid choice of $Q$ for a given core-triangle 2-edge-cover $S$. The following fact follows similarly to the reduction to matroid intersection in \cite{SV14} for the computation of a maximum earmuff:
\begin{fact}
Given a core-triangle 2-edge-cover $S$ of $G$, there exists a polynomial time algorithm that computes $Q^*$ minimizing $\alpha(Q)$ over the set of edges $Q$ of the above type, i.e. $\alpha(Q^*)=\alpha(S)$.
\end{fact}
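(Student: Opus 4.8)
The plan is to show that minimizing $\alpha(Q)$ is equivalent to a cardinality matroid‑intersection problem on two easily described matroids, in the same spirit as the maximum‑earmuff computation of \cite{SV14}. First observe that the graph $(V,Q)$ has exactly $4k$ edges for \emph{every} admissible $Q$ (two triangle edges plus two crossing edges per $T_i$, and no edge of $G$ is counted twice since the triangles are pairwise vertex‑disjoint and, in a core‑triangle $2$‑edge‑cover, pairwise non‑adjacent). Hence $\alpha(Q)=|V(G)|-\mathrm{rank}_{\mathcal{M}_g}(Q)$, where $\mathcal{M}_g$ is the graphic matroid of $G$, so minimizing $\alpha(Q)$ is the same as maximizing $\mathrm{rank}_{\mathcal{M}_g}(Q)$. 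Write $Q=Q'\cup\bigcup_i P_i$, where $P_i\subseteq E(T_i)$ are the two triangle edges retained in $Q_i$ (a Hamiltonian path of $T_i$) and $Q'$ is the set of $2k$ crossing edges of $Q$; since triangles are non‑adjacent and a triangle induces $K_3$, each crossing edge has its non‑triangle endpoint in $V(C)$. As $\bigcup_i P_i$ is a forest spanning each triangle, $\mathrm{rank}_{\mathcal{M}_g}(Q)=2k+\mathrm{rank}_{\bar{\mathcal{M}}_g}(Q')$, where $\bar{\mathcal{M}}_g$ is the graphic matroid of the multigraph $\bar G$ obtained from $G$ by contracting each triangle to a single node; crucially, both the value $2k$ and the matroid $\bar{\mathcal{M}}_g$ are independent of which Hamiltonian paths $P_i$ are used. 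Thus it suffices to maximize $\mathrm{rank}_{\bar{\mathcal{M}}_g}(Q')$ over the family of \emph{valid transversals} $Q'$: subsets of the crossing edges $E_{\mathrm{cross}}$ containing, for each $T_i$, exactly two edges incident to two \emph{distinct} vertices of $T_i$ (the deleted edge $u_iv_i$ and the path $P_i$ being then determined).

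Next I claim the valid transversals are exactly the bases of a matroid $\mathcal{M}_1$ on $E_{\mathrm{cross}}$. For each $T_i$, partition its crossing edges by the triangle vertex they touch, obtaining at most three nonempty classes, and let $N_i$ be the rank‑$2$ matroid obtained from $U_{2,3}$ on these classes by replacing each class with a parallel class (a set is independent in $N_i$ iff it has size $\le 1$ or consists of two edges from two different classes). Since $G$ is $2$VC, no triangle vertex can be a cut node, so each $T_i$ has crossing edges incident to at least two of its vertices; hence $N_i$ has rank $2$ and its bases are precisely the admissible pairs $\{e_i,f_i\}$. Setting $\mathcal{M}_1:=\bigoplus_i N_i$ gives a matroid of rank $2k$ whose bases are exactly the valid transversals, and both $\mathcal{M}_1$ and $\bar{\mathcal{M}}_g$ have trivial polynomial‑time independence oracles.

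Finally, I would argue that $\max\{\mathrm{rank}_{\bar{\mathcal{M}}_g}(Q'):Q'\text{ a basis of }\mathcal{M}_1\}$ equals the size $\mu$ of a maximum common independent set of $\mathcal{M}_1$ and $\bar{\mathcal{M}}_g$: a maximum $\bar{\mathcal{M}}_g$‑independent subset of any $\mathcal{M}_1$‑basis $Q'$ is independent in both matroids, giving $\mathrm{rank}_{\bar{\mathcal{M}}_g}(Q')\le\mu$, while a maximum common independent set extends (within $\mathcal{M}_1$) to a basis $Q^*$ of $\mathcal{M}_1$ with $\mathrm{rank}_{\bar{\mathcal{M}}_g}(Q^*)\ge\mu$. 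Consequently $\alpha(S)=|V(\bar G)|-\mu=(|V(C)|+k)-\mu$, and an optimal $Q^*$ is produced by: computing a maximum common independent set of $\mathcal{M}_1$ and $\bar{\mathcal{M}}_g$ via Edmonds' matroid‑intersection algorithm (polynomial time), greedily completing it to a basis of $\mathcal{M}_1$, and reconstructing the $Q_i$'s by adjoining to each $T_i$ the two triangle edges incident to the apex. I expect the main point requiring care to be the verification, for a structured $G$, of the two structural facts used above: the contraction rank identity $\mathrm{rank}_{\mathcal{M}_g}(Q)=2k+\mathrm{rank}_{\bar{\mathcal{M}}_g}(Q')$ (a routine application of the rank formula under graphic‑matroid contraction, but one must check it is oblivious to the choice of the $P_i$'s) and the claim that each $N_i$ is a rank‑$2$ matroid — this last step is precisely where $2$‑vertex‑connectivity of $G$ enters.
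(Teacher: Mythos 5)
Your proof is correct, and it arrives at the same place as the paper -- a reduction to cardinality matroid intersection in the spirit of \cite{SV14} -- but through a genuinely different encoding. The paper works with \emph{pseudo-edges}: for each admissible pair of crossing edges of $T_i$ with distinct endpoints in $V(C)$ it creates an edge on $V(C)$ of colour $i$, and intersects the graphic matroid on $V(C)$ with the partition matroid of colours; optimality is then argued by pruning an optimal $Q^*$ down to a common independent set $F^*$ and comparing $|F^*|$ with $|F|$. You instead keep the individual crossing edges as the ground set, intersect the graphic matroid of the triangle-contracted graph $\bar G$ with a direct sum of rank-$2$ matroids (one per triangle, enforcing ``two crossing edges at distinct triangle vertices''), and prove correctness via the contraction rank identity together with the clean observation that $\max\{\mathrm{rank}_{\bar{\mathcal M}_g}(Q'): Q' \text{ basis of } \mathcal M_1\}$ equals the maximum common independent set size $\mu$; both directions of that equivalence are argued correctly, yielding the exact formula $\alpha(S)=(|V(C)|+k)-\mu$. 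Your encoding has the minor advantage that pairs whose two crossing edges share their $V(C)$-endpoint are handled automatically (the paper must exclude them from the pseudo-edge construction and absorb them in the ``add an arbitrary $Q_j$ for each missing colour'' step), and your ground set is the crossing edges themselves rather than all admissible pairs, which can be quadratically more numerous per triangle; the paper's encoding is in exchange more compact and needs no rank-$2$ matroid gadget. The one place where a structural property of $G$ is needed -- that each triangle has crossing edges at two distinct vertices, so that $\mathcal M_1$ has rank $2k$ and admissible $Q$'s exist -- is exactly the paper's remark that otherwise some node of $V(T_i)$ would be a $1$-vertex-cut, and you verify it the same way. I see no gap in your argument.
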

\begin{proof}
Consider the following algorithm. For each $Q_i$ that contains a 2-matching $u_1v_1,u_2v_2$ between $T_i$ and $C$ (with $v_i\in V(C)$) we define a pseudo-edge $v_1v_2$ of color $i$. We next compute a maximum size set $F$ of pseudo-edges such that $(V(C),F)$ is a forest and no two edges of $F$ have the same color. This can be done in polynomial time as follows. Let $M_1$ be the graphic matroid induced by the pseudo-edges $E'$, and $M_2$ be the partition matroid induced by the colors of the pseudo-edges $E'$. Then $F$ is a maximum cardinality independent set in the intersection of $M_1$ and $M_2$, which can be computed in polynomial time (see, e.g., \cite{E01,S03}).

Given $F$, we build a feasible $Q$ as follows. For each pseudo-edge $f\in F$ of color $i$, we add the corresponding $Q_i$ to $Q$. Notice that at this point $\alpha(Q)$ is precisely the number of connected components induced by $F$, namely $|V(C)|-|F|$. Then, for each missing color $j$ in $F$, we add an arbitrary $Q_j$ to $Q$. Notice that this does not reduce $\alpha(Q)$ by the optimality of $F$. In particular, $\alpha(Q)=|V(C)|-|F|$.

It remains to argue that the above $Q$ minimizes $\alpha(Q)$. To see that, consider the optimal $Q^*=\cup_{i=1}^{k}Q^*_i$. We next iteratively remove from $Q^*$ any $Q^*_j$ whose removal does not increase $\alpha(Q^*)$ (in particular this always happens when $Q^*_j$ is incident to a unique node of $V(C)$). Let $F^*$ be the pseudo-edges corresponding to the remaining $Q^*_i$'s in $Q^*$. The number of connected components $|V(C)|-|F^*|$ induced by $F^*$ is precisely $\alpha(Q^*)$. Furthermore, $F^*$ is in the intersection of $M_1$ and $M_2$, hence $|F^*|\leq |F|$ (implying $\alpha(Q)\leq \alpha(Q^*)$).
\end{proof}

We next assume that $Q$ is obtained as in the above fact in the computation of $\APX$, a define the $Q_i$'s accordingly. Analogously to \cite{SV14}, we next define an alternative solution $\APX'$ of size $\apx'=|\APX'|$ starting from $S$ as follows. We start from $Q$, and add a minimal number of edges $F\subseteq G[V(C)]$ so that $Q\cup F$ becomes a connected spanning subgraph. Obviously $|F|=\alpha(S)-1$. Then we compute a minimum-size $T$-join\footnote{Recall that, for a subset of nodes $T$ of even cardinality, a $T$-join is a subgraph such that the nodes in $T$ have odd degree and the other nodes even degree.} $J$ over the odd degree nodes $T$ of $Q\cup F$. Notice that $T\subseteq V(C)$ since the nodes of the triangles have even degree in $Q$ and this degree is not modified by $F$. 
%We recall that, for a even cardinality subset of nodes $T$, a $T$-join $J$ is a subset of edges which induces an odd degree over $T$ and an even degree over the nodes. 
We initially set $\APX':=Q\cup F\cup J$. We observe that $\APX'$ is a connected Eulerian graph, hence in particular it is 2EC. Notice however that $\APX'$ might contain parallel edges: for any two parallel edges $f$ and $g$ (which are effectively copies of the same edge $e$), we remove $g$ and add some other edge $g'$ connecting the two sides of the cut induced by $f$ in $\APX'$. Notice that such $g'$ must exist since $G$ is 2EC. Furthermore, this process keeps $\APX'$ 2EC and does not change its size. By repeating this process we turn $\APX'$ into a simple 2EC spanning subgraph of size $\apx'=4k+\alpha(S)-1+|J|$.

We will next show that $|J|\leq \frac{1}{2}|E(C)|$. To this aim we will exploit a classical result by Frank \cite{F93}. We recall that an ear decomposition of a graph is obtained incrementally as follows. The first ear $E_1$ is a cycle. Given $F_i=E_1\cup \ldots \cup E_i$, the ear $E_{i+1}$ is a path with its endpoints in $V(F_i)$ and the other nodes not in $V(F_i)$, or alternatively a cycle with exactly one node in $V(F_i)$. An ear is even (resp., odd) if its number of edges is so. A well-known fact is that every 2EC graph $G$ admits an ear decomposition spanning all its edges. By $\phi(G)$ we denote the minimum possible number of even ears in any ear decomposition of $G$.   
\begin{lemma}\label{Join} \cite{F93}
Let $G$ be a 2EC, and $T\subseteq V(G)$, $|T|$ even. Then there exists a $T$-join of $G$ of size at most $\frac{L_{\phi(G)}}{2}$, where $L_{\phi(G)}=|V(G)|+\phi(G)-1$. Furthermore $\opt(G)\geq L_{\phi(G)}$.
\end{lemma}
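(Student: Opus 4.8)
The plan is to prove both halves from a single ear decomposition $\mathcal{D}=(E_1,\dots,E_k)$ of $G$ realizing $\phi(G)$ even ears, by induction on the number $k$ of ears. Write $n=|V(G)|$, let $\ell$ denote the number of edges of the last ear $E_k$, and recall the standard facts I would use freely: an ear decomposition of a $2$EC graph has exactly $|E|-|V|+1$ ears; deleting $E_k$ from $\mathcal{D}$ leaves a $2$EC graph $G'$ on $n'=n-(\ell-1)$ vertices with ear decomposition $(E_1,\dots,E_{k-1})$; and this decomposition of $G'$ has $\phi(G)-[\,E_k\text{ even}\,]$ even ears, so $\phi(G')\le\phi(G)-[\,E_k\text{ even}\,]$.

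\textbf{The $T$-join bound.} I would show, by induction on $k$, that every $2$EC graph $G$ admits, for every $T\subseteq V(G)$ of even size, a $T$-join of size at most $\tfrac12(n-1+\phi(G))=\tfrac12 L_{\phi(G)}$. For $k=1$, $G$ is a single cycle of length $n$, a $T$-join of it is determined up to complementation, the two complementary choices have total size $n$, so the smaller has size $\le\lfloor n/2\rfloor=\tfrac12(n-1+[\,n\text{ even}\,])=\tfrac12(n-1+\phi(G))$. For $k\ge 2$, I would route the $T$-join through $E_k$: since $E_k$ is a path or a cycle, prescribing the parity $[\,v\in T\,]$ at every \emph{internal} vertex $v$ of $E_k$ determines the chosen subset of $E(E_k)$ up to complementation, and the two options have total size $\ell$; pick the smaller one $J_k$ (so $|J_k|\le\lfloor\ell/2\rfloor$) and let $A\subseteq V(G')$ be the set of attachment vertices of $E_k$ that get odd $J_k$-degree. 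Put $T':=(T\cap V(G'))\,\triangle\,A$; this set is even because, by the handshake lemma applied to $J_k$, the number of internal vertices of $E_k$ lying in $T$ is congruent mod $2$ to $|A|$, hence $|T'|\equiv|T\cap V(G')|+|A|\equiv|T|\equiv 0$. The induction hypothesis gives a $T'$-join $J'$ of $G'$ with $|J'|\le\tfrac12(n'-1+\phi(G'))\le\tfrac12\big(n-\ell+\phi(G)-[\,E_k\text{ even}\,]\big)$, and $J:=J'\cup J_k$ is visibly a $T$-join of $G$; summing, $|J|\le\tfrac12\big(n-\ell+\phi(G)-[\,E_k\text{ even}\,]\big)+\lfloor\ell/2\rfloor$, which equals $\tfrac12(n-1+\phi(G))$ for both parities of $\ell$.

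\textbf{The lower bound $\opt(G)\ge L_{\phi(G)}$.} Let $H$ be a minimum-size $2$EC spanning subgraph of $G$ and fix an ear decomposition of $H$ attaining $\phi(H)$ even ears; it has $|E(H)|-n+1=\opt(G)-n+1$ ears, hence $\opt(G)-n+1\ge\phi(H)$. I would then bound $\phi(H)$ from below by $\phi(G)$: since $V(H)=V(G)$, appending the edges of $E(G)\setminus E(H)$ one at a time as single-edge (length-one, hence odd) ears turns the chosen decomposition of $H$ into a legal ear decomposition of $G$ with the same number $\phi(H)$ of even ears, so $\phi(G)\le\phi(H)$. Combining, $\opt(G)\ge n+\phi(H)-1\ge n+\phi(G)-1=L_{\phi(G)}$.

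\textbf{Main obstacle.} The one place demanding care is the routing step for the $T$-join: one must check that absorbing the forced parities at the attachment vertices of $E_k$ into $T$ keeps the residual set $T'$ even (so the induction applies to $G'$), and that the estimate $\lfloor\ell/2\rfloor+\tfrac12(n-\ell+\phi(G)-[\,E_k\text{ even}\,])$ collapses to exactly $\tfrac12 L_{\phi(G)}$ in \emph{both} the even-$\ell$ and odd-$\ell$ cases; the degenerate single-edge ear ($\ell=1$, $J_k=\emptyset$, $T'=T$) must be handled within the induction as well, which is precisely why the induction runs on the number of ears rather than on $|V(G)|$.
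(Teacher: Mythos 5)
Your proof is correct, and it is worth noting that the paper itself gives no proof of this statement: Lemma~\ref{Join} is quoted from Frank~\cite{F93} as a black box, so your argument is a self-contained derivation rather than a variant of anything in the text. Both halves check out. For the $T$-join bound, the induction on the number of ears $k=|E|-|V|+1$ of a decomposition realizing $\phi(G)$ is sound: on the last ear the parities at the internal (new) vertices force $J_k$ up to complementation, so $|J_k|\le\lfloor \ell/2\rfloor$; the handshake argument correctly shows $|T'|=|(T\cap V(G'))\,\triangle\,A|$ is even, so the inductive hypothesis applies to $G'$ with $n'-1=n-\ell$ and $\phi(G')\le\phi(G)-[\,E_k\text{ even}\,]$; and the two parities of $\ell$ indeed both collapse to $\tfrac12(n+\phi(G)-1)$, with the chord case $\ell=1$ handled correctly. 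For the lower bound, the two ingredients — $\phi(H)\le |E(H)|-n+1$ because an optimal decomposition of $H$ has exactly that many ears, and $\phi(G)\le\phi(H)$ by appending the edges of $E(G)\setminus E(H)$ as single-edge (odd) ears to a $\phi(H)$-optimal decomposition of the spanning subgraph $H$ — give $\opt(G)\ge n+\phi(G)-1$ as claimed. Compared with Frank's original treatment (via conservative weightings and $T$-join duality), your route is more elementary; what Frank's machinery additionally buys is the polynomial-time computability of $\phi(G)$ and of an ear decomposition attaining it, which is not needed here since the paper only invokes the existence bound (the algorithm computes a minimum $T$-join directly and uses the lemma purely to bound its size against $|E(C)|$).
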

Recall that the core $C$ is 2EC and that, for the $T$-join $J$, $T\subseteq V(C)$. Therefore we can apply Lemma~\ref{Join} to infer that $|E(C)|\geq \opt(G[V(C)])\geq L_{\phi(G[V(C)])}\geq 2|J|$ as desired. As a consequence 
$$
\apx'=|APX'|=|Q|+|F|+|J|=4k+\alpha(S)-1+|J|\leq 4k+\alpha(S)-1+\frac{1}{2}|E(C)|.
$$

The next step is to define a refined lower bound on $\opt$ which takes $\alpha(S)$ into account. 
\begin{lemma}\label{minimumComponents}
Let $F^*$ be a 2EC spanning subgraph of $G$ including $E(C)$, and $Q^*$ be the edges of $F^*$ with at least one endpoint not in $C$. Then $\alpha(Q^*)\geq 4k-|Q^*|+\alpha(S)$.
\end{lemma}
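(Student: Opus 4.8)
The plan is to analyze the structure of $Q^*$ relative to the triangles $T_1,\ldots,T_k$ and the core $C$, counting connected components of the graph $(V,Q^*)$ carefully. Recall that in $S$ the nodes of distinct triangles are pairwise non-adjacent in $G$, so every edge of $Q^*$ has at least one endpoint in one of the $V(T_i)$, and no edge of $Q^*$ joins $V(T_i)$ to $V(T_j)$ for $i\ne j$. Consequently each edge of $Q^*$ is either internal to some $V(T_i)$ or goes between some $V(T_i)$ and $V(C)$.

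First I would set up the counting. For each $i$, let $d_i$ be the number of edges of $Q^*$ with at least one endpoint in $V(T_i)$; then $|Q^*|=\sum_{i=1}^k d_i$ (these sets are disjoint by the non-adjacency of distinct triangles and the fact that a triangle edge has both endpoints in the same $V(T_i)$). Since $F^*=E(C)\cup Q^*$ is a feasible 2EC spanning subgraph, every node of $V(T_i)$ has degree at least $2$ in $F^*$, and its incident $F^*$-edges all lie in $Q^*$ (there are no $E(C)$-edges touching $V(T_i)$). Hence the $F^*$-degree of each of the three nodes of $T_i$ equals its $Q^*$-degree and is at least $2$, so $d_i\ge 3$ for every $i$ (in fact summing degrees over $V(T_i)$ gives $2\cdot(\text{internal edges})+(\text{external edges})\ge 6$, so the number of external edges is at least $6-2\cdot(\text{internal edges})$, which forces $d_i\ge 4$ unless all three internal edges are present, but $d_i\ge 3$ suffices and is cleaner). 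The key local claim is: the number of connected components of $(V(C)\cup V(T_i),\,Q_i^*)$, where $Q_i^*$ is the restriction of $Q^*$ to edges touching $V(T_i)$, together with the isolated nodes of $V(C)$, contributes at least $1$ more component than a "generic" choice would — precisely, $T_i$ together with its external edges either forms its own component (if it has no external edge to $C$, impossible here since then some node of $T_i$ is a $1$-vertex-cut, contradicting $G$ being 2VC/structured, or simply contradicting 2-edge-connectivity of $F^*$), or it attaches to $V(C)$ via its external edges.

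The heart of the argument is then a global merging/counting estimate. Contract $V(C)$ to a single node $c$; in the contracted graph, $Q^*$ becomes a set of $|Q^*|$ edges, the triangle nodes are untouched, and the number of components of $(V,Q^*)$ is the number of components in the contracted picture plus the number of components of $C$ that got isolated — but $C$ is one component and is non-empty, so this bookkeeping is: $\alpha(Q^*) = (\text{number of components in the contracted graph not containing }c) + 1$. Each such extra component consists entirely of nodes from some single $V(T_i)$'s (it cannot touch $c$ and cannot span two triangles). Over all $i$, the edges $Q_i^*$ touching $V(T_i)$ span $3$ triangle-nodes plus possibly $c$; if $Q_i^*$ has $r_i$ external edges then these external edges attach at most $\min(r_i,3)$ of the triangle nodes to $c$, and the remaining internal edges ($d_i-r_i$ of them, at most $3$) can only reduce the component count among the three triangle nodes. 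A direct case check (on how many of the three nodes of $T_i$ end up in $c$'s component) shows the contribution of $V(T_i)$ to $\alpha(Q^*)-1$ is at least $4-d_i+[\,T_i\text{ contributes }1\text{ to }\alpha(S)\,]$; summing and using $\alpha(S)=\sum_i(\text{per-triangle contribution to }\alpha(S))-(k-1)$ or the matroid-intersection characterization from the preceding Fact yields $\alpha(Q^*)\ge 4k-|Q^*|+\alpha(S)$.

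The main obstacle I anticipate is the precise per-triangle case analysis relating the local component count of $(V(T_i)\cup\{c\},Q_i^*)$ to the quantity $4-d_i$ and to whether the "best possible" $Q_i$ for this triangle can avoid creating a new component — i.e., cleanly extracting from the matroid-intersection description of $\alpha(S)$ exactly which triangles are "forced" to add a component, and matching that against what $F^*$ achieves. One has to be careful that $Q^*$ is an arbitrary feasible choice (not of the special "two edges plus the removed triangle edge" form), so $Q_i^*$ could have more than four edges, or a different internal/external split; the bound $d_i\ge 3$ (or $\ge 4$) and the observation that extra edges never help reduce $\alpha$ below the graphic-matroid optimum are what make the inequality go through. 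I would organize the write-up as: (i) structural setup and the no-cross-triangle-edge observation; (ii) the degree lower bound $d_i\ge 3$; (iii) the per-triangle local lemma bounding the component contribution by $4-d_i$ plus an indicator; (iv) summation plus the matroid-intersection identity for $\alpha(S)$ to conclude.
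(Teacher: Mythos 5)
There is a genuine gap, and it sits exactly where you yourself flag "the main obstacle": the step that relates your per-triangle counts to $\alpha(S)$. Your plan needs an identity of the form $\alpha(S)=\sum_i(\text{per-triangle contribution})-(k-1)$, together with a per-triangle indicator of whether $T_i$ is "forced to add a component", but no such decomposition is defined or proved, and it is not clear one exists: $\alpha(S)$ is the optimum of a global matroid-intersection problem (a maximum forest of colored pseudo-edges on $V(C)$), and whether a given triangle "creates a new component" depends on the whole forest, not on that triangle alone (e.g.\ two triangles whose only possible $2$-matchings attach to the same pair $\{x,y\}\subseteq V(C)$ --- neither is individually "forced", yet only one pseudo-edge can enter the forest). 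Since the entire content of the lemma is precisely this comparison between an arbitrary feasible $Q^*$ and the optimum $\alpha(S)$, deferring it leaves the proof without its core. There is also an accounting error earlier: the identity $\alpha(Q^*)=(\text{components of the contracted graph not containing }c)+1$ is false, because $E(C)\not\subseteq Q^*$, so distinct core nodes generally lie in distinct components of $(V,Q^*)$ and every core node untouched by $Q^*$ is its own singleton component; contracting $V(C)$ destroys exactly the count you are trying to bound. (Minor: the relevant degree bound is $d_i\geq 4$, not $d_i\geq 3$; $2$-edge-connectivity forces at least two external edges per triangle.)

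For contrast, the paper avoids any static per-triangle bookkeeping: it argues by induction on $|Q^*|$. In the base case $|Q^*|=4k$ the four edges at each triangle must form a path $a_i,u_i,v_i,w_i,b_i$ or a cycle $a_i,u_i,v_i,w_i,a_i$, so $Q^*$ is itself a valid choice of $Q$ and $\alpha(Q^*)\geq\alpha(S)$ follows from the definition of $\alpha(S)$. In the inductive step some triangle has at least $5$ incident edges, and a short case analysis (on how many of these are internal to the triangle) exhibits a deletion or exchange producing $Q'$ with $E(C)\cup Q'$ still $2$EC, $|Q'|$ smaller (or equal, but with more internal triangle edges), and $\alpha(Q')\leq\alpha(Q^*)+1$ per deleted edge, so the bound propagates. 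If you want to salvage a direct counting proof, note the lemma is equivalent to lower-bounding the cyclomatic number of $(V,Q^*)$ by $k-(|V(C)|-\alpha(S))$, but you would still need an exchange-type argument against the matroid-intersection optimum; some reduction of $Q^*$ to a canonical $Q$, as in the paper's induction, seems unavoidable.
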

\begin{proof}
We prove the claim by induction on $|Q^*|$. As already observed, $F^*$ must include at least $4$ edges with at least one endpoint in each $T_i$, hence
$|Q^*|\geq 4k$. The base case is $|Q^*|=4k$. In this case there are precisely $4$ edges incident to the nodes $V(T_i)=\{u_i,v_i,w_i\}$ of each $T_i$. These edges must induce either a path of type $a_i,u_i,v_i,w_i,b_i$ or a cycle of type $a_i,u_i,v_i,w_i,a_i$, where $a_i,b_i\in V(C)$. In particular $Q^*$ is a valid choice for $Q$, hence $\alpha(Q^*)\geq \alpha(S)=4k-|Q^*|+\alpha(S)$.

In the inductive case $|Q^*|=h>4k$ there must exist a triangle $T_i$ such that at least $5$ edges of $Q^*$ are incident to $T_i$. We distinguish $4$ cases depending on the number $t_i\in \{0,1,2,3\}$ of these edges with both endpoints in $T_i$.

\smallskip \paragraph{($\mathbf{t_i=3}$)} Here we distinguish two subcases. Suppose first that there exist two nodes in $T_i$, say $u_i$ and $v_i$, with degree at least $3$ in $Q^*$. Let $Q'=Q^*\setminus \{u_iv_i\}$. Notice that $F':=E(C)\cup Q'$ is a feasible 2EC spanning subgraph and $\alpha(Q')=\alpha(Q^*)$. By inductive hypothesis $\alpha(Q')\geq 4k-|Q^*|+\alpha(S)+1$, hence the claim. 
        
 It remains to consider the case that there exists exactly one node of $T_i$, say $u_i$, with degree at least $3$ in $Q^*$. Since there must be at least $2$ edges in $Q^*$ with exactly one endpoint in $T_i$, this implies $\deg_{Q^*}(u_i) \geq 4$. Let $a_i$ be a node in $C$ that is adjacent to $u_i$ in $Q^*$. Notice that, since $u_i$ is not a cut node, there must exists some edge from some other node in $T_i$ to $C$, say $v_ib_i$. Let $Q'=Q^*\cup \{v_ib_i\}\setminus \{u_ia_i,u_iv_i\}$. Notice that $F'=E(C)\cup Q'$ is a feasible 2EC spanning subgraph, $|Q'|=|Q^*|-1$, and $\alpha(Q')\leq \alpha(Q^*)+1$.
 %Now $(V,Q')$ has $|Q^*|-1$ edges and at most one component more than $(V,Q^*)$. 
 Therefore, by the inductive hypothesis on $Q'$, $\alpha(Q^*)\geq \alpha(Q')-1\geq 4k-|Q^*|+\alpha(S)$.

\smallskip \paragraph{($\mathbf{t_i=2}$)} The $2$ edges in $E(T_i)\cap Q^*$ form a path $P$ of length $2$, say $v_i,u_i,w_i$. There must be a node $v$ in $T_i$ such that $\deg_{Q^*}(v)\geq 3$. Indeed there are at least $3$ edges in $Q^*$ incident to nodes of $T_i$ and to $C$. If one of them is incident to $u_i$, say $a_iu_i$, then consider $Q'=Q^*\setminus \{a_iu_i\}$. Again $F'=E(C)\cup Q'$ is a feasible 2EC spanning subgraph and the claim follows similarly to previous cases. Otherwise w.l.o.g assume there are two edges in $Q^*\setminus E(T_i)$ incident to $v_i$. Let $Q'=Q^*\setminus \{e\}$, where $e$ is an edge incident to $v_i$ not in $P$. As usual $F'=E(C)\cup Q'$ is a feasible 2EC spanning subgraph and $\alpha(Q')\leq \alpha(Q^*)-1$. Thus, by the inductive hypothesis on $Q'$, $\alpha(Q^*)\geq \alpha(Q')-1\geq 4k-|Q^*|+\alpha(S)$.

%By inductive hypothesis $(V,Q')$ has at least $4k-|Q^*|+\alpha(S)+1$ connected components. Notice that $(V,Q^*)$ can have at most one component less than $(V,Q')$, hence it has at least $4k-|Q^*|+\alpha(S)$ connected components as required.

\smallskip \paragraph{($\mathbf{t_i=1}$)} Assume w.l.o.g. that $v_iw_i \in Q^*$, hence $u_i$ is not adjacent to $v_i$ nor to $w_i$ in $Q^*$. We distinguish two sub-cases. If $v_i$ does not belong to the same connected component of $u_i$ in $(V,Q^*)$, then we let $Q'=Q^*\cup \{u_iv_i\}\setminus \{e'\}$ where $e'$ is any other edge $e'$ incident to $u_i$. Otherwise, we define $Q'$ similarly but taking care of selecting an $e'\neq u_iv_i$ incident to $u_i$ along a path from $u_i$ to $v_i$ in $(V,Q^*)$. In both cases we have $|Q'|=|Q^*|$, $\alpha(Q')=\alpha(Q^*)$, and $F'=E(C)\cup Q'$ is a feasible 2EC spanning subgraph. Now $Q'$ contains $2$ edges of $E(T_i)$, hence the claim follows from the previous analysis.

\smallskip \paragraph{($\mathbf{t_i=0}$)} We use the same construction as in the case $t_i=1$, considering again nodes $u_i$ and $v_i$. In particular, by adding the edge $u_iv_i$ we reduce to the case $t_i=1$, which is addressed above.
\end{proof}

Recall that we used the lower bound $\opt(G)\geq 4k$. We can next use the following stronger lower bound. 
\begin{corollary}\label{cor:minimumComponents}
Let $S$ be a core-triangle 2-edge-cover of a structured graph $G$ with $k$ triangles. Then 
$\opt(G)\geq 4k+\alpha(S)-1$.
\end{corollary}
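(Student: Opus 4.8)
\textbf{Proof proposal for Corollary~\ref{cor:minimumComponents}.} The plan is to derive the corollary by combining Lemma~\ref{minimumComponents} with the elementary fact that a connected graph on $N$ vertices has at least $N-1$ edges (equivalently, a graph on $N$ vertices with $c$ connected components has at least $N-c$ edges). First I would fix an optimal 2EC spanning subgraph $\OPT$ of $G$. Since $C$ is $2$EC and, by Fact~\ref{fact:decontraction}-style reasoning (or more simply because contracting $C$ to a point and then back is harmless), we may assume w.l.o.g.\ that $\OPT$ contains all of $E(C)$: indeed $C$ is $2$EC, so $E(C)$ together with any $2$EC spanning subgraph of $G\vert C$ yields a feasible solution no larger than $\OPT$, and conversely the edges of $\OPT$ inside $V(C)$ can be swapped for $E(C)$ without loss since $\OPT\cap \binom{V(C)}{2}$ must itself be $2$EC-connecting on $V(C)$. (If one prefers to avoid this swap, one can instead just apply the argument below to the slightly different optimum that does contain $E(C)$, which exists and has the same size — this is exactly the hypothesis "$F^*$ including $E(C)$" of Lemma~\ref{minimumComponents}.)

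Next I would set $F^*:=\OPT$ and let $Q^*$ be the edges of $F^*$ with at least one endpoint outside $V(C)$, so that $F^*=E(C)\cup Q^*$ is a disjoint decomposition. Lemma~\ref{minimumComponents} gives $\alpha(Q^*)\geq 4k-|Q^*|+\alpha(S)$, i.e.\ $|Q^*|\geq 4k+\alpha(S)-\alpha(Q^*)$. Now I want a lower bound on $|Q^*|$ in terms of $\alpha(Q^*)$ alone. The subgraph $(V,Q^*)$ spans all $3k$ nodes of the triangles plus possibly some nodes of $C$; crucially, every node of $\bigcup_i V(T_i)$ is incident to an edge of $Q^*$ (each triangle node has degree $\geq 2$ in $F^*$ and at least one such incident edge has its other endpoint... hmm, not quite — let me instead count components). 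The key point: consider the graph $(V,Q^*)$. Its isolated vertices are exactly the nodes of $V(C)$ not touched by $Q^*$; all $3k$ triangle nodes are non-isolated. Removing isolated vertices, we get a graph on some vertex set $W\supseteq \bigcup_i V(T_i)$ with no isolated vertices and with $\alpha(Q^*)-(|V(C)|-|W\cap V(C)|)$ non-trivial components — but this is getting complicated. The clean route: $(V,Q^*)$ has exactly $|V(G)|$ vertices and $\alpha(Q^*)$ components, so $|Q^*|\geq |V(G)|-\alpha(Q^*)$. Combined with $|Q^*|\geq 4k+\alpha(S)-\alpha(Q^*)$... these two together still have $\alpha(Q^*)$ on the wrong side.

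So the right move is: add $\alpha(Q^*)$ back. Since $|Q^*|\geq 4k+\alpha(S)-\alpha(Q^*)$ and also $|E(C)|\geq |V(C)|-1$ (as $C$ is connected, in fact $2$EC), and since $\OPT$ is connected we have $|\OPT|\geq |V(G)|-1$ — but I want to use the structure more tightly. Here is the intended argument: $F^*=E(C)\sqcup Q^*$ is connected and spanning, so $|F^*|\geq |V(G)|-1$. Contract $V(C)$ in $F^*$: the image is connected, has $|V(G)|-|V(C)|+1 = 3k+1$ vertices, and its edges are exactly (the images of) $Q^*$ together with self-loops. The non-loop images of $Q^*$ must connect these $3k+1$ vertices, and the number of connected components of that contracted graph equals $\alpha(Q^*)$ minus the number of components of $(V,Q^*)$ entirely inside $V(C)$... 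I think the cleanest honest statement, and the one the authors intend, is: $|\OPT| = |E(C)| + |Q^*| \geq |E(C)| + \big(4k + \alpha(S) - \alpha(Q^*)\big)$, and then separately $\alpha(Q^*) \le |E(C)| - |V(C)| + $ (number of components of $(V,Q^*)$), which combined with $|E(C)|\ge|V(C)|$... The main obstacle I anticipate is precisely pinning down this last inequality relating $\alpha(Q^*)$, $|E(C)|$, and $|V(C)|$: one needs that the edges of $E(C)$ inside the $2$EC core are "enough" to merge the $\alpha(Q^*)$ components of $Q^*$ down to one while leaving $|E(C)|-(|V(C)|-1)\ge 1$ slack, but since $C$ is $2$EC, $|E(C)|\ge|V(C)|$, giving exactly the $+1$ correction. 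Putting it together: $\opt = |E(C)|+|Q^*| \ge |E(C)| + 4k+\alpha(S)-\alpha(Q^*)$ and $\alpha(Q^*)\le |E(C)|-|V(C)|+1 \le |E(C)|-1$... no. The honest final chain should be: $\opt\ge 4k+\alpha(S)-\alpha(Q^*)+|E(C)|$ and $|E(C)|\ge |V(C)|\ge \alpha(Q^*)+1$ is false in general. I expect the actual resolution is that $(V,Q^*)\cup E(C)$ connected forces $\alpha(Q^*)\le |E(C)|+1-|V(C)|_{\text{touched-by-}Q^*}$-type bookkeeping, and the $2$-edge-connectivity of $C$ supplies exactly one extra edge beyond a spanning tree of $V(C)$, yielding the $-1$. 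I would therefore carry out the steps: (1) fix $\OPT\supseteq E(C)$; (2) invoke Lemma~\ref{minimumComponents} to get $|Q^*|\ge 4k+\alpha(S)-\alpha(Q^*)$; (3) use connectivity of $\OPT=E(C)\sqcup Q^*$ plus $|E(C)|\ge |V(C)|$ (from $C$ being $2$EC) to absorb the $-\alpha(Q^*)$ term and produce the $-1$; (4) conclude $\opt\ge 4k+\alpha(S)-1$. Step (3) is the one requiring care; everything else is immediate from the already-proved Lemma~\ref{minimumComponents}.
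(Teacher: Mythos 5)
Your plan fails at step (1): the normalization ``w.l.o.g.\ $\OPT\supseteq E(C)$'' (equivalently, that an optimum of the same size containing $E(C)$ exists) is unjustified and in general false. The swap you sketch does preserve feasibility: by Facts~\ref{fact:contraction} and~\ref{fact:decontraction}, replacing the edges of $\OPT$ with both endpoints in $V(C)$ by $E(C)$ yields a 2EC spanning subgraph. But it need not preserve size: the core $C$ is merely the outcome of the gluing procedure, and nothing prevents $|E(C)|$ from exceeding the number of edges an optimum uses inside $V(C)$ (an optimum may 2-edge-connect the nodes of $V(C)$ partly through paths that pass via the triangles, and in any case $E(C)$ need not be a minimum-size 2EC subgraph on $V(C)$). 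Also, $\OPT\cap\binom{V(C)}{2}$ need not itself be 2EC on $V(C)$, so the ``converse'' direction of your swap claim is wrong as stated. Once step (1) is removed, your concluding identity $\opt=|E(C)|+|Q^*|$ only applies to the feasible solution $E(C)\cup Q^*$, so the chain lower-bounds $|E(C)\cup Q^*|$ rather than $\opt$, and the corollary does not follow.

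The repair is exactly what the paper does, and it makes the contested part of your argument unnecessary: apply Lemma~\ref{minimumComponents} to $F^*:=\OPT\cup E(C)$, which does include $E(C)$ and whose edges with an endpoint outside $V(C)$ are precisely the edges $Q^*$ of $\OPT$ with an endpoint outside $V(C)$; this gives $\alpha(Q^*)\geq 4k-|Q^*|+\alpha(S)$ with $Q^*\subseteq\OPT$. Then never charge for $E(C)$ at all: since $\OPT$ is connected, spans $V$, and contains $Q^*$, it must contain at least $\alpha(Q^*)-1$ edges besides $Q^*$ (each extra edge reduces the number of components of $(V,Q^*)$ by at most one; these extra edges have both endpoints in $V(C)$ but need not be edges of $C$). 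Hence $\opt\geq |Q^*|+\alpha(Q^*)-1\geq 4k+\alpha(S)-1$. Incidentally, the step you flagged as delicate, your step (3), is actually immediate: if $E(C)\sqcup Q^*$ is connected and spanning, then $|E(C)|\geq\alpha(Q^*)-1$ by the same one-edge-per-merge argument, with no bookkeeping involving $|V(C)|$ or $|E(C)|\geq|V(C)|$; the real obstruction in your write-up is step (1), not step (3).
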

\begin{proof}
Consider the edges $Q^*$ of $\OPT(G)$ with at least one endpoint not in the core $C$ of $S$. By Lemma~\ref{minimumComponents}, $(V(G),Q^*)$ has at least $\alpha(Q^*)\geq 4k-|Q^*|+\alpha(S)$ connected components. Therefore $\OPT(G)$, which is connected, must contain at least $4k-|Q^*|+\alpha(S)-1$ edges, besides $Q^*$ to connect the connected components of $(V(G),Q^*)$. The claim follows.
\end{proof}

We are now ready to prove Lemma \ref{lem:manyTriangles:refined}.
\begin{proof}[Proof of Lemma \ref{lem:manyTriangles:refined}]
Recall that we computed two feasible solutions $\APX$ and $\APX'$ satisfying
$$
|APX|= |E(C)|+4k \leq (\frac{3}{2}-\frac{1}{3}t+\frac{1}{2}b)|H|+\frac{k}{2}
\quad \text{and} \quad
|APX'|\leq 4k+\alpha(S)-1+\frac{1}{2}|E(C)|.
$$
By Corollary \ref{cor:minimumComponents} we have that $\opt=\opt(G)\geq 4k+\alpha(S)-1$. We will also use $\opt\geq |H|$ as usual. Notice that 
$$
|S|=3k+|E(C)|\leq (\frac{3}{2}-\frac{1}{3}t+\frac{1}{2}b)|H|-\frac{k}{2}=:f(t,b)|H|-\frac{k}{2},
$$
hence
$$
\opt\geq |H|\geq \frac{\frac{7}{2}k+|E(C)|}{f(t,b)}
$$
Observe that $\frac{7}{6}\leq f(t,b)\leq 2$. Recall that $|APX_B| =|S|+k=|C|+4k$. The overall approximation factor is upper bounded by
$$
apx(t,b):=\frac{\min\{|APX|,|APX'|\}}{\max\{|H|,4k+\alpha(S)-1\}}\leq \frac{4k+\min\{|E(C)|,\alpha(S)-1+\frac{1}{2}|E(C)|\}}{\max\{\frac{\frac{7}{2}k+|E(C)|}{f(t,b)},4k+\alpha(S)-1\}}.
$$

We distinguish $4$ cases depending on which is the smallest upper bound on the cost of the approximate solution and the largest lower bound on the optimal cost:

\medskip\noindent{\bf (1) $\alpha(S)-1\geq \frac{1}{2}|E(C)|$ and $4k+\alpha(S)-1\geq \frac{\frac{7}{2}k+|E(C)|}{f(t,b)}$.} In this case:
$$
apx(t,b)\leq \frac{4k+|E(C)|}{4k+\alpha(S)-1}\leq \frac{4k+|E(C)|}{\max\{4k+\frac{1}{2}|E(C)|,\frac{\frac{7}{2}k+|E(C)|}{f(t,b)}\}}.
$$
The righthand side of the above inequality is an increasing function of $|E(C)|$ for $4k+\frac{1}{2}|E(C)|\geq \frac{\frac{7}{2}k+|E(C)|}{f(t,b)}$, i.e. for $|E(C)|\leq \frac{8f(t,b)-7}{2-f(t,b)}k$, and a decreasing function of $|E(C)|$ in the complementary case. Thus the worst case is achieved for $|E(C)|=\frac{8f(t,b)-7}{2-f(t,b)}k$, leading to
$$
apx(t,b)\leq \frac{4k+\frac{8f(t,b)-7}{2-f(t,b)}k}{4k+\frac{1}{2}\frac{8f(t,b)-7}{2-f(t,b)}k}=\frac{2+8f(t,b)}{9}.
$$

\medskip\noindent{\bf (2) $\alpha(S)-1\geq \frac{1}{2}|E(C)|$ and $4k+\alpha(S)-1< \frac{\frac{7}{2}k+|E(C)|}{f(t,b)}$.} This imposes $\frac{\frac{7}{2}k+|E(C)|}{f(t,b)}-4k>\frac{1}{2}|E(C)|$, i.e. $|E(C)|> \frac{8f(t,b)-7}{2-f(t,b)}k$. In this case 
$$
apx(t,b)\leq \frac{4k+|E(C)|}{\frac{\frac{7}{2}k+|E(C)|}{f(t,b)}}.
$$
The right hand side of the above inequality is a growing function of $|E(C)|$ due to the fact that $f(t,b)\geq \frac{7}{6}\geq \frac{8}{7}$. Hence the worst case is achieved for $|E(C)|$ going to infinity, leading to 
$$
apx(t,b)\leq f(t,b).
$$
Notice that $\frac{2+8f(t,b)}{9}\geq f(t,b)$ iff $2\geq f(t,b)$. Hence case (1) dominates case (2) in terms of \amEdit{the} worst-case approximation factor.

\medskip\noindent{\bf (3) $\alpha(S)-1< \frac{1}{2}|E(C)|$ and $4k+\alpha(S)-1\geq \frac{\frac{7}{2}k+|E(C)|}{f(t,b)}$.} This imposes $\frac{\frac{7}{2}k+|E(C)|}{f(t,b)}-4k<\frac{1}{2}|E(C)|$, i.e. $\frac{2-f(t,b)}{8f(t,b)-7}|E(C)|< k$. In this case one has
\begin{equation}\label{eqn:case3}
apx(t,b)\leq \frac{\alpha(S)-1+\frac{1}{2}|E(C)|}{\alpha(S)-1+4k}\leq \frac{\alpha(S)-1+\frac{1}{2}|E(C)|}{\alpha(S)-1+4\frac{2-f(t,b)}{8f(t,b)-7}|E(C)|}. 
\end{equation}
For $\frac{1/2}{4\frac{2-f(t,b)}{8f(t,b)-7}}=\frac{8f(t,b)-7}{8(2-f(t,b))}\geq 1$, i.e. for $f(t,b)\geq \frac{23}{16}$, the righthand side of \eqref{eqn:case3} is decreasing in $\alpha(S)-1$. Hence the worst case is achieved for $\alpha(S)-1=\frac{\frac{7}{2}k+|E(C)|}{f(t,b)}-4k$ leading to 
$$
apx(t,b)\leq \frac{\frac{\frac{7}{2}k+|E(C)|}{f(t,b)}-4k+\frac{1}{2}|E(C)|}{\frac{\frac{7}{2}k+|E(C)|}{f(t,b)}-4k+4\frac{2-f(t,b)}{8f(t,b)-7}|E(C)|}.
$$
For $f(t,b)\geq \frac{23}{16}$ the righthand side of the above inequality if increasing in $|E(C)|$, hence that quantity is upper bounded by setting $|E(C)|=\frac{8f(t,b)-7}{2-f(t,b)}k$, leading to
\begin{align*}
apx(t,b) \leq \frac{\frac{7k}{2f(t,b)}-4k+\frac{2+f(t,b)}{2f(t,b)}\frac{8f(t,b)-7}{2-f(t,b)}k}{\frac{7k}{2f(t,b)}+\frac{1}{f(t,b)}\frac{8f(t,b)-7}{2-f(t,b)}k} = \frac{16f(t,b)-14}{9}.
%& = \frac{\frac{7}{2f(t,b)}-4+\frac{2+f(t,b)}{2f(t,b)}\frac{8f(t,b)-7}{2-f(t,b)}}{\frac{7}{2f(t,b)}+\frac{1}{f(t,b)}\frac{8f(t,b)-7}{2-f(t,b)}}\\
%& = \frac{\frac{7}{2}-4f(t,b)+\frac{2+f(t,b)}{2}\frac{8f(t,b)-7}{2-f(t,b)}}{\frac{7}{2}+\frac{8f(t,b)-7}{2-f(t,b)}}\\
%& = \frac{7-8f(t,b)+(2+f(t,b))\frac{8f(t,b)-7}{2-f(t,b)}}{7+\frac{16f(t,b)-14}{2-f(t,b)}}\\
%& = \frac{(2-f(t,b))(7-8f(t,b))+(2+f(t,b))(8f(t,b)-7)}{7(2-f(t,b))+16f(t,b)-14}\\
%& = \frac{(2-f(t,b))(7-8f(t,b))+(2+f(t,b))(8f(t,b)-7)}{9f(t,b)}\\
%& = \frac{(-2+f(t,b))(8f(t,b)-7)+(2+f(t,b))(8f(t,b)-7)}{9f(t,b)}\\
%& = \frac{16f(t,b)-14}{9}\\
\end{align*}

For $f(t,b)<\frac{23}{16}$, the righthand side of \eqref{eqn:case3} is increasing in $\alpha(S)-1$. Hence we can upper bound it by setting $\alpha(S)-1=\frac{1}{2}|E(C)|$, leading to 
$$
apx(t,b) \leq \frac{|E(C)|}{\frac{1}{2}|E(C)|+4\frac{2-f(t,b)}{8f(t,b)-7}|E(C)|}
%= \frac{1}{\frac{1}{2}+4\frac{2-f(t,b)}{8f(t,b)-7}}
%= \frac{1}{\frac{8f(t,b)-7}{2(8f(t,b)-7)}+8\frac{2-f(t,b)}{2(8f(t,b)-7})}
= \frac{2(8f(t,b)-7)}{8f(t,b)-7+8(2-f(t,b))} 
= \frac{16f(t,b)-14}{9}
$$
We observe that $\frac{16f(t,b)-14}{9}\leq \frac{2+8f(t,b)}{9}$ for $f(t,b)\leq 2$. Hence case (3) is dominated by case (1).

\medskip\noindent{\bf (4) $\alpha(S)-1< \frac{1}{2}|E(C)|$ and $4k+\alpha(S)-1< \frac{\frac{7}{2}k+|E(C)|}{f(t,b)}$.} In this case one has
$$
apx(t,b)\leq \frac{\alpha(S)-1+\frac{1}{2}|E(C)|}{\frac{\frac{7}{2}k+|E(C)|}{f(t,b)}}\leq  \frac{\min\{\frac{1}{2}|E(C)|,\frac{\frac{7}{2}k+|E(C)|}{f(t,b)}-4k\}+\frac{1}{2}|E(C)|}{\frac{\frac{7}{2}k+|E(C)|}{f(t,b)}}.
$$
For $\frac{1}{2}|E(C)|\leq \frac{\frac{7}{2}k+|E(C)|}{f(t,b)}-4k$, i.e. for $|E(C)|\geq k\frac{8f(t,b)-7}{2-f(t,b)}$, the righthand side of the above inequality is $\frac{f(t,b)|E(C)|}{\frac{7}{2}k+|E(C)|}$, which is a growing function of $|E(C)|$. Hence the worst-case value is $f(t,b)$. Like in case (2), this case is dominated by case (1). Otherwise, i.e. for $k>\frac{2-f(t,b)}{8f(t,b)-7}|E(C)|$, the righthand side of the above inequality is
$
1+f(t,b)\frac{\frac{1}{2}|E(C)|-4k}{|E(C)|+\frac{7}{2}k}.
$
This is a decreasing function of $k$, hence it is upper bounded by setting $k=\frac{2-f(t,b)}{8f(t,b)-7}|E(C)|$. This leads to 
$$
apx(t,b)\leq  1+f(t,b)\frac{\frac{1}{2}|E(C)|-4\frac{2-f(t,b)}{8f(t,b)-7}|E(C)|}{|E(C)|+\frac{7}{2}\frac{2-f(t,b)}{8f(t,b)-7}|E(C)|}=\frac{16f(t,b)-14}{9}.
%\frac{\frac{\frac{7}{2}k+|E(C)|}{f(t,b)}-4k+\frac{1}{2}|E(C)|}{\frac{\frac{7}{2}k+|E(C)|}{f(t,b)}}.
$$
This matches the upper bound achieved in case (3), hence in particular \amEdit{it} is dominated by case (1). Altogether
$$
apx(t,b) \leq \frac{2+8f(t,b)}{9}
= \frac{1}{9}(2+8(\frac{3}{2}-\frac{1}{3}t+\frac{1}{2}b))
= \frac{14}{9}-\frac{8}{27}t+\frac{4}{9}b.
$$
\end{proof}

\section{Some Previous Attempts to Breach the $4/3$ Approximation Barrier}
\label{sec:failedAttempts}

To the best of our knowledge, there are only two (conference) papers which claim a better than $\frac{4}{3}$ approximation in the literature\footnote{More recently, Civril made several failed attempts to improve on $4/3$. However the last version of his work on ArXiV \cite{C22} only claims an alternative $4/3$ approximation for the general case.}: Krysta and Kumar \cite{KK01} claim a $\frac{4}{3}-\frac{1}{1344}$ approximation and Jothi, Raghavachari, and Varadarajan \cite{JRV03} a $\frac{5}{4}$ approximation. Both papers (after about two decades) have no journal version nor any public full version to the best of our knowledge. We remark that the two papers presenting a $4/3$ approximation algorithm that we mentioned earlier, namely  \cite{HVV19,SV14}, do not include \cite{JRV03,KK01} in their references, despite the fact that \cite{HVV19,SV14} were published several years later and claim a worse approximation factor w.r.t. \cite{JRV03,KK01}. In particular, \cite{SV14} states: ``Better approximation ratios have been claimed, but to the best of our knowledge, no correct proof has been published''.
%Notice that the authors of \cite{HVV19,SV14} were aware of \cite{KK01,JRV03} at the time of their publication. In particular, the first author of \cite{HVV19} discuss \cite{KK01,JRV03} in some detail in his Bachelor Thesis \cite{???} (in German), also pointing out some gaps (that in part we use in this section). Furthermore, Vygen in a talk at FND'13 \cite{V13} mentions that both papers have serious gaps, and checking correctness is not possible in a reasonable amount of time. 
We next give more details about the gaps in \cite{JRV03,KK01}.       

\cite{KK01} gives a rather vague description of the algorithm and of its analysis (only about 4 pages altogether), with no formal proof. We are not even close to being able to fill in the gaps in \cite{KK01}. The most critical part in our opinion is Section 3.3 ``2-Connecting Paths''. Here even the description of the algorithm is not fully detailed, as mentioned by the authors: ``The above cases have been stated in a highly oversimplified manner''. The subsequent case analysis is only sketched at a very high level. In our experience, most of the difficulty in case-analysis-based algorithms (like \cite{HVV19,JRV03,KK01} and this paper) is due to handling very specific extreme cases. Hence  it is very hard for the reader to verify the correctness without seeing the full details. 
Let us mention that \cite{KK01} heavily builds upon the algorithm and analysis in \cite{VV00}, which is the conference version of \cite{HVV19}. \cite{VV00} contains some gaps (as declared by a superset of the authors in \cite{HVV19}). This makes even more critical that the authors of \cite{KK01} publish a full and self-contained version of their paper. Finally, we remark that the approximation factor in \cite{KK01} is worse than the one claimed in this paper (hence here the only unclear point is which is the first paper to breach the $4/3$ approximation barrier).

%We elaborate only shortly about \cite{KK01} since that paper claims an approximation factor worse than ours . In \cite{KK01} the description of the algorithm and its analysis are extremely vague (about 4 pages altogether). For us it is impossible to check the details without essentially writing a full paper on our own. An important point to remark is that \cite{KK01} heavily refers to the algorithm and analysis in \cite{}: the latter paper contained some gaps (as declared by the authors) which were later fixed in a journal version with one more author \cite{}. Summarizing, \cite{KK01} gives a not-detailed refinement of a buggy paper: we do not think this can be used as a solid reference. 

\cite{JRV03} is much more detailed than \cite{KK01}, still major parts of their analysis are missing. The most important such gap (but not the only one), in our opinion, is in Section 4.1 ``Finding a 5-cycle''. The authors give a full proof for the case that the leaf vertex $v_1$ corresponds to a single node of $G$ (Lemma 4.1). The authors claim (Lemma 4.2) that they can extend that lemma to the case where $v_1$ is a ``2-connected component that has been shrunk into a single node'', but no proof is provided. The authors mention: ``The proof is more complicated \ldots There are many different cases to consider, and we omit the proof here due to lack of space''. Proving Lemma 4.2, if possible, is in our opinion highly non-trivial and we are not able to reconstruct their proof on our own. As mentioned before, missing even one case might lead to a too optimistic (hence wrong) approximation factor. So it is particularly important that the community has the chance to verify this type of case \amEdit{analysis}.

%The most critical point of \cite{JRV03} that we could identify is the construction of the initial 2-edge-cover. The authors first build a 

\end{document}